\documentclass[acmsmall,screen,authorversion,nonacm]{acmart} 
\settopmatter{printfolios=true,printccs=true,printacmref=true}

\acmJournal{PACMPL}
\acmVolume{1}
\acmNumber{CONF} 
\acmArticle{1}
\acmYear{2018}
\acmMonth{1}
\acmDOI{} 
\startPage{1}

\setcopyright{cc}
\setcctype[4.0]{by}

\bibliographystyle{ACM-Reference-Format}
\citestyle{acmauthoryear}

\usepackage{booktabs,tablefootnote}
\usepackage{subcaption}
\captionsetup{subrefformat=parens}
\usepackage{enumitem}
\usepackage{letltxmacro}
\usepackage[colorinlistoftodos,bordercolor=orange,backgroundcolor=orange!20,linecolor=orange,textsize=footnotesize]{todonotes}
\usepackage{pifont}
\usepackage[abbreviations,xlatin]{foreign}
\usepackage[super]{nth}


\usepackage{hyperref}
\usepackage[capitalise, noabbrev]{cleveref}

\AddToHook{cmd/appendix/before}{%
    \crefalias{section}{appendix}%
    \crefalias{subsection}{appendix}
}


\usepackage{listings}


\definecolor{dkblue}{rgb}{0,0.1,0.5}
\definecolor{lightblue}{rgb}{0,0.5,0.5}
\definecolor{dkgreen}{rgb}{0,0.4,0}
\definecolor{dk2green}{rgb}{0.4,0,0}
\definecolor{dkviolet}{rgb}{0.6,0,0.8}
\definecolor{dkpink}{rgb}{0.2,0,0.6}
\definecolor{commentcolor}{rgb}{0.2,0.4,0.4}
\lstdefinelanguage{SSR} {
%
mathescape=true,
escapeinside={[*}{*]},
texcl=false,
%
%
morekeywords=[1]{
From, Section, Module, End, Require, Import, Export, Defensive, Function,
Variable, Variables, Parameter, Parameters, Axiom, Hypothesis, Hypotheses,
Context,
Notation, Local, Global, Tactic, Reserved, Scope, Open, Close, Bind, Delimit,
Definition, Let, Ltac, Fixpoint, CoFixpoint, Add, Morphism, Relation,
Implicit, Arguments, Types, Set, Unset, Contextual, Strict, Prenex, Implicits,
Inductive, CoInductive, Record, Variant, Structure, Canonical, Coercion,
Theorem, Lemma, Corollary, Proposition, Fact, Remark, Example,
Proof, Goal, Save, Qed, Defined, Hint, Resolve, Rewrite, View,
Search, Show, Print, Printing, All, Graph, Primitive, Projections, inside,
outside, Locate, Maximal,
Parametricity, Translation},
%
morekeywords=[2]{forall, exists, exists2, fun, fix, cofix, struct,
      match, with, end, as, in, return, let, if, is, then, else,
      for, of, nosimpl},
%
morekeywords=[3]{Type, Prop},
%
morekeywords=[4]{
         pose, set, move, case, elim, clear,
            hnf, intro, intros, generalize, rename, pattern, after,
	    destruct, induction, using, refine, inversion, injection,
         rewrite, congr, unlock, lazy, compute, vm_compute, native_compute, ring, field,
            replace, fold, unfold, change, cutrewrite, simpl,
         have, gen, generally, suff, wlog, suffices, without, loss, nat_norm,
            assert, cut, trivial, revert, bool_congr, nat_congr, abstract,
	 symmetry, transitivity, auto, split, left, right, autorewrite},
%
morekeywords=[5]{
         by, done, exact, reflexivity, tauto, romega, omega,
         assumption, solve, contradiction, discriminate, congruence},
%
%
morekeywords=[6]{do, last, first, try, idtac, repeat},
morekeywords=[7]{Fail},
%
 literate=*
	{->}{{$\rightarrow\,$}}2
	{<-}{{$\leftarrow\,$}}2
 	{>->}{{$\rightarrowtail$}}2
 	{<=}{{$\leq$}}2
 	{>=}{{$\geq$}}2
 	{<>}{{$\neq$}}2
 	{/\\}{{$\wedge$}}2
 	{\\/}{{$\vee$}}2
	{\\o}{{$\circ$}}1
 	{<->}{{$\leftrightarrow\;$}}3
 	{<=>}{{$\Leftrightarrow\;$}}3
 	{:nat}{{$~\in\mathbb{N}$}}3
	{fforall\ }{{$\forall_f\,$}}1
	{forall\ }{{$\forall\,$}}2
 	{negb}{{$\neg$}}1
 	{spp}{{:*:\,}}1
 	{\\in}{{$\in$}}1
 	{\\notin}{{$\not\in$}}1
 	{/\\}{$\land\,$}1
 	{:*:}{{$*$}}2
	{=>}{{$\Rightarrow$}}2
	{==>}{\texttt{==>}}3
	{:=}{{$\coloneqq$}}2
 	{!=}{{$\neq$}\,}2
 	{^-1}{{$^{-1}$}}1
 	{elt'}{elt'}1
	{isn't }{{{\ttfamily\color{dkgreen} isn't }}}1,
%
%
showstringspaces=false,
%
%
morecomment=[s]{(*}{*)},
%
tabsize=3,
%
extendedchars=true,
%
sensitive=true,
%
%
%
%
%
identifierstyle={\ttfamily\color{black}},
%
keywordstyle=[1]{\ttfamily\color{dkviolet}},
%
keywordstyle=[2]{\ttfamily\color{dkgreen}},
%
keywordstyle=[3]{\ttfamily\color{lightblue}},
%
keywordstyle=[4]{\ttfamily\color{dkblue}},
%
keywordstyle=[5]{\ttfamily\color{red}},
%
%
keywordstyle=[6]{\ttfamily\color{dkpink}},
%
keywordstyle=[7]{\ttfamily\color{red}},
%
stringstyle=\ttfamily,
%
commentstyle={\ttfamily\color{commentcolor}},
}

\lstdefinestyle{plain}{
  basicstyle=\ttfamily,
  keywordstyle=,
  identifierstyle=,
  commentstyle=,
  stringstyle=,
  emphstyle=,
  backgroundcolor=,
  language=,
  frame=,
  framesep=0pt,
  rulecolor=\color{black},
  numbers=none,
  numberstyle=,
  xleftmargin=0pt,
  xrightmargin=0pt,
  basewidth=0.495em,
  lineskip=-.4ex,
  keepspaces,
  identifierstyle={\ttfamily\color{black}},
  keywordstyle={\ttfamily\color{dkviolet}},
  stringstyle=\ttfamily,
  commentstyle={\ttfamily\color{black!60!white}},
}

\lstdefinelanguage[]{ocaml}[Objective]{caml}{
  literate=* {:=}{{$\coloneqq$}}2
}

\lstnewenvironment{code}[1][]{
  \lstset{
  breaklines=false,
  #1}
}{}

\lstnewenvironment{coqcode}[1][]{
  \lstset{,
  language=SSR,
  breaklines=false
  #1}
}{}

\lstnewenvironment{camlcode}[1][]{
  \lstset{
  language={ocaml},
  breaklines=false,
  mathescape=true,
  #1}
}{}

\newcommand*{\SavedLstInline}{}
\LetLtxMacro\SavedLstInline\lstinline
\DeclareRobustCommand*{\lstinline}{%
  \ifmmode
    \let\SavedBGroup\bgroup
    \def\bgroup{%
      \let\bgroup\SavedBGroup
      \hbox\bgroup
    }%
  \fi
  \SavedLstInline
}

\newcommand\coq%
{\lstinline[language=SSR, breaklines=true, breakatwhitespace=true]}

\newcommand\caml%
{\lstinline[language={ocaml}, breaklines=true, breakatwhitespace=true, mathescape=true]}

\newcommand\haskell%
{\lstinline[breaklines=true, breakatwhitespace=true]}

\makeatletter

\newcommand*{\lstitem}[1][]{%
  \setbox0\hbox\bgroup
    \patchcmd{\lst@InlineM}{\@empty}{\@empty\egroup\item[\usebox0]\leavevmode\ignorespaces}{}{}%
    \lstinline[#1]%
}
\makeatother

\lstset{style=plain}

\usepackage{mathtools}

\newcommand{\bigO}{\mathcal{O}}
\newcommand\concat{\ensuremath{\mathbin{+\mkern-6mu+}}}
\newcommand{\merge}{\mathbin{\land\hspace{-.45em}\land}}
\newcommand{\mergerev}{\mathbin{\lor\hspace{-.45em}\lor}}
\newcommand{\permeq}{\mathrel{=_\texttt{perm}}}
\newcommand\xmark{\text{\ding{55}}\xspace}
\newcommand\cmark{\text{\ding{51}}\xspace}
\newcommand{\sort}{\texttt{sort}}
\newcommand{\asort}{\texttt{asort}}
\newcommand{\Tsort}{\mathrm{T}_{\sort{}}}
\newcommand{\Tasort}{\mathrm{T}_{\asort{}}}
\newcommand{\TasortExt}{\mathrm{T}'_{\asort{}}}
\newcommand{\types}{\ensuremath{\mathcal{U}}}

\newcommand{\paramt}[1]{\ensuremath{\llbracket\,#1\,\rrbracket}}

\AtEndPreamble{%
  \theoremstyle{acmdefinition}
  \newtheorem{remark}[theorem]{Remark}
  \newtheorem{assumption}[theorem]{Assumption}}

\AddToHook{env/assumption/begin}{\crefalias{theorem}{assumption}}
\AddToHook{env/corollary/begin}{\crefalias{theorem}{corollary}}
\AddToHook{env/definition/begin}{\crefalias{theorem}{definition}}
\AddToHook{env/lemma/begin}{\crefalias{theorem}{lemma}}
\AddToHook{env/remark/begin}{\crefalias{theorem}{remark}}


\usepackage{xspace}

\newcommand{\Coq}{{\sffamily\scshape Rocq}\xspace}

\newcommand{\MC}{{\sffamily\scshape MathComp}\xspace}
\newcommand{\Paramcoq}{{\sffamily\scshape Paramcoq}\xspace}

\newcommand{\Isabelle}{{\sffamily\scshape Isabelle}\xspace}
\newcommand{\IsabelleHOL}{{\sffamily\scshape Isabelle/HOL}\xspace}
\newcommand{\Dafny}{{\sffamily\scshape Dafny}\xspace}
\newcommand{\KeY}{{\sffamily\scshape KeY}\xspace}

\newcommand{\OCaml}{{\sffamily OCaml}\xspace}

\newcommand{\Java}{{\sffamily Java}\xspace}
\newcommand{\GHC}{{\sffamily GHC}\xspace}

\newcommand{\wrt}{w.r.t.\@\xspace}

\begin{document}

\title{A bargain for mergesorts}
\subtitle{How to prove your mergesort correct and stable, almost for free}

\author{Cyril Cohen}
\orcid{0000-0003-3540-1050}
\affiliation{\institution{Inria, CNRS, ENS de Lyon, UCBL, LIP, UMR 5668} \country{France}}
\email{Cyril.Cohen@inria.fr}

\author{Kazuhiko Sakaguchi}
\orcid{0000-0003-1855-5189}
\affiliation{\institution{CNRS, ENS de Lyon, UCBL, LIP, UMR 5668} \country{France}}
\email{kazuhiko.sakaguchi@cnrs.fr}

\begin{abstract}
 We present a novel characterization of stable mergesort functions using relational parametricity,
 and show that it implies the functional correctness of mergesort.
 As a result, one can prove the correctness of several variations of mergesort (\eg, top-down,
 bottom-up, tail-recursive, non-tail-recursive, smooth, and non-smooth mergesorts) by proving the
 characteristic property for each variation.
 Thanks to our characterization and the parametricity translation, we deduced the correctness
 results, including stability, of various implementations of mergesort for lists, including highly
 optimized ones, in the \Coq Prover (formerly the {\sffamily\scshape Coq} Proof Assistant).
\end{abstract}

\begin{CCSXML}
<ccs2012>
   <concept>
       <concept_id>10003752.10003790.10011740</concept_id>
       <concept_desc>Theory of computation~Type theory</concept_desc>
       <concept_significance>500</concept_significance>
       </concept>
   <concept>
       <concept_id>10003752.10010124.10010138.10010142</concept_id>
       <concept_desc>Theory of computation~Program verification</concept_desc>
       <concept_significance>500</concept_significance>
       </concept>
   <concept>
       <concept_id>10003752.10003809.10010031.10010033</concept_id>
       <concept_desc>Theory of computation~Sorting and searching</concept_desc>
       <concept_significance>500</concept_significance>
       </concept>
 </ccs2012>
\end{CCSXML}

\ccsdesc[500]{Theory of computation~Type theory}
\ccsdesc[500]{Theory of computation~Program verification}
\ccsdesc[500]{Theory of computation~Sorting and searching}

\keywords{
 Interactive theorem proving,
 Parametricity,
 Mergesort,
 Stable sort}

\maketitle

\section{Introduction}
\label{sec:introduction}

Sorting is ubiquitous in computing. Over the decades, several algorithms, optimization techniques,
and heuristics to solve it efficiently have been developed.
Among these, mergesort achieves both optimal $\bigO(n \log n)$ worst-case time complexity and
stability---the property that a sort algorithm always preserves the order of equivalent
elements---and is thus often preferred, particularly for sorting a linked list.
On the other hand, defining a sort function and proving its correctness is often a good example to
demonstrate some techniques (\eg, how to use advanced recursion and induction) in
interactive theorem provers such as \Coq~\cite{Leroy:mergesort}\cite[Section 7.1]{DBLP:books/daglib/0035083}
and \Isabelle~\cite{DBLP:journals/jar/Sternagel13,functional_algorithms_verified}.
However, these case studies consider only one or a few simple implementations of mergesort and do
not scale well to intricate implementations, such as smooth (\cref{sec:smooth-mergesort})
tail-recursive (\cref{sec:tailrec-mergesort}) mergesort, since a methodology to share a large part
of the functional correctness proofs between several variants of a sort algorithm has not been
studied.
In this paper, we address this issue by introducing a characterization of mergesort functions, which
is easy to prove and implies several correctness results of mergesort.

The intuition behind our characterization of mergesort is as follows: by replacing all the
occurrences of the merge function with concatenation, any stable mergesort function can be turned
into the identity function. If it is not the identity function but may permutate some elements, the
mergesort function is unstable (\cref{sec:characterization}).
In order to make sure that this replacement is done in the intended way, we first abstract out the
mergesort function to take a type parameter representing sorted lists and operators on them, \eg,
merge, singleton, and empty.
We define the \emph{characteristic property} of mergesort functions as the existence of such a
corresponding abstract mergesort function satisfying the following conditions
(\cref{sec:characterization,sec:new-characterization}):
\begin{enumerate}
 \item \label{item:asort_mergeE}it is the mergesort function when instantiated with ``merge'';
       that is, the abstract mergesort instantiates back to the concrete mergesort,
 \item \label{item:asort_catE}it is the identity function when instantiated with concatenation;
       that is, recursively dividing the input list and concatenating them, instead of merging
       them, gives us the input list, and
 \item \label{item:asort_R}it is relationally parametric~\cite{DBLP:conf/ifip/Reynolds83}
       (\cref{sec:prelim-param}).
\end{enumerate}
Our correctness proof of mergesort is twofold.
Firstly, we prove the characteristic property for each mergesort function we wish to verify
(\cref{sec:implementation-to-characterization,sec:new-characterization-tailrec,sec:new-characterization-smooth}).
Among the three conditions above, Equations \eqref{item:asort_mergeE} and \eqref{item:asort_catE}
can be proved by functional induction and equational reasoning using basic facts about lists, for
all the variants of mergesort presented in the paper, and \eqref{item:asort_R} follows from the
abstraction theorem~\cite{DBLP:conf/ifip/Reynolds83} for all ground implementations. Furthermore,
\eqref{item:asort_mergeE} holds by definition in the simplest cases.
Secondly, we prove correctness results for any mergesort function, whose sole assumption is the
characteristic property
(\cref{sec:characterization-to-correctness,sec:new-characterization-to-correctness}).
The characteristic property implies an induction principle over \emph{traces}---binary trees
reflecting the underlying divide-and-conquer structure of mergesort---to reason about the relation
between the input and output of mergesort (\cref{lemma:sort_ind,lemma:sort_ind'}), and the
naturality of sort (\cref{lemma:sort_map}).
These two consequences are sufficient to deduce several correctness results of mergesort, including
stability.

While we first present a simplified version of characterization and proofs that work only for
non-tail-recursive mergesort (\cref{sec:nontailrec}), we later extend them (\cref{sec:new-proofs})
to the following optimization techniques for mergesort.
\begin{itemize}
 \item Tail-recursive mergesort (\cref{sec:tailrec-mergesort}) in call-by-value evaluation, \eg,
       \caml{List.stable_sort} of the \OCaml standard library~\cite{OCaml}, has the advantage that
       it does not use up stack space and thus is efficient.
 \item Smooth mergesort (\cref{sec:smooth-mergesort}) reuses sorted slices in the input in the
       sorting process and is efficient on almost sorted input.
       For example, \haskell{Data.List.sort} of the \GHC (Glasgow Haskell
       Compiler) libraries~\cite{GlasgowHaskell} is a smooth non-tail-recursive mergesort.
\end{itemize}
Note that non-tail-recursive mergesort still has the advantage over tail-recursive mergesort that it
allows us to compute a first few elements of the output incrementally in call-by-need evaluation.
The problem of obtaining the $k$ smallest items of a list of length $n$ is called \emph{partial
sorting}, and its online version that allows us to stop the sorting process and resume it later to
increase $k$ is called \emph{incremental sorting}~\cite{DBLP:conf/alenex/ParedesN06}.
In fact, non-tail-recursive mergesort achieves the optimal time complexity $\bigO(n + k \log k)$ of
these problems.
Thus, the optimal mergesort function depends on the evaluation strategy and whether one wants to use
it as a partial or incremental sort.
From this perspective, the advantage of our verification approach is that it allows us to verify
both optimal implementations of mergesort modularly.

We formalized our functional correctness proofs in the \Coq Prover~\cite{rocqrefman}.
In \cref{sec:formalization}, we discuss two technical aspects of our formalization.
Firstly, we review a technique to convince \Coq that bottom-up non-tail-recursive
mergesort is terminating (\cref{sec:nontailrec-mergesort-in-coq}), which does not require explicit
termination proofs (\ie, no use of well-founded recursion), but is done by making the mergesort
function structurally recursive~\cite{DBLP:conf/types/Gimenez94} without the use of artificial
decreasing argument (fuel)~\cite{Gonthier:2009}.
Avoiding the use of well-founded recursion has two practical advantages:
1) it makes execution of mergesort inside the \Coq kernel easier and more efficient, and
2) it makes application of the parametricity translation~\cite{DBLP:conf/fossacs/BernardyL11,
DBLP:journals/jfp/BernardyJP12, DBLP:conf/csl/KellerL12} of the \Paramcoq plugin~\cite{paramcoq} to
the abstract mergesort function straightforward.
Furthermore, this technique of making mergesort structurally recursive allows us to implement
bottom-up tail-recursive mergesort (\cref{sec:tailrec-mergesort-in-coq}), and both bottom-up
non-tail-recursive and tail-recursive mergesorts can be easily made smooth
(\cref{appx:nontailrec-mergesort-in-coq,appx:tailrec-mergesort-in-coq}) in contrast to top-down
mergesorts, \eg, \cref{sec:tailrec-mergesort} and \caml{List.stable_sort} of \OCaml.
Up to our knowledge, \cref{appx:tailrec-mergesort-in-coq} provides the first implementation of
smooth tail-recursive mergesort for lists in functional programming
(\cref{table:classify-mergesorts}).
Secondly, we discuss the design and organization of the library (\cref{sec:interface}), particularly,
the interface for mergesort functions (\cref{sec:the-interface}) which allows us to state our
correctness lemmas polymorphically for any stable mergesort functions
(\cref{sec:overloaded-correctness-lemmas}).
To construct an instance of this interface for a concrete mergesort function, we prove
\eqref{item:asort_mergeE} and \eqref{item:asort_catE} by induction and equational reasoning, and use
\Paramcoq to automatically prove \eqref{item:asort_R}
(\cref{sec:populating-the-interface}).

Our approach currently has two limitations (\cref{sec:limitations}).
Firstly, the proof of \eqref{item:asort_mergeE} cannot be done by definition (by computation in
\Coq) for elaborate variations of mergesort, such as tail-recursive (\cref{sec:tailrec-mergesort}),
smooth (\cref{sec:smooth-mergesort}), and structurally-recursive
(\cref{sec:nontailrec-mergesort-in-coq,sec:tailrec-mergesort-in-coq}) mergesorts.
This is because these mergesorts inspect sorted lists by operations not supplied to the abstract
mergesort function, and we have to abstract out these mergesorts in a non-trivial way.
Since we are primarily interested in structurally-recursive mergesorts in our formalization, the
proof of \eqref{item:asort_mergeE} has to be done by hand for most variations.
Secondly, our approach does not support the use of $n$-way merge function that merges $n > 2$ lists
and cannot be expressed as a simple combination of the 2-way merge, \eg, \haskell{Data.List.sort} of
\GHC 9.12.1 and later which uses 3-way and 4-way merge functions.
In this paper, ``\haskell{Data.List.sort} of \GHC'' generally refers to its old version.

\begin{table}[t]
 \newcommand{\yes}{\textcolor{ACMGreen}{\cmark}\xspace}
 \newcommand{\maybe}{\textcolor{ACMPurple}{\ding{46}}\xspace}
 \newcommand{\no}{\textcolor{ACMRed}{\xmark}\xspace}
 \caption{Classification of mergesort functions and their optimization techniques, by whether they
 are top-down or bottom-up, tail-recursive or not, and smooth or not. \yes means that the
 classification applies to \emph{one of} the given implementations, and \no means that the
 classification does not apply.
 Incompatible classifications, \eg, top-down and bottom-up, may apply to the same row,
 since one row may correspond to a few implementations (the number indicated in the ``\#
 implem.'' column).
 Since only bottom-up mergesorts can be made smooth, ``non-smooth'' and ``smooth'' appear as
 the sub-columns of the ``bottom-up'' column.}
 \label{table:classify-mergesorts}
 \footnotesize
 \begin{tabular}{p{4cm} c c c c c c}
  \hline
  & \# implem. & top-down & \multicolumn{2}{c}{bottom-up} & non-tail-rec. & tail-rec. \\
  & & & non-smooth & smooth & & \\
  \hline
  \cref{sec:mergesort-approaches}
  & 2 & \yes & \yes & \no  & \yes & \no  \\
  \cref{sec:tailrec-mergesort}
  & 1 & \yes & \no  & \no  & \no  & \yes \\
  \cref{sec:smooth-mergesort}
  & 1 & \no  & \no  & \yes & \yes & \no  \\
  \cref{sec:nontailrec-mergesort-in-coq}
  & 1 & \no  & \yes & \no  & \yes & \no  \\
  \cref{sec:tailrec-mergesort-in-coq}
  & 1 & \no  & \yes & \no  & \no  & \yes \\
  \cref{appx:nontailrec-mergesort-in-coq} (incl.\@ \cref{sec:nontailrec-mergesort-in-coq})
  & 4 & \no  & \yes & \yes & \yes & \no  \\
  \cref{appx:tailrec-mergesort-in-coq} (incl.\@ \cref{sec:tailrec-mergesort-in-coq})
  & 4 & \no  & \yes & \yes & \no  & \yes \\
  \OCaml (\texttt{List.stable\_sort})
  & 1 & \yes & \no  & \no  & \no  & \yes \\
  \GHC (\texttt{Data.List.sort}) and \citet{DBLP:journals/jar/Sternagel13,DBLP:journals/tocl/LeinoL15}
  & 1 & \no  & \no  & \yes & \yes & \no  \\
  \citet{functional_algorithms_verified} (incl.\@ \citet{DBLP:journals/jar/Sternagel13})
  & 3 & \yes & \yes & \yes & \yes & \no  \\
  \citet{DBLP:books/daglib/0035083}
  & 1 & \yes & \no  & \no  & \yes & \no  \\
  \citet{Leroy:mergesort}
  & 1 & \no  & \yes & \no  & \yes & \no  \\
  \hline
 \end{tabular}
 \let\yes\undefined
 \let\no\undefined
 \let\maybe\undefined
\end{table}
To summarize, \cref{table:classify-mergesorts} classifies the mergesort functions and their
optimization techniques presented in this paper and related work.

\paragraph{Disclaimer}
While we extensively use the algorithmic complexities and performance of mergesort as the
\emph{motivation} of the present work (especially \cref{sec:optimizations}), the contribution of the
paper is the functional correctness proofs, and we decided to neither informally nor formally show
any complexity results of mergesort, beyond describing the well-known fact that the worst-case time
complexity of mergesort is $\bigO(n \log n)$ (\cref{sec:mergesort-approaches}).
Our rationale is that
1) such an addition would obscure the contribution of the paper,
2) our characterization does not rule out sorting algorithms slower than $\bigO(n \log n)$, \eg,
insertion sort (\cref{def:insertion-sort}), and
3) the only case where we improve the worst-case time complexity beyond $\bigO(n \log n)$ is
non-tail-recursive mergesort in call-by-need evaluation as optimal incremental sorting.

\paragraph{Outline}
The paper is organized as follows.
In \cref{sec:preliminaries}, we introduce the \OCaml functions used in the paper for manipulating
lists (\cref{sec:prelim-list}) and parametricity (\cref{sec:prelim-param}).
In \cref{sec:nontailrec}, we present a simplified version of our characterization
(\cref{sec:characterization}) and correctness proofs (\cref{sec:characterization-to-correctness})
that work only for non-tail-recursive mergesort.
As examples considered in this section, we review top-down and bottom-up mergesort whose underlying
traces have different shapes (\cref{sec:mergesort-approaches}), and describe how to prove the
characteristic property on them (\cref{sec:implementation-to-characterization}).
In \cref{sec:optimizations}, we review two optimization techniques for mergesort, namely,
tail-recursive mergesort (\cref{sec:tailrec-mergesort}) and smooth mergesort
(\cref{sec:smooth-mergesort}), and extend the characterization and proof technique presented in
\cref{sec:nontailrec} to these optimization techniques (\cref{sec:new-proofs}).
In \cref{sec:formalization}, we discuss technical aspects of our formalization. We show how to make
mergesort structurally recursive (\cref{sec:termination}) and discuss the design and organization of
the library (\cref{sec:interface}).
In \cref{sec:limitations}, we discuss the limitations of our approach.
In \cref{sec:related-work}, we discusses related work before concluding the paper in
\cref{sec:conclusion}.

Because of the structure of our correctness proofs, the paper roughly consists of two parts:
sections concerning concrete mergesorts, including optimization techniques and the proofs that they
satisfy our characteristic property (\cref{sec:mergesort-approaches,%
sec:implementation-to-characterization,sec:tailrec-mergesort,sec:smooth-mergesort,%
sec:new-characterization-tailrec,sec:new-characterization-smooth,sec:termination}),
and sections concerning the characteristic property and how it solely implies several correctness
results (\cref{sec:characterization,sec:characterization-to-correctness,sec:new-characterization,%
sec:new-characterization-to-correctness,sec:interface}).
In order to present, extend, and formalize our proof technique incrementally, these two kinds of
sections appear alternately in the paper. The former sections are marked with \dag{} to guide the
readers.

\paragraph{Appendices}

\Cref{appx:basic-definitions-and-facts} provides a list of basic definitions and lemmas in the \MC
library~\cite{mathcomp} used for the proofs in
\cref{sec:implementation-to-characterization,sec:characterization-to-correctness,%
sec:new-characterization-to-correctness,appx:stablesort-theory}.
While we refer definitions and lemmas from this appendix in proofs for the sake of preciseness,
these references come with enough extra information, \eg, \cref{lemma:coq:catA} (associativity of
$\concat$), to be safely ignored.
\Cref{appx:stablesort-theory} provides the list of all lemmas about stable sort functions solely
derived from the characterization, their formal statements in \Coq, and their informal proofs.
In \cref{appx:stability-statements}, we compare the statements of our stability results with ones in
the literature~\cite{CLRS4th, Leroy:mergesort, DBLP:journals/jar/Sternagel13,
  DBLP:journals/tocl/LeinoL15} and argue that the former is slightly more general than the latter.
In \cref{appx:mergesort-in-coq}, we present the full \Coq implementations of structurally-recursive
non-tail-recursive and tail-recursive mergesorts including smooth ones, since \cref{sec:termination}
presents only the non-smooth ones in the \OCaml syntax.

\section{Preliminaries}
\label{sec:preliminaries}

Throughout the paper except appendix, mergesort functions are presented in the \OCaml syntax.
This preliminary section introduces the \OCaml functions used in the paper for manipulating lists
(\cref{sec:prelim-list}) and binary relational parametricity (\cref{sec:prelim-param}).

Without loss of generality, we consider only sorting functions that
take a ``less than or equal'' relation of type $T \to T \to \mathrm{bool}$, which we denote by
$\leq$ or \caml{(<=)}, and sort items in ascending order.

\subsection{Functons from the \caml{List} library}
\label{sec:prelim-list}

The mergesort functions presented in the paper use the following \OCaml functions for manipulating
lists.
These functions are defined in the \caml{List} module of the standald library, except for
\caml{List.split_n} defined in the core standard library overlay~\cite{janestreet-core}.
\begin{description}[style=nextline]
 \lstitem[language={[Objective]caml}]{val length : 'a list -> int}
   \caml{List.length l} returns the length (number of elements) of \caml{l}.
 \lstitem[language={[Objective]caml}]{val rev : 'a list -> 'a list}
   List reversal.
 \lstitem[language={[Objective]caml}]{val append : 'a list -> 'a list -> 'a list}
   Concatenates two lists.
 \lstitem[language={[Objective]caml}]{val rev_append : 'a list -> 'a list -> 'a list}
   \caml{List.rev_append l1 l2} reverses \caml{l1} and concatenates it with \caml{l2}.
 \lstitem[language={[Objective]caml}]{val flatten : 'a list list -> 'a list}
   \caml{List.flatten} returns the list obtained by flattening (folding by concatenation) the given
   list of lists.
 \lstitem[language={[Objective]caml}]{val split_n : 'a list -> int -> 'a list * 'a list}
   \caml{List.split_n l n} splits \caml{l} into two lists of the first \caml{n} elements and the
   remaining, and returns the pair of them.
 \lstitem[language={[Objective]caml}]{val map : ('a -> 'b) -> 'a list -> 'b list}
   \caml{List.map f [a1; ...; an]} returns the list \caml{[f a1; ...; f an]}.
 \lstitem[language={[Objective]caml}]{val filter : ('a -> bool) -> 'a list -> 'a list}
   \caml{List.filter p l} returns the list collecting all the elements of \caml{l} that satisfy the
   predicate \caml{p}, and preserves the order of the elements in the input.
\end{description}
Hereinafter, we omit the prefix \caml{List} for the above functions, assuming that the \caml{List}
module is open.
Among these, we assume that \caml{length}, \caml{rev}, and \caml{rev_append} are tail recursive, and
\caml{append}, \caml{flatten}, \caml{split_n}, \caml{map} and \caml{filter} are not, as in their
standard implementations.
The basic definitions and facts in \Coq, including the \Coq counterpart of the above functions, used
for our correctness proofs are listed in \cref{appx:basic-definitions-and-facts}.

We use the following notations in the part of the paper concerning proofs:
\begin{align*}
 \texttt{map}_f
 &\coloneq \caml{map f},
 &[f \, x \mid x \leftarrow \mathit{xs}]
 &\coloneq \texttt{map}_f \, \mathit{xs}, \\
 \texttt{filter}_p
 &\coloneq \caml{filter p},
 &[x \mid x \leftarrow \mathit{xs}, p \, x]
 &\coloneq \texttt{filter}_p \, \mathit{xs}, \\
 \mathit{xs} \concat \mathit{ys}
 &\coloneq \caml{append} \, \mathit{xs} \, \mathit{ys},
 &[f \, x \mid x \leftarrow \mathit{xs}, p \, x]
 &\coloneq \texttt{map}_f \, (\texttt{filter}_p \, \mathit{xs}).
\end{align*}

\subsection{Parametricity}
\label{sec:prelim-param}

We assume the existence of an interpretation function $\paramt{\cdot}$ from types in $\types{}$ (resp.~type families ranging in~$\types$) of
our calculus to (resp.~families of) relations on terms, which is called a binary parametricity translation. In this paper, we generally abbreviate ``binary parametricity'' to ``parametricity''.
Arrow types $S \to T$ are interpreted as morphisms for relations $\paramt{S}$ and $\paramt{T}$, \ie,
given $f_1, f_2 : S \to T$, we have $(f_1, f_2) \in \paramt{S \to T}$ iff
$\forall x_1 \, x_2, (x_1, x_2) \in \paramt{S} \Rightarrow (f_1 \, x_1, f_2 \, x_2) \in \paramt{T}$.
Finally, polymorphic types are interpreted as follows: given
$f_1, f_2 : (\forall S : \types, T)$, we have $(f_1, f_2) \in \paramt{\forall S : \types, T}$ iff
$\forall \left(S_1 \, S_2 : \types\right) ({\sim_S} \subseteq S_1 \times S_2), (f_1 \, S_1, f_2 \, S_2) \in \paramt{T}$,
where $\paramt{T}$ interprets $S$ as $\paramt{S} \coloneq {\sim_S}$.
For a full account on parametric interpretations and parametricity translations in this style, we
refer to \citet{DBLP:conf/ifip/Reynolds83} and \citet{DBLP:conf/fpca/Wadler89}.

\begin{definition}[Parametricity]
 \label{def:parametricity}
 We say a term $t$ of type $T$ is parametric if $(t, t) \in \paramt{T}$.
\end{definition}

We also assume that the interpretation of ground types (\eg, $\paramt{\mathrm{bool}}$ and
$\paramt{\mathrm{nat}}$) is equality, and that
$\paramt{\mathrm{list} \, T} \coloneq \paramt{\mathrm{list}}_{\paramt{T}}$
is the pointwise lifting of relation $\paramt{T}$ to lists, defined as follows:
$\left([x_1, \dots, x_n], [y_1, \dots, y_m]\right) \in \paramt{\mathrm{list}}_\sim$ iff the two
lists have the same length $n = m$ and they are pairwise related, \ie, $x_i \sim y_i$ for any $1 \leq i \leq n$.



In \cref{sec:nontailrec,sec:optimizations}, we assume we work in a fragment of System F, extended with recursors, which satisfies the abstraction theorem.
\begin{assumption}[Abstraction theorem]
 \label{assumption:abstraction}
 Closed terms in the empty context are parametric.
\end{assumption}
\noindent Relations obtained by the interpretation $\paramt{\cdot}$ cannot be expressed in such a
calculus, and thus, we consider them as expressed in the meta language.
However, in \cref{sec:interface}, we formalize our arguments in \Coq where relations are
interpreted as functions $S \to T \to \types{}$, which justifies the abuse of $\forall$ notation for
polymorphic types in the calculus, \eg, $\forall T : \types, S \to V$, and universally quantified
formulas in the meta language, \eg, $\forall T : \types, \phi \Rightarrow \psi$.


\section{Non-tail-recursive mergesorts}
\label{sec:nontailrec}

In this section, we present a simplified version of the characteristic property
(\cref{sec:characterization}) and the correctness proofs
(\cref{sec:implementation-to-characterization,sec:characterization-to-correctness}) that work only
for non-tail-recursive mergesort.
As variations of mergesort covered by this section, we consider top-down and bottom-up
non-tail-recursive mergesorts (\cref{sec:mergesort-approaches}).
The proofs are twofold. Firstly, we prove the characteristic property for each mergesort function
(\cref{sec:implementation-to-characterization}). Secondly, we prove that any mergesort function
satisfying the characteristic property is correct (\cref{sec:characterization-to-correctness}).
We will later extend our approach (\cref{sec:new-characterization}) to support tail-recursive
mergesorts (\cref{sec:tailrec-mergesort}) and smooth mergesorts (\cref{sec:smooth-mergesort}).

\subsection{Top-down and bottom-up approaches$^\dag$}
\label{sec:mergesort-approaches}

Mergesort is a sort algorithm that works by merging sorted sequences.
Merging two input lists \caml{xs} and \caml{ys} sorted \wrt (with regard to) a total preorder \caml{(<=)}, \ie, a total
and transitive binary relation, obtains another sorted list which contains each element of
the input lists exactly one time, as in \cref{fig:nontailrec-merge}.
In order to sort an arbitrary input list, mergesort has to divide the input into sorted slices, \eg,
singleton lists, and recursively merge them. There are two approaches to do so: \emph{top-down} and
\emph{bottom-up}.
The top-down approach implemented in \cref{fig:top-down-mergesort}
(1)~divides the input into two lists of roughly the same length,
(2)~recursively sorts each half, and
(3)~merges two sorted halves.
The input is eventually divided into singleton lists, and recursion stops there.
The bottom-up approach implemented in \cref{fig:bottom-up-mergesort}
(1)~divides the input into singleton lists,
(2)~obtains sorted lists of length 2, then 4, then 8, and so on, by merging each adjacent pair, and
(3)~eventually obtains one sorted list.

\begin{figure}[t]
\begin{subfigure}[t]{.5\textwidth}
\begin{camlcode}
let rec merge (<=) xs ys =
  match xs, ys with
  | [], ys -> ys
  | xs, [] -> xs
  | x :: xs', y :: ys' ->
    if x <= y then
      x :: merge (<=) xs' ys
    else
      y :: merge (<=) xs ys'
\end{camlcode}
\caption{Non-tail-recursive merge.}
\label{fig:nontailrec-merge}
\end{subfigure}%
\begin{subfigure}[t]{.5\textwidth}
\begin{camlcode}
let rec revmerge (<=) xs ys accu =
  match xs, ys with
  | [], ys -> rev_append ys accu
  | xs, [] -> rev_append xs accu
  | x :: xs', y :: ys' ->
    if x <= y then
      revmerge (<=) xs' ys (x :: accu)
    else
      revmerge (<=) xs ys' (y :: accu)
\end{camlcode}
\caption{Tail-recursive merge.}
\label{fig:tailrec-merge}
\end{subfigure}
\caption{Non-tail-recursive and tail-recursive merge functions.}
\label{fig:merge}

\begin{subfigure}[t]{.5\textwidth}
\begin{camlcode}[escapeinside=\#\#]
let rec sort (<=) = function
  | [] -> []
  | [x] -> [x]
  | xs ->
    let k = length xs / 2 in
    let (xs1, xs2) = split_n xs k in
    merge (<=)
      (sort (<=) xs1)
      (sort (<=) xs2)

#\phantom{\underline{()}}#
\end{camlcode}
\caption{Top-down mergesort.}
\label{fig:top-down-mergesort}
\end{subfigure}%
\begin{subfigure}[t]{.5\textwidth}
\begin{camlcode}
let sort (<=) xs =
  let rec merge_pairs = function
    | a :: b :: xs ->
      merge (<=) a b :: merge_pairs xs
    | xs -> xs in
  let rec merge_all = function
    | [] -> []
    | [x] -> x
    | xs -> merge_all (merge_pairs xs)
  in
  merge_all (map (fun x -> [x]) xs)
\end{camlcode}
\caption{Bottom-up mergesort.}
\label{fig:bottom-up-mergesort}
\end{subfigure}
\caption{Naive implementations of top-down and bottom-up mergesort algorithms.}
\label{fig:naive-mergesort-algorithms}
\end{figure}

\begin{figure}[t]
 \tikzset{
  level/.style = {sibling distance={7cm/pow(2,#1)}, level distance=.5cm},
  every node/.style = {fill=black!20, circle, inner sep=-2pt, minimum width=1em},
  merge/.style = {font={$\merge$}}}
 \begin{subfigure}{.5\textwidth}
  \centering
  \begin{tikzpicture}
   \node[merge] {}
     child{
       node[merge] {}
       child{
         node[merge] {} child{ node {7} } child{ node {6} }
       }
       child{
         node[merge] {}
         child{ node {2} }
         child{
           node[merge] {} child{ node {4} } child{ node {3} }
         }
       }
     }
     child{
       node[merge] {}
       child{
         node[merge] {} child{ node {8} } child{ node {1} }
       }
       child{
         node[merge] {}
         child{ node {5} }
         child{
           node[merge] {} child{ node {0} } child{ node {9} }
         }
       }
     };
  \end{tikzpicture}
 \caption{A trace of top-down mergesort.}
 \label{fig:mergesort-top-down-trace}
 \end{subfigure}%
 \begin{subfigure}{.5\textwidth}
  \centering
  \begin{tikzpicture}
   \node[merge] {}
     child{
       node[merge] {}
       child{
         node[merge] {}
         child{
           node[merge] {} child{ node {7} } child{ node {6} }
         }
         child{
           node[merge] {} child{ node {2} } child{ node {4} }
         }
       }
       child{
         node[merge] {}
         child{
           node[merge] {} child{ node {3} } child{ node {8} }
         }
         child{
           node[merge] {} child{ node {1} } child{ node {5} }
         }
       }
     }
     child{
       node[merge] {} child{ node {0} } child{ node {9} }
     };
  \end{tikzpicture}
 \caption{A trace of bottom-up mergesort.}
 \label{fig:mergesort-bottom-up-trace}
 \end{subfigure}
 \caption{The traces of the sorting processes for the input list \caml{[7; 6; 2; 4; 3; 8; 1;
 5; 0; 9]} in the naive top-down \subref{fig:mergesort-top-down-trace} and bottom-up
 \subref{fig:mergesort-bottom-up-trace} mergesort algorithms. Each number placed as a leaf denotes
 a singleton list consisting of it, and $\merge$ placed as an internal node denotes the merging of
 its children. The difference of their shapes reflects the difference of associativity of merge.}
 \label{fig:mergesort-traces}
\end{figure}

In the top-down mergesort, two lists being merged have the same length, or the latter has one extra
element. In the bottom-up mergesort, the length of the former list is a power of 2 and is not less
than the length of the latter list.
Therefore, when the length of the input is not a power of two, these two mergesort functions split
the input differently, which can be illustrated using \emph{traces} as in
\cref{fig:mergesort-traces}. A trace is a binary tree that reflects the divide-and-conquer structure
of computation of mergesort, whose leaves and internal nodes denote a singleton list and merge,
respectively.
A trace of the top-down mergesort is always an AVL tree; that is, the heights of the two children of
any node differ by at most one. In contrast, the left child of any node is a perfect binary tree in
a trace of the bottom-up mergesort.
Regardless of the approach, the height of any trace must be logarithmic in the length $n$ of the
input to achieve the optimal time complexity, \eg, the height is always $\lceil \log_2 n \rceil$ in
our implementations in \cref{fig:naive-mergesort-algorithms}.
Since the merge function costs a linear time in the sum of lengths of input, merge operations of
each level in the trace cost a linear time $\bigO(n)$ in total. Thus, we can conclude that mergesort has a
quasilinear time complexity $\bigO(n \log n)$.
More rigorous complexity analyses of mergesort that use a recurrence relation can be found in some
textbooks on algorithms, \eg, \cite[Section 2.3.2]{CLRS4th}.

\subsection{Characterization of stable non-tail-recursive mergesort functions}
\label{sec:characterization}

In this section, we present the characterization of stable non-tail-recursive mergesort functions.
A sort algorithm is \emph{stable} if equivalent elements, such as $x$ and $y$ satisfying
$x \equiv y \coloneq x \leq y \land y \leq x$, always appear in the same order in the input and
output.
To maintain the stability of a mergesort function, one first has to notice that the ordering of the
last two arguments of \caml{merge} (\caml{xs} and \caml{ys} in \cref{fig:nontailrec-merge}) matters.
When two elements being compared in \caml{merge} are equivalent, the element from the first list
\caml{xs} appears earlier than the other in the output.
Thus, these two lists being merged should be obtained by sorting two adjacent (former and latter)
slices of the input, respectively.
Taking this observation into account, the intuition behind the characterization is that the
mergesort function can be turned into the identity function by replacing \caml{merge} with
concatenation. In other words, we should always be able to get the input by flattening traces, \eg,
\cref{fig:mergesort-traces}.

In order to ensure this replacement is done in the intended way and to state the characterization
formally, we first abstract out the mergesort function \sort{} of type:
\begin{alignat*}{3}
 \Tsort \coloneq{}
 & \forall (T : \types{}),
 && \underbrace{(T \to T \to \mathrm{bool})}_{\text{comparison function}} \to{}
 && \underbrace{\mathrm{list} \, T\vphantom{()}}_{\text{input}} \to
    \underbrace{\mathrm{list} \, T\vphantom{()}}_{\text{output}} \\
 \intertext{to \asort{} of type:}
 \Tasort \coloneq{} & \forall (T \, R : \types{}),
 && \underbrace{(R \to R \to R)}_{\text{merge}\vphantom{b}} \to
    \underbrace{(T \to R)}_{\text{singleton}} \to
    \underbrace{R\vphantom{()}}_{\text{empty}} \to{}
 && \underbrace{\mathrm{list} \, T\vphantom{()}}_{\text{input}} \to
    \underbrace{R\vphantom{()}}_{\text{output}}.
\end{alignat*}
The abstract sort function \asort{} is abstracted over the type $R$ of sorted lists and the
basic operations on them, namely, the merge, singleton, and empty constructions.
Therefore, we expect to get \sort{} by instantiating \asort{} with $\merge_\leq$
($\coloneq \texttt{merge} \, (\leq)$):
\begin{equation}
\forall (T : \types{}) \, ({\leq} : T \to T \to \mathrm{bool}) \, (\mathit{xs} : \mathrm{list} \, T),
 \asort{} \, (\merge_\leq) \, [\cdot] \, [] \, \mathit{xs} =
 \sort{}_\leq \, \mathit{xs} \label{eq:asort_mergeE}
\end{equation}
where $[\cdot]$ and $[]$ are the singleton list function $(\lambda (x : T), [x])$ and the empty
list, respectively.
The replacement of merge ($\merge_\leq$) with concatenation ($\concat$) can be done by instantiating
\asort{} with concatenation. We expect to get the identity function by this instantiation:
\begin{equation}
\forall (T : \types{}) \, (\mathit{xs} : \mathrm{list} \, T),
 \asort{} \, (\concat) \, [\cdot] \, [] \, \mathit{xs} = \mathit{xs}. \label{eq:asort_catE}
\end{equation}
\stepcounter{equation}

Although both \cref{eq:asort_mergeE,eq:asort_catE} instantiate the type parameter $R$ of the
abstract sort function \asort{} with $\mathrm{list} \, T$, abstracting it out as a type
parameter is crucial in ensuring that \asort{} builds the output by using the three
operators on $R$ inductively and uniformly, \ie, in such a way that
\asort{} is parametric as defined in \cref{sec:prelim-param}:
$(\asort{}, \asort{}) \in \paramt{\Tasort}$, which holds iff:
\begin{flalign*}
 \forall & (T_1 \, T_2 : \types{}) \, ({\sim_T} \subseteq T_1 \times T_2), & \\
 \forall & (R_1 \, R_2 : \types{}) \, ({\sim_R} \subseteq R_1 \times R_2), & \\
 \forall & (\mathit{merge}_1 : R_1 \to R_1 \to R_1) \, (\mathit{merge}_2 : R_2 \to R_2 \to R_2), & \text{(merge)} \\
         & (\forall (\mathit{xs}_1 : R_1) \, (\mathit{xs}_2 : R_2), \mathit{xs}_1 \sim_R \mathit{xs}_2 \Rightarrow \\
         & \phantom{(}
            \forall (\mathit{ys}_1 : R_1) \, (\mathit{ys}_2 : R_2), \mathit{ys}_1 \sim_R \mathit{ys}_2 \Rightarrow
	      (\mathit{merge}_1 \, \mathit{xs}_1 \, \mathit{ys}_1) \sim_R (\mathit{merge}_2 \, \mathit{xs}_2 \, \mathit{ys}_2)) \Rightarrow \\
 \forall & ([\cdot]_1 : T_1 \to R_1) \, ([\cdot]_2 : T_2 \to R_2), & \text{(singleton)} \\
         & (\forall (x_1 : T_1) \, (x_2 : T_2), x_1 \sim_T x_2 \Rightarrow [x_1]_1 \sim_R [x_2]_2) \Rightarrow \\
 \forall & ([]_1 : R_1) \, ([]_2 : R_2), []_1 \sim_R []_2 \Rightarrow & \text{(empty)} \\
 \forall & (\mathit{xs}_1 : \mathrm{list} \, T_1) \, (\mathit{xs}_2 : \mathrm{list} \, T_2),
           (\mathit{xs}_1, \mathit{xs}_2) \in \paramt{\mathrm{list}}_{\sim_T} \Rightarrow \\
         & (\asort{} \, \mathit{merge}_1 \, [\cdot]_1 \, []_1 \, \mathit{xs}_1) \sim_R
           (\asort{} \, \mathit{merge}_2 \, [\cdot]_2 \, []_2 \, \mathit{xs}_2) &
\end{flalign*}

To sum up, we define the characteristic property of stable mergesort functions as the existence of
an abstract mergesort function \asort{} that is parametric and satisfies
\cref{eq:asort_mergeE,eq:asort_catE}.

\subsection{Proofs of the characteristic property$^\dag$}
\label{sec:implementation-to-characterization}

In this section, we briefly show that the two mergesort functions presented in
\cref{fig:naive-mergesort-algorithms} satisfy the characteristic property presented in
\cref{sec:characterization}.
Before doing so, we present an insertion sort (\cref{def:insertion-sort}) as the simplest instance
of the characterization (\cref{lemma:insertion-sort-stable}).

\begin{definition}[Insertion sort]
\label{def:insertion-sort}
Noticing that one-element insertion to a sorted list is a special case of merge, an insertion sort
can be defined as follows.
\begin{camlcode}
let rec sort (<=) = function [] -> [] | x :: xs -> merge (<=) [x] (sort (<=) xs)
\end{camlcode}
\end{definition}

\begin{lemma}
\label{lemma:insertion-sort-stable}
The insertion sort (\cref{{def:insertion-sort}}) satisfies the characteristic property.
\end{lemma}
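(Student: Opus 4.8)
The plan is to exhibit an explicit witness $\asort$ for the characteristic property by abstracting the recursive definition of insertion sort syntactically: we keep the same recursion on the input list but replace the concrete operations by the three abstract parameters. Concretely, I would define $\asort$ of type $\Tasort$ by structural recursion on the input as
\[
 \asort \, \mathit{merge} \, [\cdot] \, [] \, [] = []
 \quad\text{and}\quad
 \asort \, \mathit{merge} \, [\cdot] \, [] \, (x :: \mathit{xs}) = \mathit{merge} \, [x] \, (\asort \, \mathit{merge} \, [\cdot] \, [] \, \mathit{xs}),
\]
where on the right-hand sides $\mathit{merge}$, $[\cdot]$, and $[]$ now denote the abstract merge, singleton, and empty parameters supplied to $\asort$. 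With this witness in hand, I would then discharge the three requirements in turn.

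For \cref{eq:asort_mergeE}, instantiating the abstract parameters with $\merge_\leq$, $[\cdot]$, and $[]$ turns the two defining equations of $\asort$ into exactly the two defining equations of the insertion sort of \cref{def:insertion-sort}. Hence $\asort \, (\merge_\leq) \, [\cdot] \, [] \, \mathit{xs} = \sort{}_\leq \, \mathit{xs}$ holds by computation (definitional equality), with no induction needed—this is the intended payoff of having chosen the witness to mirror the concrete definition. For \cref{eq:asort_catE}, I would proceed by structural induction on $\mathit{xs}$: the base case reduces $\asort \, (\concat) \, [\cdot] \, [] \, []$ to $[]$ immediately, and in the inductive step unfolding gives $[x] \concat (\asort \, (\concat) \, [\cdot] \, [] \, \mathit{xs})$, after which the induction hypothesis rewrites the tail to $\mathit{xs}$ and the elementary fact $[x] \concat \mathit{xs} = x :: \mathit{xs}$ closes the goal.

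Finally, for parametricity $(\asort, \asort) \in \paramt{\Tasort}$, the cleanest route is to appeal to the abstraction theorem (\cref{assumption:abstraction}): since $\asort$ is a closed term built solely from the list recursor and its abstract arguments, it is automatically parametric. Should one prefer a direct argument, it follows by induction on a pair of related input lists, invoking the relatedness hypothesis on the empty element in the base case and the hypotheses on $\mathit{merge}$ and $[\cdot]$ in the inductive step.

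None of the three steps is a genuine obstacle in this simplest instance. The only point worth attention is that the abstraction must be faithful enough that \cref{eq:asort_mergeE} holds \emph{definitionally} rather than merely propositionally; here this is unproblematic precisely because insertion sort inspects sorted lists only through the three abstracted operations, whereas the paper later flags this definitional step as the main difficulty for the more elaborate (tail-recursive, smooth, structurally-recursive) mergesorts.
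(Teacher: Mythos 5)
Your proposal is correct and follows essentially the same route as the paper: the same abstraction of insertion sort into \texttt{asort}, \cref{eq:asort_mergeE} holding by definition, parametricity from the abstraction theorem (\cref{assumption:abstraction}), and \cref{eq:asort_catE} by structural induction with the step $[x] \concat \phi\,\mathit{xs'} = x :: \mathit{xs'}$. No gaps.
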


\begin{proof}
We first abstract the three basic operators on sorted list out of \cref{{def:insertion-sort}}.
\begin{camlcode}
let rec asort merge singleton empty = function
  [] -> empty | x :: xs -> merge (singleton x) (asort merge singleton empty xs)
\end{camlcode}
\Cref{eq:asort_mergeE} holds by definition. The parametricity of \caml{asort} follows from the
abstraction theorem (\cref{assumption:abstraction}).
Therefore, we are left to prove that \cref{eq:asort_catE}, \ie, $\phi \, \mathit{xs} = \mathit{xs}$
where $\phi \coloneq \caml{asort} \, (\concat) \, [\cdot] \, []$, by structural induction on
$\mathit{xs}$.
If $\mathit{xs} = []$, it is trivial. Otherwise $\mathit{xs} = x :: \mathit{xs'}$ and
\begin{align*}
  \phi \, (x :: \mathit{xs'})
  &= [x] \concat \phi \, \mathit{xs'} & (\text{Definition of \caml{asort}}) \\
  &= x :: \phi \, \mathit{xs'}        & (\text{\cref{def:coq:cat} ($\concat$)}) \\
  &= x :: \mathit{xs'}                & (\text{I.H.}) & \qedhere
\end{align*}
\end{proof}

To show that top-down and bottom-up mergesort functions (\cref{fig:naive-mergesort-algorithms})
satisfy the characteristic property, we abstract them out as in \cref{fig:naive-abstract-mergesort}.
Again, we are left to prove that $\caml{asort} \, (\concat) \, [\cdot] \, []$ is the identity
function, for each variation.
The \Coq counterpart of this section is \cref{sec:populating-the-interface}.

\begin{figure}[t]
\begin{subfigure}[t]{.5\textwidth}
\begin{camlcode}[escapeinside=\#\#]
let asort merge singleton empty =
  let rec asort_rec = function
    | [] -> #\underline{empty}#
    | [x] -> #\underline{singleton}# x
    | xs ->
      let k = length xs / 2 in
      let (xs1, xs2) = split_n xs k in
      #\underline{merge}#
        (asort_rec xs1)
        (asort_rec xs2)
  in asort_rec
\end{camlcode}
\caption{Top-down abstract mergesort.}
\label{fig:top-down-abstract-mergesort}
\end{subfigure}%
\begin{subfigure}[t]{.5\textwidth}
\begin{camlcode}[escapeinside=\#\#]
let asort merge singleton empty xs =
  let rec merge_pairs = function
    | a :: b :: xs ->
      #\underline{merge}# a b :: merge_pairs xs
    | xs -> xs in
  let rec merge_all = function
    | [] -> #\underline{empty}#
    | [x] -> x
    | xs -> merge_all (merge_pairs xs)
  in
  merge_all (map #\underline{singleton}# xs)
\end{camlcode}
\caption{Bottom-up abstract mergesort.}
\label{fig:bottom-up-abstract-mergesort}
\end{subfigure}
 \caption{Two abstract mergesort functions derived from the top-down and bottom-up mergesort
 functions (\cref{fig:naive-mergesort-algorithms}). The use of abstract operators on sorted lists
 \caml{merge}, \caml{singleton}, and \caml{empty} are underlined.}
\label{fig:naive-abstract-mergesort}
\end{figure}

\begin{lemma}[\Cref{eq:asort_catE} for the top-down mergesort in \cref{fig:top-down-mergesort}]
  \caml{asort} in \cref{fig:top-down-abstract-mergesort} satisfies
  $\caml{asort} \, (\concat) \, [\cdot] \, [] \, \mathit{xs} = \mathit{xs}$ for any $\mathit{xs}$.
\end{lemma}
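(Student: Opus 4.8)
The plan is to prove $\phi\,\mathit{xs} = \mathit{xs}$ (where $\phi \coloneq \caml{asort}\,(\concat)\,[\cdot]\,[]$, so that \caml{merge} becomes $\concat$, \caml{singleton} becomes $[\cdot]$, and \caml{empty} becomes $[]$) by well-founded induction on the length of $\mathit{xs}$, rather than by plain structural induction. Structural induction is inadequate here because the recursive calls in \cref{fig:top-down-abstract-mergesort} are made on \texttt{xs1} and \texttt{xs2}, the two halves produced by \caml{split_n}, neither of which is a structural subterm of $\mathit{xs}$. The length measure, on the other hand, strictly decreases: when $\mathit{xs}$ has length $n \geq 2$ and we split at $k = \lfloor n/2 \rfloor$, both halves have length strictly between $0$ and $n$.

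I would proceed by case analysis matching the three branches of \caml{asort_rec}. The base cases $\mathit{xs} = []$ and $\mathit{xs} = [x]$ are immediate: the function returns $[]$ and $[x]$ respectively, which equal the input. For the recursive case $\mathit{xs}$ of length $n \geq 2$, the key algebraic fact I would invoke is that \caml{split_n xs k} returns a pair $(\mathit{xs}_1, \mathit{xs}_2)$ whose concatenation recovers the input, i.e. $\mathit{xs}_1 \concat \mathit{xs}_2 = \mathit{xs}$ (this is the \Coq counterpart of \caml{split_n} behaving as a splitting function, the analogue of $\mathtt{take}\,k \concat \mathtt{drop}\,k = \mathrm{id}$). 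Then the computation unfolds as
\begin{align*}
 \phi\,\mathit{xs}
 &= \phi\,\mathit{xs}_1 \concat \phi\,\mathit{xs}_2 & (\text{definition of }\caml{asort}\text{, \caml{merge}}=\concat) \\
 &= \mathit{xs}_1 \concat \mathit{xs}_2            & (\text{two appeals to the I.H.}) \\
 &= \mathit{xs}.                                   & (\text{splitting lemma})
\end{align*}
The two middle appeals to the induction hypothesis are licensed precisely because $|\mathit{xs}_1| < n$ and $|\mathit{xs}_2| < n$.

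The main obstacle is therefore not the equational reasoning, which is routine, but setting up the induction correctly and discharging the side conditions that each half is strictly shorter than the whole. Concretely, one must verify that for $n \geq 2$ and $k = \lfloor n/2 \rfloor$ we have $0 < k < n$, so that $|\mathit{xs}_1| = k$ and $|\mathit{xs}_2| = n - k$ both lie strictly below $n$; the case $n = 2$, where $k = 1$ and both halves have length $1$, is the tightest and is what forces the separate treatment of the singleton base case. In the \Coq formalization this is exactly the point where the termination argument for \sort{} reappears, and I would expect to reuse whatever well-founded or structural recursion scheme underlies the concrete mergesort (cf.~\cref{sec:termination}) to organize the induction cleanly. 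Aside from this, the only library facts needed are associativity-free rewriting of $\concat$ (e.g.~\cref{def:coq:cat}) and the characterization of \caml{split_n}.
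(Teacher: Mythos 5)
Your proposal is correct and matches the paper's proof essentially verbatim: both argue by strong induction on the length of the input, dispatch the two short base cases by definition, and in the recursive case apply the induction hypothesis to the two halves (justified by $0 < k < n$ and $n - k < n$) before concluding with the take/drop concatenation identity for \caml{split_n}. The extra discussion of why structural induction fails and how the side conditions are discharged is accurate but does not change the argument.
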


\begin{proof}
 Let $\phi$ be $\caml{asort} \, (\concat) \, [\cdot] \, []$.
 The proof is by strong mathematical induction on the length $n$ of $\mathit{xs}$.
 If $n < 2$, $\mathit{xs}$ is either an empty or singleton list, and the equation holds by
 definition.
 Otherwise, $\mathit{xs}$ is split into the list of the first
 $k \coloneq \left\lfloor \frac{n}{2} \right\rfloor$ elements $\mathit{xs}_1$ and the rest
 $\mathit{xs}_2$ (of length $n - k$). Since $0 < k < n$ and thus $n - k < n$,
 $\phi \, \mathit{xs}_1 = \mathit{xs}_1$ and $\phi \, \mathit{xs}_2 = \mathit{xs}_2$ follow
 from the induction hypothesis.
 Therefore, $\phi \, \mathit{xs}
 = \phi \, \mathit{xs}_1 \concat \phi \, \mathit{xs}_2
 = \mathit{xs}_1 \concat \mathit{xs}_2$,
 which is equal to $\mathit{xs}$ (\cref{lemma:coq:cat_take_drop}).
\end{proof}

\begin{lemma}[\Cref{eq:asort_catE} for the bottom-up mergesort in \cref{fig:bottom-up-mergesort}]
  \caml{asort} in \cref{fig:bottom-up-abstract-mergesort} satisfies
$\caml{asort} \, (\concat) \, [\cdot] \, [] \, \mathit{xs} = \mathit{xs}$ for any $\mathit{xs}$.
\end{lemma}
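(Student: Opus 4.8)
The plan is to reduce the whole computation to the basic list identity that flattening a list of singletons recovers the input. Write $\phi \coloneq \caml{asort} \, (\concat) \, [\cdot] \, []$, and let $\texttt{merge\_pairs}_\concat$ and $\texttt{merge\_all}_\concat$ denote the two local functions of \cref{fig:bottom-up-abstract-mergesort} with \caml{merge} replaced by $\concat$ and \caml{empty} by $[]$. Unfolding the definition gives $\phi \, \mathit{xs} = \texttt{merge\_all}_\concat \, (\texttt{map}_{[\cdot]} \, \mathit{xs})$, and since $\texttt{map}_{[\cdot]} \, \mathit{xs} = [[x_1], \dots, [x_n]]$ flattens back to $\mathit{xs}$, it suffices to prove that $\texttt{merge\_all}_\concat$ computes the flattening of its argument, \ie, $\texttt{merge\_all}_\concat \, \mathit{yss} = \texttt{flatten} \, \mathit{yss}$ for every list of lists $\mathit{yss}$.

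The crux is that, under the concatenation instantiation, neither inner loop changes the overall flattening. First I would prove the auxiliary fact $\texttt{flatten} \, (\texttt{merge\_pairs}_\concat \, \mathit{yss}) = \texttt{flatten} \, \mathit{yss}$ by functional induction on \caml{merge_pairs}. The two base patterns (lists of length $0$ or $1$) are returned unchanged, so flattening is trivially preserved; the only interesting case is $\mathit{yss} = a :: b :: \mathit{yss'}$, where the step reduces, after applying the induction hypothesis to $\mathit{yss'}$, to the reassociation $(a \concat b) \concat \texttt{flatten} \, \mathit{yss'} = a \concat (b \concat \texttt{flatten} \, \mathit{yss'})$, \ie, associativity of $\concat$ (\cref{lemma:coq:catA}).

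Next I would establish $\texttt{merge\_all}_\concat \, \mathit{yss} = \texttt{flatten} \, \mathit{yss}$ by strong induction on the length $m$ of $\mathit{yss}$. The cases $m = 0$ and $m = 1$ hold by definition (using $y \concat [] = y$ for the singleton). For $m \geq 2$ we have $\texttt{merge\_all}_\concat \, \mathit{yss} = \texttt{merge\_all}_\concat \, (\texttt{merge\_pairs}_\concat \, \mathit{yss})$; since \caml{merge_pairs} sends a list of length $m$ to one of length $\lceil m/2 \rceil$, which is strictly smaller than $m$ when $m \geq 2$, the induction hypothesis applies to $\texttt{merge\_pairs}_\concat \, \mathit{yss}$ and, together with the preservation fact above, yields $\texttt{flatten} \, (\texttt{merge\_pairs}_\concat \, \mathit{yss}) = \texttt{flatten} \, \mathit{yss}$. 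Composing with the singleton-flattening identity then gives $\phi \, \mathit{xs} = \mathit{xs}$.

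The main obstacle --- and the reason this is less immediate than the top-down case --- is the non-structural recursion of \caml{merge_all}: its recursive call is on $\texttt{merge\_pairs}_\concat \, \mathit{yss}$ rather than on a syntactic subterm of $\mathit{yss}$, so ordinary structural induction is unavailable and I must use strong induction on length, supported by the separate lemma that \caml{merge_pairs} strictly shortens any list of length at least two. Similarly, \caml{merge_pairs} consumes two elements per step, so its preservation lemma is cleanest to prove by functional induction rather than plain list induction. Once these two induction principles are in place, every inductive step is a one-line equational calculation using only the definitions and associativity of $\concat$.
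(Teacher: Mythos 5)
Your proposal is correct and follows essentially the same route as the paper's proof: the same two auxiliary facts ($\texttt{flatten} \circ \texttt{merge\_pairs} = \texttt{flatten}$ by induction on the two-at-a-time pattern using associativity of $\concat$, and $\texttt{merge\_all} = \texttt{flatten}$ by strong induction on length via the $\lceil m/2 \rceil < m$ bound), composed with the observation that flattening a list of singletons recovers the input. Your explicit remark that the singleton case needs $y \concat [] = y$ is a detail the paper glosses over as ``by definition,'' but it changes nothing.
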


\begin{proof}
\par First, we prove
\begin{equation}
  \caml{flatten} \, (\caml{merge_pairs} \, \mathit{xs}) = \caml{flatten} \, \mathit{xs}  \label{eq:flatten_merge_pairs}
\end{equation}
for any list of lists $\mathit{xs}$ by structural induction on $\mathit{xs}$.
 If $\mathit{xs}$ has length $n < 2$ the equation holds by definition. Otherwise $\mathit{xs} = a :: b :: \mathit{xs'}$ and
 \begin{align*}
     & \caml{flatten} \, (\caml{merge_pairs} \, \mathit{xs}) \\
  ={}& \caml{flatten} \, ((a \concat b) :: \caml{merge_pairs} \, \mathit{xs'})
     & (\text{Definition \cref{fig:bottom-up-abstract-mergesort}}) \\
  ={}& (a \concat b) \concat \caml{flatten} \, (\caml{merge_pairs} \, \mathit{xs'})
     & (\text{\cref{def:coq:flatten} (\caml{flatten})}) \\
  ={}& (a \concat b) \concat \caml{flatten} \, \mathit{xs'}
     & (\text{I.H.}) \\
  ={}& a \concat (b \concat \caml{flatten} \, \mathit{xs'})
     & (\text{\cref{lemma:coq:catA} (associativity of $\concat$)}) \\
  ={}& \caml{flatten} \, \mathit{xs}.
     & (\text{\cref{def:coq:flatten} (\caml{flatten})})
 \end{align*}

Second, we prove
\begin{equation}
\caml{merge_all} \, \mathit{xs} = \caml{flatten} \, \mathit{xs}\label{eq:flatten_all}
\end{equation}
for any list of
lists $\mathit{xs}$ by strong mathematical induction on the length $n$ of $\mathit{xs}$.
 If $n < 2$, $\mathit{xs}$ is either an empty or singleton list, and the equation holds by
 definition.
 Otherwise,
 \begin{align*}
  \caml{merge_all} \, \mathit{xs}
  &= \caml{merge_all} \, (\caml{merge_pairs} \, \mathit{xs}) & (\text{Definition \cref{fig:bottom-up-abstract-mergesort}}) \\
  \intertext{Noticing that the length of $\caml{merge_pairs} \, \mathit{xs}$ is
  $\left\lceil\frac{n}{2}\right\rceil < n$, we can apply the induction hypothesis:}
  &= \caml{flatten} \, (\caml{merge_pairs} \, \mathit{xs}) & \text{(I.H.)}\\
  &= \caml{flatten} \, \mathit{xs}. & (\text{\cref{eq:flatten_merge_pairs}})
 \end{align*}

Finally, we show
 \begin{align*}
  \caml{asort} \, (\concat) \, [\cdot] \, [] \, \mathit{xs}
  &= \caml{merge_all} \, [[x] \mid x \leftarrow \mathit{xs}] & (\text{Definition \cref{fig:bottom-up-abstract-mergesort}}) \\
  &= \caml{flatten} \, [[x] \mid x \leftarrow \mathit{xs}] & (\text{\cref{eq:flatten_all}})\\
  &= \mathit{xs}. & (\text{Structural induction on $\mathit{xs}$}) & \qedhere
 \end{align*}
\end{proof}

\subsection{Correctness proofs}
\label{sec:characterization-to-correctness}

In this section, we deduce several correctness results of mergesort solely from the characteristic
property (\cref{sec:characterization}).
We universally quantify the mergesort function \sort{} in this section, and thus, all the
correctness results below apply to any mergesort function satisfying the characteristic property,
including the top-down and bottom-up mergesorts in \cref{fig:naive-mergesort-algorithms}.
The use of the parametricity of the abstract sorting function \asort{} in our
correctness proofs is twofold: deducing an induction principle over traces
(\cref{sec:induction-over-traces}), and deducing the naturality of \sort{}
(\cref{sec:naturality}).

\subsubsection{Induction over traces}
\label{sec:induction-over-traces}

\begin{lemma}[An induction principle over traces of \sort{}]
 \label{lemma:sort_ind}
 Suppose $\leq$ and $\sim$ are binary relations on $T$ and $\mathrm{list} \, T$, respectively, and
 $\mathit{xs}$ is a list of type $\mathrm{list} \, T$.
 Then, $\mathit{xs} \sim \sort{}_\leq \, \mathit{xs}$ holds whenever the following three
 induction cases hold:
 \begin{itemize}
  \item for any lists $\mathit{xs}$, $\mathit{xs'}$, $\mathit{ys}$, and $\mathit{ys'}$ of type
	$\mathrm{list} \, T$,
	$(\mathit{xs} \concat \mathit{ys}) \sim (\mathit{xs'} \merge_\leq \mathit{ys'})$ holds
	whenever $\mathit{xs} \sim \mathit{xs'}$ and $\mathit{ys} \sim \mathit{ys'}$ hold,
  \item for any $x$ of type $T$, $[x] \sim [x]$ holds, and
  \item $[] \sim []$ holds.
 \end{itemize}
\end{lemma}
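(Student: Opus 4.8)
The plan is to obtain the induction principle as a direct instance of the parametricity of the abstract sort function. Since $\sort{}$ satisfies the characteristic property, there is an abstract mergesort $\asort{}$ that is parametric (so $(\asort{}, \asort{}) \in \paramt{\Tasort}$) and validates \eqref{eq:asort_mergeE} and \eqref{eq:asort_catE}. I would expand the relation $(\asort{}, \asort{}) \in \paramt{\Tasort}$ into the implication displayed in \cref{sec:characterization} and then choose its relational parameters so that its three antecedents coincide with the three induction cases of the lemma.

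Concretely, I would take $T_1 = T_2 = T$ with $\sim_T$ the equality relation on $T$, and $R_1 = R_2 = \mathrm{list} \, T$ with $\sim_R$ the relation $\sim$ supplied by the lemma. I would then load the two merge arguments \emph{asymmetrically}, setting $\mathit{merge}_1 \coloneq \concat$ and $\mathit{merge}_2 \coloneq \merge_\leq$, while keeping both singleton arguments equal to $[\cdot]$ and both empty arguments equal to $[]$. Under this instantiation the (merge) antecedent becomes ``$\mathit{xs} \sim \mathit{xs'}$ and $\mathit{ys} \sim \mathit{ys'}$ imply $(\mathit{xs} \concat \mathit{ys}) \sim (\mathit{xs'} \merge_\leq \mathit{ys'})$'', which is exactly the first induction case; the (singleton) antecedent collapses to ``$[x] \sim [x]$ for all $x$'', since $\sim_T$ being equality forces $x_1 = x_2$; and the (empty) antecedent is ``$[] \sim []$''. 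Feeding the same list $\mathit{xs}$ as both inputs discharges the final premise, because $(\mathit{xs}, \mathit{xs}) \in \paramt{\mathrm{list}}_{\sim_T}$ holds trivially: the pointwise lifting of equality to lists is again equality.

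With every premise met, the conclusion of the parametricity statement reads $(\asort{} \, (\concat) \, [\cdot] \, [] \, \mathit{xs}) \sim (\asort{} \, (\merge_\leq) \, [\cdot] \, [] \, \mathit{xs})$. I would rewrite the left operand to $\mathit{xs}$ via \eqref{eq:asort_catE} and the right operand to $\sort{}_\leq \, \mathit{xs}$ via \eqref{eq:asort_mergeE}, yielding $\mathit{xs} \sim \sort{}_\leq \, \mathit{xs}$. I expect no substantial obstacle; the only delicate point is selecting the instantiation, in particular taking $\sim_T$ to be equality so that the (singleton) and input premises reduce to the stated form, and loading merge asymmetrically so that the two endpoints of the parametricity conclusion align with the identity (via \eqref{eq:asort_catE}) and with $\sort{}_\leq$ (via \eqref{eq:asort_mergeE}).
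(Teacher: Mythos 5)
Your proposal is correct and matches the paper's proof: both instantiate the parametricity of $\asort{}$ with $\sim_T$ as equality and $\sim_R$ as $\sim$, load the two merge slots asymmetrically with $\concat$ and $\merge_\leq$, and then rewrite the two endpoints of the conclusion via \eqref{eq:asort_catE} and \eqref{eq:asort_mergeE}. The only difference is that you spell out the instantiation in more detail than the paper does.
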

\begin{proof}
 Thanks to \cref{eq:asort_mergeE,eq:asort_catE},
 $\mathit{xs} \sim \sort{}_\leq \, \mathit{xs}$ holds if and only if
 \[
  \asort{} \, (\concat) \, (\lambda (x : T), [x]) \, [] \, \mathit{xs} \sim
  \asort{} \, (\merge_\leq) \, (\lambda (x : T), [x]) \, [] \, \mathit{xs}.
 \]
 We apply the parametricity of \asort{} by instantiating $\sim_T$ with the equality over $T$ and
 $\sim_R$ with $\sim$.
 The premise $(\mathit{xs}_1, \mathit{xs}_2) \in \paramt{\mathrm{list}}_{\sim_T}$ holds
 because both $\mathit{xs}_1$ and $\mathit{xs}_2$ are instantiated with $\mathit{xs}$ and
 $\paramt{\mathrm{list}}_{\sim_T}$ is just the equality over $\mathrm{list} \, T$.
 The other three premises exactly correspond to the three induction cases.
\end{proof}

\begin{remark}
 \label{remark:induction}
 \Cref{lemma:sort_ind} is a variant of the general fact that parametricity implies induction
 principles on Church-encoded datatypes~\cite{DBLP:journals/pacmpl/AltenkirchCKS24,
 DBLP:journals/pacmpl/KaposiKA19, DBLP:conf/itp/Tassi19, Wadler:rectypes}, except that we use the
 binary version of parametricity to provide a simultaneous induction principle relating the input and output of mergesort.
 If we instantiate the abstract mergesort with a concrete type of elements $T$ and an input of type
 $\mathrm{list} \, T$, we get a term of type
 \[ \forall (R : \types{}), (R \to R \to R) \to (T \to R) \to R \to R, \]
 which represents Church-encoded binary trees with singleton and empty constructs as their
 leaves.
\end{remark}

As applications of \cref{lemma:sort_ind}, we prove the permutation property of \sort{}
(\cref{lemma:perm_sort,corollary:mem_sort}) and sortedness and stability results of \sort{}
(\cref{lemma:pairwise_sort,lemma:sort_pairwise_stable,corollary:sort_stable}).

\begin{lemma}
 \label{lemma:perm_sort}
 For any relation $\leq$ on type $T$ and $\mathit{xs}$ of type $\mathrm{list} \, T$,
 $\sort{}_\leq \, \mathit{xs} \permeq \mathit{xs}$ holds; that is,
 $\sort{}_\leq \, \mathit{xs}$ is a permutation of $\mathit{xs}$ (\cf
 \cref{def:coq:perm_eq} for a precise definition of $\permeq$).
\end{lemma}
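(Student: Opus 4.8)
The plan is to apply the trace-induction principle \cref{lemma:sort_ind}, instantiating the relation $\sim$ with the permutation relation $\permeq$ itself. Since $\permeq$ is an equivalence relation (in particular symmetric), the conclusion $\mathit{xs} \permeq \sort{}_\leq \, \mathit{xs}$ delivered by \cref{lemma:sort_ind} flips immediately to the desired $\sort{}_\leq \, \mathit{xs} \permeq \mathit{xs}$. Thus the whole argument reduces to discharging the three induction cases of \cref{lemma:sort_ind} with $\sim$ replaced by $\permeq$.

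Two of the three cases are immediate: the singleton case $[x] \permeq [x]$ and the empty case $[] \permeq []$ both hold by reflexivity of $\permeq$. The only substantive case is the merge case: assuming $\mathit{xs} \permeq \mathit{xs'}$ and $\mathit{ys} \permeq \mathit{ys'}$, we must establish $(\mathit{xs} \concat \mathit{ys}) \permeq (\mathit{xs'} \merge_\leq \mathit{ys'})$. I would split this into two facts chained by transitivity. First, concatenation is a congruence for $\permeq$, so the two hypotheses give $(\mathit{xs} \concat \mathit{ys}) \permeq (\mathit{xs'} \concat \mathit{ys'})$. Second, merging two lists yields a permutation of their concatenation, i.e.\ $(\mathit{xs'} \concat \mathit{ys'}) \permeq (\mathit{xs'} \merge_\leq \mathit{ys'})$; combining the two by transitivity closes the case.

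The main obstacle is the auxiliary fact that $\mathit{xs} \merge_\leq \mathit{ys}$ is always a permutation of $\mathit{xs} \concat \mathit{ys}$, which is the one place requiring genuine work. I would prove it by functional induction on the definition of \caml{merge} (\cref{fig:nontailrec-merge}): in the two base cases one of the lists is empty and \caml{merge} returns the other unchanged, so both sides coincide up to $\permeq$; in the recursive case the chosen head (either $x$ or $y$) is prepended to a recursive \caml{merge} call, and pulling this common element out of both multisets reduces the goal to the induction hypothesis via the fact that consing a common element preserves $\permeq$. The congruence of $\concat$ for $\permeq$ and the reflexivity, symmetry, and transitivity of $\permeq$ are standard and available from the \MC library (\cf \cref{def:coq:perm_eq}).
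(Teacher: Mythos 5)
Your proposal is correct and follows essentially the same route as the paper: induction via \cref{lemma:sort_ind} with $\sim$ instantiated to the permutation relation, the empty and singleton cases by reflexivity, and the merge case discharged by combining congruence of $\concat$ for $\permeq$ (\cref{lemma:coq:perm_cat}) with the fact that a merge is a permutation of the concatenation (\cref{lemma:coq:perm_merge}), which the paper simply cites from the \MC library while you sketch its proof by functional induction on \caml{merge}.
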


\begin{proof}
 We prove it by induction on $\sort{}_\leq \, \mathit{xs}$ (\cref{lemma:sort_ind}).
 Since the last two cases are obvious, suffice it to show that
 $\mathit{xs'} \merge_\leq \mathit{ys'} \permeq \mathit{xs} \concat \mathit{ys}$
 whenever $\mathit{xs'} \permeq \mathit{xs}$ and $\mathit{ys'} \permeq \mathit{ys}$.
 \begin{align*}
   \mathit{xs'} \merge_\leq \mathit{ys'}
   &\permeq \mathit{xs'} \concat \mathit{ys'}
   & (\text{\cref{lemma:coq:perm_merge}})
   \intertext{Since $\concat$ is congruent with respect to $\permeq$ (\text{\cref{lemma:coq:perm_cat}}),}
   &\permeq \mathit{xs} \concat \mathit{ys}. && \qedhere
 \end{align*}
\end{proof}

\begin{corollary}
 \label{corollary:mem_sort}
 For any relation $\leq$ on type $T$ and $\mathit{xs}$ of type $\mathrm{list} \, T$,
 $\sort{}_\leq \, \mathit{xs}$ has the same set of elements as $\mathit{xs}$, \ie,
 $x \in \sort{}_\leq \, \mathit{xs}$ iff $x \in \mathit{xs}$ for any $x \in T$.
\end{corollary}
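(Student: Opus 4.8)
The plan is to derive \Cref{corollary:mem_sort} directly from the permutation property established in \Cref{lemma:perm_sort}. The key observation is that two lists that are permutations of each other necessarily contain the same elements, so membership in one is equivalent to membership in the other. Concretely, since $\sort{}_\leq \, \mathit{xs} \permeq \mathit{xs}$, I expect that $\permeq$ preserves the set of elements, \ie, the predicate $x \in {-}$ is invariant under $\permeq$. This should be a basic lemma about permutations in the underlying library (analogous to \Cref{lemma:coq:perm_cat} for concatenation), stating that $\mathit{ys} \permeq \mathit{zs}$ implies $(x \in \mathit{ys}) = (x \in \mathit{zs})$ for all $x$.

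First I would invoke \Cref{lemma:perm_sort} to obtain $\sort{}_\leq \, \mathit{xs} \permeq \mathit{xs}$. Then, for an arbitrary $x : T$, I would apply the membership-invariance lemma for $\permeq$ to conclude $x \in \sort{}_\leq \, \mathit{xs}$ iff $x \in \mathit{xs}$. This is essentially a one-line argument once the supporting permutation lemma is available.

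I do not expect any serious obstacle here, as this is a routine corollary. The only subtlety worth flagging is the precise definition of $\permeq$ and $\in$ in the formalization: in the \MC library, $\permeq$ is typically defined so that \emph{the count of every element agrees}, which immediately subsumes the weaker statement about membership (positive count). If the definition of $\permeq$ referenced in \Cref{def:coq:perm_eq} is count-based, then membership invariance is an even more direct consequence. In either case, the corollary follows from a standard library fact relating $\permeq$ to $\in$, and no induction over traces is needed beyond what \Cref{lemma:perm_sort} already provides.
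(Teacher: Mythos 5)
Your proposal matches the paper's proof exactly: the formal version (\cref{lemma:coq:mem_sort}) derives the statement from \cref{lemma:coq:perm_sort} together with the library fact that permutations have the same set of elements (\cref{lemma:coq:perm_mem}), and your remark about the count-based definition of $\permeq$ in \cref{def:coq:perm_eq} is also accurate. No gaps.
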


We define two versions of sortedness of lists to state
\cref{lemma:pairwise_sort,lemma:sort_pairwise_stable} in their general form.

\begin{definition}[Sortedness]
 \label{def:sortedness}
 Suppose $\mathrel{R}$ is a relation on type $T$.
 A list $\mathit{xs} \coloneq [x_0, \dots, x_n]$ of type $\mathrm{list} \, T$ is said to be:
 \begin{itemize}
  \item \emph{sorted} \wrt $\mathrel{R}$ if the relation $\mathrel{R}$ holds for each adjacent pair,
	\ie, $x_0 \mathrel{R} x_1 \land \dots \land x_{n - 1} \mathrel{R} x_n$, and
  \item \emph{pairwise sorted} \wrt $\mathrel{R}$ if the relation $\mathrel{R}$ holds for any $x_i$
	and $x_j$ such that $i < j \leq n$, \ie,
	\[
	 x_0 \mathrel{R} x_1 \land \dots \land x_0 \mathrel{R} x_n \land
	 x_1 \mathrel{R} x_2 \land \dots \land x_1 \mathrel{R} x_n \land \dots \land
	 x_{n - 1} \mathrel{R} x_n.
	\]
 \end{itemize}
\end{definition}

\begin{lemma}
\label{lemma:pairwise_sort}
For any $s$ of type $\mathrm{list} \, T$ pairwise sorted \wrt ${\leq} \subseteq T \times T$,
$\sort{}_\leq \, s = s$ holds.
\end{lemma}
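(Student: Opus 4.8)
The plan is to obtain this as a direct application of the trace-induction principle \cref{lemma:sort_ind}. The idea is to choose the relation $\sim$ on $\mathrm{list}\,T$ so that it internalizes the conditional we want to prove. Concretely, I would set
\[
 \mathit{as} \sim \mathit{bs} \;:\Longleftrightarrow\; \bigl(\mathit{as} \text{ is pairwise sorted \wrt } {\leq}\bigr) \Rightarrow \mathit{as} = \mathit{bs}.
\]
With this choice the conclusion $s \sim \sort{}_\leq\,s$ furnished by \cref{lemma:sort_ind} is exactly ``$s$ pairwise sorted $\Rightarrow s = \sort{}_\leq\,s$'', which is the desired equation $\sort{}_\leq\,s = s$ up to symmetry of equality. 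It therefore remains to discharge the three induction cases for this particular $\sim$.

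The two leaf cases are immediate: both $[] \sim []$ and $[x] \sim [x]$ reduce, once the pairwise-sortedness antecedent is assumed, to the reflexivity $[] = []$ and $[x] = [x]$. The merge case is where the content lies. Here I assume $\mathit{xs} \sim \mathit{xs'}$ and $\mathit{ys} \sim \mathit{ys'}$, assume that $\mathit{xs} \concat \mathit{ys}$ is pairwise sorted, and must establish $\mathit{xs} \concat \mathit{ys} = \mathit{xs'} \merge_\leq \mathit{ys'}$. The first move is to observe that pairwise sortedness of a concatenation decomposes purely combinatorially: $\mathit{xs} \concat \mathit{ys}$ pairwise sorted entails (i) $\mathit{xs}$ pairwise sorted, (ii) $\mathit{ys}$ pairwise sorted, and (iii) every element of $\mathit{xs}$ is ${\leq}$ every element of $\mathit{ys}$ — the latter coming from the cross pairs $i < j$ that straddle the split. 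Facts (i) and (ii) feed the two hypotheses, yielding $\mathit{xs} = \mathit{xs'}$ and $\mathit{ys} = \mathit{ys'}$, so the goal collapses to $\mathit{xs} \concat \mathit{ys} = \mathit{xs} \merge_\leq \mathit{ys}$.

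The equation $\mathit{xs} \merge_\leq \mathit{ys} = \mathit{xs} \concat \mathit{ys}$ under hypothesis (iii) is the crux, and I expect it to be the main obstacle. I would extract it as a standalone sublemma and prove it by structural induction on $\mathit{xs}$, unfolding the definition of merge (\cref{fig:nontailrec-merge}). If $\mathit{xs}$ is empty, merge returns $\mathit{ys}$ and the equation is definitional. If $\mathit{xs} = x :: \mathit{xs''}$, then either $\mathit{ys}$ is empty (merge returns $\mathit{xs}$) or $\mathit{ys} = y :: \mathit{ys''}$; in the latter case the cross inequality supplies $x \leq y$, so merge emits $x$ and recurses on $(\mathit{xs''}, \mathit{ys})$. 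The cross condition for $(\mathit{xs''}, \mathit{ys})$ is a sub-instance of the original one, so the induction hypothesis applies and closes the step. Notably, no transitivity or totality of $\leq$ is needed anywhere: each comparison made by merge is one of the cross pairs, and the ``every element below'' condition is manifestly preserved when passing to the tail. This keeps the argument valid for the arbitrary relation ${\leq} \subseteq T \times T$ in the statement.
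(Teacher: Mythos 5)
Your proof is correct and follows essentially the same route as the paper's: induction via \cref{lemma:sort_ind} with $\sim$ taken to be ``pairwise sorted implies equal'', decomposition of pairwise sortedness of a concatenation (\cref{lemma:coq:pairwise_cat}), and the observation that merge coincides with concatenation when every element of the first list is $\leq$ every element of the second. The only difference is that you prove this last step inline by structural induction on $\mathit{xs}$, whereas the paper invokes it as a library fact (\cref{lemma:coq:allrel_merge}).
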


\begin{proof}
 Since we use this lemma only for the proof of \cref{corollary:sort_standard_stable}, we omit the
 proof, which is done by induction on $\sort{}_\leq \, s$ (\cref{lemma:sort_ind}).
 See \cref{lemma:coq:pairwise_sort} for the complete proof.
\end{proof}

\begin{theorem}[Sortedness and stability of \sort{}]
 \label{lemma:sort_pairwise_stable}
 Suppose $\leq_1$ and $\leq_2$ are binary relations on type $T$, $\leq_1$ is total, and
 $\mathit{xs}$ is a list of type $\mathrm{list} \, T$.
 Then, $\sort{}_{\leq_1} \, \mathit{xs}$ is sorted \wrt the following lexicographic order:
 \[
  x \leq_{\mathrm{lex}} y \coloneq x \leq_1 y \land (y \not\leq_1 x \lor x \leq_2 y)
 \]
 whenever $\mathit{xs}$ is pairwise sorted \wrt $\leq_2$.
\end{theorem}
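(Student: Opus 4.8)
The plan is to apply the trace-induction principle \cref{lemma:sort_ind} with a relation $\sim$ that bundles two invariants: a permutation fact and a \emph{conditional} sortedness fact. Concretely, I would define $\mathit{xs} \sim \mathit{ys}$ to hold iff $\mathit{ys} \permeq \mathit{xs}$ and, whenever $\mathit{xs}$ is pairwise sorted \wrt $\leq_2$, the list $\mathit{ys}$ is sorted \wrt $\leq_{\mathrm{lex}}$. Since \cref{lemma:sort_ind} then yields $\mathit{xs} \sim \sort{}_{\leq_1} \, \mathit{xs}$, projecting out the second conjunct and using that $\mathit{xs}$ is pairwise sorted \wrt $\leq_2$ gives exactly the statement of the theorem. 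The permutation conjunct is not the goal itself; I carry it only because it is needed to feed the merge step, as explained below.

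The two base cases are immediate: both $[] \sim []$ and $[x] \sim [x]$ hold, since a list of length at most one is trivially pairwise sorted, trivially sorted, and a permutation of itself. The work is in the merge case: assuming $\mathit{xs} \sim \mathit{xs'}$ and $\mathit{ys} \sim \mathit{ys'}$, I must show $(\mathit{xs} \concat \mathit{ys}) \sim (\mathit{xs'} \merge_{\leq_1} \mathit{ys'})$. The permutation conjunct follows from $\mathit{xs'} \merge_{\leq_1} \mathit{ys'} \permeq \mathit{xs'} \concat \mathit{ys'}$ (\cref{lemma:coq:perm_merge}) together with the congruence of $\concat$ \wrt $\permeq$ (\cref{lemma:coq:perm_cat}) applied to $\mathit{xs'} \permeq \mathit{xs}$ and $\mathit{ys'} \permeq \mathit{ys}$. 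For the sortedness conjunct, suppose $\mathit{xs} \concat \mathit{ys}$ is pairwise sorted \wrt $\leq_2$. Decomposing this along the split point shows that $\mathit{xs}$ and $\mathit{ys}$ are each pairwise sorted \wrt $\leq_2$, so $\mathit{xs'}$ and $\mathit{ys'}$ are each sorted \wrt $\leq_{\mathrm{lex}}$ by hypothesis; it also shows that $a \leq_2 b$ holds for every $a \in \mathit{xs}$ and $b \in \mathit{ys}$, and transporting this along the permutations $\mathit{xs'} \permeq \mathit{xs}$ and $\mathit{ys'} \permeq \mathit{ys}$ yields $a' \leq_2 b'$ for every $a' \in \mathit{xs'}$ and $b' \in \mathit{ys'}$.

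It then remains to prove the key merge lemma: assuming $\leq_1$ is total, if $\mathit{as}$ and $\mathit{bs}$ are each sorted \wrt $\leq_{\mathrm{lex}}$ and $a \leq_2 b$ holds for every $a \in \mathit{as}$ and $b \in \mathit{bs}$, then $\mathit{as} \merge_{\leq_1} \mathit{bs}$ is sorted \wrt $\leq_{\mathrm{lex}}$. I would prove this by induction following the recursion of \caml{merge}, performing a case split on the comparison $a \leq_1 b$ of the two heads. When $a \leq_1 b$ the merge emits $a$; the new head of the tail is either the next element of $\mathit{as}$ (handled by sortedness of $\mathit{as}$) or $b$, and in the latter case $a \leq_{\mathrm{lex}} b$ holds because $a \leq_1 b$ and the compatibility hypothesis gives $a \leq_2 b$. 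When $a \not\leq_1 b$ the merge emits $b$; here totality of $\leq_1$ yields $b \leq_1 a$, whence $b \leq_{\mathrm{lex}} a$ since $a \not\leq_1 b$, and the remaining head is either the next element of $\mathit{bs}$ or $a$, handled symmetrically. This head-of-the-tail bookkeeping---requiring a nested inspection of which list the next element comes from---is the main obstacle, as it is where totality and the $\leq_2$-compatibility are consumed and where the lexicographic definition of $\leq_{\mathrm{lex}}$ must be unfolded in each branch.

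Feeding this merge lemma into the merge case completes the verification of the three premises of \cref{lemma:sort_ind}. The principle then delivers $\mathit{xs} \sim \sort{}_{\leq_1} \, \mathit{xs}$, and since $\mathit{xs}$ is pairwise sorted \wrt $\leq_2$ by assumption, the conditional second conjunct fires to give that $\sort{}_{\leq_1} \, \mathit{xs}$ is sorted \wrt $\leq_{\mathrm{lex}}$, as required.
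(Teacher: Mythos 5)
Your proposal is correct and follows essentially the same route as the paper: strengthen the induction over traces (\cref{lemma:sort_ind}) with an auxiliary ``containment'' conjunct used only to transport the $\leq_2$-compatibility into the merge case, then conclude with a merge-stability lemma. The only differences are cosmetic: the paper carries $\sort{}_{\leq_1}\mathit{xs} \subseteq \mathit{xs}$ where you carry a permutation (both derived from \cref{lemma:coq:perm_merge}), and it cites \cref{lemma:coq:merge_stable_sorted} for the key merge lemma that you re-derive inline.
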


\begin{proof}
 We prove a generalized proposition:
 \[
 (\sort{}_{\leq_1} \, \mathit{xs} \subseteq \mathit{xs}) \land
 (\text{$\mathit{xs}$ is pairwise sorted \wrt $\leq_2$} \Rightarrow
  \text{$\sort{}_{\leq_1} \, \mathit{xs}$ is sorted \wrt $\leq_{\mathrm{lex}}$})
 \]
 by induction on $\sort{}_{\leq_1} \, \mathit{xs}$ (\cref{lemma:sort_ind}).
 Since $\subseteq$ is reflexive and a list whose length is less than 2 is always sorted, the last
 two cases are obvious.%
 \pagebreak 

 For the first component of the conjunction in the first induction case, suffice it to show that
 \[
  (\mathit{xs'} \subseteq \mathit{xs}) \land (\mathit{ys'} \subseteq \mathit{ys}) \Rightarrow
  (\mathit{xs'} \merge_{\leq_1} \mathit{ys'} \subseteq \mathit{xs} \concat \mathit{ys})
 \]
 which is obvious since $\mathit{xs'} \merge_{\leq_1} \mathit{ys'}$ is a permutation of
 $\mathit{xs'} \concat \mathit{ys'}$ (\cref{lemma:coq:perm_merge}).

 For the second component of the conjunction, suffice it to show that
 $\mathit{xs'} \merge_{\leq_1} \mathit{ys'}$ is sorted \wrt $\leq_{\mathrm{lex}}$
 whenever:
 \begin{enumerate}[label=(\roman*)]
  \item \label{item:xs_ys_subset} $\mathit{xs'}$ (resp.~$\mathit{ys'}$) is a subset of $\mathit{xs}$
    (resp.~$\mathit{ys}$),
  \item \label{item:xs_ys_sorted} $\mathit{xs'}$ (resp.~$\mathit{ys'}$) is sorted \wrt
	$\leq_{\mathrm{lex}}$ if $\mathit{xs}$ (resp.~$\mathit{ys}$) is pairwise sorted \wrt $\leq_2$, and
  \item \label{item:concat_pairwise} $\mathit{xs} \concat \mathit{ys}$ is pairwise sorted \wrt $\leq_2$.
 \end{enumerate}
 Among these, \ref{item:concat_pairwise} is equivalent to the following conjunction
 (\cref{lemma:coq:pairwise_cat}):
 \begin{enumerate}[label=(\roman*), resume]
  \item \label{item:xs_ys_pairwise} both $\mathit{xs}$ and $\mathit{ys}$ are pairwise sorted \wrt
	$\leq_2$, and
  \item \label{item:xs_le2_ys} $x \leq_2 y$ holds for any $x \in \mathit{xs}$ and $y \in \mathit{ys}$.
 \end{enumerate}
 Hypotheses \ref{item:xs_ys_sorted} and \ref{item:xs_ys_pairwise} imply that both $\mathit{xs'}$
 and $\mathit{ys'}$ are sorted \wrt $\leq_{\mathrm{lex}}$.
 Hypotheses \ref{item:xs_ys_subset} and \ref{item:xs_le2_ys} imply that $x \leq_2 y$ holds for
 any $x \in \mathit{xs'}$ and $y \in \mathit{ys'}$ (\cref{lemma:coq:sub_all}).
 These two facts suffice to show that $\mathit{xs'} \merge_{\leq_1} \mathit{ys'}$ is sorted \wrt
 $\leq_{\mathrm{lex}}$ (\cref{lemma:coq:merge_stable_sorted}).
\end{proof}

The following corollary also holds since the sortedness and the pairwise sortedness are equivalent
for any transitive relation (\cref{lemma:coq:sorted_pairwise}).

\begin{corollary}
 \label{corollary:sort_stable}
 Suppose $\leq_1$ and $\leq_2$ are binary relations on type $T$, $\leq_1$ is total, $\leq_2$ is
 transitive, and $\mathit{xs}$ is a list of type $\mathrm{list} \, T$.
 Then, $\sort{}_{\leq_1} \, \mathit{xs}$ is sorted \wrt the lexicographic order
 $\leq_{\mathrm{lex}}$ of $\leq_1$ and $\leq_2$ whenever $\mathit{xs}$ is sorted \wrt $\leq_2$.
\end{corollary}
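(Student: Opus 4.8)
The plan is to reduce the corollary to \cref{lemma:sort_pairwise_stable} without redoing any of the trace induction. Comparing the two statements, the hypothesis on $\leq_1$ (totality) and the conclusion (that $\sort{}_{\leq_1} \, \mathit{xs}$ is sorted \wrt $\leq_{\mathrm{lex}}$) are identical; the sole difference is the assumption placed on $\mathit{xs}$. The theorem requires $\mathit{xs}$ to be \emph{pairwise} sorted \wrt $\leq_2$, whereas here we assume only that $\mathit{xs}$ is sorted \wrt $\leq_2$, at the cost of the extra hypothesis that $\leq_2$ is transitive. Thus the entire task is to upgrade ``sorted'' to ``pairwise sorted'' and then invoke the theorem verbatim.

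First I would appeal to \cref{lemma:coq:sorted_pairwise}, which states that sortedness and pairwise sortedness coincide for any transitive relation. Since $\leq_2$ is transitive by hypothesis, the fact that $\mathit{xs}$ is sorted \wrt $\leq_2$ is equivalent to $\mathit{xs}$ being pairwise sorted \wrt $\leq_2$; in particular the latter holds. Conceptually the only content here is the forward direction (sorted implies pairwise sorted): the adjacent comparabilities $x_i \leq_2 x_{i+1}$ propagate to all pairs $x_i \leq_2 x_j$ with $i < j$ by chaining through transitivity of $\leq_2$. This is precisely why transitivity of $\leq_2$ appears in the corollary but not in the theorem; without it, sortedness is strictly weaker than pairwise sortedness.

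Having secured that $\mathit{xs}$ is pairwise sorted \wrt $\leq_2$, I would then apply \cref{lemma:sort_pairwise_stable} with the same data $\leq_1$, $\leq_2$, and $\mathit{xs}$. Its conclusion is exactly that $\sort{}_{\leq_1} \, \mathit{xs}$ is sorted \wrt the lexicographic order $\leq_{\mathrm{lex}}$ of $\leq_1$ and $\leq_2$, which is the desired statement.

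I expect no real obstacle: all of the substantive reasoning---the induction over traces of \cref{lemma:sort_ind}, the permutation bookkeeping, and the stability of merge (\cref{lemma:coq:merge_stable_sorted})---has already been discharged inside \cref{lemma:sort_pairwise_stable}. The only subtlety to get right is recognizing that transitivity of $\leq_2$ is exactly the hypothesis needed to make \cref{lemma:coq:sorted_pairwise} applicable, so that the weaker ``sorted'' premise can legitimately stand in for the ``pairwise sorted'' premise that the theorem actually consumes.
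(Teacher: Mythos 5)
Your proposal is correct and follows exactly the paper's route: the paper also derives the corollary by noting that sortedness and pairwise sortedness coincide for transitive relations (\cref{lemma:coq:sorted_pairwise}) and then invoking \cref{lemma:sort_pairwise_stable}. Nothing is missing.
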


\subsubsection{Naturality}
\label{sec:naturality}

The induction principle over traces (\cref{lemma:sort_ind}) does not allow us to relate two sorting
processes that behave parametrically.
Instead, we obtain the parametricity (\cref{lemma:param_sort}) and the naturality (\cref{lemma:sort_map}) of
\sort{} from the parametricity of \asort{}.

\begin{lemma}[The parametricity of \sort{}]
 \label{lemma:param_sort}
 Any \sort{} function satisfying the characteristic property is parametric
 (\cref{sec:prelim-param}), \ie, $(\sort, \sort) \in \paramt{\Tsort}$, which holds iff:
 \begin{align*}
  \forall & (T_1 \, T_2 : \types{}) \, ({\sim_T} \subseteq T_1 \times T_2), \\
  \forall & ({\leq_1} : T_1 \to T_1 \to \mathrm{bool}) \, ({\leq_2} : T_2 \to T_2 \to \mathrm{bool}), \\
  & (\forall (x_1 : T_1) \, (x_2 : T_2), x_1 \sim_T x_2 \Rightarrow
     \forall (y_1 : T_1) \, (y_2 : T_2), y_1 \sim_T y_2 \Rightarrow
     (x_1 \leq_1 y_1) = (x_2 \leq_2 y_2)) \Rightarrow \\
  \forall & (\mathit{xs}_1 : \mathrm{list} \, T_1) \, (\mathit{xs}_2 : \mathrm{list} \, T_2),
            (\mathit{xs}_1, \mathit{xs}_2) \in \paramt{\mathrm{list}}_{\sim_T} \Rightarrow
    (\sort{}_{\leq_1} \, \mathit{xs}_1, \sort{}_{\leq_2} \, \mathit{xs}_2) \in
    \paramt{\mathrm{list}}_{\sim_T}.
 \end{align*}
\end{lemma}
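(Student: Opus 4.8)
The plan is to reduce the parametricity of $\sort{}$ to the parametricity of $\asort{}$ supplied by the characteristic property, together with the parametricity of the concrete \caml{merge} function. First, using \cref{eq:asort_mergeE}, I would rewrite both occurrences of $\sort{}$ in the goal in terms of $\asort{}$: for $i \in \{1, 2\}$ we have $\sort{}_{\leq_i} \, \mathit{xs}_i = \asort{} \, (\merge_{\leq_i}) \, [\cdot] \, [] \, \mathit{xs}_i$. The goal thus becomes $(\asort{} \, (\merge_{\leq_1}) \, [\cdot] \, [] \, \mathit{xs}_1, \asort{} \, (\merge_{\leq_2}) \, [\cdot] \, [] \, \mathit{xs}_2) \in \paramt{\mathrm{list}}_{\sim_T}$. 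Note that only \cref{eq:asort_mergeE} and the parametricity of $\asort{}$ are needed here; \cref{eq:asort_catE} plays no role.

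Next, I would apply the parametricity of $\asort{}$ (the third condition of the characteristic property), instantiating the abstract type of sorted lists on each side with the concrete type of lists, $R_1 \coloneq \mathrm{list} \, T_1$ and $R_2 \coloneq \mathrm{list} \, T_2$, and crucially taking the relation $\sim_R$ to be the pointwise lifting $\paramt{\mathrm{list}}_{\sim_T}$, with $\mathit{merge}_i \coloneq \merge_{\leq_i}$, the singletons $[\cdot]_i$ the singleton list functions on $T_i$, and $[]_i$ the empty lists. With this instantiation, the conclusion of the parametricity of $\asort{}$ is exactly the rewritten goal, so it remains to discharge its four premises.

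Three of the premises are immediate. The empty premise $([], []) \in \paramt{\mathrm{list}}_{\sim_T}$ holds since both lists are empty (equal length $0$). The singleton premise holds because, whenever $x_1 \sim_T x_2$, the lists $[x_1]$ and $[x_2]$ both have length $1$ and their sole elements are related, so $([x_1], [x_2]) \in \paramt{\mathrm{list}}_{\sim_T}$. The input premise $(\mathit{xs}_1, \mathit{xs}_2) \in \paramt{\mathrm{list}}_{\sim_T}$ is exactly the hypothesis of the lemma.

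The main obstacle, and the only nontrivial premise, is the merge premise: that $\merge_{\leq_1}$ and $\merge_{\leq_2}$ send $\paramt{\mathrm{list}}_{\sim_T}$-related pairs of arguments to $\paramt{\mathrm{list}}_{\sim_T}$-related results. This is precisely the statement that the concrete \caml{merge} function is parametric at the instantiation under consideration. Since \caml{merge} is a closed term of type $\forall T, (T \to T \to \mathrm{bool}) \to \mathrm{list} \, T \to \mathrm{list} \, T \to \mathrm{list} \, T$, the abstraction theorem (\cref{assumption:abstraction}) yields that it is parametric, and unfolding parametricity at its type gives exactly the required implication, provided the comparison functions $\leq_1$ and $\leq_2$ send $\sim_T$-related elements to equal booleans, which is exactly the hypothesis on $\leq_1$ and $\leq_2$ in the statement of the lemma. (Alternatively, the merge premise can be proved directly by induction on the recursion of \caml{merge}, using the comparison hypothesis to see that the two calls branch identically on related inputs and that the consed heads stay related.) Discharging this premise completes the proof.
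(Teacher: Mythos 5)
Your proposal is correct and follows essentially the same route as the paper: rewrite $\sort{}_{\leq_i}$ as $\asort{} \, (\merge_{\leq_i}) \, [\cdot] \, []$ via \cref{eq:asort_mergeE}, then apply the parametricity of $\asort{}$ with $\sim_R$ instantiated to $\paramt{\mathrm{list}}_{\sim_T}$, discharging the merge premise by the parametricity of the concrete \caml{merge} (the paper compresses this into ``\asort{} and its arguments are parametric''). Your version simply makes explicit the instantiation and the four premises that the paper leaves implicit.
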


\begin{proof}
 $\sort{}_{\leq_i}$ is extensionally equal to
 $\asort{} \, (\merge_{\leq_i}) \, (\lambda (x : T), [x]) \, []$
 for each $i \in \{1, 2\}$ thanks to the characterization.
 Since \asort{} and its arguments are parametric, \sort{} is parametric as well.
\end{proof}

\begin{lemma}[The naturality of \sort{}]
 \label{lemma:sort_map}
 Suppose $\leq_T$ is a relation on type $T$, $f$ is a function from $T'$ to $T$, and $\mathit{xs}$ is
 a list of type $\mathrm{list} \, T'$. Then, the following equation holds:
 \[
 \sort{}_{\leq_T} \, [f \, x \mid x \leftarrow \mathit{xs}] =
 [f \, x \mid x \leftarrow \sort{}_{\leq_{T'}} \, \mathit{xs}]
 \]
 where $x \leq_{T'} y \coloneq f \, x \leq_T f \, y$.
\end{lemma}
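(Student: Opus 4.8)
The plan is to obtain this naturality result as a ``free theorem'', by specializing the parametricity of \sort{} (\cref{lemma:param_sort}) to the \emph{graph} of the function $f$. Concretely, I would instantiate $T_1 \coloneq T'$, $T_2 \coloneq T$, $\leq_1 \coloneq \leq_{T'}$, $\leq_2 \coloneq \leq_T$, and take the relation ${\sim_T} \subseteq T' \times T$ to be the graph of $f$, \ie, $x' \sim_T x$ iff $f \, x' = x$. The expectation is that instantiating a free theorem with a functional relation turns a relational statement into an equation relating the two sides of the naturality square.

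First I would discharge the comparison premise of \cref{lemma:param_sort}. Given $x_1 \sim_T x_2$ and $y_1 \sim_T y_2$, \ie, $f \, x_1 = x_2$ and $f \, y_1 = y_2$, the required equation $(x_1 \leq_{T'} y_1) = (x_2 \leq_T y_2)$ holds by unfolding the definition $x \leq_{T'} y \coloneq f \, x \leq_T f \, y$ and rewriting with $f \, x_1 = x_2$ and $f \, y_1 = y_2$. Next I would instantiate the input lists with $\mathit{xs}_1 \coloneq \mathit{xs}$ and $\mathit{xs}_2 \coloneq [f \, x \mid x \leftarrow \mathit{xs}]$ and check the premise $(\mathit{xs}_1, \mathit{xs}_2) \in \paramt{\mathrm{list}}_{\sim_T}$. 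By the definition of the pointwise lifting, this amounts to the two lists having the same length (which holds since $[f \, x \mid x \leftarrow \mathit{xs}]$ is just $\texttt{map}_f \, \mathit{xs}$) and being pairwise related, \ie, $f \, x_i = f \, x_i$ for each position $i$, which is reflexivity of equality.

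Finally, the conclusion of \cref{lemma:param_sort} yields $(\sort{}_{\leq_{T'}} \, \mathit{xs}, \sort{}_{\leq_T} \, [f \, x \mid x \leftarrow \mathit{xs}]) \in \paramt{\mathrm{list}}_{\sim_T}$: the two output lists have the same length and are pointwise related by the graph of $f$. For two lists of equal length that are pointwise graph-related, the second is exactly the elementwise $f$-image of the first; hence $\sort{}_{\leq_T} \, [f \, x \mid x \leftarrow \mathit{xs}] = \texttt{map}_f \, (\sort{}_{\leq_{T'}} \, \mathit{xs}) = [f \, x \mid x \leftarrow \sort{}_{\leq_{T'}} \, \mathit{xs}]$, which is the desired equation.

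The only step that is not pure bookkeeping is this last translation of the abstract relational conclusion back into a map-equation; I would package it as a small auxiliary observation that a list $b$ of the same length as $a$ with $a_i \sim_T b_i$ for all $i$ equals $\texttt{map}_f \, a$ precisely when $\sim_T$ is the graph of $f$. Since everything else is a mechanical instantiation of \cref{lemma:param_sort}, I do not expect a genuine obstacle here; the proof is the standard way relational parametricity produces a naturality law.
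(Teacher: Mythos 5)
Your proposal is correct and is essentially the paper's own proof: both specialize the parametricity of \sort{} (\cref{lemma:param_sort}) to the graph of $f$, discharge the comparison premise by unfolding $\leq_{T'}$, observe that graph-relatedness of lists is exactly the map equation, and read off the conclusion. The only cosmetic difference is the orientation of the relation (the paper takes $x \sim_T y \coloneq x = f\,y$ with $T_1 \coloneq T$, you take the mirror image), which changes nothing of substance.
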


\begin{proof}
 We instantiate \cref{lemma:param_sort} with
 $T_1 \coloneq T$,
 $T_2 \coloneq T'$,
 $x \sim_T y \coloneq (x = f \, y)$,
 ${\leq_1} \coloneq {\leq_T}$, and
 ${\leq_2} \coloneq {\leq_{T'}}$.
 Then, the first premise is equivalent to the definition of $\leq_{T'}$.
 The second premise $(\mathit{xs}_1, \mathit{xs}_2) \in \paramt{\mathrm{list}}_{\sim_T}$
 is equivalent to $\mathit{xs}_1 = [f \, x \mid x \leftarrow \mathit{xs}_2]$.
 By substituting this equation to the conclusion, we get
 $\sort{}_{\leq_T} \, [f \, x \mid x \leftarrow \mathit{xs}_2] =
 [f \, x \mid x \leftarrow \sort{}_{\leq_{T'}} \, \mathit{xs}_2]$.
\end{proof}

\begin{remark}
 \label{remark:naturality}
 \Cref{lemma:sort_map} is known as a free theorem~\cite[Section 3.3]{DBLP:conf/fpca/Wadler89} for
 type $\forall (T : \types{}), (T \to T \to \mathrm{bool}) \to \mathrm{list} \, T \to
 \mathrm{list} \, T$.
 It is also a case where parametricity implies naturality~\cite{Reddy:1997}.
\end{remark}

\begin{theorem}
 \label{lemma:filter_sort}
 For any total preorder $\leq$ on $T$ and predicate $p$ on $T$, $\texttt{filter}_p$ commutes with
 $\sort{}_\leq$ under function composition; that is, the following equation holds for any
 $\mathit{xs}$ of type $\mathrm{list} \, T$:
 \[
  \texttt{filter}_p \, (\sort{}_\leq \, \mathit{xs}) =
  \sort{}_\leq \, (\texttt{filter}_p \, \mathit{xs}).
 \]
\end{theorem}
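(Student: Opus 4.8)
The plan is to reduce the statement to a \emph{uniqueness} argument: both $\texttt{filter}_p \, (\sort{}_\leq \, \mathit{xs})$ and $\sort{}_\leq \, (\texttt{filter}_p \, \mathit{xs})$ are permutations of $\texttt{filter}_p \, \mathit{xs}$ (by \cref{lemma:perm_sort}, since filtering respects $\permeq$) and are both sorted \wrt $\leq$, so if $\leq$ determined a unique sorted arrangement the theorem would follow at once. The difficulty is that $\leq$ is only a total \emph{preorder}: equivalent elements may be permuted among themselves, and it is precisely here that stability must enter to force both sides to order equivalent elements identically.

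To turn stability into a genuine uniqueness principle, I would first make the elements distinguishable by annotating $\mathit{xs}$ with positions. Let $\hat{\mathit{xs}} : \mathrm{list} \, (T \times \mathrm{nat})$ tag the $k$-th element of $\mathit{xs}$ with its index $k$, write $\pi$ for the first projection so that $[\pi \, z \mid z \leftarrow \hat{\mathit{xs}}] = \mathit{xs}$, and let $\hat\leq$ be the pullback $z \mathbin{\hat\leq} w \coloneq \pi \, z \leq \pi \, w$ and $\hat{p} \coloneq p \circ \pi$. Using the naturality of \sort{} (\cref{lemma:sort_map}) with $f \coloneq \pi$, together with the commutation of $\texttt{filter}$ with $\texttt{map}$, both sides of the goal rewrite as the $\pi$-image of their annotated counterparts; hence it suffices to prove the annotated identity
\[
 \texttt{filter}_{\hat{p}} \, (\sort{}_{\hat\leq} \, \hat{\mathit{xs}}) = \sort{}_{\hat\leq} \, (\texttt{filter}_{\hat{p}} \, \hat{\mathit{xs}}).
\]

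On the annotated list the two sides are now forced. Both are permutations of $\texttt{filter}_{\hat{p}} \, \hat{\mathit{xs}}$: the right-hand side by \cref{lemma:perm_sort}, and the left-hand side because $\sort{}_{\hat\leq} \, \hat{\mathit{xs}} \permeq \hat{\mathit{xs}}$ and filtering respects $\permeq$. Taking $\leq_2$ to be the order on indices, on which $\hat{\mathit{xs}}$ and any sublist of it is trivially pairwise sorted, \cref{lemma:sort_pairwise_stable} shows that $\sort{}_{\hat\leq} \, \hat{\mathit{xs}}$ and $\sort{}_{\hat\leq} \, (\texttt{filter}_{\hat{p}} \, \hat{\mathit{xs}})$ are both sorted \wrt the lexicographic order $\leq_{\mathrm{lex}}$ of $\hat\leq$ and this index order; since filtering preserves sortedness, the left-hand side is $\leq_{\mathrm{lex}}$-sorted as well. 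Finally, because the indices are distinct, $\leq_{\mathrm{lex}}$ is antisymmetric on the elements of $\hat{\mathit{xs}}$ (and transitive, so it agrees with its pairwise version by \cref{lemma:coq:sorted_pairwise}); the standard fact that two permutation-equivalent lists both sorted \wrt an antisymmetric transitive relation are equal then closes the annotated identity, and therefore the theorem.

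I expect the main obstacle to be exactly this last uniqueness step, \ie manufacturing antisymmetry so that ``sorted plus same multiset'' forces equality: the total preorder $\leq$ alone does not suffice, and the role of the index annotation together with naturality (\cref{lemma:sort_map}) is precisely to import the stability guaranteed by \cref{lemma:sort_pairwise_stable} into a setting where the sorted arrangement is unique. The one place demanding care is checking that $\texttt{filter}_{\hat{p}} \, (\sort{}_{\hat\leq} \, \hat{\mathit{xs}})$ is $\leq_{\mathrm{lex}}$-sorted, \ie that filtering a stably sorted list preserves the tie-breaking order. A more computational alternative would be to run the trace induction of \cref{lemma:sort_ind} directly, which reduces the goal to the ``stable compositionality'' identity $\sort{}_\leq \, u \merge_\leq \sort{}_\leq \, v = \sort{}_\leq \, (u \concat v)$; but discharging that identity again needs the same antisymmetry and uniqueness machinery, so the annotation route seems the most economical.
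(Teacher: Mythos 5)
Your proposal is correct and follows essentially the same route as the paper's proof: disambiguate equivalent elements by tagging them with their positions, transfer both sides through the naturality of \sort{} and \texttt{filter}, use stability (\cref{lemma:sort_pairwise_stable}) to get both lists sorted \wrt the lexicographic refinement by index, and close with a uniqueness-of-sorted-arrangements lemma. The only (immaterial) differences are that the paper indexes via $\texttt{nth}$ applied to $[0,\dots,\lvert\mathit{xs}\rvert-1]$ rather than via pairs $T \times \mathrm{nat}$, and invokes the irreflexive-relation uniqueness lemma (\cref{lemma:coq:irr_sorted_eq}, same membership) where you invoke the antisymmetric one (\cref{lemma:coq:sorted_eq}, permutation equivalence).
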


\begin{proof}
 We will rely on the fact that two lists are equal whenever they are sorted \wrt a transitive and
 irreflexive relation and contain the same set of elements (\cref{lemma:coq:irr_sorted_eq}).

 Since $\leq$ is total and hence reflexive, we instead consider a relation on natural numbers
 (indices):
 \[
  i \leq_I j \coloneq
  \texttt{nth} \, x_0 \, \mathit{xs} \, i \leq \texttt{nth} \, x_0 \, \mathit{xs} \, j
 \]
 where $\texttt{nth} \, x_0 \, \mathit{xs} \, i$ is the $i^\mathrm{th}$ element in the list
 $\mathit{xs}$ (\cref{def:coq:nth}) and its default value $x_0$ is an arbitrary element from the
 list $\mathit{xs}$.
 We can turn this relation into an irreflexive relation by composing it lexicographically with the
 strict order on natural numbers $<_{\mathbb{N}}$, resulting in the relation
 \[
  i <_I j \coloneq i \leq_I j \land (j \not\leq_I i \lor i <_{\mathbb{N}} j).
 \]

 Now we replace $\mathit{xs}$ everywhere with
 $\texttt{map}_{\texttt{nth} \, x_0 \, \mathit{xs}} \, \mathit{is}$
 where $\mathit{is} \coloneq [0, \dots, \lvert\mathit{xs}\rvert - 1]$ (\cref{lemma:coq:mkseq_nth}).
 It thus remains to prove
 \[
  \texttt{filter}_p \,
    (\sort{}_\leq \, (\texttt{map}_{\texttt{nth} \, x_0 \, \mathit{xs}} \, \mathit{is}))
  = \sort{}_\leq \,
      (\texttt{filter}_p \, (\texttt{map}_{\texttt{nth} \, x_0 \, \mathit{xs}} \, \mathit{is})).
 \]
 Using the naturality of \sort{} (\cref{lemma:sort_map}) and \texttt{filter}
 (\cref{lemma:coq:filter_map}), \ie,
 $\texttt{map}_f \, (\texttt{filter}_{p \circ f} \, \mathit{xs})
  = \texttt{filter}_p \, (\texttt{map}_f \, \mathit{xs})$,
 it remains to prove
 \[
  \texttt{map}_{\texttt{nth} \, x_0 \, \mathit{xs}} \,
    (\texttt{filter}_{p_I} \, (\sort{}_{\leq_I} \, \mathit{is}))
 = \texttt{map}_{\texttt{nth} \, x_0 \, \mathit{xs}} \,
     (\sort{}_{\leq_I} \, (\texttt{filter}_{p_I} \, \mathit{is}))
 \]
 where $p_I \coloneq p \circ \texttt{nth} \, x_0 \, \mathit{xs}$.

 Now, we apply the congruence rule with respect to \texttt{map} and the fact mentioned in the
 beginning of this proof (\cref{lemma:coq:irr_sorted_eq}) by checking that both sides of the above
 equation are sorted \wrt $<_I$ and they contain the same set of elements.
 The latter condition follows from \cref{corollary:mem_sort} and the fact that
 $x \in \texttt{filter}_p \, \mathit{xs}$ iff $p \, x \land x \in \mathit{xs}$ for any $p$,
 $\mathit{xs}$, and $x$ (\cref{lemma:coq:mem_filter}).
 Finally, both lists are sorted \wrt $<_I$ because \sort{} is stable
 (\cref{corollary:sort_stable}), $\mathit{is} \coloneq [0, \dots, \lvert\mathit{xs}\rvert - 1]$ is
 sorted \wrt $<_{\mathbb{N}}$ (\cref{lemma:coq:iota_ltn_sorted}), and \coq{filter} turns a sorted list
 into a sorted list whenever the relation is transitive (\cref{lemma:coq:sorted_filter}).
\end{proof}


\subsubsection{A remark on the formulation of stability}
\label{sec:sort_standard_stable}

The literature~\cite{CLRS4th, Leroy:mergesort, DBLP:journals/jar/Sternagel13,
DBLP:journals/tocl/LeinoL15} often formulates the stability of a sort function as follows.

\begin{corollary}[The standard formulation of the stability]
 \label{corollary:sort_standard_stable}
 For any total preorder $\leq$ on $T$, the equivalent elements always appear in the same order in
 the input and output of sorting; that is, the following equation holds for any $x$ of type $T$ and
 $s$ of type $\mathrm{list} \, T$:
 \[
  [y \leftarrow \sort{}_\leq \, s \mid x \equiv y] = [y \leftarrow s \mid x \equiv y].
 \]
\end{corollary}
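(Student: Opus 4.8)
The plan is to deduce this standard formulation of stability as a direct consequence of the commutation lemma between filtering and sorting (\cref{lemma:filter_sort}), which has already been established. The key observation is that the filter predicate $\lambda y.\, x \equiv y$ selects exactly those elements equivalent to a fixed $x$, and the right-hand side of the desired equation is precisely $\texttt{filter}_{(x \equiv \cdot)} \, s$ while the left-hand side is $\texttt{filter}_{(x \equiv \cdot)} \, (\sort{}_\leq \, s)$.

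Concretely, I would first rewrite both sides using the filter-comprehension notation so that the goal becomes
\[
 \texttt{filter}_{(x \equiv \cdot)} \, (\sort{}_\leq \, s) =
 \texttt{filter}_{(x \equiv \cdot)} \, s.
\]
By \cref{lemma:filter_sort}, instantiated with the predicate $p \coloneq (\lambda y.\, x \equiv y)$, the left-hand side equals $\sort{}_\leq \, (\texttt{filter}_{(x \equiv \cdot)} \, s)$. So it suffices to prove that sorting is a no-op on the filtered list, i.e.\ that $\texttt{filter}_{(x \equiv \cdot)} \, s$ is left fixed by $\sort{}_\leq$.

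The crux, then, is to invoke \cref{lemma:pairwise_sort}, which states that $\sort{}_\leq \, t = t$ for any list $t$ that is pairwise sorted \wrt $\leq$. I would therefore verify that $\texttt{filter}_{(x \equiv \cdot)} \, s$ is pairwise sorted \wrt $\leq$. This is where the hypothesis that $\leq$ is a total preorder does the work: every element $y$ retained by the filter satisfies $x \equiv y$, hence $x \leq y$ and $y \leq x$; by transitivity, any two retained elements $y, y'$ satisfy both $y \leq y'$ and $y' \leq y$, so in particular the forward relation $y \leq y'$ holds for every pair with the earlier index, which is exactly pairwise sortedness \wrt $\leq$. This uses reflexivity and transitivity of the preorder but not totality.

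I expect the main obstacle to be the bookkeeping in matching \cref{lemma:filter_sort} (stated for a predicate $p$ on $T$ and a total preorder) to the specific predicate $x \equiv \cdot$, and in confirming that the equivalence $\equiv$ extracted from a total preorder genuinely yields a pairwise-sorted filtered list---so that \cref{lemma:pairwise_sort} applies. Once those two ingredients are aligned, the proof is a two-step chain: commute filter past sort via \cref{lemma:filter_sort}, then collapse the inner sort via \cref{lemma:pairwise_sort}. No induction over traces is needed directly here, since both heavy lemmas already encapsulate it.
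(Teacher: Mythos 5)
Your proposal is correct and follows exactly the paper's own proof: commute the filter $(x \equiv \cdot)$ past $\sort{}_\leq$ via \cref{lemma:filter_sort}, then observe that the filtered list consists of mutually equivalent elements, hence is pairwise sorted \wrt $\leq$ by transitivity, so \cref{lemma:pairwise_sort} collapses the remaining sort. The only cosmetic difference is your aside about reflexivity, which is not actually needed since $y \leq x$ and $x \leq y'$ already give $y \leq y'$ by transitivity alone.
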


\begin{proof}
 \begin{align*}
   [y \leftarrow \sort{}_\leq \, s \mid x \equiv y]
   &= \sort{}_\leq \, [y \leftarrow s \mid x \equiv y]
   & (\text{\cref{lemma:filter_sort}})
   \intertext{$[y \leftarrow s \mid x \equiv y]$, whose elements are all equivalent, is pairwise
     sorted \wrt $\leq$. Thus,}
   &= [y \leftarrow s \mid x \equiv y]
   & (\text{\cref{lemma:pairwise_sort}})
   & \qedhere
 \end{align*}
\end{proof}

\pagebreak 

We argue that our stability results (\cref{lemma:sort_pairwise_stable,lemma:filter_sort}) are
more general than \cref{corollary:sort_standard_stable} for the following reasons:
\begin{itemize}
\item While \cref{corollary:sort_standard_stable} restricts the predicate to $(x \equiv \cdot)$,
  \cref{lemma:filter_sort} applies to any predicate and opened up a natural way to prove other
  useful stability results
  (\cref{lemma:coq:sorted_mask_sort,lemma:coq:subseq_sort,lemma:coq:sorted_subseq_sort,{lemma:coq:mem2_sort}}).
\item In the above proofs, \cref{corollary:sort_standard_stable} is an easy consequence of
  \cref{lemma:filter_sort}, which follows from \cref{lemma:sort_pairwise_stable}.
\item While we proved the converse implications, \ie, \cref{corollary:sort_standard_stable} implies
  \cref{lemma:sort_pairwise_stable,lemma:filter_sort} under some assumptions on \sort, their proofs
  are non-trivial, and \cref{lemma:sort_pairwise_stable} derived from
  \cref{corollary:sort_standard_stable} requires $\leq_1$ to be transitive (see
  \cref{appx:stability-statements}).
\end{itemize}

\section{Optimizations}
\label{sec:optimizations}

In this section, we review some optimization techniques for mergesort, and see how our proof
technique presented in \cref{sec:nontailrec} extends to them (\cref{sec:new-proofs}).
In \cref{sec:tailrec-mergesort}, we review tail-recursive mergesort in call-by-value evaluation,
which does not use up stack space and thus is efficient.
In \cref{sec:smooth-mergesort}, we review smooth mergesort, that reuses sorted slices in the input
in the sorting process.
In \cref{sec:new-characterization}, we extend the characterization presented in
\cref{sec:characterization} to support the optimized mergesorts.
In \cref{sec:new-characterization-tailrec,sec:new-characterization-smooth}, we describe how the
tail-recursive and smooth mergesorts satisfy the extended characteristic property, respectively.
In \cref{sec:new-characterization-to-correctness}, we demonstrate that the extended characteristic
property implies the same correctness results as \cref{sec:characterization-to-correctness}.

\subsection{Tail-recursive mergesort$^\dag$}
\label{sec:tailrec-mergesort}

Although the naive mergesort algorithms presented in \cref{sec:mergesort-approaches} achieves
optimal $\bigO(n \log n)$ time complexity, the \caml{merge} function (\cref{fig:nontailrec-merge})
in call-by-value evaluation consumes a linear amount of stack space and crashes on longer inputs,
since it is not tail recursive.
The commonly used technique to make it tail recursive is to add an \emph{accumulator} argument
\caml{accu} that is initially the empty list and accumulates the result, as in \caml{revmerge} in
\cref{fig:tailrec-merge}.\footnote{\label{fn:rev_append-is-tailrec}Although \caml{rev_append xs ys}
is equal to \caml{append (rev xs) ys} (\cref{lemma:coq:catrevE}), the former is tail recursive but
the latter is not. Therefore, the \caml{revmerge} function has to use the former to avoid using up
stack space.}
The tail-recursive merge function accumulates the first elements of the input lists as the last
element of the output list, and produces its output in reverse order. That is to say, the following
equation holds for any binary relation \caml{(<=)} and any lists \caml{xs} and \caml{ys}:
\[
\caml{revmerge (<=) xs ys []} = \caml{rev (merge (<=) xs ys)}.
\]

Two lists sorted in descending order can be merged without reversing them
using the converse relation \caml{(>=) := (fun x y -> y <= x)}.
In fact, the following equation holds for any total preorder
\caml{(<=)}, and any lists \caml{s1}, \caml{s2}, \caml{s3}, and \caml{s4} sorted \wrt
\caml{(<=)}:\footnote{\label{fn:revmerge-arguments-order}Note that swapping the arguments of the
outer \caml{revmerge} matters for maintaining the stability of the algorithm.}
\begin{align*}
      & \caml{merge (<=) (merge (<=) s1 s2) (merge (<=) s3 s4)} \\
 = {} & \caml{revmerge (>=) (revmerge (<=) s3 s4 []) (revmerge (<=) s1 s2 []) []}.
\end{align*}

There are other non-tail-recursive functions in the naive mergesort algorithms.
In \cref{fig:top-down-mergesort}, the splitting function \caml{split_n} is not tail recursive and
costs a linear time in its second argument \caml{k}.
In \cref{fig:bottom-up-mergesort}, the \caml{merge_pairs} and \caml{map} functions are not tail
recursive.
Note that the non tail-recursiveness of the top-down \caml{sort} function (or \caml{sort_rec} below)
is not an actual issue since the depth of its recursive calls is logarithmic in the length of the
input.
Therefore, based on the top-down approach, all the major inefficiency issues explained here can be
addressed as follows.
\begin{camlcode}
let sort (<=) xs =
  let (>=) = (fun x y -> y <= x) in
  let rec sort_rec xs b n =
    match n, xs with
    | 1, x :: xs' -> [x], xs'
    | _, _ ->
      let n1 = n / 2 in
      let s1, xs'  = sort_rec xs (not b) n1 in
      let s2, xs'' = sort_rec xs' (not b) (n - n1) in
      (if b then revmerge (>=) s2 s1 [] else revmerge (<=) s1 s2 []), xs''
  in
  if xs = [] then [] else fst (sort_rec xs true (length xs))
\end{camlcode}
The auxiliary recursive function \caml{sort_rec} takes three arguments: a list \caml{xs}, a Boolean
value \caml{b}, a positive integer \caml{n} that must be less than or equal to the length of
\caml{xs}.
It returns the pair of the sorted list of the first \caml{n} elements of \caml{xs} and the rest of
the input. The sorted list (first component) is in ascending order if \caml{b} is \caml{true},
otherwise in descending order.

\caml{List.stable_sort} of the \OCaml standard library follows the same approach as above.
Additionally, its auxiliary function corresponding to \caml{sort_rec} above is defined as two
mutually-recursive functions corresponding to the cases that \caml{b} is \caml{true} or
\caml{false}, respectively. It stops the recursion when $n \leq 3$, which is an effective
micro-optimization.
We will present a bottom-up tail-recursive mergesort in \cref{sec:tailrec-mergesort-in-coq} and make
it smooth (\cref{sec:smooth-mergesort}) in \cref{appx:tailrec-mergesort-in-coq}.


\subsection{Smooth mergesort$^\dag$}
\label{sec:smooth-mergesort}

A mergesort algorithm that takes advantage of sorted slices in the input is called
\emph{natural} mergesort~\cite[Algorithm N in Section 5.2.4]{DBLP:books/aw/Knuth73}.
In this paper, we instead call it \emph{smooth} mergesort~\cite{DBLP:journals/scp/Dijkstra82}%
\cite[Subsection ``Bottom-up merge sort'' in Section 3.21]{DBLP:books/daglib/0084777} to avoid
confusion with naturality (\cref{lemma:sort_map}).
Bottom-up non-tail-recursive mergesort (\cref{fig:bottom-up-mergesort}) can easily be made smooth by
dividing the input into weakly increasing or strictly decreasing slices instead of singleton lists.
Such a slice of the input that is already sorted is called a \emph{run}.
Note that we cannot use a non-strictly decreasing slice, because its reversal does not preserve the
order of equivalent elements in the slice, and thus, it breaks the stability of the algorithm.
An example of smooth bottom-up non-tail-recursive mergesort follows:
\begin{camlcode}[escapeinside=\#\#]
let sort (<=) xs =
  let rec merge_pairs = ... in (* These functions remain    *)
  let rec merge_all = ... in   (* unchanged from Figure #\ref{fig:bottom-up-mergesort}#. *)
  let rec sequences = function
    | a :: b :: xs -> if a <= b then ascending b [a] xs else descending b [a] xs
    | [a] -> [[a]]
    | [] -> []
  and ascending a accu = function
    | b :: xs when a <= b -> ascending b (a :: accu) xs
    | xs -> rev (a :: accu) :: sequences xs
  and descending a accu = function
    | b :: xs when not (a <= b) -> descending b (a :: accu) xs
    | xs -> (a :: accu) :: sequences xs
  in
  merge_all (sequences xs)
\end{camlcode}
where \caml{sequences} splits the input into sorted slices, and \caml{ascending} and
\caml{descending} process increasing and decreasing runs, respectively.

In fact, \GHC's mergesort function \haskell{Data.List.sort} is smooth bottom-up non-tail-recursive
mergesort. Its slightly modified versions have been formally verified in
\IsabelleHOL~\cite{DBLP:journals/jar/Sternagel13} and \Dafny~\cite{DBLP:journals/tocl/LeinoL15},
which we compare to our formalization in \cref{sec:related-work}.

\subsection{Extended characterization and correctness proofs}
\label{sec:new-proofs}

\subsubsection{Extended characterization of stable mergesort functions}
\label{sec:new-characterization}

In this section, we extend the characterization presented in \cref{sec:characterization} to
support tail-recursive and smooth mergesorts.
We first add more operators on $T$ and $R$ to the type of abstract sort functions \asort{}, as
follows:
\begin{align*}
 \TasortExt \coloneq{} \forall (T \, R : \types{}),
 &\underbrace{(T \to T \to \mathrm{bool})}_{\text{relation $\leq$}} \to
  \underbrace{(R \to R \to R)}_{\text{merge by $\leq$}} \to
  \underbrace{(R \to R \to R)}_{\text{merge by $\geq$}} \to \\
 &\underbrace{(T \to R)}_{\text{singleton}} \to
  \underbrace{R\vphantom{()}}_{\text{empty}} \to
  \underbrace{\mathrm{list} \, T\vphantom{()}}_{\text{input}} \to
  \underbrace{R\vphantom{()}}_{\text{output}}.
\end{align*}
The first argument of type $T \to T \to \mathrm{bool}$ is there to give \asort{} direct access
to the relation $\leq$ without going through merge, which we will exploit to support smooth
mergesorts.
Since tail-recursive mergesorts merge sorted sequences both by $\leq$ and $\geq$, the second and
third arguments of type $R \to R \to R$ now represent merge by $\leq$ and $\geq$, respectively.
However, $R$ still represents the type of lists sorted \wrt $\leq$. In order to merge them with
$\geq$, we introduce the following operator $\mergerev$:
\[
 \mathit{xs} \mergerev_\leq \mathit{ys} \coloneq
 \texttt{rev} \, (\texttt{rev} \, \mathit{ys} \mathbin{\merge_\geq} \texttt{rev} \, \mathit{xs}).
\]
Therefore, we replace \cref{eq:asort_mergeE,eq:asort_catE} with the following equations,
respectively:
\begin{gather}
\forall (T : \types{}) \, ({\leq} : T \to T \to \mathrm{bool}) \, (\mathit{xs} : \mathrm{list} \, T),
 \asort{} \, (\leq) \, (\merge_\leq) \, (\mergerev_\leq) \, [\cdot] \, [] \, \mathit{xs} =
 \sort{}_\leq \, \mathit{xs},
 \label{eq:asort_mergeE'} \\
\forall (T : \types{}) \, ({\leq} : T \to T \to \mathrm{bool}) \, (\mathit{xs} : \mathrm{list} \, T),
 \asort{} \, (\leq) \, (\concat) \, (\concat) \, [\cdot] \, [] \, \mathit{xs} = \mathit{xs}.
 \label{eq:asort_catE'}
\end{gather}
We define the \emph{extended characteristic property} of stable mergesort functions as the existence
of an abstract mergesort function \asort{} that is parametric (\cref{sec:prelim-param}), \ie,
$(\asort{},\asort{}) \in \paramt{\TasortExt}$, and satisfies \cref{eq:asort_mergeE',eq:asort_catE'}.

\subsubsection{Tail-recursive mergesort$^\dag$}
\label{sec:new-characterization-tailrec}

The abstract mergesort function for the tail-recursive mergesort function
(\cref{sec:tailrec-mergesort}) can be obtained just by abstracting out the tail-recursive merge
function with \caml{(<=)} and \caml{(>=)} to the two abstract merge functions as follows:
\begin{camlcode}[escapeinside=\#\#]
let asort (<=) merge merge' singleton empty xs =
  let rec sort_rec xs b n =
    match n, xs with
    | 1, x :: xs' -> #\underline{singleton}# x, xs'
    | _, _ ->
      let n1 = n / 2 in
      let s1, xs' = sort_rec xs (not b) n1 in
      let s2, xs'' = sort_rec xs' (not b) (n - n1) in
      (if b then #\underline{merge'}# s1 s2 else #\underline{merge}# s1 s2), xs''
  in
  if xs = [] then #\underline{empty}# else fst (sort_rec xs true (length xs))
\end{camlcode}
By instantiating the above \caml{asort} as in \cref{eq:asort_mergeE'}, we replace
\caml{revmerge (<=) s1 s2 []} and \caml{revmerge (>=) s2 s1 []} in \caml{sort} in
\cref{sec:tailrec-mergesort} with \caml{merge (<=) s1 s2} and
\caml{rev (merge (>=) (rev s2) (rev s1))}, respectively.
While the sorted lists that appear in execution of \caml{sort} in \cref{sec:tailrec-mergesort} are a
mix of increasing and decreasing lists, this replacement turns all of them in increasing order.
Therefore, the proof of \cref{eq:asort_mergeE'} cannot be done just by definition, and involves some
equational reasoning about merge and reversal of lists (see \cref{sec:limitations}).

\subsubsection{Smooth mergesort$^\dag$}
\label{sec:new-characterization-smooth}

The major obstacle in defining the abstract mergesort function for the smooth mergesort
(\cref{sec:smooth-mergesort}) is that it uses the cons (\caml{::}) which does not directly
correspond to any of the four abstract operators \caml{merge}, \caml{merge'}, \caml{singleton}, and
\caml{empty} on sorted lists.
For example, the recursive function \caml{descending} uses \caml{a :: accu} knowing that
\caml{a} is strictly smaller than the head of \caml{accu} and \caml{accu} is strictly increasing,
and thus, ensures \caml{a :: accu} is strictly increasing.
Therefore, we can simulate this behavior with \caml{merge accu (singleton a)}.
Similarly, we can simulate the behavior of \caml{a :: accu} in \caml{ascending}, where \caml{a} is
greater than or equal to the head of \caml{accu} and \caml{accu} is weakly decreasing, with
\caml{merge' accu (singleton a)}.
Again, the proof of \cref{eq:asort_mergeE'} cannot be done just by definition, and involves
these arguments and reasoning about reversal of lists (see \cref{sec:limitations}).

\subsubsection{Correctness proofs}
\label{sec:new-characterization-to-correctness}

In this section, we adapt the correctness proofs presented in
\cref{sec:characterization-to-correctness} to the extended characteristic property.
We adapt the induction principle over traces (\cref{lemma:sort_ind}) as follows.
\begin{lemma}[An induction principle over traces of \sort{}]
 \label{lemma:sort_ind'}
 Suppose $\leq$ and $\sim$ are binary relations on $T$ and $\mathrm{list} \, T$, respectively, and
 $\mathit{xs}$ is a list of type $\mathrm{list} \, T$.
 Then, $\mathit{xs} \sim \sort{}_\leq \, \mathit{xs}$ holds whenever the following four
 induction cases hold:
 \begin{itemize}
  \item for any lists $\mathit{xs}$, $\mathit{xs'}$, $\mathit{ys}$, and $\mathit{ys'}$ of type
	$\mathrm{list} \, T$,
	$(\mathit{xs} \concat \mathit{ys}) \sim (\mathit{xs'} \merge_\leq \mathit{ys'})$ holds
	whenever $\mathit{xs} \sim \mathit{xs'}$ and $\mathit{ys} \sim \mathit{ys'}$ hold,
  \item for any lists $\mathit{xs}$, $\mathit{xs'}$, $\mathit{ys}$, and $\mathit{ys'}$ of type
	$\mathrm{list} \, T$,
	$(\mathit{xs} \concat \mathit{ys}) \sim
	 \texttt{rev} \, (\texttt{rev} \, \mathit{ys'} \merge_\geq \texttt{rev} \, \mathit{xs'})$
	holds whenever $\mathit{xs} \sim \mathit{xs'}$ and $\mathit{ys} \sim \mathit{ys'}$ hold,
  \item for any $x$ of type $T$, $[x] \sim [x]$ holds, and
  \item $[] \sim []$ holds.
 \end{itemize}
\end{lemma}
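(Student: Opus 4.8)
The plan is to mirror the proof of \cref{lemma:sort_ind}, now working with the extended characterization. First I would use \cref{eq:asort_mergeE',eq:asort_catE'} to rewrite the goal $\mathit{xs} \sim \sort{}_\leq \, \mathit{xs}$ into the equivalent relatedness statement
\[
 \asort{} \, (\leq) \, (\concat) \, (\concat) \, [\cdot] \, [] \, \mathit{xs} \sim
 \asort{} \, (\leq) \, (\merge_\leq) \, (\mergerev_\leq) \, [\cdot] \, [] \, \mathit{xs},
\]
in which the comparison argument $\leq$, the singleton $[\cdot]$, and the empty $[]$ stay identical on both sides, while the two merge slots are instantiated with $\concat$ on the left and with $\merge_\leq$ and $\mergerev_\leq$ on the right.

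Next I would invoke the parametricity of $\asort{}$, that is $(\asort{}, \asort{}) \in \paramt{\TasortExt}$, instantiating $\sim_T$ with equality on $T$ and $\sim_R$ with $\sim$. As in \cref{lemma:sort_ind}, the premise for the comparison argument holds automatically: both sides use the same $\leq$ and $\sim_T$ is equality, so the required equality of Booleans is trivial. Likewise, the input premise $(\mathit{xs}_1, \mathit{xs}_2) \in \paramt{\mathrm{list}}_{\sim_T}$ holds because both lists are $\mathit{xs}$ and $\paramt{\mathrm{list}}_{\sim_T}$ is just equality on $\mathrm{list} \, T$. It then remains to discharge the two merge premises together with the singleton and empty premises, and I expect these to coincide one-to-one with the four induction cases.

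The key step is matching the two merge premises with the first two induction cases. The premise for the $\leq$-merge argument instantiates $\concat$ against $\merge_\leq$, so its relatedness obligation is that $(\mathit{xs} \concat \mathit{ys}) \sim (\mathit{xs'} \merge_\leq \mathit{ys'})$ whenever $\mathit{xs} \sim \mathit{xs'}$ and $\mathit{ys} \sim \mathit{ys'}$, which is precisely the first induction case. The premise for the $\geq$-merge argument instantiates $\concat$ against $\mergerev_\leq$; unfolding $\mathit{xs'} \mergerev_\leq \mathit{ys'}$ to $\texttt{rev} \, (\texttt{rev} \, \mathit{ys'} \merge_\geq \texttt{rev} \, \mathit{xs'})$ turns its obligation into literally the second induction case. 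Finally, the singleton and empty premises reduce to $[x] \sim [x]$ and $[] \sim []$, the last two induction cases.

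I do not anticipate a genuine obstacle here; the one thing to get right is the bookkeeping around $\mergerev$. The subtlety is recognizing that the rev/rev form appearing in the second induction case is exactly the relatedness condition produced by instantiating the $\geq$-merge slot with $\concat$ and $\mergerev_\leq$, rather than an independent hypothesis. Once the definition of $\mergerev_\leq$ is unfolded, the correspondence between the four parametricity premises and the four induction cases is immediate, and the argument closes just as in the non-extended case.
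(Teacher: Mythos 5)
Your proposal is correct and matches the paper's argument: the paper proves \cref{lemma:sort_ind'} exactly as \cref{lemma:sort_ind}, rewriting via \cref{eq:asort_mergeE',eq:asort_catE'}, applying the parametricity of \asort{} at $\TasortExt$ with $\sim_T$ as equality and $\sim_R$ as $\sim$, and noting that the extra premise for the second merge slot, once $\mergerev_\leq$ is unfolded, is precisely the new second induction case. Nothing is missing.
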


The only difference between the old and new induction principles is the addition of the second
induction case.
Therefore, the proofs of \cref{lemma:perm_sort,lemma:sort_pairwise_stable} can be easily adapted to
the new induction principle by adding the corresponding case, as follows.
Again, we refer the readers to \cref{lemma:coq:pairwise_sort} for the complete proof of
\cref{lemma:pairwise_sort}.
\begin{proof}[Extension to the proof of \cref{lemma:perm_sort}]
 We prove $\sort{}_\leq \, \mathit{xs} \permeq \mathit{xs}$ by induction on
 $\sort{}_\leq \, \mathit{xs}$ (\cref{lemma:sort_ind'}).
 Suffice it to show that $\mathit{xs'} \permeq \mathit{xs}$ and $\mathit{ys'} \permeq \mathit{ys}$
 imply:
 \begin{align*}
   \texttt{rev} \, (\texttt{rev} \, \mathit{ys'} \mathbin{\merge_\geq} \texttt{rev} \, \mathit{xs'})
   &\permeq \texttt{rev} \, \mathit{ys'} \mathbin{\merge_\geq} \texttt{rev} \, \mathit{xs'}
   & (\text{\cref{lemma:coq:perm_rev}}) \\
   &\permeq \texttt{rev} \, \mathit{ys'} \concat \texttt{rev} \, \mathit{xs'}
   & (\text{\cref{lemma:coq:perm_merge}}) \\
   &=       \texttt{rev} \, (\mathit{xs'} \concat \mathit{ys'})
   & (\text{\cref{lemma:coq:rev_cat}}) \\
   &\permeq \mathit{xs'} \concat \mathit{ys'}
   & (\text{\cref{lemma:coq:perm_rev}}) \\
   &\permeq \mathit{xs} \concat \mathit{ys}.
   & (\text{\cref{lemma:coq:perm_cat}})
 \end{align*}
 The other induction cases are done in the first proof of \cref{lemma:perm_sort}.
\end{proof}
\begin{proof}[Extension to the proof of \cref{lemma:sort_pairwise_stable}]
 We prove a generalized proposition:
 \[
 (\sort{}_{\leq_1} \, \mathit{xs} \subseteq \mathit{xs}) \land
 (\text{$\mathit{xs}$ is pairwise sorted \wrt $\leq_2$} \Rightarrow
  \text{$\sort{}_{\leq_1} \, \mathit{xs}$ is sorted \wrt $\leq_{\mathrm{lex}}$})
 \]
 by induction on $\sort{}_{\leq_1} \, \mathit{xs}$ (\cref{lemma:sort_ind'}), where
 $x \leq_{\mathrm{lex}} y \coloneq x \leq_1 y \land (y \not\leq_1 x \lor x \leq_2 y)$.

 For the first component of the conjunction, suffice it to show
 \[
  (\mathit{xs'} \subseteq \mathit{xs}) \land (\mathit{ys'} \subseteq \mathit{ys}) \Rightarrow
  (\texttt{rev} \, (\texttt{rev} \, \mathit{ys'} \mathbin{\merge_{\geq_1}} \texttt{rev} \, \mathit{xs'})
   \subseteq \mathit{xs} \concat \mathit{ys})
 \]
 which is obvious since
 $\texttt{rev} \, (\texttt{rev} \, \mathit{ys'} \mathbin{\merge_{\geq_1}} \texttt{rev} \, \mathit{xs'})$
 is a permutation of $\mathit{xs'} \concat \mathit{ys'}$.

 For the second component of the conjunction, suffice it to show that
 $\texttt{rev} \, (\texttt{rev} \, \mathit{ys'} \mathbin{\merge_{\geq_1}} \texttt{rev} \, \mathit{xs'})$
 is sorted \wrt $\leq_{\mathrm{lex}}$, or equivalently, its reversal
 $\texttt{rev} \, \mathit{ys'} \mathbin{\merge_{\geq_1}} \texttt{rev} \, \mathit{xs'}$ is sorted \wrt
 the converse of $\leq_{\mathrm{lex}}$ (\cref{lemma:coq:rev_sorted}):
 \[
  x \geq_{\mathrm{lex}} y \coloneq x \geq_1 y \land (y \not\geq_1 x \lor x \geq_2 y),
 \]
 whenever:
 \begin{enumerate}[label=(\roman*)]
  \item \label{item:xs_ys_subset'} $\mathit{xs'}$ (resp.~$\mathit{ys'}$) is a subset of $\mathit{xs}$
	(resp.~$\mathit{ys}$),
  \item \label{item:xs_ys_sorted'} $\mathit{xs'}$ (resp.~$\mathit{ys'}$) is sorted \wrt
	$\leq_{\mathrm{lex}}$ if $\mathit{xs}$ (resp.~$\mathit{ys}$) is pairwise sorted \wrt $\leq_2$,
	and
  \item \label{item:concat_pairwise'} $\mathit{xs} \concat \mathit{ys}$ is pairwise sorted \wrt
	$\leq_2$.
 \end{enumerate}
 Among these, \ref{item:concat_pairwise'} is equivalent to the following conjunction
 (\cref{lemma:coq:pairwise_cat}):
 \begin{enumerate}[label=(\roman*), resume]
  \item \label{item:xs_ys_pairwise'} both $\mathit{xs}$ and $\mathit{ys}$ are pairwise sorted \wrt
	$\leq_2$, and
  \item \label{item:xs_le2_ys'} $x \leq_2 y$ holds for any $x \in \mathit{xs}$ and $y \in \mathit{ys}$.
 \end{enumerate}
 Hypotheses \ref{item:xs_ys_sorted'} and \ref{item:xs_ys_pairwise'} imply that both
 $\mathit{xs'}$ and $\mathit{ys'}$ are sorted \wrt $\leq_{\mathrm{lex}}$, or equivalently,
 $\texttt{rev} \, \mathit{xs'}$ and $\texttt{rev} \, \mathit{ys'}$ are sorted \wrt
 $\geq_{\mathrm{lex}}$ (\cref{lemma:coq:rev_sorted}).
 Hypotheses \ref{item:xs_ys_subset'} and \ref{item:xs_le2_ys'} imply that $y \geq_2 x$ holds for
 any $y \in \texttt{rev} \, \mathit{ys'}$ and $x \in \texttt{rev} \, \mathit{xs'}$
 (\cref{lemma:coq:sub_all,lemma:coq:allrel_rev2}).
 These two facts suffice to show that
 $\texttt{rev} \, \mathit{ys'} \mathbin{\merge_{\geq_1}} \texttt{rev} \, \mathit{xs'}$ is sorted \wrt
 $\geq_{\mathrm{lex}}$ (\cref{lemma:coq:merge_stable_sorted}).

 The other induction cases are done in the first proof of \cref{lemma:sort_pairwise_stable}.
\end{proof}

The naturality of \caml{sort} (\cref{lemma:sort_map}), which remains the same, can be deduced from
the parametricity of \asort{} as well.
Therefore, the rest of the correctness proofs, which have been verified in \Coq, remains the same.

\section{Formalization in \Coq}
\label{sec:formalization}

In this section, we discuss two technical aspects of our formalization of mergesort functions and
their correctness proofs in \Coq.
We first review a technique to make bottom-up mergesorts structurally
recursive~\cite{Gonthier:2009}\footnote{Gonthier introduced this technique to the Mathematical
Components (\MC) library~\cite{mathcomp} in 2008. The same technique has been used in the \Coq
standard library~\cite{rocqrefman:library,herbelin:f698148}.}, so that their termination becomes
trivial for \Coq (\cref{sec:termination}).
Furthermore, this technique makes the balanced binary tree construction of bottom-up mergesort
tail-recursive (\cref{sec:nontailrec-mergesort-in-coq}), and thus allows us to implement bottom-up
tail-recursive mergesort (\cref{sec:tailrec-mergesort-in-coq}).
We second discuss the design and organization of the library, particularly, the interface for
mergesort functions which allows us to state our correctness lemmas polymorphically for any stable
mergesort function, and how to populate this interface with concrete mergesort functions
(\cref{sec:interface}).

While \cref{sec:termination} continues to use the \OCaml syntax to present new mergesort functions,
\cref{appx:mergesort-in-coq} presents our actual \Coq implementations of structurally-recursive
mergesort functions, including some optimized implementations such as smooth variants
(\cref{sec:smooth-mergesort}).

\subsection{Structurally-recursive bottom-up mergesorts$^\dag$}
\label{sec:termination}

\subsubsection{The syntactic guard condition}
\label{sec:guard-condition}

A fixpoint function $f$ in \Coq~\cite{rocqrefman:inductive} has the form of
\coq|(fix $f_\mathrm{rec}$ ($\vec{x} : \vec{A}$) {struct $x_k$} : $B$ := $M$)| where
$f_\mathrm{rec}$ is the local name of the fixpoint function bound in $M$,
$(\vec{x} : \vec{A})$ is the list of arguments,
$x_k$ is the $k^\mathrm{th}$ element of $\vec{x}$ and the recursive (decreasing) argument,
and $M$ is the function body of type $B$.
This fixpoint function $f$ has type \coq{(forall ($\vec{x} : \vec{A}$), $B$)}.
To ensure the termination of $f$, all recursive calls of $f_\mathrm{rec}$ in $M$ must be
\emph{guarded by destructors}~\cite{DBLP:conf/types/Gimenez94} in \Coq. \pagebreak That is to say, $M$ must do
recursive calls to $f_\mathrm{rec}$ only on strict subterms of $x_k$.
In practice, the annotation of decreasing argument \coq|{struct $x_k$}| may be left implicit, and
\Coq can infer it automatically.
Hereafter in this section, we explain how to make bottom-up mergesorts structurally recursive, but
in the \OCaml syntax with annotations of decreasing argument as comments, \eg,
\caml{(* struct xs *)}.

As the first example of termination checking, we use the non-tail-recursive merge function.
Its definition in \cref{fig:nontailrec-merge} already does not satisfy the syntactic guard
condition, because the first and second recursive calls of \caml{merge} are decreasing only on the
first and the second lists, respectively, and the syntactic guard condition does not take a
termination argument involving multiple parameters (\eg, lexicographic termination) into account.

\begin{figure}[t]
\begin{camlcode}
let rec merge (<=) xs ys (* struct xs *) =
  match xs with
  | [] -> ys
  | x :: xs' ->
    let rec merge' ys (* struct ys *) =
      match ys with
      | [] -> xs
      | y :: ys' -> if x <= y then x :: merge (<=) xs' ys else y :: merge' ys'
    in
    merge' ys

let sort (<=) =
  let rec push xs stack (* struct stack *) =
    match stack with
    | [] :: stack | ([] as stack) -> xs :: stack
    | ys :: stack -> [] :: push (merge (<=) ys xs) stack
  in
  let rec pop xs stack (* struct stack *) =
    match stack with
    | [] -> xs
    | ys :: stack -> pop (merge (<=) ys xs) stack
  in
  let rec sort_rec stack xs (* struct xs *) =
    match xs with
    | [] -> pop [] stack
    | x :: xs -> sort_rec (push [x] stack) xs
  in
  sort_rec []
\end{camlcode}
\caption{Structurally-recursive non-tail-recursive merge and mergesort in \OCaml.}
\label{fig:struct-nontailrec-mergesort}
\end{figure}

One way to work around this restriction is to use nested fixpoint as in \caml{merge} in
\cref{fig:struct-nontailrec-mergesort}.
The outer recursive function \caml{merge} performs recursion on \caml{xs}, and the first case
where \caml{x <= y} holds calls it with \coq{xs'}, which is obtained by destructing \coq{xs} and
hence a strict subterm of \coq{xs}.
Similarly, the inner recursive function \coq{merge'} performs recursion on \coq{ys}, and the second
case where \caml{x <= y} does not hold calls it with \coq{ys'}, which is a strict subterm of
\coq{ys}.
This way, the termination checker can confirm that the \coq{merge} function terminates for any
input.


\subsubsection{Non-tail-recursive mergesort}
\label{sec:nontailrec-mergesort-in-coq}

In \cref{fig:bottom-up-mergesort}, \caml{merge_all} does not satisfy the syntactic guard condition
since \caml{merge_pairs xs} is not a strict subterm of \caml{xs}.
To work around this issue, we manage pending mergings using an explicit stack~\cite{Gonthier:2009}.
We represent the stack of pending mergings as a list of sorted lists whose head (\nth{0}) and tail
elements respectively correspond to the top and bottom elements of the stack.
The sorting process proceeds by repetitively pushing items to be sorted to the stack, as in
\caml{push} in \cref{fig:struct-nontailrec-mergesort}.
An item $\mathit{xs}$ pushed to the top of the stack is called a \emph{merging of level 0}, and a
result of merging two mergings of level $n$ is called a \emph{merging of level $n + 1$}.
The $n^\mathrm{th}$ element of the stack must be a merging of level $n$ if any; otherwise, the
empty list.
Therefore, if the top of the stack (or the stack itself) is empty, pushing an item $\mathit{xs}$ is
done by replacing the top element with $\mathit{xs}$; otherwise---if the stack $S$ has the form of
$\mathit{xs}_0 :: S'$ where $\mathit{xs}_0$ is nonempty---, it is done by replacing the top element
$\mathit{xs}_0$ with the empty list and pushing $\mathit{xs}_0 \merge_\leq \mathit{xs}$%
\footnote{Note that the first item of the input list will be pushed first and placed as a part of
the bottom element of the stack. The ordering of the arguments $\mathit{xs}_0$ and $\mathit{xs}$
here matters for maintaining the stability of the algorithm.} to $S'$.
In terms of trace, this procedure can be seen as a technique to construct balanced binary trees, and
the $n^\mathrm{th}$ element of the stack corresponds to a perfect binary tree of height $n$, as
in \cref{fig:explicit-stack}.
The sorting process can be completed by pushing all the items in the input to the stack and then
folding the stack by \caml{merge}, as in \caml{sort_rec} and \caml{pop} in
\cref{fig:struct-nontailrec-mergesort}, respectively.

\begin{figure}[t]
 \tikzset{
  level/.style = {sibling distance={3.2cm/pow(2,#1)}, level distance=.5cm},
  every node/.style = {fill=black!20, circle, inner sep=-2pt, minimum width=1em},
  merge/.style = {font={$\merge$}}}
 \begin{subfigure}[t]{.5\textwidth}
  \[
   []
  \]
  \caption{The initial state of stack, which is empty.}
  \label{fig:explicit-stack-1}
 \end{subfigure}%
 \begin{subfigure}[t]{.5\textwidth}
  \[
   [\begin{tikzpicture}[baseline=(base.base)]
     \node (base) {1};
    \end{tikzpicture}]
  \]
  \caption{The stack after pushing $1$ to \subref{fig:explicit-stack-1}.}
  \label{fig:explicit-stack-2}
 \end{subfigure}
 \begin{subfigure}[t]{.5\textwidth}
  \[
   [\xmark\,;\,
    \begin{tikzpicture}[baseline=(base.base)]
     \node[merge] {}
       child{ node (base) {1} }
       child{ node {2} };
    \end{tikzpicture}]
  \]
  \caption{The stack after pushing $2$ to \subref{fig:explicit-stack-2}.
  The new item $2$ has been merged with the top element $1$ to obtain $1 \merge 2$.}
  \label{fig:explicit-stack-3}
 \end{subfigure}%
 \begin{subfigure}[t]{.5\textwidth}
  \[
   [\begin{tikzpicture}[baseline=(base.base)]
     \node (base) {3};
    \end{tikzpicture}\,;\,
    \begin{tikzpicture}[baseline=(base.base)]
     \node[merge] {}
       child{ node (base) {1} }
       child{ node {2} };
    \end{tikzpicture}]
  \]
  \caption{The stack after pushing $3$ to \subref{fig:explicit-stack-3}.}
  \label{fig:explicit-stack-4}
 \end{subfigure}
 \begin{subfigure}[t]{.5\textwidth}
  \[
   [\xmark\,;\,
    \xmark\,;\,
    \begin{tikzpicture}[level/.append style = {sibling distance={3.6cm/pow(2,#1)}}, baseline=(base.base)]
     \node[merge] {}
       child{ node[merge] {} child{ node (base) {1} } child{ node {2} } }
       child{ node[merge] {} child{ node {3} } child{ node {4} } };
    \end{tikzpicture}]
  \]
  \caption{The stack after pushing $4$ to \subref{fig:explicit-stack-4}.
  The new item $4$ has been merged with the top element $3$ and then with the second
  element $1 \merge 2$, to obtain $(1 \merge 2) \merge (3 \merge 4)$.}
  \label{fig:explicit-stack-5}
 \end{subfigure}%
 \begin{subfigure}[t]{.5\textwidth}
  \[
   [\begin{tikzpicture}[baseline=(base.base)]
     \node (base) {5};
    \end{tikzpicture}\,;\,
    \xmark\,;\,
    \begin{tikzpicture}[level/.append style = {sibling distance={3.6cm/pow(2,#1)}}, baseline=(base.base)]
     \node[merge] {}
       child{ node[merge] {} child{ node (base) {1} } child{ node {2} } }
       child{ node[merge] {} child{ node {3} } child{ node {4} } };
    \end{tikzpicture}]
  \]
  \caption{The stack after pushing $5$ to \subref{fig:explicit-stack-5}.}
  \label{fig:explicit-stack-6}
 \end{subfigure}
 \begin{subfigure}[t]{.5\textwidth}
  \[
   [\xmark\,;\,
    \begin{tikzpicture}[baseline=(base.base)]
     \node[merge] {}
       child{ node (base) {5} }
       child{ node {6} };
    \end{tikzpicture}\,;\,
    \begin{tikzpicture}[level/.append style = {sibling distance={3.6cm/pow(2,#1)}}, baseline=(base.base)]
     \node[merge] {}
       child{ node[merge] {} child{ node (base) {1} } child{ node {2} } }
       child{ node[merge] {} child{ node {3} } child{ node {4} } };
    \end{tikzpicture}]
  \]
  \caption{The stack after pushing $6$ to \subref{fig:explicit-stack-6}.
  The new item $6$ has been merged with the top element $5$ to obtain $5 \merge 6$.}
  \label{fig:explicit-stack-7}
 \end{subfigure}%
 \begin{subfigure}[t]{.5\textwidth}
  \[
   [\begin{tikzpicture}[baseline=(base.base)]
     \node (base) {7};
    \end{tikzpicture}\,;\,
    \begin{tikzpicture}[baseline=(base.base)]
     \node[merge] {}
       child{ node (base) {5} }
       child{ node {6} };
    \end{tikzpicture}\,;\,
    \begin{tikzpicture}[level/.append style = {sibling distance={3.6cm/pow(2,#1)}}, baseline=(base.base)]
     \node[merge] {}
       child{ node[merge] {} child{ node (base) {1} } child{ node {2} } }
       child{ node[merge] {} child{ node {3} } child{ node {4} } };
    \end{tikzpicture}]
  \]
  \caption{The stack after pushing $7$ to \subref{fig:explicit-stack-7}.}
  \label{fig:explicit-stack-8}
 \end{subfigure}
 \caption{An example of state transitions of the explicit stack of pending mergings in the sorting
 process explained in \cref{sec:nontailrec-mergesort-in-coq}. The cross mark \xmark denotes an
 empty element in the stack. In general, pushing a new item $\mathit{xs}$ to the stack is done by
 (1) replacing the longest consecutive subsequence of nonempty elements
     $\mathit{xs}_0, \dots, \mathit{xs}_n$ from the top of the stack, where $\mathit{xs}_0$ is the
     top element, with empty elements, and then
 (2) replacing the next empty element $\mathit{xs}_{n + 1}$ in the stack with
 $\mathit{xs}_n \merge (\dots \merge (\mathit{xs}_0 \merge \mathit{xs})\dots)$.}
 \label{fig:explicit-stack}
\end{figure}

\citeauthor{Gonthier:2009}'s mergesort (\cref{fig:struct-nontailrec-mergesort}), particularly its
construction of balanced binary trees, can be seen as a variation of smooth bottom-up
non-tail-recursive mergesort presented by \citet{o1982smooth} and reviewed by
\citet[Subsection ``Bottom-up merge sort'' in Section 3.21]{DBLP:books/daglib/0084777}.
\citeauthor{o1982smooth}'s mergesort does not keep empty lists in the stack, but recovers the
information encoded by empty lists by counting the number of pushes performed on the stack, because
one may know whether the stack contains a merging of level $n$ or not by testing the $n^\mathrm{th}$
binary digit of the counter.
To put it another way, the stack in our implementation can be seen as a binary natural number in the
least significant bit first form representing the counter, by replacing the empty and non-empty
elements with 0 and 1, respectively.
Regardless of which approach we choose, bottom-up mergesort using the explicit stack of pending
mergings does not use up stack space except for \caml{merge}, because the depth of recursive calls
of \caml{push} is logarithmic in the length of the input, and \caml{pop} and \caml{sort_rec} are
tail recursive.
In \cref{sec:tailrec-mergesort-in-coq}, we implement bottom-up tail-recursive mergesort by
relying on this observation.

Structurally-recursive non-tail-recursive mergesort can be made smooth (\cref{sec:smooth-mergesort})
by pushing sorted slices to the stack instead of singleton lists.
See \cref{appx:nontailrec-mergesort-in-coq} for its \Coq implementation.

Since the emptiness test for sorted lists is not supplied to the abstract mergesort function as
defined in \cref{sec:new-characterization}, we change the type of the stack in the abstract
mergesort from \caml{list R} to \caml{list (option R)} and use the \caml{None} constructor of the
\caml{option} type to represent empty lists in the stack.
Therefore, the proof of \cref{eq:asort_mergeE'} for structurally-recursive mergesorts cannot be done
just by definition (\cref{sec:limitations}).

\subsubsection{Tail-recursive mergesort}
\label{sec:tailrec-mergesort-in-coq}

\begin{figure}[t]
\begin{camlcode}
let rec revmerge (<=) xs ys accu (* struct xs *) =
  match xs with
  | [] -> rev_append ys accu
  | x :: xs' ->
    let rec revmerge' ys accu (* struct ys *) =
      match ys with
      | [] -> xs
      | y :: ys' ->
        if x <= y then
          revmerge (<=) xs' ys (x :: accu)
        else
          revmerge' ys' (y :: accu)
    in
    revmerge' ys accu

let sort (<=) =
  let (>=) = (fun x y -> y <= x) in
  let rec push xs stack (* struct stack *) =
    match stack with
    | [] :: stack | ([] as stack) -> xs :: stack
    | ys :: [] :: stack | ys :: ([] as stack) ->
      [] :: revmerge (<=) ys xs [] :: stack
    | ys :: zs :: stack ->
      [] :: [] :: push (revmerge (>=) (revmerge (<=) ys xs []) zs []) stack
  in
  let rec pop mode xs stack (* struct stack *) =
    match stack, mode with
    | [], true -> rev xs
    | [], false -> xs
    | [] :: [] :: stack, _ -> pop mode xs stack
    | [] :: stack, _ -> pop (not mode) (rev xs) stack
    | ys :: stack, true -> pop false (revmerge (>=) xs ys []) stack
    | ys :: stack, false -> pop true (revmerge (<=) ys xs []) stack
  in
  let rec sort_rec stack xs (* struct xs *) =
    match xs with
    | [] -> pop false [] stack
    | x :: xs -> sort_rec (push [x] stack) xs
  in
  sort_rec []
\end{camlcode}
\caption{Structurally-recursive tail-recursive merge and mergesort in \OCaml.}
\label{fig:struct-tailrec-mergesort}
\end{figure}

The nested fixpoint technique in defining a recursive function that have more than one recursive
argument (\cref{sec:guard-condition}) can be easily adapted to the tail-recursive merge function
(\cref{fig:tailrec-merge}), as in \caml{revmerge} in \cref{fig:struct-tailrec-mergesort}.
As noted in \cref{sec:tailrec-mergesort}, this function produces its output in reverse order.

If we take reversals done by \caml{revmerge} into account in our construction scheme of balanced
binary trees (\cref{sec:nontailrec-mergesort-in-coq}), the \caml{push} function can be adapted as in
\cref{fig:struct-tailrec-mergesort}.
In this new \caml{push} function, pending mergings of ascending and descending orders should appear
alternately in the stack, and its top element and $\mathit{xs}$ should always be ascending ones.
Its one recursion step processes two elements of the stack to maintain this recursion
invariant, and pushing an item $\mathit{xs}$ to the stack proceeds as follows.
\begin{itemize}
 \item If the top of the stack (or the stack itself) is empty, it replaces the top element with
       $\mathit{xs}$.
 \item If the top of the stack $\mathit{ys}$ is nonempty and the next element $\mathit{zs}$ is empty
       (or the length of the stack is 1), it replaces $\mathit{zs}$ with
       $\texttt{rev} \, (\mathit{ys} \merge_\leq \mathrm{xs})$ and $\mathit{ys}$ with an empty
       element.
 \item Otherwise---if the stack has the form of $\mathit{ys} :: \mathit{zs} :: S'$ where both
       $\mathit{ys}$ and $\mathit{zs}$ are nonempty---, it has to push the result of merging
       $\mathit{xs}$, $\mathit{ys}$, and $\mathit{zs}$ to $S'$, where $\mathit{xs}$ and
       $\mathit{ys}$ are ascending but $\mathit{zs}$ is descending.
       Therefore, it pushes
       $\texttt{rev} \, (\mathit{zs} \merge_\geq \texttt{rev} \, (\mathit{ys} \merge_\leq \mathit{xs}))$
       to $S'$ and replaces $\mathit{xs}$ and $\mathit{ys}$ with empty elements.
\end{itemize}

As in \cref{sec:nontailrec-mergesort-in-coq}, the sorting process can be completed by pushing all
the items in the input to the stack and then folding the stack by \caml{revmerge} (\caml{sort_rec}
and \caml{pop} in \cref{fig:struct-tailrec-mergesort}, respectively).
In a recursive call of \caml{pop}, the head of the stack can either be ascending or descending in
contrast to \caml{push}.
Its additional argument \caml{mode} of type \caml{bool} is \caml{true} iff \caml{xs} and the head of
the stack are descending ones.
If an element of the stack to be processed is empty, \caml{pop} reverses \caml{xs} (in the \nth{4}
case), because \caml{xs} and the head of the stack have to be in the same order.
In order to avoid reversing \caml{xs} twice when the stack has a pair of two adjacent empty
elements, the \nth{3} case skips such empty elements just for performance reasons.

Note again that the recursive functions presented in this section are tail recursive except for
\caml{push}, whose depth of recursive calls is logarithmic in the length of the input.
Therefore, the \coq{sort} function above does not use up stack space.
Furthermore, it can be made smooth in the same way as the non-tail-recursive counterpart
(\cref{sec:nontailrec-mergesort-in-coq}).
See \cref{appx:tailrec-mergesort-in-coq} for its \Coq implementation.

\subsection{Interface for stable mergesorts}
\label{sec:interface}

In this section, we present the interface (\cref{sec:the-interface}) for mergesort functions
bundling the extended characteristic property
(\cref{sec:characterization,sec:new-characterization}), so that we can state our correctness lemmas
polymorphically for any stable mergesort function (\cref{sec:overloaded-correctness-lemmas}), and
explain how to populate (\cref{sec:populating-the-interface}) this interface with concrete mergesort
functions, in \Coq.

\subsubsection{The interface}
\label{sec:the-interface}

We first define the following constant \coq{asort_ty} for the type of abstract sort functions.
\begin{coqcode}
Definition asort_ty :=
  forall (T R : Type),
    (T -> T -> bool) -> (R -> R -> R) -> (R -> R -> R) -> (T -> R) -> R ->
    list T -> R.
\end{coqcode}
To automatically generate parametricity statements and proofs, we use \Paramcoq~\cite{paramcoq} that implements a version of the parametricity
translation $\paramt{\cdot}$~\cite{DBLP:conf/fossacs/BernardyL11, DBLP:journals/jfp/BernardyJP12, DBLP:conf/csl/KellerL12}
for \Coq{}, extended to terms of the calculus.
The abstraction theorem (\cref{assumption:abstraction}) for this parametricity translation can be
restated as follows.
\begin{theorem}[Abstraction theorem]
 \label{thm:abstraction}
 If $\vdash t : A$, then $\vdash \paramt{t} : \paramt{A} \, t \, t$.
\end{theorem}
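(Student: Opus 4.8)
The plan is to reduce \cref{thm:abstraction} to a statement that is closed under the typing rules and hence amenable to induction. Since a closed term has no free variables, the two copies of a term that the binary translation relates coincide, so it suffices to prove the open-context generalization
\[
 \text{if } \tyju{\Gamma}{t}{A}\text{, then } \tyju{\paramt{\Gamma}}{\paramt{t}}{\paramt{A}\, t_1\, t_2},
\]
where the context translation $\paramt{\Gamma}$ replaces each declaration $(x : B) \in \Gamma$ by the three declarations $x_1 : B_1$, $x_2 : B_2$, and $x_R : \paramt{B}\, x_1\, x_2$, and $t_1, t_2$ (resp.~$B_1, B_2$) denote the two renamings of $t$ (resp.~$B$) obtained by substituting $x_i$ for each free variable $x$. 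Taking $\Gamma = \emptyctxt$ collapses $t_1 = t_2 = t$ and recovers the theorem. I would then proceed by induction on the derivation of $\tyju{\Gamma}{t}{A}$, with one case per typing rule.

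The structural cases are routine and are dictated by the clauses defining $\paramt{\cdot}$ recalled in \cref{sec:prelim-param}. For a variable, $\paramt{x} = x_R$, which is available in $\paramt{\Gamma}$ by construction. For a universe, $\paramt{\Type{i}} = \lambda (A_1\, A_2 : \Type{i}), A_1 \to A_2 \to \Type{i}$, i.e.\ the type of relations, which is well typed by universe formation. For a dependent product $\forall x : B, C$, the translation is the relation sending $f_1, f_2$ to the statement that related arguments yield related results; correspondingly one sets $\paramt{\lambda x : B, t} = \lambda (x_1\, x_2\, x_R), \paramt{t}$ and $\paramt{t\, u} = \paramt{t}\, u_1\, u_2\, \paramt{u}$, and in each case the induction hypotheses on the subterms supply exactly the premises required. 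The one genuinely load-bearing auxiliary result is a substitution lemma asserting that $\paramt{A[u/x]}$ equals $\paramt{A}$ with $x_1, x_2, x_R$ substituted by $u_1, u_2, \paramt{u}$ (and likewise at the term level); this is what makes the application and the dependent-elimination cases type-check.

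The delicate part, and the main obstacle, is the interaction with \Coq's conversion rule and with the recursors---which is precisely why the paper confines the self-contained development to \emph{closed terms in a fragment of System~F extended with recursors} (\cref{assumption:abstraction}) and delegates the \Coq-level statement to \Paramcoq. Because typing is taken modulo definitional equality, the conversion case needs the translation to be compatible with reduction: whenever $A$ and $A'$ are convertible, so are $\paramt{A}\, t_1\, t_2$ and $\paramt{A'}\, t_1\, t_2$, which follows from a simulation lemma stating that $t \to_\beta t'$ (and $\to_\iota$, $\to_\delta$) implies $\paramt{t}$ reduces to $\paramt{t'}$. For each recursor one must further check that its translation is exactly the recursor of the \emph{translated} inductive family, so that the translated iota-rules match and the return type instantiates to $\paramt{A}\, t_1\, t_2$; verifying this coherence across large elimination and the universe hierarchy (and impredicative $\mathrm{Prop}$, if admitted) is where the real work lies. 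I would discharge these cases by appealing to the established metatheory of \citet{DBLP:conf/fossacs/BernardyL11, DBLP:journals/jfp/BernardyJP12} and \citet{DBLP:conf/csl/KellerL12}, and merely verify that the recursors actually used by our structurally-recursive mergesorts fall within the fragment they cover.
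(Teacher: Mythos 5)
The paper does not prove \cref{thm:abstraction} at all: it is stated as an imported result, justified by citation to \citet{DBLP:conf/fossacs/BernardyL11}, \citet{DBLP:journals/jfp/BernardyJP12}, and \citet{DBLP:conf/csl/KellerL12}, with the caveat that it holds only for a fragment of \Coq's calculus, and it is discharged in practice by having \Paramcoq generate and type-check the proof term $\paramt{t}$ for each concrete \asort{} so that the development remains axiom free. Your sketch instead reconstructs the standard metatheoretic proof from that literature, and it is correct in outline: the generalization to open contexts with the tripled context $\paramt{\Gamma}$, the induction on typing derivations, the substitution lemma making the application and dependent-elimination cases type-check, and the reduction-simulation lemma needed for the conversion rule are exactly the ingredients of the published proofs, and you correctly locate the genuinely hard parts (recursors, large elimination, universes) where they belong. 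Since you ultimately appeal to the same references for those delicate cases, the two approaches converge in substance; the difference is that you make the proof skeleton explicit where the paper treats the whole theorem as a black box, which buys the reader an understanding of \emph{why} the translation of \cref{sec:prelim-param} is type-correct, at the cost of restating an argument the paper deliberately outsources. The one caution is that your sketch should not be mistaken for a self-contained proof: for the calculus the paper actually relies on (the \Paramcoq fragment of \Coq, not just System~F with recursors), every load-bearing step you name is delegated rather than carried out, which is precisely the paper's own position.
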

\noindent Now, $\paramt{\cdot}$ translates a \Coq term $t : A$ to
a term of type $\paramt{A} \, t \, t$, which is a \Coq internalization of the membership $(t, t) \in \paramt{A}$ from~\cref{sec:prelim-param}, where the relation $\paramt{A}$ is internalized as a function $A \to A \to \types$.
Although we assumed the abstraction theorem in \cref{sec:nontailrec,sec:optimizations}
(\cref{assumption:abstraction}), and \cref{thm:abstraction} holds only for a fragment of the
underlying calculus of \Coq, we stress that our functional correctness proofs are axiom free thanks
to \Paramcoq, which produces parametricity proofs for all the \asort{} functions in our
formalization.

The following \coq{Parametricity} command from \Paramcoq declares the parametricity relation
$\paramt{\coq{asort_ty}}$ as a new constant \coq{asort_ty_R}.
\begin{coqcode}
Parametricity asort_ty.
\end{coqcode}

We second define the interface for stable mergesort functions, that is, a dependent record type
bundling a mergesort function and the characteristic property on it.
\begin{coqcode}
Structure stableSort := StableSort {
  apply : forall T : Type, (T -> T -> bool) -> list T -> list T;
  asort : asort_ty;
  asort_R : asort_ty_R asort asort;
  asort_mergeE : forall (T : Type) (leT : T -> T -> bool) (xs : list T),
    let geT x y := leT y x in
    let mergerev xs ys := rev (merge geT (rev ys) (rev xs)) in
    asort leT (merge leT) mergerev (fun x => [:: x]) [::] xs = apply leT xs;
  asort_catE : forall (T : Type) (leT : T -> T -> bool) (xs : list T),
    asort leT cat cat (fun x => [:: x]) [::] xs = xs;
}.
\end{coqcode}
where \coq{apply} is the mergesort function in question, \coq{asort} is the abstract version of
\coq{apply}, \coq{asort_R} is the parametricity of \coq{asort}, \coq{asort_mergeE}
and \coq{asort_catE} are respectively the two equational properties \eqref{eq:asort_mergeE'} and
\eqref{eq:asort_catE'} on \coq{asort}, \coq{stableSort} is the record type bundling these five
fields, and \coq{StableSort} is the constructor of \coq{stableSort}.

\subsubsection{Populating the interface}
\label{sec:populating-the-interface}

Suppose \coq{sort1} is a mergesort function and \coq{asort1} is its abstract version.
We state \cref{eq:asort_mergeE',eq:asort_catE'} as follows:
\begin{coqcode}
Fact asort1_mergeE (T : Type) (leT : T -> T -> bool) (xs : list T) :
  let geT x y := leT y x in
  let mergerev xs ys := rev (merge geT (rev ys) (rev xs)) in
  asort1 leT (merge leT) mergerev (fun x => [:: x]) [::] xs = sort1 leT xs.

Fact asort1_catE (T : Type) (leT : T -> T -> bool) (xs : list T) :
  asort1 leT cat cat (fun x => [:: x]) [::] xs = xs.
\end{coqcode}
and prove them by functional induction on \coq{asort1} and equational reasoning, as we saw in
\cref{sec:implementation-to-characterization}.
While \cref{eq:asort_mergeE'} can be proved by computation (by the \coq{reflexivity} tactic) in the
simplest cases, it is not the case for any structurally-recursive mergesort
(\cref{sec:termination}).
We discuss this limitation further in \cref{sec:limitations}.

Proving that \coq{asort1} is parametric can be done just by the parametricity translation
(\cref{sec:the-interface}):
\begin{coqcode}
Parametricity asort1. (* generates a parametricity proof asort1_R. *)
\end{coqcode}

Provided we have all these ingredients, we can construct an instance of the \coq{stableSort} record
type as follows.
\begin{coqcode}
Definition sort1_stable :=
  StableSort sort1 asort1 asort1_R asort1_mergeE asort1_catE.
\end{coqcode}

\subsubsection{Correctness lemmas}
\label{sec:overloaded-correctness-lemmas}

The correctness lemmas of mergesort (\cref{sec:characterization-to-correctness}) can be stated
polymorphically for any \coq{stableSort} instance.
For example, the commutation of \coq{filter} and mergesort functions (\cref{lemma:filter_sort}) can
be stated as follows:
\begin{coqcode}
Lemma filter_sort (sort : stableSort) (T : Type) (leT : T -> T -> bool) :
  total leT -> transitive leT ->
  forall (p : T -> bool) (xs : list T),
    filter p ($\color{gray}\texttt{apply}$ sort T leT xs) = $\color{gray}\texttt{apply}$ sort T leT (filter p xs).
\end{coqcode}
where $\color{gray}\texttt{apply}$ can be omitted by declaring it as an implicit
coercion~\cite{rocqrefman:coercion}, so that a \coq{stableSort} instance itself can be seen as a sort
function in the user-facing syntax:
\begin{coqcode}
Coercion apply : stableSort >-> Funclass.
\end{coqcode}

Also, we provide for several lemmas a version where hypotheses are localized by a predicate.
For example, the relation \coq{leT} in the above lemma \coq{filter_sort} (\cref{lemma:filter_sort})
only needs to be total and transitive on the domain delimited by a predicate \coq{P} provided all
elements of \coq{s} are in \coq{P}, as follows:
\begin{corollary}\label{corollary:filter_sort_in}
 The following equation holds
 \[
  \texttt{filter}_p \, (\sort{}_\leq \, \mathit{xs}) =
  \sort{}_\leq \, (\texttt{filter}_p \, \mathit{xs})
 \]
 whenever $\leq$ is total and transitive on a predicate $P \subseteq T$, meaning that
 $x \leq y \lor y \leq x$ and $x \leq y \land y \leq z \Rightarrow x \leq z$ hold for any
 $x, y, z \in P$, and all elements of $\mathit{xs}$ satisfy $P$.

 This corollary corresponds to the following lemma in \Coq:
\begin{coqcode}
Lemma filter_sort_in
  (sort : stableSort) (T : Type) (P : T -> bool) (leT : T -> T -> bool) :
  {in P &, total leT} -> {in P & &, transitive leT} ->
  forall (p : T -> bool) (xs : list T), all P xs ->
    filter p (sort T leT xs) = sort T leT (filter p xs).
\end{coqcode}
 where \coq|{in P &, Q}| reduces to \coq{forall x : T, P x -> forall y : T, P y -> R x y} given that
 \coq{Q} reduces to \coq{forall x y : T, R x y} and \coq{P} has type \coq{T -> bool},
 \coq|{in P & &, Q}| is its ternary version, and \coq{all P s} means that \coq{P} holds for any
 element of \coq{xs}.
\end{corollary}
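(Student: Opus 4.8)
The plan is to reduce Corollary~\ref{corollary:filter_sort_in} to the non-localized \cref{lemma:filter_sort} via the same index-relation trick that proves the latter. The guiding observation is that both \sort{} and $\texttt{filter}_p$ only inspect the actual elements of $\mathit{xs}$, so when every element of $\mathit{xs}$ lies in $P$, the only instances of $\leq$ that can ever be consulted are comparisons between elements of $P$, and there $\leq$ is assumed to be a genuine total preorder. The $P$-local hypotheses are therefore ``enough'' to simulate a globally total and transitive relation, but only once we package the comparisons through the list's indices.

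First I would dispose of the degenerate case $\mathit{xs} = []$, where both sides reduce to $[]$. Otherwise I fix a default element $x_0 \in \mathit{xs}$; since all elements of $\mathit{xs}$ satisfy $P$, we have $x_0 \in P$, and moreover $\texttt{nth} \, x_0 \, \mathit{xs} \, i \in P$ for \emph{every} index $i \in \mathbb{N}$ (a valid index returns an element of $\mathit{xs}$, and an out-of-range index returns the default $x_0$). I then reuse verbatim the index construction from the proof of \cref{lemma:filter_sort}, introducing the pulled-back relation
\[
 i \leq_I j \coloneq \texttt{nth} \, x_0 \, \mathit{xs} \, i \leq \texttt{nth} \, x_0 \, \mathit{xs} \, j,
\]
together with its irreflexive lexicographic refinement $i <_I j \coloneq i \leq_I j \land (j \not\leq_I i \lor i <_{\mathbb{N}} j)$.

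The crucial point—and what makes the localization go through—is that, because $\texttt{nth} \, x_0 \, \mathit{xs}$ always lands in $P$, the relation $\leq_I$ is total and transitive on \emph{all} of $\mathbb{N}$, with no side condition: both properties follow pointwise from the $P$-local totality and transitivity of $\leq$, applied to the $P$-elements $\texttt{nth} \, x_0 \, \mathit{xs} \, i$. Thus $\leq_I$ is a bona fide total preorder on $\mathbb{N}$. Replacing $\mathit{xs}$ everywhere by $\texttt{map}_{\texttt{nth} \, x_0 \, \mathit{xs}} \, \mathit{is}$ with $\mathit{is} \coloneq [0, \dots, \lvert\mathit{xs}\rvert - 1]$ (\cref{lemma:coq:mkseq_nth}) and pushing the mapping through \sort{} and \texttt{filter} by naturality (\cref{lemma:sort_map,lemma:coq:filter_map}), the goal becomes an instance of \cref{lemma:filter_sort} for the relation $\leq_I$ on $\mathbb{N}$, which now applies unconditionally. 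Equivalently, I can finish exactly as in the proof of \cref{lemma:filter_sort}: apply \cref{lemma:coq:irr_sorted_eq} after checking that both sides are sorted \wrt $<_I$ (from stability, \cref{corollary:sort_stable}, together with \cref{lemma:coq:iota_ltn_sorted,lemma:coq:sorted_filter}) and that they carry the same set of elements (\cref{corollary:mem_sort,lemma:coq:mem_filter}).

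I expect the main obstacle to be the bookkeeping around the default element and out-of-range indices: one must verify $\texttt{nth} \, x_0 \, \mathit{xs} \, i \in P$ for \emph{every} $i$, not merely for in-range indices, since it is precisely this uniform membership that upgrades the $P$-local totality and transitivity of $\leq$ to the genuinely global totality and transitivity of $\leq_I$ required to invoke \cref{lemma:filter_sort}. Once the empty case is separated out (so that a default $x_0 \in \mathit{xs} \subseteq P$ exists) and this membership fact is in hand, everything else is a faithful replay of the non-localized argument.
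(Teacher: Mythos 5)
Your proposal is correct, but it takes a genuinely different route from the paper. The paper's proof localizes the hypotheses by passing to the subtype $\texttt{sig} \, P$: it transports $\leq$ along $\texttt{val} : \texttt{sig} \, P \to T$ to obtain a genuine total preorder on $\texttt{sig} \, P$ (\cref{lemma:coq:in2_sig}), rewrites $\mathit{xs}$ as $\texttt{map}_{\texttt{val}} \, \mathit{xs'}$ using $\texttt{all}_P \, \mathit{xs}$ (\cref{lemma:coq:all_sigP}), pushes the map through \sort{} and \texttt{filter} by naturality, and then invokes \cref{lemma:filter_sort} on $\texttt{sig} \, P$. You instead pull the relation back along $\texttt{nth} \, x_0 \, \mathit{xs} : \mathbb{N} \to T$ and observe that, once the empty case is split off and $x_0$ is chosen in $\mathit{xs}$, every value of $\texttt{nth} \, x_0 \, \mathit{xs}$ lies in $P$ (in-range indices by $\texttt{all}_P \, \mathit{xs}$, out-of-range ones because they return $x_0$), so the induced relation $\leq_I$ is total and transitive on all of $\mathbb{N}$ and \cref{lemma:filter_sort} applies unconditionally. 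Both arguments are sound and both reduce to the same non-localized theorem via the naturality of \sort{} and \texttt{filter}. The sigma-type route is the more uniform idiom: it needs no case split on emptiness, no default element, and transfers verbatim to every other \texttt{\_in} variant in \cref{appx:stablesort-theory} regardless of whether the underlying proof already uses an index encoding. Your route avoids introducing the subtype machinery, at the cost of the default-element bookkeeping you correctly identify as the delicate point, and of effectively performing the $\texttt{nth}$/$\texttt{iota}$ replacement twice (once for localization, once inside \cref{lemma:filter_sort}) unless, as you note, you inline the final \cref{lemma:coq:irr_sorted_eq} argument directly.
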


\begin{proof}
 In dependent type theory, one may define a subtype \coq{sig P} collecting inhabitants of \coq{T}
 satisfying \coq{P}.
 Using the canonical \coq{val : sig P -> T} function, the relation \coq{leT} can be turned into a
 relation \coq{leP x y := leT (val x) (val y)} on \coq{sig P}, which is a total preorder thanks to
 the assumptions on \coq{leT} (\cref{lemma:coq:in2_sig}).
 Since \coq{P} holds for any element of \coq{xs}, we can replace \coq{xs} everywhere with
 \coq{map val xs'}, where \coq{xs'} is a list of type \coq{list (sig P)}
 (\cref{lemma:coq:all_sigP}).
 Thanks to the naturality of \coq{sort} (\cref{lemma:sort_map}) and \texttt{filter}
 (\cref{lemma:coq:filter_map}), and the congruence rule with respect to
 \coq{map}, it remains to proves
\begin{coqcode}
  filter p' (sort (sig P) leP xs') = sort (sig P) leP (filter p' xs')
\end{coqcode}
 where \coq{p' x := p (val x)}.
 We can then conclude by applying \coq{filter_sort} (\cref{lemma:filter_sort}).
\end{proof}

\Cref{appx:stablesort-theory} provides the list of all lemmas about stable sort functions solely
derived from the characterization, their formal statements in \Coq that use the \coq{stableSort}
structure, and their informal proofs.


\section{Limitations}
\label{sec:limitations}

As we mentioned in \cref{sec:new-characterization-tailrec,sec:new-characterization-smooth,%
sec:nontailrec-mergesort-in-coq,sec:populating-the-interface}, the present proof technique has the
limitation that the proof of \cref{eq:asort_mergeE'} cannot be done by definition (by computation in
\Coq) for elaborate variations of mergesort.
This is because these mergesorts inspect sorted lists by operations not supplied to the abstract
mergesort function, the abstract mergesort function has to simulate it in a non-trivial way, and
thus, the proof of \cref{eq:asort_mergeE'} involves the simulation arguments.
The examples of such operations appeared in the paper are
tail-recursive merge (\cref{sec:tailrec-mergesort,sec:new-characterization-tailrec}),
cons in smooth mergesorts (\cref{sec:smooth-mergesort,sec:new-characterization-smooth}), and
emptiness test in structurally-recursive mergesorts (\cref{sec:termination}).
Since we are primarily interested in structurally-recursive mergesorts in our formalization, the
proof of \cref{eq:asort_mergeE'} had to be done by hand for most variations.
Nevertheless, we argue that the proofs remain relatively short compared to related work
(\cref{sec:related-work}).

We might be able to relax the limitation by supplying more operators to the abstract mergesort
function, or by using a type system that has more conversion rules.
For example, we can consider splitting the abstract type $R$ representing sorted lists into two
abstract types $R$ and $R'$ respectively representing sorted lists in ascending and descending
orders, and supplying tail-recursive merge functions of types $R \to R \to R'$ and $R' \to R' \to R$,
to relax the limitation for tail-recursive mergesorts.
However, this solution would come at a cost that the stack of pending mergings in
structurally-recursive tail-recursive mergesort (\cref{sec:tailrec-mergesort-in-coq}) cannot be
represented as a homogeneous list, since abstract sorted lists of types $R$ and $R'$ should appear
alternately in the stack.
Similarly, we can consider supplying an emptiness test function to relax the limitation for
structurally-recursive mergesorts.
Any solution of supplying more operators would come at a cost that the abstract correctness proofs
have to be adapted as we extend the characterization with the new operators.
Exploring such possibilities is left as future work.

Since \GHC 9.12.1~\cite{GlasgowHaskell:9-12-1}, \haskell{Data.List.sort} has been changed to use
3-way and 4-way merge functions \haskell{merge3 as bs cs} and
\haskell{merge4 as bs cs ds}~\cite{core-libraries-committee:236}, which are documented as
manually-fused versions of \haskell{merge (merge as bs) cs} and
\haskell{merge (merge as bs) (merge cs ds)}, respectively.
However, \haskell{merge4 [] as bs cs} and  \haskell{merge4 as [] bs cs} reduce to
\haskell{merge3 as bs cs}, and thus, it changes the way it nests the 2-way merge.
Therefore, the 4-way merge actually cannot be expressed as a simple combinations of the 2-way merge;
thus, applying our proof technique to this mergesort would require to either fix the 4-way merge
function by adding a right-associative version of 3-way merge or extend the characteristic property
with the 4-way merge.

\section{Related work}
\label{sec:related-work}

\Cref{sec:mergesort-approaches} implies that mergesort conceptually consists of a construction of
a balanced binary tree and folding of this binary tree with merge, which is an instance of the fact
that many sorting algorithms can be explained as folds of unfolds, or, dually, as unfolds of
folds~\cite{DBLP:conf/icfp/HinzeJHWM12,DBLP:conf/birthday/HinzeMW13}.
The traces we informally show in \cref{fig:mergesort-traces} are reminiscent of the tree datatype
of~\citet[Section 6]{DBLP:conf/birthday/HinzeMW13}.
Our \asort{} functions for non-tail-recursive mergesorts (\cref{sec:characterization}) and their
\texttt{makeTree} functions have the same type, except that binary trees are Church-encoded in the
former (\cref{remark:induction}), and expose the underlying binary tree construction of mergesort.

Using parametricity to test or verify sort functions is not a new idea.
For example, Knuth's 0-1-Principle~\cite[Theorem Z of Section 5.3.4]{DBLP:books/aw/Knuth73} can be
deduced from parametricity~\cite{Day99logicalabstractions}.
While such an idea has been advanced towards both program
testing~\cite{DBLP:conf/popl/Voigtlander08, DBLP:journals/pacmpl/HouW22} and formal
verification~\cite{DBLP:conf/types/BoveC04}, the present work is the first application of
parametricity to prove the stability of sorting functions to the best of our knowledge.

The only prior work on formally proving the stability of mergesort is by
\citet{DBLP:journals/tocl/LeinoL15,Leroy:mergesort,DBLP:journals/jar/Sternagel13}.
Among these, \citet{Leroy:mergesort} proved a non-smooth bottom-up non-tail-recursive mergesort
(similar to \cref{fig:bottom-up-mergesort}) correct in \Coq (in 360 LoC),
\citet{DBLP:journals/jar/Sternagel13} proved a slightly modified version of \GHC's mergesort correct
in \IsabelleHOL (in 177 LoC), and
\citet{DBLP:journals/tocl/LeinoL15} ported the latter result to \Dafny (in 633 LoC).
All of them proved the permutation property (\cref{lemma:perm_sort}), sortedness
(\cref{lemma:coq:sort_sorted}), and stability (\cref{corollary:sort_standard_stable}).
In addition, \citet{DBLP:journals/jar/Sternagel13} proved extensional equality of the mergesort and
an insertion sort (\cref{lemma:coq:eq_sort_insert}).
On the other hand, our formalization consists of:
\begin{itemize}
\item 121 LoC for implementations and functional correctness proofs of 4 merge functions (hence
  around 31 LoC per merge function on average),
\item 16 LoC for implementation and characteristic property proof of the insertion sort
  (\cref{def:insertion-sort,lemma:insertion-sort-stable}), and
\item 533 LoC for implementations and characteristic property proofs of 8 mergesort functions listed
  in \cref{appx:mergesort-in-coq} (hence around 67 LoC per sort function on average), and
\item 472 LoC for generic functional correctness proofs and the infrastructure for 43 lemmas listed
  in \cref{appx:stablesort-theory}, but only 198 LoC to cover our main stability results
  (\cref{lemma:sort_pairwise_stable,lemma:filter_sort}) and the results in related work
  (\cref{lemma:perm_sort,lemma:coq:sort_sorted,corollary:sort_standard_stable,lemma:coq:eq_sort_insert}).
\end{itemize}
Hence, for:
\begin{itemize}
\item a single mergesort function, we would have on average 312 LoC ($= 31 + 16 + 67 + 198$) to get
  its implementation, permutation, sortedness, stability, extensional equality to the insertion
  sort, including all the infrastructure, and
\item all eight mergesort functions, it amounts to an average of 109 LoC per sort function for
  theorems up to \cref{corollary:sort_standard_stable}, and 143 LoC for all 43 lemmas of our theory
  of sort functions.
\end{itemize}

Therefore, although comparing LoC across different systems, standard libraries, coding styles, \etc
only allows for very crude comparison, we conclude that the only prior work similarly concise to our
proof is the one by \citet{DBLP:journals/jar/Sternagel13}.
Whether our proof is shorter depends on how we compare them: if we extract from our proof only one implementation it is 135 LoC longer than his, while if we divide the total LoC by the number of mergesort variants, it is 68 LoC shorter.
Now, the key ingredient for achieving the conciseness of \citet{DBLP:journals/jar/Sternagel13}'s proofs appear to be skillful functional
induction schemes and a simple invariant for proving \cref{corollary:sort_standard_stable} for GHC's
mergesort~\cite[Section 4]{DBLP:journals/jar/Sternagel13}.
However, this approach does not fully scale to tail-recursive mergesorts and the proof of
\cref{lemma:sort_pairwise_stable}, which is slightly more general than
\cref{corollary:sort_standard_stable} (\cref{sec:sort_standard_stable,appx:stability-statements}),
and thus, does not allow us to easily prove a large part of our functional correctness results.
Our smooth non-tail-recursive mergesort (an extension of \cref{sec:nontailrec-mergesort-in-coq},
shown in \cref{appx:nontailrec-mergesort-in-coq}) does not follow \GHC's mergesort as close as in
\citet{DBLP:journals/jar/Sternagel13,DBLP:journals/tocl/LeinoL15}, since \haskell{mergeAll}
function (\caml{merge_all} in \cref{fig:bottom-up-mergesort}) as is cannot be defined in \Coq due to
the syntactic guard condition. Nevertheless, this is a limitation of \Coq rather than a limitation
of our approach.

\Citet{functional_algorithms_verified} presented several verified sort algorithms, \eg, non-smooth
top-down and bottom-up non-tail-recursive mergesorts, and smooth mergesort presented by
\citet{DBLP:journals/jar/Sternagel13}.
However, a large part of their functional correctness proofs are done again for each implementation,
and thus, their proofs are not modular as ours.
In contrast to the present work, \citet{functional_algorithms_verified} verifies asymptotic
complexity bounds of mergesorts using automatically derived \emph{running time functions} modelling
a call-by-value semantics. However, this approach lacks a formal guarantee that the running time
function represents the actual running time and would not allow us to formally prove that
non-tail-recursive mergesort in call-by-need evaluation is an optimal incremental
sorting~\cite{DBLP:conf/alenex/ParedesN06}.

\Citet{DBLP:journals/jar/GouwBBHRS19} verified a smooth mergesort for arrays
called \emph{TimSort} taken from the OpenJDK core library using \Java
verification tool \KeY.
During their attempt of verification, they discovered and fixed a bug that may cause an array index
out of bounds error, and proved that no such error occurs in the fixed version.
While the bug fix has a significant impact, they had not proven the functional correctness
(sortedness and permutation properties) of the fixed version.

As a concluding remark, we stress that our formalization has a few features that none of the related
work above achieved:
1) a common interface (characterization) for stable mergesort functions, general induction principle
for mergesort replacing functional induction, sharing of functional correctness proofs between
several variations of mergesort,
2) formally verified tail-recursive mergesort
(\cref{sec:tailrec-mergesort,sec:tailrec-mergesort-in-coq}), including the smooth one
(\cref{appx:tailrec-mergesort-in-coq}), and
3) extensive theory of stable sort functions (\cref{appx:stablesort-theory}) which cannot be easily
derived from the results in the related work
(\cref{sec:sort_standard_stable,appx:stability-statements}).

\section{Conclusion}
\label{sec:conclusion}

We characterized mergesort functions for lists using their abstract versions and parametricity
(\cref{sec:characterization,sec:new-characterization}).
By abstracting out the type of sorted lists as a type parameter, we forced the abstract mergesort
functions to use the only provided operators (such as the order relation, merge, singleton, and
empty) to construct sorted lists, thus we ruled out behaviors incorrect as sorting functions, and
the parametricity of the abstract mergesort function ensures such correct behavior.
By instantiating the abstract mergesort functions in two ways, we should be able to obtain both
input and output of the mergesort function (\cref{eq:asort_mergeE',eq:asort_catE'}).
By exploiting the fact that parametricity implies induction principles on Church-encoded datatypes
(\cref{remark:induction}) and a case where parametricity implies naturality
(\cref{remark:naturality}), we deduced an induction principle
(\cref{lemma:sort_ind,lemma:sort_ind'}) over traces (\cref{fig:mergesort-traces}) from
\cref{eq:asort_mergeE',eq:asort_catE'}, and the naturality of mergesort (\cref{lemma:sort_map}),
respectively.
These two properties were sufficient to deduce several correctness results of mergesort, including
stability (\cref{sec:characterization-to-correctness,sec:new-characterization-to-correctness}).

In order to verify a given mergesort function using our technique, one just has to define its
abstract version, and then, prove its parametricity, and \cref{eq:asort_mergeE',eq:asort_catE'}.
Among these properties, parametricity follows from the abstraction
theorem~\cite{DBLP:conf/ifip/Reynolds83, DBLP:conf/fossacs/BernardyL11,
DBLP:journals/jfp/BernardyJP12, DBLP:conf/csl/KellerL12}, which is implemented in \Paramcoq~\cite{paramcoq}.
The rest of the correctness proofs work generically for any mergesort satisfying our
characteristic property.
Therefore, the actual work that has to be carried out by hand is to prove
\cref{eq:asort_mergeE',eq:asort_catE'} by induction and equational reasoning,
which justifies the title of this paper, claiming our functional correctness proofs are almost for
free and at a bargain.

\section*{Data-Availability Statement}
An artifact of this work is available for reproduction on
Zenodo\anon[ (anonymized citation)]{~\cite{stablesort:1.0.2-vm}}
and includes source code and dependencies.
Source code is further available for reuse through the Git repository at
\anon[anonymized url]{\url{https://github.com/pi8027/stablesort}}.

\begin{acks}
 The authors gratefully thank anonymous reviewers of ICFP '22, '24,
 and '25, Yves Bertot, Kazuhiro Inaba, and Yannick Zakowski for their comments.
 The general structure of our correctness proofs of mergesorts
 presented in \cref{sec:characterization-to-correctness}, except for
 the use of parametricity, is largely based on the correctness proofs
 of a non-smooth bottom-up non-tail-recursive mergesort
 (\coq{path.sort}) in the \MC library, to which the authors made
 significant contributions (in pull requests
 \href{https://github.com/math-comp/math-comp/pull/328}{\#328},
 \href{https://github.com/math-comp/math-comp/pull/358}{\#358},
 \href{https://github.com/math-comp/math-comp/pull/601}{\#601},
 \href{https://github.com/math-comp/math-comp/pull/650}{\#650},
 \href{https://github.com/math-comp/math-comp/pull/680}{\#680},
 \href{https://github.com/math-comp/math-comp/pull/727}{\#727},
 \href{https://github.com/math-comp/math-comp/pull/1174}{\#1174}, and
 \href{https://github.com/math-comp/math-comp/pull/1186}{\#1186}).
 The authors would like to thank other \MC developers and contributors
 who contributed to the discussion: Yves Bertot, Christian Doczkal,
 Georges Gonthier, Assia Mahboubi, and Anton Trunov.
\end{acks}

\nocite{rocqrefman}
\bibliography{bibliography}

\pagebreak

\appendix

\section{Basic definitions and facts used for proofs}
\label{appx:basic-definitions-and-facts}

This appendix provides a list of some definitions and lemmas in the \MC library~\cite{mathcomp} used
for the proofs in
\cref{sec:implementation-to-characterization,sec:characterization-to-correctness,%
sec:new-characterization-to-correctness,appx:stablesort-theory}.
Most informal definitions and lemmas are followed by the corresponding formal definitions and
statements in \Coq.
Some of these formal definitions and statements are modified to avoid introducing new definitions
but still convertible with the original definitions and statements.

We use the following \coq{Implicit Types} declaration to interpret the formal definitions and
statements in
\cref{appx:basic-definitions-and-facts,appx:stablesort-theory,appx:stability-statements}.
\begin{coqcode}
Implicit Types (sort : stableSort) (T R S : Type).
\end{coqcode}

\subsection{Predicates and relations}

\begin{definition}[Unary predicates and binary relations]
\label{def:coq:pred}
\label{def:coq:rel}
A unary \emph{predicate} $p \subseteq T$ and a binary \emph{relation} $R \subseteq T \times T$ on a
type $T$ are functions of types $T \to \mathrm{bool}$ and $T \to T \to \mathrm{bool}$, respectively:
\begin{coqcode}
Definition pred T : Type := T -> bool.
Definition rel T : Type := T -> pred T.
\end{coqcode}
We sometimes generalize relations to ones between two types $T$ and $S$; that is, a relation
$R \subseteq T \times S$ between $T$ and $S$ is a function of type $T \to S \to \mathrm{bool}$, \eg,
\cref{def:coq:allrel}.
We also abuse this terminology to mean any subset of $T$ and $T \times T$, that is not necessarily
decidable and respectively corresponds to any function of type \coq{T -> Prop} and
\coq{T -> T -> Prop} in \Coq.
\end{definition}

\begin{definition}[Totality of binary relations]
\label{def:coq:total}
A binary relation $R$ on type $T$ is \emph{total} if $x \mathrel{R} y$ or $y \mathrel{R} x$ holds
for any $x, y \in T$.
\begin{coqcode}
Definition total T (R : rel T) : Prop := forall x y : T, R x y || R y x.
\end{coqcode}
\end{definition}

\begin{definition}[Transitive relations]
\label{def:coq:transitive}
A binary relation $R$ on type $T$ is \emph{transitive} if $x \mathrel{R} y$ and $y \mathrel{R} z$
imply $x \mathrel{R} z$ for any $x, y, z \in T$.
\begin{coqcode}
Definition transitive T (R : rel T) : Prop :=
  forall y x z : T, R x y -> R y z -> R x z.
\end{coqcode}
\end{definition}

\begin{definition}[Antisymmetric relations]
\label{def:coq:antisymmetric}
A binary relation $R$ on type $T$ is \emph{antisymmetric} if $x \mathrel{R} y$ and $y \mathrel{R} x$
imply $x = y$ for any $x, y \in T$.
\begin{coqcode}
Definition antisymmetric T (R : rel T) : Prop :=
  forall x y : T, R x y && R y x -> x = y.
\end{coqcode}
\end{definition}

\begin{definition}[Reflexive relations]
\label{def:coq:reflexive}
A binary relation $R$ on type $T$ is \emph{reflexive} if $x \mathrel{R} x$ holds for any $x \in T$.
\begin{coqcode}
Definition reflexive T (R : rel T) : Prop := forall x : T, R x x.
\end{coqcode}
\end{definition}

\begin{definition}[Irreflexive relations]
\label{def:coq:irreflexive}
A binary relation $R$ on type $T$ is \emph{irreflexive} if $x \mathrel{R} x$ does not for any
$x \in T$.
\begin{coqcode}
Definition irreflexive T (R : rel T) : Prop := forall x : T, R x x = false.
\end{coqcode}
\end{definition}

\begin{definition}[Preorders and orders]
\label{def:coq:preorders}
A binary relation $\leq$ on type $T$ is said to be:
\begin{itemize}
 \item a \emph{total preorder} if $\leq$ is transitive and total,
 \item a \emph{strict preorder} if $\leq$ is transitive and irreflexive, and
 \item a \emph{total order} if $\leq$ is transitive, antisymmetric, and total.
\end{itemize}
\end{definition}

\begin{definition}[Lexicographic relations]
\label{def:coq:lexord}
Given two binary relations $\leq_1$ and $\leq_2$ on type $T$, their lexicographic relation
$\leq_{(1, 2)}$ is defined as follows:
\[
 x \leq_{(1, 2)} y \coloneq x \leq_1 y \land (y \not\leq_1 x \lor x \leq_2 y).
\]
We also write it as $\leq_\mathrm{lex}$ when there is only one lexicographic relation in the context
and thus there is no ambiguity.
\begin{coqcode}
Definition lexord T (leT leT' : rel T) :=
  [rel x y | leT x y && (leT y x ==> leT' x y)].
\end{coqcode}
\end{definition}

\begin{lemma}
\label{lemma:coq:lexord_total}
\label{lemma:coq:lexord_trans}
The lexicographic relation $\leq_{(1, 2)}$ is total (resp.~transitive) whenever both $\leq_1$ and
$\leq_2$ are total (resp.~transitive).
\begin{coqcode}
Lemma lexord_total T (leT leT' : rel T) :
  total leT -> total leT' -> total (lexord leT leT').

Lemma lexord_trans T (leT leT' : rel T) :
  transitive leT -> transitive leT' -> transitive (lexord leT leT').
\end{coqcode}
\end{lemma}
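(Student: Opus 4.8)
The plan is to prove the totality and transitivity claims separately, each by elementary Boolean case analysis after unfolding $x \leq_{(1,2)} y \coloneq x \leq_1 y \land (y \not\leq_1 x \lor x \leq_2 y)$ (\cref{def:coq:lexord}). The two proofs are independent and use nothing beyond the stated hypotheses on $\leq_1$ and $\leq_2$.

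For totality, I would start from totality of $\leq_1$, which yields $x \leq_1 y \lor y \leq_1 x$; by symmetry assume $x \leq_1 y$. If $y \not\leq_1 x$, then the second conjunct holds via its left disjunct and $x \leq_{(1,2)} y$ is immediate. Otherwise $y \leq_1 x$ also holds, and I invoke totality of $\leq_2$ to get $x \leq_2 y \lor y \leq_2 x$: in the first case $x \leq_{(1,2)} y$ (first conjunct $x \leq_1 y$, second conjunct via $x \leq_2 y$), and in the second case $y \leq_{(1,2)} x$ (first conjunct $y \leq_1 x$, second conjunct via $y \leq_2 x$). Either way one of the two lexicographic comparisons holds.

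For transitivity, assume $x \leq_{(1,2)} y$ and $y \leq_{(1,2)} z$, so in particular $x \leq_1 y$ and $y \leq_1 z$. The first conjunct $x \leq_1 z$ of the goal follows directly from transitivity of $\leq_1$. For the second conjunct $z \not\leq_1 x \lor x \leq_2 z$, I would argue on the left disjunct: if $z \not\leq_1 x$ we are done, so assume $z \leq_1 x$. Then transitivity of $\leq_1$ gives $y \leq_1 x$ (from $y \leq_1 z$ and $z \leq_1 x$) and $z \leq_1 y$ (from $z \leq_1 x$ and $x \leq_1 y$). These make both $\not\leq_1$ disjuncts false, so the hypotheses $x \leq_{(1,2)} y$ and $y \leq_{(1,2)} z$ force $x \leq_2 y$ and $y \leq_2 z$ respectively; transitivity of $\leq_2$ then yields $x \leq_2 z$, establishing the right disjunct.

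The routine work is the unfoldings and the bookkeeping of which disjunct to supply; the one delicate point is the transitivity argument, where the crux is recognizing that the assumption $z \leq_1 x$ collapses both strictness disjuncts via two applications of transitivity of $\leq_1$, thereby unlocking the $\leq_2$ comparisons needed to close the goal. In the \Coq formalization this is the part most naturally discharged by a short case split on $\leq_1$; the totality claim, by contrast, is essentially immediate once the two totality hypotheses are destructed.
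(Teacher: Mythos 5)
Your proof is correct, and both case analyses (including the key step in transitivity where $z \leq_1 x$ forces $x \leq_2 y$ and $y \leq_2 z$ via two auxiliary applications of transitivity of $\leq_1$) are exactly the standard argument. The paper states this lemma without proof, as a basic fact imported from the \MC library, so there is nothing to compare against beyond noting that your reasoning is the canonical one.
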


\begin{lemma}
\label{lemma:coq:lexordA}
The lexicographic composition of binary relations is associative; that is, the left associative
composition $\leq_{((1, 2), 3)}$ is the same relation as the right associative one
$\leq_{(1, (2, 3))}$.
\begin{coqcode}
Lemma lexordA T (leT leT' leT'' : rel T) :
  lexord leT (lexord leT' leT'') =2 lexord (lexord leT leT') leT''.
\end{coqcode}
\end{lemma}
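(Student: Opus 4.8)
The plan is to prove the identity pointwise. Since \texttt{lexord} produces a relation valued in $\mathrm{bool}$ (\cref{def:coq:lexord}), the extensional equality $=2$ amounts to showing, for arbitrary $x, y : T$, that the two Boolean expressions agree. First I would unfold \cref{def:coq:lexord} twice on each side. Writing $R$ for $\texttt{lexord}\,\leq_1\,\leq_2$, the right-hand relation $\leq_{((1,2),3)}$ evaluated at $(x,y)$ becomes
\[
(x \leq_1 y \land (y \leq_1 x \Rightarrow x \leq_2 y)) \land ((y \leq_1 x \land (x \leq_1 y \Rightarrow y \leq_2 x)) \Rightarrow x \leq_3 y),
\]
whereas the left-hand relation $\leq_{(1,(2,3))}$ at $(x,y)$ becomes
\[
x \leq_1 y \land (y \leq_1 x \Rightarrow (x \leq_2 y \land (y \leq_2 x \Rightarrow x \leq_3 y))).
\]

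Next I would abstract the atomic comparisons: for the fixed pair $x,y$, let $a_i := (x \leq_i y)$ and $b_i := (y \leq_i x)$ denote the corresponding Boolean values. The two displays above then read $L = a_1 \land (b_1 \Rightarrow (a_2 \land (b_2 \Rightarrow a_3)))$ and $R = (a_1 \land (b_1 \Rightarrow a_2)) \land ((b_1 \land (a_1 \Rightarrow b_2)) \Rightarrow a_3)$. The key observation is that neither side mentions $b_3 = (y \leq_3 x)$, so both are Boolean functions of the five variables $a_1, a_2, a_3, b_1, b_2$ only, and the goal reduces to a finite propositional tautology over these.

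The remaining step is an exhaustive case analysis on the Booleans, which I would organise to stay short: if $a_1$ is false, both $L$ and $R$ collapse to false; if $a_1$ holds but $b_1$ is false, both reduce to true (each outer implication has a false premise); and if $a_1$ and $b_1$ both hold, each side simplifies to $a_2 \land (b_2 \Rightarrow a_3)$, using $b_1 \land b_2 = b_2$ and $a_1 \Rightarrow b_2 = b_2$ in this branch. In \Coq this is discharged mechanically by splitting on the five Boolean subterms (a $2^5$ truth table), so there is no genuine mathematical obstacle. The only care required is bookkeeping, and in particular noticing that $y \leq_3 x$ never occurs after expansion; this is precisely what lets associativity go through with no totality, transitivity, or other hypotheses on $\leq_1, \leq_2, \leq_3$.
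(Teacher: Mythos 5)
Your proof is correct: after unfolding \cref{def:coq:lexord} both sides are Boolean combinations of the five atoms $a_1,a_2,a_3,b_1,b_2$ (with $y\leq_3 x$ indeed never occurring), and your three-way split on $a_1,b_1$ verifies the tautology, with no hypotheses on the relations needed. The paper states this lemma without proof as a basic fact imported from the \MC library, so there is no authorial argument to compare against; the direct unfold-and-case-split you give is the standard (and essentially the only) way to discharge it.
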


\subsection{Lists and list surgery operators}

\begin{definition}[Lists]
\label{def:coq:list}
A list of type $\mathrm{list} \, T$ is a finite sequence of values of type $T$; that is, the set of
lists is defined as the least fixed point of the following rules:
\begin{itemize}
 \item the empty sequence $[]$ (nil) is a list of type $\mathrm{list} \, T$ for any type $T$, and
 \item a cons cell $x :: s$ is a list of type $\mathrm{list} \, T$ if $x$ and $s$ have type $T$ and
       $\mathrm{list} \, T$, respectively.
\end{itemize}
\begin{coqcode}
Inductive list (A : Type) : Type := nil : list A | cons : A -> list A -> list A.
\end{coqcode}
We write list literals of the form $(x_1 :: x_2 :: \dots :: x_n :: [])$ as $[x_1, x_2, \dots, x_n]$,
or \coq{[:: x1; x2; ...; xn]} in \Coq.
\end{definition}

\begin{definition}[Concatenation]
\label{def:coq:cat}
Given two lists $s_1 \coloneq [x_1, \dots, x_n]$ and $s_2 \coloneq [y_1, \dots, y_m]$,
$\texttt{cat} \, s_1 \, s_2$, also written as $s_1 \concat s_2$, concatenates $s_1$ and $s_2$, \ie,
$[x_1, \dots, x_n, y_1, \dots, y_m]$.
This is a \Coq equivalent of \caml{List.append} in \OCaml.
\begin{coqcode}
Definition cat T : list T -> list T -> list T :=
  fix cat (s1 s2 : list T) {struct s1} : list T :=
    if s1 is x :: s1' then x :: cat s1' s2 else s2.

Notation "s1 ++ s2" := (cat s1 s2).
\end{coqcode}
\end{definition}

\begin{lemma}
\label{lemma:coq:catA}
The concatenation operator $\concat$ is associative.
\begin{coqcode}
Lemma catA T (x y z : list T) : x ++ (y ++ z) = (x ++ y) ++ z.
\end{coqcode}
\end{lemma}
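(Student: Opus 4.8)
The plan is to prove associativity by structural induction on the first list $x$, since the definition of $\concat$ (\cref{def:coq:cat}) recurses on its first argument. This choice is what lets both sides of the equation reduce by computation as the induction unfolds, and it is really the only design decision in the proof.

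In the base case $x = []$, both sides reduce by definition: $[] \concat (y \concat z) = y \concat z$ on the left, and $([] \concat y) \concat z = y \concat z$ on the right, so the equation holds by reflexivity.

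In the inductive step $x = a :: x'$, I would unfold the definition of $\concat$ on the outermost cons cells. The left-hand side becomes $a :: (x' \concat (y \concat z))$. On the right-hand side one first rewrites the inner concatenation $(a :: x') \concat y = a :: (x' \concat y)$ and then the outer one, obtaining $a :: ((x' \concat y) \concat z)$. The induction hypothesis $x' \concat (y \concat z) = (x' \concat y) \concat z$ then closes the goal by congruence under the cons constructor.

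There is no genuine obstacle here. The only thing that matters is to induct on $x$ rather than on $y$ or $z$: because $\concat$ is defined by recursion on its first argument, inducting on $x$ keeps every recursive call reducible, whereas inducting on $y$ or $z$ would leave the recursive occurrences of $\concat$ stuck and force auxiliary lemmas. Aligning the induction with the structural recursion of $\concat$ is the whole trick.
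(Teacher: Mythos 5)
Your proof is correct and is exactly the standard argument: the paper itself gives no proof for this lemma (it is imported from the \MC library, where \coq{catA} is proved by precisely this induction on the first list), and your observation that the induction must follow the structural recursion of \coq{cat} on its first argument is the right and only point of substance.
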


\begin{definition}[Right fold]
\label{def:coq:foldr}
Given a function $f$, an initial value $z$, and a list $s \coloneq [x_1, x_2, \dots, x_n]$,
$\texttt{foldr} \, f \, z \, s$ is the right fold of $s$ with $f$ and $z$, \ie,
$f \, x_1 \, (f \, x_2 \, (\dots(f \, x_n \, z)\dots))$.
\begin{coqcode}
Definition foldr T R (f : T -> R -> R) (z0 : R) : list T -> R :=
  fix foldr (s : list T) {struct s} : R :=
    if s is x :: s' then f x (foldr s') else z0.
\end{coqcode}
\end{definition}

\begin{definition}[Flattening]
\label{def:coq:flatten}
Given a list of lists $\mathit{ss} = [s_1, s_2, \dots, s_n]$, $\texttt{flatten} \, \mathit{ss}$ is
the list obtained by folding $\mathit{ss}$ with concatenation, \ie,
$s_1 \concat s_2 \concat \dots \concat s_n$.
This is a \Coq equivalent of \caml{List.flatten} in \OCaml.
\begin{coqcode}
Definition flatten T : list (list T) -> list T := foldr (@cat T) [::].
\end{coqcode}
\end{definition}

\begin{definition}
\label{def:coq:catrev}
Given two lists $s_1 \coloneq [x_1, \dots, x_n]$ and $s_2 \coloneq [y_1, \dots, y_m]$,
$\texttt{catrev} \, s_1 \, s_2$ reverses $s_1$ and concatenates it with $s_2$, \ie,
$[x_n, \dots, x_1, y_1, \dots, y_m]$.
This is a \Coq equivalent of \caml{List.rev_append} in \OCaml.
\begin{coqcode}
Definition catrev T : list T -> list T -> list T :=
  fix catrev (s1 s2 : list T) {struct s1} : list T :=
    if s1 is x :: s1' then catrev s1' (x :: s2) else s2.
\end{coqcode}
\end{definition}

\begin{definition}[List reversal]
\label{def:coq:rev}
Given a list $s = [x_1, x_2, \dots, x_n]$, $\texttt{rev} \, s$ is the reversal of $s$, \ie,
$[x_n, \dots, x_2, x_1]$. This is a \Coq equivalent of \caml{List.rev} in \OCaml.
\begin{coqcode}
Definition rev T (s : list T) : list T := catrev s [::].
\end{coqcode}
\end{definition}

\begin{lemma}
\label{lemma:coq:catrevE}
For any lists $s$ and $t$, $\texttt{catrev} \, s \, t = \texttt{rev} \, s \concat t$ holds.
\begin{coqcode}
Lemma catrevE T (s t : list T) : catrev s t = rev s ++ t.
\end{coqcode}
\end{lemma}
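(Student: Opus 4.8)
The plan is to prove the statement by structural induction on $s$, but the crucial preliminary move is to \emph{generalize over the accumulator} $t$. A direct induction on $s$ with $t$ held fixed fails, because the recursive call of $\texttt{catrev}$ rewrites the accumulator from $t$ to $x :: t$; the induction hypothesis would then be stated for the wrong accumulator. I would therefore prove $\forall t,\ \texttt{catrev} \, s \, t = \texttt{rev} \, s \concat t$ by induction on $s$, so that the induction hypothesis is available for \emph{every} accumulator, and in particular can be reused at $x :: t$ and at $[x]$.

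For the base case $s = []$, both sides reduce to $t$ by definition: $\texttt{catrev} \, [] \, t = t$ on the left, and $\texttt{rev} \, [] \concat t = [] \concat t = t$ on the right, recalling that $\texttt{rev} \, s \coloneq \texttt{catrev} \, s \, []$ from \cref{def:coq:rev}. For the inductive case $s = x :: s'$, I would first unfold $\texttt{catrev}$ once to obtain $\texttt{catrev} \, s' \, (x :: t)$, then apply the (generalized) induction hypothesis with accumulator $x :: t$ to rewrite it as $\texttt{rev} \, s' \concat (x :: t)$. It then remains to show that $\texttt{rev} \, (x :: s') \concat t = \texttt{rev} \, s' \concat (x :: t)$.

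To close this, I would expand $\texttt{rev} \, (x :: s') = \texttt{catrev} \, s' \, [x]$ by definition and apply the induction hypothesis once more, now with accumulator $[x]$, yielding the cons-identity $\texttt{rev} \, (x :: s') = \texttt{rev} \, s' \concat [x]$. Substituting and appealing to associativity of $\concat$ (\cref{lemma:coq:catA}), the remaining goal chains as $\texttt{rev} \, (x :: s') \concat t = (\texttt{rev} \, s' \concat [x]) \concat t = \texttt{rev} \, s' \concat ([x] \concat t) = \texttt{rev} \, s' \concat (x :: t)$, which matches the term produced above.

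The only real obstacle is the generalization over the accumulator; once that is set up, the argument is a routine unfolding of the definitions of $\texttt{catrev}$, $\texttt{rev}$, and $\concat$ together with a single use of associativity. One could equivalently factor out the cons-identity $\texttt{rev} \, (x :: s') = \texttt{rev} \, s' \concat [x]$ as a separate lemma, but since its proof is itself an instance of the same generalized induction, folding everything into one induction on $s$ is the cleanest route.
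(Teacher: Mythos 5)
Your proof is correct. The paper does not actually prove this lemma itself --- it is quoted in the appendix as a background fact from the \MC library --- but your argument (induction on $s$ with the accumulator $t$ generalized, applying the induction hypothesis at both accumulators $x :: t$ and $[x]$, and finishing with associativity of $\concat$) is exactly the standard proof of \texttt{catrevE}, and the generalization over $t$ that you single out is indeed the one essential step.
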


\begin{lemma}
\label{lemma:coq:revK}
\texttt{rev} is involutive; that is, $\texttt{rev} \, (\texttt{rev} \, s) = s$ holds for any
list $s$.
\begin{coqcode}
Lemma revK T (s : list T) : rev (rev s) = s.
\end{coqcode}
\end{lemma}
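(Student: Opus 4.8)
The obstacle is that \texttt{rev} is not defined by the obvious structural recursion but through the accumulator-passing \texttt{catrev} (\cref{def:coq:catrev,def:coq:rev}): if one writes $\texttt{rev}\,(\texttt{rev}\,s) = s$ and inducts on $s$ directly, the proof gets stuck, because the inner $\texttt{rev}\,(x :: s')$ does not reduce to anything involving $\texttt{rev}\,s'$ without first unwinding the accumulator, and so the induction hypothesis never applies. The plan is therefore to trade \texttt{catrev} for $\concat$ via \cref{lemma:coq:catrevE}, and to prove a generalized statement in which the accumulator is kept as a free variable so that the induction hypothesis can fire.

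First I would record the ``cons'' unfolding of \texttt{rev}, namely $\texttt{rev}\,(x :: s) = \texttt{rev}\,s \concat [x]$. This is immediate: by the definitions of \texttt{catrev} and \texttt{rev} we have $\texttt{rev}\,(x :: s) = \texttt{catrev}\,(x :: s)\,[] = \texttt{catrev}\,s\,[x]$, and \cref{lemma:coq:catrevE} rewrites the right-hand side to $\texttt{rev}\,s \concat [x]$.

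Next comes the key lemma, generalized over the accumulator $t$:
\[
 \forall s \, t, \quad \texttt{rev}\,(\texttt{catrev}\,s\,t) = \texttt{rev}\,t \concat s,
\]
proved by induction on $s$. The base case $s = []$ reduces, using $\texttt{catrev}\,[]\,t = t$ (\cref{def:coq:catrev}), to $\texttt{rev}\,t = \texttt{rev}\,t \concat []$, i.e.\ the right identity of $\concat$ (itself a one-line induction from \cref{def:coq:cat}). In the step case $s = x :: s'$, the definition of \texttt{catrev} gives $\texttt{catrev}\,(x :: s')\,t = \texttt{catrev}\,s'\,(x :: t)$; applying the induction hypothesis at the accumulator $x :: t$ yields $\texttt{rev}\,(x :: t) \concat s'$, and rewriting $\texttt{rev}\,(x :: t) = \texttt{rev}\,t \concat [x]$ by the cons unfolding, followed by associativity (\cref{lemma:coq:catA}) and the definition of $\concat$, turns this into $\texttt{rev}\,t \concat (x :: s')$, which is exactly $\texttt{rev}\,t \concat s$.

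Finally, \texttt{revK} follows by instantiating the generalized lemma at $t = []$: since $\texttt{rev}\,s = \texttt{catrev}\,s\,[]$ by definition (\cref{def:coq:rev}), we obtain $\texttt{rev}\,(\texttt{rev}\,s) = \texttt{rev}\,(\texttt{catrev}\,s\,[]) = \texttt{rev}\,[] \concat s = s$. The only genuinely delicate point is the choice of generalization: keeping $t$ free is precisely what makes the inductive step close, whereas the unstrengthened statement $\texttt{rev}\,(\texttt{rev}\,s) = s$ admits no direct induction on $s$; everything else is routine equational rewriting with \cref{lemma:coq:catrevE,lemma:coq:catA}.
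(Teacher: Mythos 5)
Your proof is correct. Note that the paper does not actually prove \texttt{revK}: it is listed in \cref{appx:basic-definitions-and-facts} as a background fact imported from the \MC library, so there is no in-paper argument to compare against. Your route---rewriting the inner occurrence via \cref{lemma:coq:catrevE} and proving the accumulator-generalized statement $\texttt{rev}\,(\texttt{catrev}\,s\,t) = \texttt{rev}\,t \concat s$ by induction on $s$---is sound, and your diagnosis of why the unstrengthened statement does not yield to direct induction (the accumulator in \texttt{catrev} blocks the induction hypothesis) is accurate. For reference, the \MC library proves the same fact by a slightly different standard route, namely direct induction on $s$ using the auxiliary identities $\texttt{rev}\,(x :: s) = \texttt{rcons}\,(\texttt{rev}\,s)\,x$ and $\texttt{rev}\,(\texttt{rcons}\,s\,x) = x :: \texttt{rev}\,s$; both arguments are equally elementary, and yours has the minor advantage of not needing \texttt{rcons}.
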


\begin{lemma}
\label{lemma:coq:rev_cat}
For any lists $s$ and $t$,
$\texttt{rev} \, (s \concat t) = \texttt{rev} \, t \concat \texttt{rev} \, s$ holds.
\begin{coqcode}
Lemma rev_cat T (s t : list T) : rev (s ++ t) = rev t ++ rev s.
\end{coqcode}
\end{lemma}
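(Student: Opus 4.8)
The plan is to proceed by structural induction on the first list $s$, following the classic proof that reversal is an anti-homomorphism for concatenation. The only real complication is that $\texttt{rev}$ is defined tail-recursively through the accumulator-passing function $\texttt{catrev}$ (\cref{def:coq:rev,def:coq:catrev}), so $\texttt{rev}(x :: s')$ does not unfold directly into an expression mentioning $\texttt{rev}\,s'$. I would therefore first record the ``cons'' equation $\texttt{rev}(x :: s') = \texttt{rev}\,s' \concat [x]$, which follows immediately from the definitions together with \cref{lemma:coq:catrevE}: by unfolding, $\texttt{rev}(x :: s') = \texttt{catrev}(x :: s', []) = \texttt{catrev}(s', [x])$, and \cref{lemma:coq:catrevE} rewrites the latter to $\texttt{rev}\,s' \concat [x]$.

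With this unfolding in hand, the two cases are routine. In the base case $s = []$, the left-hand side is $\texttt{rev}([] \concat t) = \texttt{rev}\,t$ by the defining equation of $\concat$ (\cref{def:coq:cat}), while the right-hand side is $\texttt{rev}\,t \concat \texttt{rev}\,[] = \texttt{rev}\,t \concat []$, which equals $\texttt{rev}\,t$ by right-neutrality of concatenation. In the inductive step $s = x :: s'$, I would compute $\texttt{rev}((x :: s') \concat t) = \texttt{rev}(x :: (s' \concat t)) = \texttt{rev}(s' \concat t) \concat [x]$ using the cons equation, apply the induction hypothesis to obtain $(\texttt{rev}\,t \concat \texttt{rev}\,s') \concat [x]$, re-associate via \cref{lemma:coq:catA} to $\texttt{rev}\,t \concat (\texttt{rev}\,s' \concat [x])$, and finally fold the tail back with the cons equation to reach $\texttt{rev}\,t \concat \texttt{rev}(x :: s')$, as required.

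The main obstacle---such as it is---is purely the accumulator-based definition of $\texttt{rev}$: a naive recursive definition would make the cons case reduce on the nose, whereas here the proof hinges on the bridge lemma \cref{lemma:coq:catrevE} to reconcile the tail-recursive $\texttt{catrev}$ with the intended notion of reversal. Everything else is equational reasoning with the definition of $\concat$ and its associativity; in particular, I would not need involutivity (\cref{lemma:coq:revK}) for this direction.
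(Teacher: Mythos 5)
Your proof is correct. The paper itself gives no proof of this lemma---it is quoted in \cref{appx:basic-definitions-and-facts} as a stock fact from the \MC library---so there is no in-paper argument to compare against. Your route is the standard one: derive the cons-unfolding $\texttt{rev}\,(x :: s') = \texttt{rev}\,s' \concat [x]$ from \cref{def:coq:rev,def:coq:catrev} via \cref{lemma:coq:catrevE}, then induct on $s$ using \cref{def:coq:cat}, the induction hypothesis, and \cref{lemma:coq:catA}; the base case additionally needs right-neutrality of $\concat$, which is elementary even though the appendix does not list it explicitly. Every step checks out. For what it is worth, the actual library proof avoids the induction entirely by working with $\texttt{catrev}$ directly (generalizing over the accumulator and using distribution laws for $\texttt{catrev}$ over $\concat$), but that is a stylistic difference; your argument is self-contained and equally valid.
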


\begin{definition}
\label{def:coq:take}
\label{def:coq:drop}
Given a natural number $n$ and a list $s$, $\texttt{take} \, n \, s$ and $\texttt{drop} \, n \, s$ are
respectively
\begin{itemize}
 \item the list collecting the first $n$ elements of $s$, or $s$ if the length of $s$ is less than $n$,
       and
 \item the list obtained by dropping the first $n$ elements of $s$, or $[]$ if the length of $s$ is
       less than $n$.
\end{itemize}
Their pair $(\texttt{take} \, n \, s, \texttt{drop} \, n \, s)$ is a \Coq equivalent of
\caml{List.split_n} in \OCaml.
\begin{coqcode}
Definition take T : nat -> list T -> list T :=
  fix take (n : nat) (s : list T) {struct s} : list T :=
    match s, n with
    | [::], _ | _, 0 => nil
    | x :: s', S n' => x :: take n' s'
    end.

Definition drop T : nat -> list T -> list T :=
  fix drop (n : nat) (s : list T) {struct s} : list T :=
    match s, n with
    | [::], _ | _ :: _, 0 => s
    | _ :: s', S n' => drop n' s'
    end.
\end{coqcode}
\end{definition}

\begin{lemma}
\label{lemma:coq:cat_take_drop}
For any natural number $n$ and list $s$,
$\texttt{take} \, n \, s \concat \texttt{drop} \, n \, s = s$ holds.
\begin{coqcode}
Lemma cat_take_drop (n : nat) T (s : list T) : take n s ++ drop n s = s.
\end{coqcode}
\end{lemma}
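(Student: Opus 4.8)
The plan is to prove the equation by structural induction on the list $s$, keeping the index $n$ universally quantified so that the induction hypothesis is available at \emph{every} value of $n$. Generalizing $n$ is the one setup detail that matters here: by \cref{def:coq:take,def:coq:drop}, both $\texttt{take}$ and $\texttt{drop}$ recurse on $s$ while simultaneously decrementing $n$, so in the cons step the hypothesis will be needed at the smaller index $n'$ rather than at the original $n$.

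In the base case $s = []$, \cref{def:coq:take,def:coq:drop} give $\texttt{take} \, n \, [] = []$ and $\texttt{drop} \, n \, [] = []$ for every $n$, so the goal reduces to $[] \concat [] = []$, which holds by \cref{def:coq:cat}.

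In the inductive step $s = x :: s'$, I would case-split on $n$. For $n = 0$ we have $\texttt{take} \, 0 \, (x :: s') = []$ and $\texttt{drop} \, 0 \, (x :: s') = x :: s'$, so the goal is the definitional identity $[] \concat (x :: s') = x :: s'$. For $n = n' + 1$, unfolding the three definitions gives
\[
 \texttt{take} \, (n' + 1) \, (x :: s') \concat \texttt{drop} \, (n' + 1) \, (x :: s')
 = x :: \bigl(\texttt{take} \, n' \, s' \concat \texttt{drop} \, n' \, s'\bigr),
\]
and applying the induction hypothesis at $n'$ rewrites the bracketed term to $s'$, yielding $x :: s' = s$.

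There is no real obstacle here: every branch closes by unfolding \cref{def:coq:take,def:coq:drop,def:coq:cat}, together with a single appeal to the induction hypothesis in the one nontrivial branch. The only easy-to-miss point is the quantification over $n$ described above; once that is in place, the remainder is a routine computation of the kind already carried out for $\concat$ throughout \cref{sec:implementation-to-characterization}.
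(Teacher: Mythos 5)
Your proof is correct: structural induction on $s$ with $n$ kept universally quantified, splitting on $n$ in the cons case, is exactly the standard argument, and the detail you flag (generalizing $n$ so the hypothesis applies at $n'$) is the only subtlety. The paper itself states \cref{lemma:coq:cat_take_drop} without proof as a \MC library fact, so there is no divergence to report.
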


\subsection{Counting and size}

\begin{definition}
\label{def:coq:count}
Given a predicate $p \subseteq T$ and a list $s$ of type $\mathrm{list} \, T$, $\lvert s \rvert_a$
is the number of elements in $s$ satisfying $p$.
\begin{coqcode}
Definition count T (p : pred T) : list T -> nat :=
  foldr (fun x n => p x + n) 0.
\end{coqcode}
\end{definition}

\begin{definition}
\label{def:coq:size}
Given a list $s$, $\lvert s \rvert$ is the length of $s$.
This is a \Coq equivalent of \caml{List.length} in \OCaml.
\begin{coqcode}
Definition size T : list T -> nat := foldr (fun _ n => S n) 0.
\end{coqcode}
\end{definition}

\begin{lemma}
\label{lemma:coq:count_predT}
For any $s$ of type $\mathrm{list} \, T$, $\lvert s \rvert_T = \lvert s \rvert$ holds.
\begin{coqcode}
Lemma count_predT T (s : list T) : count (fun _ => true) s = size s.
\end{coqcode}
\end{lemma}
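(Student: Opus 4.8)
The plan is to prove the equation by straightforward structural induction on the list $s$; in fact the result is almost immediate once both sides are recognized as right folds. By \cref{def:coq:count,def:coq:size}, both $\texttt{count} \, p$ and $\texttt{size}$ are defined as $\texttt{foldr}$ over $s$ with the same initial value $0$, and they differ only in their accumulator functions: $\lambda x \, n. \, (p \, x) + n$ for $\texttt{count}$ and $\lambda \_ \, n. \, S \, n$ for $\texttt{size}$.

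The key observation is that, when the predicate is instantiated with the constant-true predicate $p \coloneq (\lambda \_. \, \texttt{true})$, the Boolean value $\texttt{true}$ is coerced to the natural number $1$. Hence for every element $x$ and accumulator $n$ we have $(p \, x) + n = \texttt{true} + n = 1 + n = S \, n$, so the two accumulator functions agree pointwise (indeed definitionally) on this predicate, and consequently the two folds coincide.

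Concretely, first I would dispatch the base case $s = []$, where both sides reduce to $0$ by the definition of $\texttt{foldr}$. For the inductive step $s = x :: s'$, I would unfold one step of each fold to obtain $\texttt{count} \, p \, (x :: s') = (p \, x) + \texttt{count} \, p \, s'$ and $\texttt{size} \, (x :: s') = S \, (\texttt{size} \, s')$, rewrite the left-hand side to $S \, (\texttt{count} \, p \, s')$ using the coercion identity above, and then close the goal with the induction hypothesis $\texttt{count} \, p \, s' = \texttt{size} \, s'$.

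There is no real difficulty here. The only point that requires attention is the implicit Boolean-to-natural-number coercion that turns $\texttt{true}$ into $1$ and makes the step $(p \, x) + n = S \, n$ hold by computation; once this is noticed, the lemma is discharged by a one-line induction in the \MC library. The main obstacle is therefore notational rather than mathematical.
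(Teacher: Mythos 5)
Your proof is correct: the paper does not actually prove this lemma (it is cited as a background fact from the \MC library in \cref{appx:basic-definitions-and-facts}), and your structural induction, hinging on the observation that under the \texttt{nat\_of\_bool} coercion $\texttt{true} + n = S\,n$ makes the two fold steps coincide, is exactly the standard argument.
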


\begin{lemma}
\label{lemma:coq:count_cat}
For any $p \subseteq T$ and $s_1$ and $s_2$ of type $\mathrm{list} \, T$,
$\lvert s_1 \concat s_2 \rvert_p = \lvert s_1 \rvert_p + \lvert s_2 \rvert_p$ holds.
\begin{coqcode}
Lemma count_cat T (p : pred T) (s1 s2 : list T) :
  count p (s1 ++ s2) = count p s1 + count p s2.
\end{coqcode}
\end{lemma}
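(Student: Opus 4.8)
The plan is to prove the equation by structural induction on the first list $s_1$. Both $\concat$ (\cref{def:coq:cat}) and \texttt{count} (\cref{def:coq:count}) are defined by recursion on their list argument --- $\concat$ destructing its first argument, and \texttt{count} being a right fold --- so an induction on $s_1$ lets each recursive call unfold one cons cell at a time, leaving only the associativity of natural-number addition to tie things together. Here $p\,x \in \{0,1\}$ is understood via the Boolean-to-natural coercion, matching the fold body \coq{fun x n => p x + n}.

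First I would dispatch the base case $s_1 = []$: by \cref{def:coq:cat} we have $[] \concat s_2 = s_2$, and since \texttt{count} is a right fold with initial value $0$, $\lvert [] \rvert_p = 0$; hence both sides reduce to $\lvert s_2 \rvert_p = 0 + \lvert s_2 \rvert_p$. For the inductive step, I write $s_1 = x :: s_1'$ and assume the induction hypothesis $\lvert s_1' \concat s_2 \rvert_p = \lvert s_1' \rvert_p + \lvert s_2 \rvert_p$, computing
\begin{align*}
 \lvert (x :: s_1') \concat s_2 \rvert_p
 &= \lvert x :: (s_1' \concat s_2) \rvert_p & (\text{\cref{def:coq:cat}}) \\
 &= p\,x + \lvert s_1' \concat s_2 \rvert_p & (\text{\cref{def:coq:count}}) \\
 &= p\,x + (\lvert s_1' \rvert_p + \lvert s_2 \rvert_p) & (\text{I.H.}) \\
 &= (p\,x + \lvert s_1' \rvert_p) + \lvert s_2 \rvert_p & (\text{assoc.\ of } +) \\
 &= \lvert x :: s_1' \rvert_p + \lvert s_2 \rvert_p. & (\text{\cref{def:coq:count}})
\end{align*}

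There is no real obstacle here: the only non-definitional step is reassociating the head summand $p\,x$, which is exactly where the associativity of $+$ on \coq{nat} is used. I expect the entire argument to collapse to a one-line \coq{elim}-and-\coq{rewrite} (or an \coq{addnA}-based \coq{by}) in \Coq, and it serves only as a small building block for the counting and permutation lemmas used elsewhere.
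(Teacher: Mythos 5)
Your proof is correct and is exactly the standard argument: the paper itself states \cref{lemma:coq:count_cat} as an imported \MC library fact without giving a proof, and the library's proof is precisely the structural induction on $s_1$ with one application of associativity of $+$ that you describe (\coq{by elim: s1 => //= x s1 IHs; rewrite IHs addnA} in \Coq).
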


\begin{lemma}
\label{lemma:coq:count_rev}
For any $p \subseteq T$ and $s$ of type $\mathrm{list} \, T$,
$\lvert \texttt{rev} \, s \rvert_p = \lvert s \rvert_p$ holds.
\begin{coqcode}
Lemma count_rev T (p : pred T) (s : list T) : count a (rev s) = count a s.
\end{coqcode}
\end{lemma}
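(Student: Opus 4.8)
The plan is to prove $\lvert \texttt{rev} \, s \rvert_p = \lvert s \rvert_p$ by structural induction on $s$, relying on the additivity of $\texttt{count}$ over concatenation (\cref{lemma:coq:count_cat}) and on the way $\texttt{rev}$ acts on a cons cell. First I would record the auxiliary fact $\texttt{rev} \, (x :: s') = \texttt{rev} \, s' \concat [x]$, which follows from $x :: s' = [x] \concat s'$ together with \cref{lemma:coq:rev_cat} ($\texttt{rev} \, (s \concat t) = \texttt{rev} \, t \concat \texttt{rev} \, s$) and $\texttt{rev} \, [x] = [x]$.

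The base case $s = []$ is immediate, since $\texttt{rev} \, [] = []$ and both sides reduce to $0$ by the definition of $\texttt{count}$. For the inductive case $s = x :: s'$, I would compute:
\begin{align*}
  \lvert \texttt{rev} \, (x :: s') \rvert_p
  &= \lvert \texttt{rev} \, s' \concat [x] \rvert_p
  & (\text{rev of a cons cell}) \\
  &= \lvert \texttt{rev} \, s' \rvert_p + \lvert [x] \rvert_p
  & (\text{\cref{lemma:coq:count_cat}}) \\
  &= \lvert s' \rvert_p + \lvert [x] \rvert_p
  & (\text{I.H.}) \\
  &= (p \, x) + \lvert s' \rvert_p
  & (\text{def.\ of \texttt{count}, commut.\ of } +) \\
  &= \lvert x :: s' \rvert_p
  & (\text{def.\ of \texttt{count}})
\end{align*}
where $\lvert [x] \rvert_p = p \, x$ is the count of a singleton and the boolean $p \, x$ is read as the natural number $0$ or $1$.

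There is no genuine obstacle here: the statement is a routine induction. The only step that is not purely definitional is the reduction of $\texttt{rev}$ on a cons cell, which is discharged by \cref{lemma:coq:rev_cat}; everything else is \cref{lemma:coq:count_cat}, the induction hypothesis, and commutativity of $+$ on $\mathbb{N}$. Alternatively, one could avoid both auxiliary lemmas entirely by first proving the stronger statement $\lvert \texttt{catrev} \, s \, t \rvert_p = \lvert s \rvert_p + \lvert t \rvert_p$ by induction on $s$ (the generalization over $t$ is what makes the induction go through, since $\texttt{catrev} \, (x :: s') \, t = \texttt{catrev} \, s' \, (x :: t)$), and then instantiating $t \coloneq []$ using \cref{def:coq:rev}.
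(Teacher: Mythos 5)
Your proof is correct. Note, though, that the paper does not actually prove this lemma: \cref{lemma:coq:count_rev} appears in \cref{appx:basic-definitions-and-facts} as one of the background facts imported from the \MC library, stated without proof, so there is no argument in the paper to compare yours against. Your structural induction using \cref{lemma:coq:count_cat} and the identity $\texttt{rev}\,(x :: s') = \texttt{rev}\,s' \concat [x]$ is the standard argument and is sound, and your alternative via the generalized statement $\lvert \texttt{catrev}\,s\,t \rvert_p = \lvert s \rvert_p + \lvert t \rvert_p$ is the one that matches the actual definition of $\texttt{rev}$ in \cref{def:coq:rev} most directly.
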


\subsection{Predicates on lists}

\begin{definition}
\label{def:coq:all}
Given a predicate $p \subseteq T$ and a list $s$ of type $\mathrm{list} \, T$,
$\texttt{all}_p \, s$ holds if all elements of $s$ satisfy $p$.
\begin{coqcode}
Definition all T (p : pred T) : list T -> bool :=
  foldr (fun x b => p x && b) true.
\end{coqcode}
\end{definition}

\begin{lemma}
\label{lemma:coq:all_count}
For any $p \subseteq T$ and $s$ of type $\mathrm{list} \, T$, $\texttt{all}_p \, s$ holds iff
$\lvert s \rvert_p = \lvert s \rvert$.
\begin{coqcode}
Lemma all_count T (p : pred T) (s : list T) : all p s = (count p s == size s).
\end{coqcode}
\end{lemma}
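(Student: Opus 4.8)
The plan is to proceed by straightforward structural induction on $s$, after first establishing the auxiliary bound $\lvert s \rvert_p \leq \lvert s \rvert$, which holds for every list because each element contributes either $0$ or $1$ to the count; this bound itself follows by a trivial induction on $s$ (\cref{def:coq:count,def:coq:size}). I would prove it — or simply invoke it — before the main induction, since it is exactly what makes the cons case close.

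For the base case $s = []$, both $\texttt{all}_p \, []$ and the Boolean equality $\lvert [] \rvert_p = \lvert [] \rvert$ reduce to $\mathrm{true}$, as $\texttt{all}_p \, [] = \mathrm{true}$ and $0 = 0$.

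For the inductive step $s = x :: s'$, I would unfold the three \texttt{foldr}-based definitions at the head cell, turning the goal into
\[
 p \, x \land \texttt{all}_p \, s' \;=\; \bigl(p \, x + \lvert s' \rvert_p = \mathrm{S} \, \lvert s' \rvert\bigr),
\]
where $p \, x$ on the right is the Boolean coerced to a natural number ($0$ or $1$). I would then case split on $p \, x$. When $p \, x = \mathrm{true}$, the left-hand side is $\texttt{all}_p \, s'$ and the right-hand side is $\bigl(\mathrm{S}\,\lvert s' \rvert_p = \mathrm{S}\,\lvert s' \rvert\bigr)$, which collapses to $\bigl(\lvert s' \rvert_p = \lvert s' \rvert\bigr)$ by injectivity of the successor, so the two sides agree by the induction hypothesis. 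When $p \, x = \mathrm{false}$, the left-hand side is $\mathrm{false}$, and the right-hand side is $\bigl(\lvert s' \rvert_p = \mathrm{S}\,\lvert s' \rvert\bigr)$, which is also $\mathrm{false}$ because the auxiliary bound gives $\lvert s' \rvert_p \leq \lvert s' \rvert < \mathrm{S}\,\lvert s' \rvert$.

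The argument is essentially routine equational and inductive reasoning; the only step requiring anything beyond unfolding definitions is the $p \, x = \mathrm{false}$ case, where one must rule out the spurious equality $\lvert s' \rvert_p = \mathrm{S}\,\lvert s' \rvert$ using the strict inequality supplied by the auxiliary bound. Thus the main (and rather minor) obstacle is recognizing the need to prove $\lvert s \rvert_p \leq \lvert s \rvert$ first, rather than attempting to discharge the cons case by computation alone.
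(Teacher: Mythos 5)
Your proof is correct. Note that the paper does not actually prove this lemma at all --- it is listed in \cref{appx:basic-definitions-and-facts} only as a background fact imported from the \MC library --- but your argument (structural induction on $s$, with the bound $\lvert s' \rvert_p \leq \lvert s' \rvert$ used to refute the equality $\lvert s' \rvert_p = \mathrm{S}\,\lvert s' \rvert$ in the $p\,x = \mathrm{false}$ case) is exactly the standard one, and your identification of that bound as the one non-computational step is accurate.
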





\begin{lemma}
\label{lemma:coq:sub_all}
For any $p_1 \subseteq p_2 \subseteq T$ and $s$ of type $\mathrm{list} \, T$,
$\texttt{all}_{p_2} \, s$ holds whenever $\texttt{all}_{p_1} \, s$ holds.
\begin{coqcode}
Lemma sub_all T (p1 p2 : pred T) :
  (forall x : T, p1 x -> p2 x) -> forall s : list T, all p1 s -> all p2 s.
\end{coqcode}
\end{lemma}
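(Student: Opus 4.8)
The plan is a routine structural induction on the list $s$, since $\texttt{all}$ (\cref{def:coq:all}) is defined as a right fold and therefore unfolds cleanly on $[]$ and on a cons cell. Write $h$ for the pointwise hypothesis that $p_1 \, x$ implies $p_2 \, x$ for every $x$.

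For the base case $s = []$, both $\texttt{all}_{p_1} \, []$ and $\texttt{all}_{p_2} \, []$ reduce to $\mathrm{true}$ by definition, so the implication holds trivially. For the inductive case $s = x :: s'$, the definition gives $\texttt{all}_{p} \, (x :: s') = p \, x \land \texttt{all}_{p} \, s'$ for any $p$. Assuming $\texttt{all}_{p_1} \, (x :: s')$, I would split this conjunction into $p_1 \, x$ and $\texttt{all}_{p_1} \, s'$. Applying $h$ to the former yields $p_2 \, x$, and the induction hypothesis applied to the latter yields $\texttt{all}_{p_2} \, s'$; conjoining the two gives $\texttt{all}_{p_2} \, (x :: s')$, which closes the case.

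There is essentially no obstacle here: the entire content is the structural induction, and the only care needed is the bookkeeping of destructing and reassembling the Boolean conjunction $\land$ in the cons case. In the formalization this amounts to a single induction on $s$ with the two hypotheses propagated through each conjunct.
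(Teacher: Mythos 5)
Your induction is correct and is the standard argument: with $\texttt{all}$ defined as a right fold, the nil case is trivial and the cons case splits the Boolean conjunction, applies the pointwise hypothesis to the head, and the induction hypothesis to the tail. Note that the paper does not prove this lemma at all --- it is cited from the \MC library in \cref{appx:basic-definitions-and-facts} as a background fact --- so there is no proof to compare against, but yours is exactly the proof one would write.
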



\begin{definition}
\label{def:coq:allrel}
Given a relation $R \subseteq T \times S$ and two lists $s_1$ and $s_2$ of respectively types
$\mathrm{list} \, T$ and $\mathrm{list} \, S$, $\texttt{allrel}_R \, s_1 \, s_2$ holds if any
elements $x$ of $s_1$ and $y$ of $s_2$ satisfy $x \mathrel{R} y$.
\begin{coqcode}
Definition allrel T S (r : T -> S -> bool) (xs : list T) (ys : list S) : bool :=
  all (fun x => all (r x) ys) xs.
\end{coqcode}
\end{definition}

\begin{lemma}
\label{lemma:coq:allrelC}
For any $R \subseteq T \times S$ and $s_1$ and $s_2$ of respectively types $\mathrm{list} \, T$ and
$\mathrm{list} \, S$, $\texttt{allrel}_R \, s_1 \, s_2$ holds iff
$\texttt{allrel}_{R^\mathrm{C}} \, s_2 \, s_1$ holds, where $R^\mathrm{C} \subseteq S \times T$ is
the converse relation of $R$, \ie, $y \mathrel{R^\mathrm{C}} x \coloneq x \mathrel{R} y$.
\begin{coqcode}
Lemma allrelC T S (r : T -> S -> bool) (xs : list T) (ys : list S) :
  allrel r xs ys = allrel (fun x y => r y x) ys xs.
\end{coqcode}
\end{lemma}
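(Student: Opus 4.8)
The plan is to observe that the two sides express exactly the same doubly-quantified proposition, differing only in the order of the two universal quantifiers. Unfolding the definition of \texttt{allrel} (\cref{def:coq:allrel}), the left-hand side $\texttt{allrel}_R \, s_1 \, s_2$ is $\texttt{all}\,(\lambda x.\,\texttt{all}\,(R\,x)\,s_2)\,s_1$, which by the semantics of \texttt{all} (\cref{def:coq:all}) asserts that $x \mathrel{R} y$ holds for every $x \in s_1$ and every $y \in s_2$. Symmetrically, the right-hand side $\texttt{allrel}_{R^\mathrm{C}} \, s_2 \, s_1$ unfolds to the assertion that $y \mathrel{R^\mathrm{C}} x$, i.e.\ $x \mathrel{R} y$, holds for every $y \in s_2$ and every $x \in s_1$. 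Since the order of two nested universal quantifiers is immaterial, the two assertions coincide.

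To turn this observation into a proof of the stated Boolean equality, I would first establish the general commutation principle for nested \texttt{all}: for any $p : T \to S \to \mathrm{bool}$,
\[
 \texttt{all}\,(\lambda x.\,\texttt{all}\,(p\,x)\,s_2)\,s_1 =
 \texttt{all}\,(\lambda y.\,\texttt{all}\,(\lambda x.\,p\,x\,y)\,s_1)\,s_2,
\]
from which the lemma follows by instantiating $p \coloneq R$ and reading the right-hand side as $\texttt{allrel}_{R^\mathrm{C}} \, s_2 \, s_1$, using $R^\mathrm{C}\,y\,x = R\,x\,y$. This general equation I would prove by induction on $s_1$. The base case $s_1 = []$ reduces both sides to $\texttt{true}$ (the right-hand side because $\texttt{all}$ of the constantly-true predicate is $\texttt{true}$). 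In the step case $s_1 = x :: s_1'$, the left-hand side unfolds to $\texttt{all}\,(p\,x)\,s_2 \mathbin{\&\&} \texttt{all}\,(\lambda x'.\,\texttt{all}\,(p\,x')\,s_2)\,s_1'$, while the right-hand side, after distributing $\texttt{all}$ over the conjunctive predicate $\lambda y.\,p\,x\,y \mathbin{\&\&} \texttt{all}\,(\lambda x'.\,p\,x'\,y)\,s_1'$, splits into $\texttt{all}\,(\lambda y.\,p\,x\,y)\,s_2 \mathbin{\&\&} \texttt{all}\,(\lambda y.\,\texttt{all}\,(\lambda x'.\,p\,x'\,y)\,s_1')\,s_2$; applying the induction hypothesis to the second conjunct, and noting $\texttt{all}\,(\lambda y.\,p\,x\,y)\,s_2 = \texttt{all}\,(p\,x)\,s_2$, makes both sides match.

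The main obstacle is that \texttt{allrel} returns a Boolean rather than a proposition, so one cannot literally invoke commutativity of $\forall$; the real content is therefore the distributivity of \texttt{all} over a conjunctive predicate, together with the commutativity and associativity of Boolean conjunction used to rearrange the resulting conjuncts. An alternative and arguably cleaner route is to bypass the explicit induction by passing through a reflection view relating $\texttt{allrel}_R \, s_1 \, s_2$ to the \texttt{Prop}-level statement $\forall x\,y,\ x \in s_1 \Rightarrow y \in s_2 \Rightarrow x \mathrel{R} y$, at which point the equivalence is immediate by commuting the two implications and unfolding $R^\mathrm{C}$.
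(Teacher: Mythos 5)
The paper does not prove this lemma: it appears in \cref{appx:basic-definitions-and-facts}, which is explicitly a list of facts imported from the \MC library, so there is no in-paper proof to compare against. Your argument is nonetheless correct and self-contained. The key identity
\[
 \texttt{all}\,(\lambda x.\,\texttt{all}\,(p\,x)\,s_2)\,s_1 =
 \texttt{all}\,(\lambda y.\,\texttt{all}\,(\lambda x.\,p\,x\,y)\,s_1)\,s_2
\]
does follow by induction on $s_1$ exactly as you describe: the base case needs only that \texttt{all} of the constantly-true predicate is \texttt{true}, and the step case needs distributivity of \texttt{all} over a pointwise conjunction, after which the induction hypothesis closes the second conjunct and the first conjuncts agree up to $\eta$. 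You correctly identify that the Boolean (rather than propositional) phrasing is the only real obstacle, and that the content reduces to this distributivity fact. Your alternative route through a reflection view ($\texttt{allrel}_R\,s_1\,s_2$ reflects $\forall x \in s_1, \forall y \in s_2,\ x \mathrel{R} y$, so the equality follows from commuting the quantifiers on the \texttt{Prop} side and antisymmetry of the \texttt{reflect}ed Booleans) is in fact how the library proves it, and is the cheaper of the two in a system like \Coq where such views are already available; the inductive route buys independence from that infrastructure at the cost of proving the auxiliary distributivity lemma.
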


\begin{lemma}
\label{lemma:coq:allrel_rev2}
For any
relation 
$R \subseteq T \times S$ and $s_1$ and $s_2$ of respectively types $\mathrm{list} \, T$ and
$\mathrm{list} \, S$, $\texttt{allrel}_R \, (\texttt{rev} \, s_1) \, (\texttt{rev} \, s_2)$ holds
iff $\texttt{allrel}_R \, s_1 \, s_2$ holds.
\begin{coqcode}
Lemma allrel_rev2 T S (r : T -> S -> bool) (s1 : list T) (s2 : list S) :
  allrel r (rev s1) (rev s2) = allrel r s1 s2.
\end{coqcode}
\end{lemma}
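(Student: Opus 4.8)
The plan is to reduce the statement to the fact that $\texttt{all}$ is invariant under list reversal, observing that $\texttt{allrel}_R \, s_1 \, s_2$ is by definition $\texttt{all}_q \, s_1$ for the predicate $q \coloneq (\lambda x, \texttt{all}_{(R \, x)} \, s_2)$, and that reversal merely permutes the elements of a list. First I would establish the auxiliary equation
\[
 \texttt{all}_p \, (\texttt{rev} \, s) = \texttt{all}_p \, s \qquad (\star)
\]
for every predicate $p \subseteq T$ and list $s$. Since $(\star)$ is not among the basic facts collected in \cref{appx:basic-definitions-and-facts}, I would derive it: by \cref{lemma:coq:all_count} we have $\texttt{all}_p \, s = (\lvert s \rvert_p = \lvert s \rvert)$, so $(\star)$ reduces to the two numerical identities $\lvert \texttt{rev} \, s \rvert_p = \lvert s \rvert_p$ and $\lvert \texttt{rev} \, s \rvert = \lvert s \rvert$. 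The former is exactly \cref{lemma:coq:count_rev}, and the latter follows by rewriting both sizes as counts of the always-true predicate via \cref{lemma:coq:count_predT} and then applying \cref{lemma:coq:count_rev} once more.

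With $(\star)$ available, I would peel off the two reversals one at a time. The only mild annoyance is that $(\star)$ applies directly to the \emph{outer} list of an $\texttt{allrel}$ but not to the inner one without rewriting underneath the $\lambda$-binder; I would sidestep this by using \cref{lemma:coq:allrelC} to move the inner reversed list into the outer position. Writing $R^\mathrm{C}$ for the converse of $R$, the computation is
\begin{align*}
 \texttt{allrel}_R \, (\texttt{rev} \, s_1) \, (\texttt{rev} \, s_2)
 &= \texttt{allrel}_R \, s_1 \, (\texttt{rev} \, s_2)
 & (\star) \\
 &= \texttt{allrel}_{R^\mathrm{C}} \, (\texttt{rev} \, s_2) \, s_1
 & (\text{\cref{lemma:coq:allrelC}}) \\
 &= \texttt{allrel}_{R^\mathrm{C}} \, s_2 \, s_1
 & (\star) \\
 &= \texttt{allrel}_R \, s_1 \, s_2
 & (\text{\cref{lemma:coq:allrelC}})
\end{align*}
where the first use of $(\star)$ takes the predicate $\lambda x, \texttt{all}_{(R \, x)} \, (\texttt{rev} \, s_2)$ and the second takes $\lambda y, \texttt{all}_{(R^\mathrm{C} \, y)} \, s_1$.

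I expect the proof to be essentially routine, and the only real obstacle is a bookkeeping one: the reversal-invariance lemma $(\star)$ is absent from the supplied toolbox and must be reconstructed from the counting lemmas, and one must be careful to reverse \emph{both} arguments rather than relying on a single-sided statement. Everything else is definitional unfolding of $\texttt{allrel}$ together with the two applications of \cref{lemma:coq:allrelC}.
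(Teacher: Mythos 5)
Your proof is correct. Note that the paper itself gives no proof of this lemma: \cref{lemma:coq:allrel_rev2} is listed in \cref{appx:basic-definitions-and-facts} as a fact imported from the \MC library, so there is no argument to compare yours against. Your derivation is sound as it stands: the auxiliary fact $(\star)$ that $\texttt{all}_p$ is invariant under \texttt{rev} does follow from \cref{lemma:coq:all_count,lemma:coq:count_rev,lemma:coq:count_predT} exactly as you describe, and the four-step chain alternating $(\star)$ with \cref{lemma:coq:allrelC} correctly peels off both reversals, with the final step using that the converse of the converse of $R$ is $R$ itself. This is a reasonable and complete reconstruction of a library lemma the paper simply takes for granted.
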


\begin{definition}
\label{def:coq:uniq}
Given a list $s$, $\texttt{uniq} \, s$ holds if all the elements of $s$ are pairwise different, \ie,
$s$ is duplication free.
\begin{coqcode}
Definition uniq (T : eqType) : list T -> bool :=
  fix uniq (s : list T) {struct s} : bool :=
    if s is x :: s' then (x \notin s') && uniq s' else true.
\end{coqcode}
where \coq{x \notin s'} means that \coq{x} is not an element of \coq{s'}, which requires \coq{T} to
be an \coq{eqType}, a type with a decidable equality.
\end{definition}

\subsection{Map and filter}

\begin{definition}[Map]
\label{def:coq:map}
Given a function $f$ of type $T_1 \to T_2$ and a list $s := [x_1, x_2, \dots, x_n]$ of type
$\mathrm{list} \, T_1$, $\texttt{map}_f \, s$ is the list of type $\mathrm{list} \, T_2$ mapping $f$ to
the elements of $s$, \ie, $[f \, x_1, f \, x_2, \dots, f \, x_n]$.
This is a \Coq equivalent of \caml{List.map} in \OCaml.
\begin{coqcode}
Definition map T1 T2 (f : T1 -> T2) : list T1 -> list T2 :=
  foldr (fun x s => f x :: s) [::].
\end{coqcode}
\end{definition}

\begin{definition}[Filter]
\label{def:coq:filter}
Given a predicate $p \subseteq T$ and a list $s$ of type $\mathrm{list} \, T$,
$\texttt{filter}_p \, s$ is the list collecting all the elements of $s$ that satisfy $p$, and
preserves the order of the elements in the input.
This is a \Coq equivalent of \caml{List.filter} in \OCaml.
\begin{coqcode}
Definition filter T (p : pred T) : list T -> list T :=
  foldr (fun x s => if p x then x :: s else s) [::].
\end{coqcode}
\end{definition}


\begin{lemma}[The naturality of \texttt{filter}]
\label{lemma:coq:filter_map}
For any $f$ of type $T_1 \to T_2$, $p \subseteq T_2$, and $s$ of type $\mathrm{list} \, T_1$,
$\texttt{filter}_p \, (\texttt{map}_f \, s) = \texttt{map}_f \, (\texttt{filter}_{p \circ f} \, s)$
holds.
\begin{coqcode}
Lemma filter_map T1 T2 (f : T1 -> T2) (p : pred T2) (s : list T1) :
  filter p (map f s) = map f (filter (fun x => p (f x)) s).
\end{coqcode}
\end{lemma}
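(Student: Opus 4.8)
The plan is to prove the equation by structural induction on the list $s$, unfolding the definitions of $\texttt{map}$ (\cref{def:coq:map}) and $\texttt{filter}$ (\cref{def:coq:filter}), both of which are given as right folds over the list.

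In the base case $s = []$, both sides reduce to $[]$ by definition, since $\texttt{map}_f \, [] = []$ and $\texttt{filter}_p \, [] = []$, so the equation holds trivially.

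For the inductive step, I would write $s = x :: s'$ and assume the induction hypothesis $\texttt{filter}_p \, (\texttt{map}_f \, s') = \texttt{map}_f \, (\texttt{filter}_{p \circ f} \, s')$. On the left-hand side, $\texttt{map}_f \, (x :: s') = f\,x :: \texttt{map}_f \, s'$, so $\texttt{filter}_p$ then inspects this head by testing $p \, (f\,x)$. On the right-hand side, $\texttt{filter}_{p \circ f}$ inspects the head by testing $(p \circ f)\,x$, and $\texttt{map}_f$ is applied afterwards. The key observation making the two sides align is that these two guards coincide by definition, $p\,(f\,x) = (p \circ f)\,x$, so I would perform a single case split on the boolean $p\,(f\,x)$. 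When it holds, the left side yields $f\,x :: \texttt{filter}_p \, (\texttt{map}_f \, s')$ and the right side yields $f\,x :: \texttt{map}_f \, (\texttt{filter}_{p \circ f} \, s')$; when it fails, both sides drop the head, giving $\texttt{filter}_p \, (\texttt{map}_f \, s')$ and $\texttt{map}_f \, (\texttt{filter}_{p \circ f} \, s')$ respectively. In either branch, the induction hypothesis immediately closes the goal.

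There is no genuine obstacle here: the statement is a routine naturality property of filtering along a map, and the whole argument amounts to definitional unfolding followed by one boolean case analysis. The only point requiring a moment of care is recognizing that the predicate $p \circ f$ on the right is precisely what lets the head-test match the test $p\,(f\,x)$ performed on the left; once the filter is taken with respect to $p \circ f$ rather than $p$, the induction goes through directly.
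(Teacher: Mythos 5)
Your proof is correct and is exactly the standard argument: structural induction on $s$, definitional unfolding of $\texttt{map}_f$ and $\texttt{filter}_p$ as right folds, and a single case split on the boolean $p\,(f\,x)$, which coincides with the guard $(p \circ f)\,x$ on the right-hand side. The paper itself gives no proof of this lemma --- it is imported as a basic fact from the \MC library in the appendix --- but the induction you describe is precisely how it is established there, so there is nothing to add.
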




\begin{lemma}
\label{lemma:coq:mem_filter}
For any $p \subseteq T$, $x$ of type $T$, and $s$ of type $\mathrm{list} \, T$, $x$ is an element of
$\texttt{filter}_p \, s$ iff $x$ is an element of $s$ that satisfies $p$.
\begin{coqcode}
Lemma mem_filter (T : eqType) (p : pred T) (x : T) (s : list T) :
  (x \in filter p s) = p x && (x \in s).
\end{coqcode}
\end{lemma}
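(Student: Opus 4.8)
The plan is to prove this boolean equation by structural induction on $s$, unfolding both $\texttt{filter}$ (\cref{def:coq:filter}) and list membership, which satisfy $x \in [] = \mathrm{false}$ and $x \in (y :: s') = (x = y) \lor (x \in s')$, where $x = y$ abbreviates the decidable (boolean) equality test available on the \coq{eqType} $T$. In the base case $s = []$ both sides reduce to $\mathrm{false}$: on the left because $\texttt{filter}_p \, [] = []$, and on the right because of the conjunct $x \in []$.

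For the inductive step $s = y :: s'$, I would split on the boolean $p \, y$. If $p \, y$ holds, then $\texttt{filter}_p \, (y :: s') = y :: \texttt{filter}_p \, s'$, so the left-hand side becomes $(x = y) \lor (x \in \texttt{filter}_p \, s')$; rewriting the second disjunct with the induction hypothesis and distributing the conjunction on the right-hand side over $(x = y) \lor (x \in s')$ reduces the goal to the identity $(x = y) = (p \, x \land (x = y))$. If instead $p \, y$ is $\mathrm{false}$, then $\texttt{filter}_p \, (y :: s') = \texttt{filter}_p \, s'$ and, after the same manipulations, the goal becomes $\mathrm{false} = (p \, x \land (x = y))$. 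Both residual goals are settled by one observation: whenever the boolean test $x = y$ holds, $x$ and $y$ are literally the same element, hence $p \, x = p \, y$, which is $\mathrm{true}$ in the first case and $\mathrm{false}$ in the second.

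The only genuinely non-mechanical step, which I expect to be the main obstacle, is precisely this substitution of $y$ for $x$ (or conversely) inside $p$ under the hypothesis $x = y$: it is the single place where the \coq{eqType} structure is actually used, whereas everything else is unfolding of definitions and boolean rewriting. In a proof assistant the whole argument is therefore only a couple of lines---an induction on $s$, a case split on $p \, y$, and a rewrite along $x = y$---which is why I would not expect any of the residual boolean simplifications to cause difficulty.
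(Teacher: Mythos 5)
Your proof is correct: the induction on $s$, the case split on $p\,y$, and the use of the \coq{eqType} reflection to transport $p$ along the boolean equality $x = y$ is exactly the standard argument for this fact. The paper itself gives no proof here — \cref{lemma:coq:mem_filter} is stated in the appendix as an imported \MC library lemma — so there is nothing to diverge from, and your argument is the canonical one.
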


\subsection{Subsequences}

\begin{definition}[Mask]
\label{def:coq:mask}
Given two lists $m$ and $s$ of respectively types $\mathrm{list} \, \mathrm{bool}$ and
$\mathrm{list} \, T$, $\texttt{mask}_m \, s$ is the subsequence of $s$ selected by $m$; that is, the
$i^\mathrm{th}$ element of $s$ is selected if the $i^\mathrm{th}$ element of $m$ is
\texttt{true}.
\begin{coqcode}
Definition mask T : list bool -> list T -> list T :=
  fix mask (m : list bool) (s : list T) {struct m} : list T :=
    match m, s with
    | [::], _ | _, [::] => [::]
    | b :: m', x :: s' => if b then x :: mask m' s' else mask m' s'
    end.
\end{coqcode}
\end{definition}


\begin{lemma}[The naturality of \texttt{mask}]
\label{lemma:coq:map_mask}
For any $f$ of type $T_1 \to T_2$, and $m$ and $s$ of respectively types
$\mathrm{list} \, \mathrm{bool}$ and $\mathrm{list} \, T_1$,
$\texttt{map}_f \, (\texttt{mask}_m \, s) = \texttt{mask}_m \, (\texttt{map}_f \, s)$ holds.
\begin{coqcode}
Lemma map_mask T1 T2 (f : T1 -> T2) (m : list bool) (s : list T1) :
  map f (mask m s) = mask m (map f s).
\end{coqcode}
\end{lemma}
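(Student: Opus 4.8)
The plan is to prove the equation by structural induction on the Boolean mask $m$, keeping the list $s$ universally quantified so that the induction hypothesis remains applicable to its tail. Since $\texttt{mask}$ recurses structurally on $m$ (\cref{def:coq:mask}) while $\texttt{map}_f$ distributes over cons (\cref{def:coq:map}), this is precisely the induction that lets both definitions unfold in lockstep.

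In the base case $m = []$, the definition of $\texttt{mask}$ gives $\texttt{mask}_{[]} \, s = []$ and $\texttt{mask}_{[]} \, (\texttt{map}_f \, s) = []$, while $\texttt{map}_f \, [] = []$, so both sides reduce to $[]$. For the inductive case $m = b :: m'$, I would first case-split on $s$. If $s = []$, both sides again collapse to $[]$, because $\texttt{mask}$ returns $[]$ whenever its list argument is empty and $\texttt{map}_f \, [] = []$. If $s = x :: s'$, I would unfold $\texttt{map}_f \, (x :: s') = f \, x :: \texttt{map}_f \, s'$ and then case-split on the Boolean $b$: when $b$ is true, the left-hand side reduces to $f \, x :: \texttt{map}_f \, (\texttt{mask}_{m'} \, s')$ and the right-hand side to $f \, x :: \texttt{mask}_{m'} \, (\texttt{map}_f \, s')$; when $b$ is false, the head is dropped and the two sides reduce to $\texttt{map}_f \, (\texttt{mask}_{m'} \, s')$ and $\texttt{mask}_{m'} \, (\texttt{map}_f \, s')$, respectively. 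In every case, the induction hypothesis instantiated at $m'$ and $s'$ closes the goal.

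There is no genuine obstacle here; the proof is entirely routine equational reasoning. The only point deserving minor care is to generalize $s$ before inducting on $m$ (equivalently, to perform a simultaneous case analysis on both arguments that mirrors the pattern match in $\texttt{mask}$), so that the structural recursion of $\texttt{mask}$ on $m$ and the cons-distributivity of $\texttt{map}_f$ align. Both sides are then definitionally transparent once $b$ and the shape of $s$ are fixed, and no auxiliary lemma beyond the definitions is required.
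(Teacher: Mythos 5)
Your proof is correct, and it is the standard argument: induct on the mask $m$ with $s$ generalized, then case-split on $s$ and on the Boolean $b$ so that the pattern match in \texttt{mask} (\cref{def:coq:mask}) and the cons-unfolding of $\texttt{map}_f$ reduce in lockstep, closing each branch with the induction hypothesis. Note that the paper itself supplies no proof for this lemma --- it is listed in \cref{appx:basic-definitions-and-facts} as a fact imported from the \MC library --- so there is nothing to diverge from; your argument is exactly the routine one such a library proof would use.
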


\begin{lemma}
\label{lemma:coq:filter_mask}
For any $p \subseteq T$ and $s$ of type $\mathrm{list} \, T$,
$\texttt{filter}_p \, s = \texttt{mask}_{m'} \, s$ where $m' \coloneq \texttt{map}_p \, s$ holds.
\begin{coqcode}
Lemma filter_mask T (p : pred T) (s : list T) : filter p s = mask (map p s) s.
\end{coqcode}
\end{lemma}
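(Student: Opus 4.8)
The plan is to prove the equation by structural induction on the list $s$. This is natural because all three operators involved recurse on their list argument: $\texttt{filter}_p$ and $\texttt{map}_p$ (\cref{def:coq:filter,def:coq:map}) are right folds over $s$, and $\texttt{mask}$ (\cref{def:coq:mask}) recurses on its mask argument, which here is $m' \coloneq \texttt{map}_p \, s$. Since $\texttt{map}_p \, s$ has the same length as $s$ and its head is obtained by applying $p$ to the head of $s$, the two recursions stay in lockstep, so a single induction on $s$ suffices.

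First I would dispatch the base case $s = []$. By definition $\texttt{filter}_p \, [] = []$ and $\texttt{map}_p \, [] = []$, and the first clause of $\texttt{mask}$ gives $\texttt{mask}_{[]} \, [] = []$; both sides reduce to $[]$. Next, for the inductive step $s = x :: s'$ with induction hypothesis $\texttt{filter}_p \, s' = \texttt{mask}_{\texttt{map}_p \, s'} \, s'$, I would unfold both sides. The left side reduces to $x :: \texttt{filter}_p \, s'$ when $p \, x$ holds and to $\texttt{filter}_p \, s'$ otherwise. On the right side, $\texttt{map}_p \, (x :: s') = p \, x :: \texttt{map}_p \, s'$, so the cons clause of $\texttt{mask}$ applies and yields $x :: \texttt{mask}_{\texttt{map}_p \, s'} \, s'$ when $p \, x$ holds and $\texttt{mask}_{\texttt{map}_p \, s'} \, s'$ otherwise. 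A case split on the Boolean $p \, x$ then makes the two branches coincide after rewriting with the induction hypothesis.

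There is no real mathematical obstacle here; the only point requiring care is bookkeeping, namely checking that the head $p \, x$ freshly produced by $\texttt{map}_p$ is exactly the guard consumed by the cons clause of $\texttt{mask}$ that simultaneously inspects $x$. Because $m'$ is as long as $s$ and its $i^\mathrm{th}$ entry is $p$ applied to the $i^\mathrm{th}$ entry of $s$, the two recursions never desynchronize and the single guard $p \, x$ controls both computations at each step. In \Coq this amounts to an induction on $s$ followed by a case analysis on $p \, x$, each branch closed by the induction hypothesis and reflexivity.
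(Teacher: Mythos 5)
Your proof is correct: structural induction on $s$ with a case split on the Boolean $p\,x$ in the cons case is exactly the standard argument, and the lockstep observation (the head of $\texttt{map}_p\,s$ is the guard consumed by the cons clause of $\texttt{mask}$) is the only point that needs checking. Note that the paper does not prove this lemma at all — it is listed in \cref{appx:basic-definitions-and-facts} as an imported fact from the \MC library — so your proof simply supplies the routine induction that the library proof also performs.
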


\begin{lemma}
\label{lemma:coq:mask_filter}
For any $s$ and $m$ of respectively types $\mathrm{list} \, T$ and $\mathrm{list} \, \mathrm{bool}$,
$\texttt{mask}_m \, s = \texttt{filter}_p \, s$ where $p \, x \coloneq x \in \texttt{mask}_m \, s$
holds whenever $s$ is duplication free.
\begin{coqcode}
Lemma mask_filter (T : eqType) (s : list T) (m : list bool) :
  uniq s -> mask m s = filter [in mask m s] s.
\end{coqcode}
\end{lemma}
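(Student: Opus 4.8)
The plan is to prove the identity by structural induction on $s$, generalizing the boolean mask $m$ and following the recursive structure of $\texttt{mask}$ (\cref{def:coq:mask}). Write $p \, x \coloneq x \in \texttt{mask}_m \, s$ for the predicate appearing on the right-hand side. The hypothesis that $s$ is duplication free (\cref{def:coq:uniq}) is used through the standard fact that if $s = x :: s'$ is uniq, then $x \notin s'$ and $s'$ is again uniq. In the base case $s = []$, both sides reduce to $[]$ for every $m$. For the inductive step take $s = x :: s'$ and case on $m$: if $m = []$, then $\texttt{mask}_m \, s = []$, so $p$ is uniformly false on the elements of $s$ and $\texttt{filter}_p \, s = []$ as well; otherwise $m = b :: m'$, and $\texttt{mask}_{b :: m'} \, (x :: s')$ equals $x :: \texttt{mask}_{m'} \, s'$ when $b = \texttt{true}$ and $\texttt{mask}_{m'} \, s'$ when $b = \texttt{false}$, so I split on $b$.

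When $b = \texttt{true}$, the head satisfies $p \, x = (x \in x :: \texttt{mask}_{m'} \, s') = \texttt{true}$, so by \cref{def:coq:filter} we have $\texttt{filter}_p \, (x :: s') = x :: \texttt{filter}_p \, s'$. Since $x \notin s'$, for every $y \in s'$ we have $y \in (x :: \texttt{mask}_{m'} \, s')$ iff $y \in \texttt{mask}_{m'} \, s'$; because $\texttt{filter}$ depends only on the values taken by its predicate on the elements of the list, this lets me replace $p$ by $p' \, y \coloneq y \in \texttt{mask}_{m'} \, s'$ on $s'$. The induction hypothesis for $s'$, instantiated at the mask $m'$, then gives $\texttt{filter}_{p'} \, s' = \texttt{mask}_{m'} \, s'$, and combining the two steps yields $x :: \texttt{mask}_{m'} \, s' = \texttt{mask}_{b :: m'} \, (x :: s')$, as required.

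When $b = \texttt{false}$, I first claim $p \, x = \texttt{false}$: every element of $\texttt{mask}_{m'} \, s'$ occurs in $s'$ (masking only ever selects elements of its argument, a one-line induction), so $x \notin s'$ forces $x \notin \texttt{mask}_{m'} \, s'$. Hence $\texttt{filter}_p \, (x :: s') = \texttt{filter}_p \, s'$, and here $p$ restricted to $s'$ already coincides with $p'$ above, so the induction hypothesis closes the goal as $\texttt{mask}_{m'} \, s' = \texttt{mask}_{b :: m'} \, (x :: s')$. As an alternative route, one could instead invoke \cref{lemma:coq:filter_mask} to rewrite the right-hand side as $\texttt{mask}_{m''} \, s$ with $m'' \coloneq \texttt{map}_p \, s$ (\cref{def:coq:map}) and then prove the two masks equal, but the direct induction above avoids introducing an auxiliary mask-equality lemma.

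The main obstacle, and the only genuinely delicate point, is that the predicate $p$ refers to the \emph{entire} masked output $\texttt{mask}_m \, s$, whereas the recursion proceeds on the tail and the hypothesis it needs concerns only $\texttt{mask}_{m'} \, s'$. Reconciling these is exactly where duplication freeness is indispensable: it is used to discard the head's contribution to membership in the $b = \texttt{true}$ case and to establish $x \notin \texttt{mask}_{m'} \, s'$ in the $b = \texttt{false}$ case, in both instances so that $p$ agrees with $p'$ on $s'$ and the congruence of $\texttt{filter}$ under pointwise-equal predicates on the list's elements can be applied.
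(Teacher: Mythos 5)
Your proof is correct. Note, however, that the paper does not actually prove this statement: \cref{lemma:coq:mask_filter} is listed in \cref{appx:basic-definitions-and-facts} as a background fact imported from the \MC library, so there is no in-paper argument to compare against. Your direct induction on $s$ (generalizing $m$) is sound, and you correctly identify and resolve the one delicate point --- that the filtering predicate mentions the whole of $\texttt{mask}_m\,s$ while the induction hypothesis speaks only of $\texttt{mask}_{m'}\,s'$ --- by using $x \notin s'$ to show the two predicates agree on the elements of $s'$ (together with the congruence of \texttt{filter} under predicates that agree pointwise on the list). For reference, the library's own proof is shorter but less elementary: it characterizes the subsequences of a duplication-free list as exactly the lists of the form $\texttt{filter}_{(\cdot \in t)}\,s$ and then applies \cref{lemma:coq:mask_subseq}; your inductive argument buys self-containedness at the cost of the explicit predicate-rewriting step.
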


\begin{definition}[Subsequence relation]
\label{def:coq:subseq}
Given two lists $s_1$ and $s_2$, $\texttt{subseq} \, s_1 \, s_2$ means that $s_1$ is a subsequence of
$s_2$.
\begin{coqcode}
Definition subseq (T : eqType) : list T -> list T -> bool :=
  fix subseq (s1 s2 : list T) {struct s2} : bool :=
    match s2, s1 with
      | _, [::] => true
      | [::], _ :: _ => false
      | y :: s2', x :: s1' => subseq (if x == y then s1' else s1) s2'
    end.
\end{coqcode}
\end{definition}

\begin{lemma}
\label{lemma:coq:subseqP}
For any lists $s_1$ and $s_2$, $s_1$ is a subsequence of $s_2$ iff there exists a list $m$ of type
$\mathrm{list} \, \mathrm{bool}$ such that $\lvert m \rvert = \lvert s_2 \rvert$ and
$s_1 = \texttt{mask}_m \, s_2$.
\begin{coqcode}
Lemma subseqP (T : eqType) (s1 s2 : list T) :
  reflect
    (exists2 m : list bool, size m = size s2 & s1 = mask m s2) (subseq s1 s2).
\end{coqcode}
\end{lemma}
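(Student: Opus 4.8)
The plan is to prove the \texttt{reflect} statement by \texttt{apply: (iffP idP)}, reducing it to the two implications: \emph{completeness}, $\texttt{subseq}\,s_1\,s_2 \Rightarrow$ the existence of a mask $m$ with $\lvert m\rvert = \lvert s_2\rvert$ and $s_1 = \texttt{mask}_m\,s_2$, and its \emph{converse} (soundness). I would prove each by induction on $s_2$ with $s_1$ (and, for the converse, the witness $m$) generalized, so that the induction hypothesis is exactly the corresponding implication at the tail $s_2'$. Throughout I would unfold $\texttt{subseq}$ (\cref{def:coq:subseq}) and $\texttt{mask}$ (\cref{def:coq:mask}) one constructor at a time and exploit that $\texttt{subseq}\,[]\,s$ holds for every $s$ by definition.

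For the forward (completeness) direction I would build the witnessing mask recursively. If $s_2 = []$ then $\texttt{subseq}\,s_1\,[]$ forces $s_1 = []$ and the empty mask works. If $s_2 = y :: s_2'$ and $s_1 = []$, I take the all-false mask of length $\lvert s_2\rvert$, whose image under $\texttt{mask}$ is $[]$. Otherwise $s_1 = x :: s_1'$ and $\texttt{subseq}$ unfolds to $\texttt{subseq}\,(\text{if }x = y\text{ then }s_1'\text{ else }x :: s_1')\,s_2'$; I case-split on $x = y$. When $x = y$ the hypothesis is $\texttt{subseq}\,s_1'\,s_2'$, and the induction hypothesis yields $m'$ with $s_1' = \texttt{mask}_{m'}\,s_2'$, so prepending $\texttt{true}$ gives $m = \texttt{true} :: m'$ with $x :: s_1' = y :: \texttt{mask}_{m'}\,s_2'$. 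When $x \neq y$ the hypothesis is $\texttt{subseq}\,(x :: s_1')\,s_2'$, and prepending $\texttt{false}$ to the mask supplied by the induction hypothesis works. The size condition is immediate in every case.

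For the converse (soundness) direction I would destruct the existential, use $\lvert m\rvert = \lvert s_2\rvert$ to peel $m = b :: m'$ in step with $s_2 = y :: s_2'$, and substitute the equation $s_1 = \texttt{mask}_m\,s_2$. The subtle point, which I expect to be the main obstacle, is that the recursion of $\texttt{subseq}$ makes a \emph{committed} choice: when the heads coincide ($x = y$) it consumes $y$ and only checks $\texttt{subseq}\,s_1'\,s_2'$, offering no option to skip $y$, whereas a mask is free to skip $y$ by taking $b = \texttt{false}$. Thus in the case $x = y$ with $b = \texttt{false}$ the hypothesis realizes $x :: s_1'$ as a subsequence of $s_2'$, while the goal demands $\texttt{subseq}\,s_1'\,s_2'$; these must be reconciled even though the mask consumes one more element than $\texttt{subseq}$ does.

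I would resolve this either by proving the small monotonicity lemma $\texttt{subseq}\,(x :: s_1)\,s_2 \Rightarrow \texttt{subseq}\,s_1\,s_2$ (dropping the head of the smaller list) by a separate induction on $s_2$, or, equivalently and staying entirely within the mask picture, by an explicit mask surgery: flipping the first $\texttt{true}$ bit of $m'$ to $\texttt{false}$ turns a length-$\lvert s_2'\rvert$ mask for $x :: s_1'$ into one for $s_1'$. Either device removes the spurious extra possibility and makes the two sides coincide. The remaining cases are routine: when $x \neq y$, or when $b = \texttt{true}$, the mask equation and the unfolding of $\texttt{subseq}$ line up directly, and the base case together with the $s_1 = []$ case follow from the facts that the empty (or all-false) mask yields $[]$ and that $\texttt{subseq}\,[]\,s$ always holds.
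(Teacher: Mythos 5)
The paper does not actually prove this lemma: \cref{lemma:coq:subseqP} appears in \cref{appx:basic-definitions-and-facts}, which is explicitly a list of facts imported from the \MC library, so there is no in-paper proof to compare against. Judged on its own, your argument is correct. The forward direction (building the mask by prepending \texttt{true} or \texttt{false} according to whether \texttt{subseq} consumed the head of $s_2$, with the all-\texttt{false} mask for $s_1 = []$) is routine and right. More importantly, you correctly isolated the one genuine subtlety in the converse: when the mask bit is \texttt{false} but the selected head $x$ happens to equal the head $y$ of $s_2$, the recursion of \texttt{subseq} has already committed to consuming $y$, so the induction hypothesis delivers $\texttt{subseq}\,(x :: s_1'')\,s_2'$ where the goal is $\texttt{subseq}\,s_1''\,s_2'$. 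Both of your repairs work: the monotonicity lemma $\texttt{subseq}\,(x :: s)\,t \Rightarrow \texttt{subseq}\,s\,t$ does go through by induction on $t$ with $x$ and $s$ generalized (note that its proof needs the induction hypothesis applied twice in some branches, once to strip $x$ and once more when the new heads again coincide), and the mask surgery (flipping the first \texttt{true} bit, which must exist since $\texttt{mask}_{m'}\,s_2'$ is nonempty) keeps you inside the reflect-style induction. Either way the proof closes; no gap.
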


\begin{lemma}
\label{lemma:coq:mask_subseq}
For any $m$ and $s$ of respectively types $\mathrm{list} \, \mathrm{bool}$ and $\mathrm{list} \, T$,
$\texttt{mask}_m \, s$ is a subsequence of $s$.
\begin{coqcode}
Lemma mask_subseq (T : eqType) (m : list bool) (s : list T) :
  subseq (mask m s) s.
\end{coqcode}
\end{lemma}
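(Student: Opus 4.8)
The plan is to prove $\texttt{subseq} \, (\texttt{mask}_m \, s) \, s$ by structural induction, peeling off the heads of $m$ and $s$ in lockstep and reducing each case to the defining equations of \cref{def:coq:mask,def:coq:subseq}. I would induct on $s$ with $m$ universally quantified (equivalently on $m$ with $s$ quantified, since both functions truncate at the shorter argument). The degenerate cases are immediate: when $s = []$ or $m = []$ the mask is $[]$, and $\texttt{subseq} \, [] \, s$ is \texttt{true} by definition.

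The interesting case is $s = x :: s'$ with $m = b :: m'$, which splits on the boolean $b$. If $b = \texttt{true}$, then $\texttt{mask}_m \, s = x :: \texttt{mask}_{m'} \, s'$, and unfolding $\texttt{subseq} \, (x :: \texttt{mask}_{m'} \, s') \, (x :: s')$ compares the two heads, which are literally the same element $x$; since $x = x$ holds (this is where the \texttt{eqType} assumption is used), the goal reduces to $\texttt{subseq} \, (\texttt{mask}_{m'} \, s') \, s'$, which is exactly the induction hypothesis. If $b = \texttt{false}$, then $\texttt{mask}_m \, s = \texttt{mask}_{m'} \, s'$ and the goal becomes $\texttt{subseq} \, (\texttt{mask}_{m'} \, s') \, (x :: s')$, whereas the induction hypothesis only supplies $\texttt{subseq} \, (\texttt{mask}_{m'} \, s') \, s'$.

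The main obstacle is therefore this \texttt{false}/skip case: I need a weakening property of \texttt{subseq}, namely that prepending an arbitrary element $x$ to the larger list preserves the subsequence relation, i.e. $\texttt{subseq} \, a \, b \Rightarrow \texttt{subseq} \, a \, (x :: b)$. I would establish this as a short sub-lemma from \cref{lemma:coq:subseqP}: if $\texttt{subseq} \, a \, b$ holds, there is a witness $m_0$ with $\lvert m_0 \rvert = \lvert b \rvert$ and $a = \texttt{mask}_{m_0} \, b$; then $\texttt{false} :: m_0$ is a witness of length $\lvert x :: b \rvert$, and $\texttt{mask}_{(\texttt{false} :: m_0)} \, (x :: b) = \texttt{mask}_{m_0} \, b = a$ holds by the defining equation of \texttt{mask} on a \texttt{false} head, so \cref{lemma:coq:subseqP} again yields $\texttt{subseq} \, a \, (x :: b)$. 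Applying this weakening to the induction hypothesis closes the \texttt{false} case, and the induction is complete.

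As an alternative route, the whole statement can be discharged by a single appeal to \cref{lemma:coq:subseqP}, exhibiting a length-matched mask: one resizes $m$ to some $m'$ with $\lvert m' \rvert = \lvert s \rvert$ (truncating when $m$ is too long, padding with \texttt{false} otherwise) and checks $\texttt{mask}_{m'} \, s = \texttt{mask}_m \, s$. I expect this to require essentially the same inductive effort to prove the mask-invariance-under-resizing equation, so I would prefer the direct induction above, which keeps the bookkeeping minimal and isolates the only non-trivial ingredient into the one-line weakening sub-lemma.
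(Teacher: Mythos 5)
Your proof is correct. Note, however, that the paper does not prove this statement at all: \cref{lemma:coq:mask_subseq} sits in \cref{appx:basic-definitions-and-facts}, which is an inventory of facts imported from the \MC library, so there is no in-paper proof to compare against. Judged on its own terms, your induction is sound: the \texttt{true} branch reduces by the definition of \texttt{subseq} (using $x = x$, which is where \texttt{eqType} enters) exactly to the induction hypothesis, and you correctly identify that the \texttt{false} branch does \emph{not} reduce to the hypothesis directly --- unfolding $\texttt{subseq}\,a\,(x :: b)$ compares the head of $a$ with $x$ and may or may not consume it --- so the weakening fact $\texttt{subseq}\,a\,b \Rightarrow \texttt{subseq}\,a\,(x :: b)$ is genuinely needed, and your derivation of it from \cref{lemma:coq:subseqP} with the witness $\texttt{false} :: m_0$ is valid and non-circular (\cref{lemma:coq:subseqP} is itself a library fact proved independently of this lemma). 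For what it is worth, the route you sketch as an ``alternative'' --- a single appeal to \cref{lemma:coq:subseqP} with a mask resized to length $\lvert s\rvert$ --- is essentially how the \MC library actually discharges this lemma; your direct induction is a perfectly good elementary substitute that trades the resizing bookkeeping for one small weakening sub-lemma.
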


\begin{definition}
\label{def:coq:index}
For any $x$ of type $T$ and $s$ of type $\mathrm{list} \, T$, $\texttt{index}_x \, s$ is the index
of the first occurrence of $x$ in $s$.
\begin{coqcode}
Definition index (T : eqType) (x : T) : list T -> nat :=
  fix find (s : list T) {struct s} : nat :=
    if s is y :: s' then
      if y == x then 0 else S (find s')
    else
      0.
\end{coqcode}
\end{definition}

\begin{definition}
\label{def:coq:mem2}
For any $s$ of type $\mathrm{list} \, T$, and $x$ and $y$ of type $T$, we say that $x$ and $y$
occur in order in $s$, or write $\texttt{mem2} \, s \, x \, y$, when $y$ occurs in $s$
(non-strictly) after the first occurrence of $x$.
Here, ``non-strictly'' means that $\texttt{mem2} \, s \, x \ x$ holds even if $x$ occurs in $s$ only
once.
\begin{coqcode}
Definition mem2 (T : eqType) (s : list T) (x y : T) : bool :=
  y \in drop (index x s) s.
\end{coqcode}
\end{definition}

\begin{lemma}
\label{lemma:coq:mem2E}
For any $s$ of type $\mathrm{list} \, T$, and $x$ and $y$ of type $T$,
$x$ and $y$ occur in order in $s$ iff the following $\mathit{xy}$ is a subsequence of $s$:
\[
 \mathit{xy} \coloneq
 \begin{cases}
  [x]     & (x = y) \\
  [x, y]. & (\text{otherwise})
 \end{cases}
\]
\begin{coqcode}
Lemma mem2E (T : eqType) (s : list T) (x y : T) :
  mem2 s x y = subseq (if x == y then [:: x] else [:: x; y]) s.
\end{coqcode}
\end{lemma}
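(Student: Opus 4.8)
The plan is to prove the equation by structural induction on the list $s$, unfolding the recursive definitions of $\texttt{index}$ (\cref{def:coq:index}), $\texttt{drop}$ (\cref{def:coq:drop}), $\texttt{subseq}$ (\cref{def:coq:subseq}), and list membership, so that after a couple of case splits the goal reduces to purely Boolean reasoning. For the base case $s = []$ both sides are plainly $\texttt{false}$: on the left, $\texttt{drop} \, 0 \, []$ is empty and contains no element, and on the right every candidate subsequence ($[x]$ or $[x, y]$) is nonempty, so matching it against $[]$ returns $\texttt{false}$.

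For the inductive step $s = z :: s'$, I would first split on whether the head $z$ equals $x$, mirroring the branch in the definition of $\texttt{index}_x$. When $z = x$, we have $\texttt{index}_x \, (z :: s') = 0$, so the left-hand side collapses to $y \in (z :: s')$; on the right the leading $x$ of the pattern is consumed against the head, leaving $\texttt{subseq} \, [] \, s' = \texttt{true}$ in the $x = y$ branch and $\texttt{subseq} \, [y] \, s' = (y \in s')$ in the $x \neq y$ branch. A secondary case split on $x = y$ then closes this case: the only nontrivial point is the $x \neq y$ branch, where one uses $z = x$ together with $x \neq y$ to derive $y \neq z$, and hence $y \in (z :: s') = (y \in s')$, matching the right-hand side.

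When $z \neq x$, we have $\texttt{index}_x \, (z :: s')$ equal to the successor of $\texttt{index}_x \, s'$, so $\texttt{drop}$ peels off the head and the left-hand side equals $\texttt{mem2} \, s' \, x \, y$ (\cref{def:coq:mem2}); correspondingly, since $z \neq x$ the leading $x$ of the subsequence pattern is not consumed, giving $\texttt{subseq} \, (\dots) \, (z :: s') = \texttt{subseq} \, (\dots) \, s'$. The induction hypothesis applied to $s'$ then finishes both the $x = y$ and $x \neq y$ branches simultaneously, since the outer conditional on $x = y$ is exactly the one appearing in the hypothesis.

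I expect the only real bookkeeping---and thus the main obstacle---to be the nested case analysis in the $z = x$ case, specifically the $x \neq y$ subcase, where the membership test $y \in (z :: s')$ must be simplified to $y \in s'$ via the derived fact $y \neq z$. Everything else is mechanical unfolding of the definitions and the two uses of the induction hypothesis, so no auxiliary lemmas (such as \cref{lemma:coq:subseqP}) should be needed.
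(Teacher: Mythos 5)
Your induction goes through and the case analysis is the right one; note, though, that the paper does not prove this lemma at all---\cref{lemma:coq:mem2E} is listed in \cref{appx:basic-definitions-and-facts} as a background fact imported from the \MC library, so there is no ``paper proof'' to match, and a self-contained elementary argument like yours is a perfectly reasonable substitute. I checked each branch: the base case, the $z \neq x$ case (where both sides reduce verbatim to the induction hypothesis, since $\texttt{index}$ increments and $\texttt{drop}$ peels the head on the left while $\texttt{subseq}$ skips the head on the right), and the $z = x$ case all work as you describe. The one place where your claim that ``everything else is mechanical unfolding'' oversells slightly is the step $\texttt{subseq} \, [y] \, s' = (y \in s')$ in the $z = x$, $x \neq y$ subcase: this does not follow by unfolding alone, because both sides are defined by recursion on $s'$ and their equality needs its own (one-line) induction on $s'$---it is the standard \MC lemma \texttt{sub1seq}. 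So you do need exactly one auxiliary fact, albeit a trivial one; alternatively you could obtain it from your own induction hypothesis instantiated at the pair $(y, y)$, but that costs the additional observation that $\texttt{mem2} \, s' \, y \, y = (y \in s')$, which is no cheaper. With that small caveat the proof is complete and correct.
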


\subsection{Permutation}

\begin{definition}[Permutation relation]
\label{def:coq:perm_eq}
\label{def:coq:perm_eql}
Two lists $s_1$ and $s_2$ are \emph{permutation} of each other iff
$\lvert s_1 \rvert_{\{x\}} = \lvert s_2 \rvert_{\{x\}}$ for any element $x$ of $s_1$ or $s_2$, then
we write $s_1 \permeq s_2$.
\begin{coqcode}
Definition perm_eq (T : eqType) (s1 s2 : list T) : bool :=
  all [pred x | count (pred1 x) s1 == count (pred1 x) s2] (s1 ++ s2).
\end{coqcode}
In order to use the fact $s_1 \permeq s_2$ to rewrite a goal of the form $s_1 \permeq s_3$ to
$s_2 \permeq s_3$, some lemmas, \eg, \cref{lemma:coq:perm_catC,lemma:coq:perm_rev}, are stated in
the form of $\forall s_3, (s_1 \permeq s_3) = (s_2 \permeq s_3)$, using the following notation (see
also \cref{lemma:coq:permPl}).
\begin{coqcode}
Notation perm_eql s1 s2 := (perm_eq s1 =1 perm_eq s2).
\end{coqcode}
\end{definition}

\begin{lemma}
\label{lemma:coq:permP}
For any lists $s_1$ and $s_2$ of type $\mathrm{list} \, T$, $s_1 \permeq s_2$ holds iff
$\lvert s_1 \rvert_p = \lvert s_2 \rvert_p$ holds for any predicate $p \subseteq T$.
\begin{coqcode}
Lemma permP (T : eqType) (s1 s2 : list T) :
  reflect (forall p : pred T, count p s1 = count p s2) (perm_eq s1 s2).
\end{coqcode}
\end{lemma}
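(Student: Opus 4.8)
The statement is a \coq{reflect} lemma, so the plan is to establish a logical equivalence between the boolean \coq{perm_eq s1 s2} and the proposition $\forall p, \lvert s_1 \rvert_p = \lvert s_2 \rvert_p$. First I would unfold \coq{perm_eq} using its definition (\cref{def:coq:perm_eq}), which reads $\texttt{all}_q \, (s_1 \concat s_2)$ with $q \, x \coloneq (\lvert s_1 \rvert_{\{x\}} = \lvert s_2 \rvert_{\{x\}})$, and then apply the reflection combinator \coq{iffP} together with the standard view \coq{allP} characterizing \coq{all} (\cref{def:coq:all}). This reduces the goal to the equivalence of two propositions: (A) for every $x$ occurring in $s_1 \concat s_2$, $\lvert s_1 \rvert_{\{x\}} = \lvert s_2 \rvert_{\{x\}}$; and (B) for every predicate $p$, $\lvert s_1 \rvert_p = \lvert s_2 \rvert_p$.

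The forward direction $(B) \Rightarrow (A)$ is immediate: given $(B)$, specialize $p$ to the singleton predicate $\{x\}$ (that is, \coq{pred1 x}) to obtain exactly the required equality, converting between boolean and propositional equality with \coq{eqP}.

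The backward direction $(A) \Rightarrow (B)$ is the substance. Fix a predicate $p$; I would argue by strong induction on the natural number $m \coloneq \lvert s_1 \concat s_2 \rvert_p$. If $m = 0$, then no element of $s_1 \concat s_2$ satisfies $p$, so by \cref{lemma:coq:count_cat} the equation $\lvert s_1 \rvert_p + \lvert s_2 \rvert_p = 0$ forces both summands to vanish and hence agree. If $m > 0$, some element $x$ of $s_1 \concat s_2$ satisfies $p$; I would introduce the refined predicate $p' \, y \coloneq p \, y \land y \neq x$ and use the pointwise decomposition $\lvert s \rvert_p = \lvert s \rvert_{\{x\}} + \lvert s \rvert_{p'}$, valid for every list $s$ precisely because $p \, x$ holds, so that the occurrence class $\{x\}$ lies entirely inside $p$. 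Instantiating this at $s_1$ and $s_2$, hypothesis $(A)$ gives $\lvert s_1 \rvert_{\{x\}} = \lvert s_2 \rvert_{\{x\}}$, and the induction hypothesis applied to $p'$ gives $\lvert s_1 \rvert_{p'} = \lvert s_2 \rvert_{p'}$; summing yields $(B)$ for $p$.

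The main obstacle is the step case, specifically justifying that the induction measure strictly decreases. Using \cref{lemma:coq:count_cat} one rewrites $\lvert s_1 \concat s_2 \rvert_{p'} = \lvert s_1 \concat s_2 \rvert_p - \lvert s_1 \concat s_2 \rvert_{\{x\}}$, so I must check $\lvert s_1 \concat s_2 \rvert_{\{x\}} \geq 1$, which follows from $x$ occurring in $s_1 \concat s_2$. The only other mild care is the pointwise count decomposition, which hinges on the chosen $x$ satisfying $p$; everything else is routine arithmetic together with \cref{lemma:coq:count_cat}.
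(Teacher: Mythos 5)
Your proposal is correct. Note, however, that the paper itself gives no proof of this statement: it appears in \cref{appx:basic-definitions-and-facts} only as an imported fact from the \MC library, so there is nothing in the paper to compare against. Your argument --- reflecting through the \coq{all} in the definition of \coq{perm_eq}, specializing to \coq{pred1 x} for the easy direction, and for the converse inducting on $\lvert s_1 \concat s_2 \rvert_p$ while splitting $p$ into $\{x\}$ and $p' \, y \coloneq p \, y \land y \neq x$ for a witness $x$ with $p\,x$ --- is sound (the count decomposition holds pointwise because $p\,x$ holds, and the measure strictly drops since $x \in s_1 \concat s_2$ forces $\lvert s_1 \concat s_2 \rvert_{\{x\}} \geq 1$), and it is in fact essentially the proof found in the \MC source.
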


\begin{lemma}
\label{lemma:coq:permPl}
For any lists $s_1$ and $s_2$, $s_1 \permeq s_2$ holds iff
$(s_1 \permeq s_3) = (s_2 \permeq s_3)$ holds for any list $s_3$.
\begin{coqcode}
Lemma permPl (T : eqType) (s1 s2 : list T) :
  reflect (perm_eql s1 s2) (perm_eq s1 s2).
\end{coqcode}
\end{lemma}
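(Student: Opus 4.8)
The plan is to unfold the \coq{reflect} predicate into its two underlying implications and dispatch each using \cref{lemma:coq:permP}, which characterizes $\permeq$ by equality of counts. Concretely, I would reduce the goal \coq{reflect (perm_eql s1 s2) (perm_eq s1 s2)} by \coq{apply: (iffP idP)}, leaving the forward implication $s_1 \permeq s_2 \Rightarrow \forall s_3, (s_1 \permeq s_3) = (s_2 \permeq s_3)$ and the backward implication $\left(\forall s_3, (s_1 \permeq s_3) = (s_2 \permeq s_3)\right) \Rightarrow s_1 \permeq s_2$.

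For the forward direction, I would assume $s_1 \permeq s_2$, which by \cref{lemma:coq:permP} yields $\lvert s_1 \rvert_p = \lvert s_2 \rvert_p$ for every predicate $p \subseteq T$. Fixing an arbitrary $s_3$, both booleans $s_1 \permeq s_3$ and $s_2 \permeq s_3$ unfold---again through \cref{lemma:coq:permP}---to the count-equalities $\forall p, \lvert s_1 \rvert_p = \lvert s_3 \rvert_p$ and $\forall p, \lvert s_2 \rvert_p = \lvert s_3 \rvert_p$. Since the counts of $s_1$ and $s_2$ coincide on every $p$, these two propositions are logically equivalent, and because they are the decidable reflections of the two booleans in question, the booleans are themselves equal.

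The backward direction is the slicker half: assuming $\forall s_3, (s_1 \permeq s_3) = (s_2 \permeq s_3)$, I would instantiate $s_3 \coloneq s_2$, obtaining $(s_1 \permeq s_2) = (s_2 \permeq s_2)$. Reflexivity of $\permeq$---an immediate consequence of \cref{lemma:coq:permP}, since $\lvert s_2 \rvert_p = \lvert s_2 \rvert_p$ holds trivially---makes the right-hand side \coq{true}, so $s_1 \permeq s_2$ holds.

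There is no real obstacle here; the lemma is essentially the observation that $\permeq$ is a reflexive relation whose ``is-permutation-of'' profile determines it. The only points requiring minor care are the boolean-versus-\coq{Prop} bookkeeping intrinsic to \coq{reflect} (converting a logical equivalence of two boolean statements into an equality of booleans, which is legitimate precisely because $\permeq$ returns a \coq{bool}), and remembering to instantiate the universally quantified hypothesis at the convenient witness $s_2$ rather than $s_1$, so that reflexivity alone closes the backward goal without needing to invoke symmetry of $\permeq$.
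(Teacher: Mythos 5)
Your proof is correct. Note that the paper itself gives no proof of \cref{lemma:coq:permPl}: it appears in \cref{appx:basic-definitions-and-facts} only as a restatement of a fact imported from the \MC library, so there is no in-paper argument to compare against. Your argument is the standard one and is sound: the forward direction transports the universally quantified count equality from \cref{lemma:coq:permP} across both sides of $(s_1 \permeq s_3) = (s_2 \permeq s_3)$ and converts the resulting logical equivalence of two booleans into an equality, and the backward direction instantiates the hypothesis at $s_3 \coloneq s_2$ and closes with reflexivity of $\permeq$ (itself immediate from \cref{lemma:coq:permP}). Your parenthetical remarks about the \coq{reflect}/boolean bookkeeping and about choosing the witness $s_2$ rather than $s_1$ are exactly the right points of care; nothing is missing.
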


\begin{lemma}
\label{lemma:coq:perm_mem}
For any lists $s_1$ and $s_2$ of type $\mathrm{list} \, T$ such that $s_1 \permeq s_2$ and $x$ of
type $T$, $x \in s_1$ iff $x \in s_2$.
\begin{coqcode}
Lemma perm_mem (T : eqType) (s1 s2 : list T) : perm_eq s1 s2 -> s1 =i s2.
\end{coqcode}
where \coq{s1 =i s2 := (forall x, x \in s1 = x \in s2)} means that \coq{s1} and \coq{s2} have the
same set of elements.
\end{lemma}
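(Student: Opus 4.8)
The plan is to reduce the membership statement to a statement about counts, for which the permutation hypothesis supplies exactly what is needed. The key observation is that, for a single element $x$, membership is governed by the count of $x$: namely $x \in s$ holds if and only if $\lvert s \rvert_{\{x\}} > 0$, since the number of occurrences of $x$ is positive precisely when $x$ occurs at all. It therefore suffices to transport this positivity condition along the permutation.

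First I would apply the reflection lemma \cref{lemma:coq:permP} to the hypothesis $s_1 \permeq s_2$, obtaining $\lvert s_1 \rvert_p = \lvert s_2 \rvert_p$ for every predicate $p \subseteq T$. Fixing the target element $x$, I would instantiate this equality with the singleton predicate $\{x\}$ (the predicate $\lambda y.\, y = x$), yielding $\lvert s_1 \rvert_{\{x\}} = \lvert s_2 \rvert_{\{x\}}$.

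Finally, I would rewrite both sides of the goal $\left(x \in s_1\right) = \left(x \in s_2\right)$ using the membership--count equivalence, turning the goal into $\left(0 < \lvert s_1 \rvert_{\{x\}}\right) = \left(0 < \lvert s_2 \rvert_{\{x\}}\right)$, which follows immediately by congruence from the count equality just established. There is no real obstacle here: the argument is a two-line reduction once \cref{lemma:coq:permP} is in hand. The only supporting ingredient not explicitly listed among the basic facts of the excerpt is the membership--count equivalence $\left(x \in s\right) = \left(0 < \lvert s \rvert_{\{x\}}\right)$, a standard fact about lists over an \coq{eqType} (combining \texttt{has\_count} with the identity $\mathrm{has}\,\{x\}\,s = (x \in s)$, i.e.\ \texttt{has\_pred1}, in \MC), which I would invoke directly.
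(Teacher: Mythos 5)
Your proof is correct. Note that the paper does not actually prove \cref{lemma:coq:perm_mem}: it is listed in \cref{appx:basic-definitions-and-facts} as a fact imported from the \MC library and used without proof. Your reduction---membership of $x$ equals positivity of $\lvert s \rvert_{\{x\}}$, combined with instantiating \cref{lemma:coq:permP} at the singleton predicate---is exactly the standard argument (and is essentially how the \MC library itself proves \texttt{perm\_mem}), so there is nothing to object to.
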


\begin{lemma}
\label{lemma:coq:perm_uniq}
For any lists $s_1$ and $s_2$ such that $s_1 \permeq s_2$, $s_1$ is duplication free iff $s_2$ is
duplication free.
\begin{coqcode}
Lemma perm_uniq (T : eqType) (s1 s2 : list T) :
  perm_eq s1 s2 -> uniq s1 = uniq s2.
\end{coqcode}
\end{lemma}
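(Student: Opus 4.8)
The plan is to exploit the fact that whether a list is duplication free depends only on its multiset of elements, which is exactly the data preserved by $\permeq$. Concretely, I would first establish the count characterization of uniqueness,
\[
 \texttt{uniq} \, s \iff \forall x : T, \lvert s \rvert_{\{x\}} \leq 1,
\]
that is, a list is duplication free precisely when every value occurs at most once in it (here $\lvert s \rvert_{\{x\}}$ denotes $\texttt{count} \, (\texttt{pred1} \, x) \, s$, as in \cref{def:coq:perm_eq}). Granting this, the lemma is immediate: by \cref{lemma:coq:permP}, $s_1 \permeq s_2$ yields $\lvert s_1 \rvert_p = \lvert s_2 \rvert_p$ for every predicate $p$, and instantiating $p \coloneq \texttt{pred1} \, x$ gives $\lvert s_1 \rvert_{\{x\}} = \lvert s_2 \rvert_{\{x\}}$ for every $x$. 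Hence the right-hand side of the characterization holds for $s_1$ iff it holds for $s_2$, and therefore $\texttt{uniq} \, s_1 = \texttt{uniq} \, s_2$.

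To prove the characterization I would proceed by structural induction on $s$ (\cref{def:coq:uniq}). The empty case is trivial since both sides hold. For $s = y :: s'$ we have $\texttt{uniq} \, (y :: s') = (y \notin s') \land \texttt{uniq} \, s'$, and the key arithmetic observation is
\[
 \lvert y :: s' \rvert_{\{x\}} = (y = x) + \lvert s' \rvert_{\{x\}},
\]
where $(y = x)$ is interpreted as $1$ or $0$. Taking $x \coloneq y$ shows that $y \notin s'$ is equivalent to $\lvert s' \rvert_{\{y\}} = 0$, i.e.\ to $\lvert y :: s' \rvert_{\{y\}} \leq 1$; for $x \neq y$ the head contributes nothing, so $\lvert y :: s' \rvert_{\{x\}} = \lvert s' \rvert_{\{x\}}$ and the bound for $x$ is governed entirely by $s'$. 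Combining these two cases with the induction hypothesis applied to $s'$ gives the equivalence for $y :: s'$.

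Since $\texttt{uniq} \, s_1$ and $\texttt{uniq} \, s_2$ are booleans, I would finish by proving the two implications separately (reflecting the boolean equality into a bi-implication) and discharging each using the characterization together with the transported count equalities. The main obstacle is the characterization step rather than its application: one has to juggle the boolean predicate $\texttt{pred1}$, the membership test $y \notin s'$, and the numeric bound $\leq 1$ simultaneously, and to manage the reflection between the decidable boolean $\texttt{uniq}$ and the universally quantified count condition. Everything after that is a direct consequence of \cref{lemma:coq:permP}.
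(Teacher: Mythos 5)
Your proof is correct. Note, however, that the paper does not actually prove this statement: \cref{lemma:coq:perm_uniq} sits in \cref{appx:basic-definitions-and-facts}, which is explicitly a list of background definitions and lemmas imported from the \MC library, all stated without proof. So there is no in-paper argument to compare against; the only question is whether your blind argument stands on its own, and it does. The counting characterization $\texttt{uniq}\, s \iff \forall x,\ \lvert s \rvert_{\{x\}} \leq 1$ is exactly the right invariant: your induction on $s$ goes through (the cons case correctly splits on $x = y$ versus $x \neq y$, using that $y \notin s'$ is equivalent to $\lvert s' \rvert_{\{y\}} = 0$), and the transfer along $\permeq$ is immediate from \cref{lemma:coq:permP} instantiated at $p \coloneq \texttt{pred1}\,x$ — indeed, given the definition of $\permeq$ in \cref{def:coq:perm_eq}, you could even bypass \cref{lemma:coq:permP} and read off the needed count equalities directly, handling elements outside $s_1 \concat s_2$ by noting both counts are zero. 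The one small piece of bookkeeping you gesture at but should be explicit about is the reflection lemma connecting the boolean membership test $y \notin s'$ with the arithmetic statement $\lvert s' \rvert_{\{y\}} = 0$; this is itself a small induction (or a library fact such as \MC's \coq{count_memPn}), but it is true and unproblematic.
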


\begin{lemma}
\label{lemma:coq:perm_cat}
The permutation relation $\permeq$ is a congruence relation with respect to concatenation $\concat$;
that is, $s_1 \permeq s_2$ and $t_1 \permeq t_2$ imply $s_1 \concat t_1 \permeq s_2 \concat t_2$
for any lists $s_1$, $s_2$, $t_1$, and $t_2$.
\begin{coqcode}
Lemma perm_cat (T : eqType) (s1 s2 t1 t2 : list T) :
  perm_eq s1 s2 -> perm_eq t1 t2 -> perm_eq (s1 ++ t1) (s2 ++ t2).
\end{coqcode}
\end{lemma}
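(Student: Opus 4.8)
The plan is to reduce the statement to a purely arithmetic fact about the counting function by means of the characterization of $\permeq$ given in \cref{lemma:coq:permP}. That lemma tells us that $s_1 \permeq s_2$ holds if and only if $\lvert s_1 \rvert_p = \lvert s_2 \rvert_p$ for every predicate $p \subseteq T$, so I will work entirely in terms of counts rather than unfolding the Boolean definition of \coq{perm_eq} directly.

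First, I would apply the reflection view of \cref{lemma:coq:permP} to the goal $s_1 \concat t_1 \permeq s_2 \concat t_2$, turning it into the claim that $\lvert s_1 \concat t_1 \rvert_p = \lvert s_2 \concat t_2 \rvert_p$ holds for all $p$, and then fix an arbitrary predicate $p$. Next, I would apply the same characterization to the two hypotheses $s_1 \permeq s_2$ and $t_1 \permeq t_2$, extracting from them $\lvert s_1 \rvert_p = \lvert s_2 \rvert_p$ and $\lvert t_1 \rvert_p = \lvert t_2 \rvert_p$, instantiated at the very predicate $p$ introduced for the goal.

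Then the computation is immediate: using the additivity of count over concatenation (\cref{lemma:coq:count_cat}), I rewrite both sides of the goal as
\[
 \lvert s_1 \concat t_1 \rvert_p = \lvert s_1 \rvert_p + \lvert t_1 \rvert_p
 \quad\text{and}\quad
 \lvert s_2 \concat t_2 \rvert_p = \lvert s_2 \rvert_p + \lvert t_2 \rvert_p,
\]
and substituting the two count equalities coming from the hypotheses makes the two right-hand sides syntactically equal, closing the goal.

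There is essentially no hard part here; the only mild subtlety is the bookkeeping around the reflection lemma \cref{lemma:coq:permP}, which must be used in one direction to destruct the two hypotheses into count equalities and in the other direction to reassemble the conclusion from a single count equality. Because the characterization quantifies over \emph{all} predicates, the one point requiring care is to instantiate the hypotheses at the same $p$ that was introduced for the goal; once that alignment is made, the result follows by pure arithmetic.
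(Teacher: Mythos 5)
Your proof is correct. Note that the paper does not actually prove this lemma: it appears in \cref{appx:basic-definitions-and-facts} only as a background fact imported from the \MC library. Your argument --- reflecting both hypotheses and the goal through the count characterization of \cref{lemma:coq:permP}, instantiating at a common predicate $p$, and closing with the additivity of counting over $\concat$ from \cref{lemma:coq:count_cat} --- is exactly the standard library proof, and every step is justified by lemmas the paper already states.
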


\begin{lemma}
\label{lemma:coq:perm_catC}
For any lists $s_1$ and $s_2$, $s_1 \concat s_2$ is a permutation of $s_2 \concat s_1$.
\begin{coqcode}
Lemma perm_catC (T : eqType) (s1 s2 : list T) : perm_eql (s1 ++ s2) (s2 ++ s1).
\end{coqcode}
\end{lemma}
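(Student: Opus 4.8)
The plan is to unfold the \texttt{perm\_eql} notation and reduce the goal to an elementary statement about counts. By \cref{lemma:coq:permPl}, the relation $\texttt{perm\_eql} \, (s_1 \concat s_2) \, (s_2 \concat s_1)$ is equivalent to the plain permutation $(s_1 \concat s_2) \permeq (s_2 \concat s_1)$, so it suffices to establish the latter. This first move is what lets us work with the symmetric counting characterization of $\permeq$ rather than reasoning directly about the pointwise equality of two \texttt{perm\_eq} predicates.

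To prove the permutation itself, I would apply \cref{lemma:coq:permP}, which reduces $(s_1 \concat s_2) \permeq (s_2 \concat s_1)$ to showing that $\lvert s_1 \concat s_2 \rvert_p = \lvert s_2 \concat s_1 \rvert_p$ for every predicate $p \subseteq T$. Using \cref{lemma:coq:count_cat} on each side, these counts rewrite to $\lvert s_1 \rvert_p + \lvert s_2 \rvert_p$ and $\lvert s_2 \rvert_p + \lvert s_1 \rvert_p$ respectively, and the two sides are then equal by commutativity of addition on the natural numbers.

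The argument is almost entirely routine, and I do not expect any genuine obstacle: the content of the lemma is simply that concatenation commutes up to reordering, which on the level of multiplicities amounts to the commutativity of $+$. The only step demanding a moment's care is choosing to route through \cref{lemma:coq:permP} (the count-based characterization) rather than attempting a direct structural induction on $s_1$, which would force an awkward lemma about permuting a single element past $s_2$; the count-based approach sidesteps this entirely.
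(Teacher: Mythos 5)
Your proof is correct: reducing \texttt{perm\_eql} to \texttt{perm\_eq} via \cref{lemma:coq:permPl}, then to count equality via \cref{lemma:coq:permP}, and finishing with \cref{lemma:coq:count_cat} and commutativity of addition is exactly the right argument. Note that the paper gives no proof of this lemma at all—it is listed in \cref{appx:basic-definitions-and-facts} as an imported fact from the \MC library—but your count-based route is essentially how the library itself establishes it, and it correctly avoids the awkward element-transposition lemma that a direct induction on $s_1$ would require.
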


\begin{lemma}
\label{lemma:coq:perm_rev}
For any list $s$, $\texttt{rev} \, s$ is a permutation of $s$.
\begin{coqcode}
Lemma perm_rev (T : eqType) (s : list T) : perm_eql (rev s) s.
\end{coqcode}
\end{lemma}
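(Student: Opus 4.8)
The plan is to avoid a direct induction on $s$ and instead route through the counting characterization of the permutation relation. The goal is stated in the $\texttt{perm\_eql}$ form, namely $\forall s_3, (\texttt{rev} \, s \permeq s_3) = (s \permeq s_3)$, but by \cref{lemma:coq:permPl} this is logically equivalent to the single instance $\texttt{rev} \, s \permeq s$. So the first step is to apply the reflection lemma \cref{lemma:coq:permPl} to reduce the goal to proving $\texttt{rev} \, s \permeq s$.

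Next, I would unfold the permutation relation through its extensional characterization. By \cref{lemma:coq:permP}, $\texttt{rev} \, s \permeq s$ holds if and only if $\lvert \texttt{rev} \, s \rvert_p = \lvert s \rvert_p$ for every predicate $p \subseteq T$. Applying this reflection view turns the goal into the universally quantified equation on counts. This equation is then exactly \cref{lemma:coq:count_rev}, so discharging it closes the proof.

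I do not expect any real obstacle here: the lemma is routine once one commits to the count-based characterization rather than an inductive argument. The only point to get right is the bookkeeping of the $\texttt{perm\_eql}$ notation and its reduction to $\permeq$ via \cref{lemma:coq:permPl}. By contrast, attempting a direct structural induction on $s$ would force one to reason about $\texttt{rev} \, (x :: s') = \texttt{rev} \, s' \concat [x]$ together with the permutation-congruence and rotation facts (\cref{lemma:coq:perm_cat,lemma:coq:perm_catC}), which is strictly more work than simply appealing to \cref{lemma:coq:count_rev}.
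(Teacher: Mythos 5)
Your proof is correct: reducing the $\texttt{perm\_eql}$ goal to $\texttt{rev}\,s \permeq s$ via \cref{lemma:coq:permPl}, then to the universally quantified count equation via \cref{lemma:coq:permP}, and closing with \cref{lemma:coq:count_rev} is exactly the standard argument, and all three ingredients are independently available so there is no circularity. Note that the paper itself gives no proof of this lemma---it is listed in \cref{appx:basic-definitions-and-facts} as a fact imported from the \MC library---but your count-based route coincides with how that library proves it, and is indeed lighter than a direct structural induction through $\texttt{rev}\,(x :: s') = \texttt{rev}\,s' \concat [x]$ and the congruence and rotation lemmas.
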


\subsection{Sortedness}

\begin{definition}[Pairwise sortedness, \cref{def:sortedness}]
\label{definition:coq:pairwise}
Given a relation $R \subseteq T \times T$, a list $s \coloneq [x_0, \dots, x_n]$ of type
$\mathrm{list} \, T$ is said to be \emph{pairwise sorted} \wrt $R$ if the relation $R$ holds any $x_i$
and $x_j$ such that $i < j \leq n$, \ie,
\[
 x_0 \mathrel{R} x_1 \land \dots \land x_0 \mathrel{R} x_n \land
 x_1 \mathrel{R} x_2 \land \dots \land x_1 \mathrel{R} x_n \land \dots \land
 x_{n - 1} \mathrel{R} x_n.
\]
\begin{coqcode}
Definition pairwise T (r : T -> T -> bool) : list T -> bool :=
  fix pairwise (xs : list T) {struct xs} : bool :=
    if xs is x :: xs0 then all (r x) xs0 && pairwise xs0 else true.
\end{coqcode}
\end{definition}

\begin{lemma}
\label{lemma:coq:pairwise_cat}
For any relation $R \subseteq T \times T$ and lists $s_1$ and $s_2$ of type $\mathrm{list} \, T$,
$s_1 \concat s_2$ is pairwise sorted \wrt $R$ iff the following conjunction holds:
\begin{itemize}
 \item $x \mathrel{R} y$ holds for any $x \in s_1$ and $y \in s_2$, and
 \item both $s_1$ and $s_2$ are pairwise sorted \wrt $R$.
\end{itemize}
\begin{coqcode}
Lemma pairwise_cat T (r : T -> T -> bool) (s1 s2 : list T) :
  pairwise r (s1 ++ s2) = allrel r s1 s2 && (pairwise r s1 && pairwise r s2).
\end{coqcode}
\end{lemma}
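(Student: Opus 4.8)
The plan is to prove the equation by structural induction on $s_1$, unfolding the definitions of $\concat$ (\cref{def:coq:cat}), $\texttt{pairwise}$ (\cref{definition:coq:pairwise}), and $\texttt{allrel}$ (\cref{def:coq:allrel}) at each step, and then reassociating the resulting Boolean conjunction. For the base case $s_1 = []$, we have $[] \concat s_2 = s_2$, so the left-hand side is $\texttt{pairwise}_R \, s_2$; on the right, $\texttt{allrel}_R \, [] \, s_2$ and $\texttt{pairwise}_R \, []$ both reduce to $\mathrm{true}$ by definition, again leaving $\texttt{pairwise}_R \, s_2$. Hence both sides agree.

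For the inductive step, write $s_1 = x :: s_1'$, with induction hypothesis $\texttt{pairwise}_R \, (s_1' \concat s_2) = \texttt{allrel}_R \, s_1' \, s_2 \land (\texttt{pairwise}_R \, s_1' \land \texttt{pairwise}_R \, s_2)$. Unfolding the left-hand side gives $\texttt{all}_{R x} \, (s_1' \concat s_2) \land \texttt{pairwise}_R \, (s_1' \concat s_2)$. I would rewrite the first conjunct using the standard fact that $\texttt{all}$ distributes over concatenation, $\texttt{all}_p \, (u \concat v) = \texttt{all}_p \, u \land \texttt{all}_p \, v$ (a one-line induction, analogous to \cref{lemma:coq:count_cat}), obtaining $\texttt{all}_{R x} \, s_1' \land \texttt{all}_{R x} \, s_2$, and apply the induction hypothesis to the second conjunct.

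Symmetrically, unfolding the right-hand side via the head-cons clauses of $\texttt{allrel}$ and $\texttt{pairwise}$ yields $\texttt{allrel}_R \, (x :: s_1') \, s_2 = \texttt{all}_{R x} \, s_2 \land \texttt{allrel}_R \, s_1' \, s_2$ and $\texttt{pairwise}_R \, (x :: s_1') = \texttt{all}_{R x} \, s_1' \land \texttt{pairwise}_R \, s_1'$. After these rewrites, both sides are conjunctions of exactly the same five Boolean atoms, namely $\texttt{all}_{R x} \, s_1'$, $\texttt{all}_{R x} \, s_2$, $\texttt{allrel}_R \, s_1' \, s_2$, $\texttt{pairwise}_R \, s_1'$, and $\texttt{pairwise}_R \, s_2$, differing only in bracketing and order.

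The goal then closes by the associativity and commutativity of Boolean conjunction (in \Coq, a tactic such as \coq{bool_congr}, or repeated rewriting with \coq{andbA} and \coq{andbC}). I expect this final reshuffling to be the only fiddly part, and it is entirely mechanical: there is no genuine mathematical obstacle, the whole content being the distribution of $\texttt{all}$ over $\concat$ together with the unfolding of the head-cons clauses of $\texttt{pairwise}$ and $\texttt{allrel}$.
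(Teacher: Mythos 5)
Your proof is correct: the induction on $s_1$, the distribution of $\texttt{all}$ over $\concat$, and the final Boolean AC reshuffling are exactly the standard argument (and match the actual \MC proof of \coq{pairwise_cat}). The paper itself gives no proof here --- it cites this as a library fact in \cref{appx:basic-definitions-and-facts} --- so there is nothing further to compare against.
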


\begin{definition}[Path and sortedness, \cref{def:sortedness}]
\label{definition:coq:path}
\label{definition:coq:sorted}
Given a relation $R \subseteq T \times T$, a list $s \coloneq [x_0, \dots, x_n]$ of type
$\mathrm{list} \, T$ is said to be \emph{sorted} if $R$ holds for each adjacent pair in $s$, \ie,
$x_0 \mathrel{R} x_1 \land \dots \land x_{n - 1} \mathrel{R} x_n$.
A cons cell $x :: s$ is said to be an $R$-path if it is sorted \wrt $R$.

In \Coq, the former is defined using the latter:
\begin{coqcode}
Definition path T (e : rel T) : T -> list T -> bool :=
  fix path (x : T) (p : list T) {struct p} : bool :=
    if p is y :: p' then e x y && path y p' else true.

Definition sorted T (e : rel T) (s : list T) : bool :=
  if s is x :: s' then path e x s' else true.
\end{coqcode}
\end{definition}


\begin{lemma}
\label{lemma:coq:sorted_pairwise}
The sortedness and the pairwise sortedness are equivalent for transitive relations.
\begin{coqcode}
Lemma sorted_pairwise T (r : rel T) :
  transitive r -> forall s : list T, sorted r s = pairwise r s.
\end{coqcode}
\end{lemma}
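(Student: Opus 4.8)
The plan is to reduce the equivalence to a characterization of the auxiliary predicate $\texttt{path}$ (\cref{definition:coq:path}) and then to conclude by structural induction. The key auxiliary fact is that, for a transitive relation $R$, an element $x$, and a list $s$, being an $R$-path decomposes into a head condition and sortedness of the tail:
\[
 \texttt{path} \, R \, x \, s = \texttt{all} \, (R \, x) \, s \land \texttt{sorted} \, R \, s,
\]
where $\texttt{all} \, (R \, x) \, s$ expresses that $x \mathrel{R} z$ holds for every $z \in s$.

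First I would prove this auxiliary characterization by induction on $s$. The empty case is immediate. For $s = y :: s'$, unfolding the definitions reduces the goal to
\[
 R \, x \, y \land \texttt{path} \, R \, y \, s'
 = R \, x \, y \land \texttt{all} \, (R \, x) \, s' \land \texttt{path} \, R \, y \, s'.
\]
The only nontrivial direction is to show that, assuming $R \, x \, y$ and $\texttt{path} \, R \, y \, s'$, every element of $s'$ is related to $x$. Applying the induction hypothesis to $\texttt{path} \, R \, y \, s'$ yields $\texttt{all} \, (R \, y) \, s'$, i.e.\ $y \mathrel{R} z$ for every $z \in s'$; since $R$ is transitive (\cref{def:coq:transitive}) and $x \mathrel{R} y$, we obtain $x \mathrel{R} z$ for every such $z$, which is exactly $\texttt{all} \, (R \, x) \, s'$. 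Formally this last step is an instance of \cref{lemma:coq:sub_all} with $p_1 \coloneq R \, y$ and $p_2 \coloneq R \, x$, the inclusion $p_1 \subseteq p_2$ being witnessed by transitivity.

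Given the characterization, the main statement follows by induction on $s$. The empty list is definitional, both sides reducing to $\texttt{true}$. For $s = x :: s'$, we have $\texttt{sorted} \, R \, (x :: s') = \texttt{path} \, R \, x \, s'$, which by the auxiliary characterization equals $\texttt{all} \, (R \, x) \, s' \land \texttt{sorted} \, R \, s'$; rewriting the second conjunct with the induction hypothesis gives $\texttt{all} \, (R \, x) \, s' \land \texttt{pairwise} \, R \, s'$, which is precisely $\texttt{pairwise} \, R \, (x :: s')$ by definition (\cref{definition:coq:pairwise}).

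The main obstacle is the transitivity-using step inside the auxiliary lemma: propagating the head relation $x \mathrel{R} y$ along the already-sorted tail to conclude that $x$ is related to all later elements. Everything else is unfolding and Boolean bookkeeping. Note in particular that the converse implication---pairwise sortedness implies sortedness---needs no transitivity at all, since sortedness only constrains adjacent pairs, which form a subset of the pairs constrained by pairwise sortedness.
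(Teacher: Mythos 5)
Your proof is correct. Note that the paper does not actually prove this lemma: it appears in \cref{appx:basic-definitions-and-facts}, which merely lists facts imported from the \MC library. Your argument is precisely the standard library proof: the auxiliary characterization $\texttt{path} \, R \, x \, s = \texttt{all} \, (R \, x) \, s \land \texttt{sorted} \, R \, s$ for transitive $R$ (the \MC lemma \coq{path_sortedE}, established by induction with \cref{lemma:coq:sub_all} and transitivity discharging the only nontrivial conjunct), followed by a one-line structural induction for the main equation; your closing observation that the direction from pairwise sortedness to sortedness needs no transitivity is also accurate.
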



\begin{lemma}
\label{lemma:coq:rev_sorted}
For any relation $R \subseteq T \times T$ and list $s$ of type $\mathrm{list} \, T$,
$\texttt{rev} \, s$ is sorted \wrt $R$ iff $s$ is sorted \wrt its converse relation $R^\mathrm{C}$.
\begin{coqcode}
Lemma rev_sorted T (r : rel T) (s : list T) :
  sorted r (rev s) = sorted (fun x y : T => r y x) s.
\end{coqcode}
\end{lemma}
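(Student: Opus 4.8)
The plan is to prove the equality of the two Boolean sortedness predicates by unfolding the definition of sortedness (\cref{definition:coq:sorted}) into the underlying \texttt{path} predicate and reasoning about how list reversal acts on the sequence of adjacent pairs. Conceptually, writing $s = [x_0, \dots, x_n]$, sortedness of $s$ \wrt $R$ asserts exactly $R(x_i, x_{i+1})$ for every $0 \le i < n$, whereas $\texttt{rev} \, s = [x_n, \dots, x_0]$ has adjacent pairs $(x_{i+1}, x_i)$, so sortedness of $\texttt{rev} \, s$ \wrt $R$ asserts $R(x_{i+1}, x_i)$ for every $i$. These are the same family of atomic assertions, one read as $R$ applied to the reversed-and-swapped pairs and the other as the converse relation $R^\mathrm{C}$ applied to the original pairs; hence the two predicates coincide.

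First I would dispose of the empty list, where both sides are trivially \texttt{true}. For a nonempty list $x :: p$, the right-hand side unfolds to $\texttt{path} \, R^\mathrm{C} \, x \, p$. The difficulty is that the definition of \texttt{path} (\cref{definition:coq:path}) recurses on the head of the list, while $\texttt{rev}$ rebuilds the list from the end, so the two are not directly aligned. To bridge them I would use the identity $\texttt{rev} \, (x :: p) = (\texttt{last} \, x \, p) :: \texttt{rev} \, (\texttt{belast} \, x \, p)$, a consequence of $\texttt{rev}$ distributing over a trailing cons (cf.\ \cref{lemma:coq:rev_cat}), which exposes the new head of the reversed list and rewrites the left-hand side as $\texttt{path} \, R \, (\texttt{last} \, x \, p) \, (\texttt{rev} \, (\texttt{belast} \, x \, p))$.

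The crux is then an auxiliary lemma on paths and reversal, namely that $\texttt{path} \, R \, (\texttt{last} \, x \, p) \, (\texttt{rev} \, (\texttt{belast} \, x \, p))$ equals $\texttt{path} \, R^\mathrm{C} \, x \, p$, proved by induction on $p$. This is where the head/tail bookkeeping is concentrated: in the inductive step one peels an element off the \emph{front} of $p$, which becomes an element near the \emph{back} of the reversed path, so the inductive hypothesis can only be applied after a $\texttt{belast}$/$\texttt{last}$ rearrangement. Once this auxiliary equality is established, recognizing its right-hand side as $\texttt{sorted} \, R^\mathrm{C} \, (x :: p)$ closes the goal. I expect this path-reversal induction to be the main obstacle, since everything else is either a definitional unfolding or an application of standard reversal identities (e.g.\ \cref{lemma:coq:revK,lemma:coq:rev_cat}); the substantive content is reconciling the head-recursive shape of \texttt{path} with the tail-oriented action of \texttt{rev}.
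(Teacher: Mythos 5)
Your proposal is correct. Note that the paper does not actually prove this lemma: it is listed in the appendix of background facts imported from the \MC library, so there is no in-paper proof to compare against. Your strategy --- unfolding \texttt{sorted} to \texttt{path}, rewriting $\texttt{rev}\,(x :: p)$ as $(\texttt{last}\,x\,p) :: \texttt{rev}\,(\texttt{belast}\,x\,p)$, and establishing the auxiliary identity $\texttt{path}\,R\,(\texttt{last}\,x\,p)\,(\texttt{rev}\,(\texttt{belast}\,x\,p)) = \texttt{path}\,R^\mathrm{C}\,x\,p$ by induction on $p$ --- is precisely the standard library argument (the auxiliary identity is \MC's \texttt{rev\_path}), and the inductive step goes through as you describe, using the fact that a \texttt{path} over an \texttt{rcons} splits into the path over the prefix conjoined with one final relation check.
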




\begin{lemma}
\label{lemma:coq:sorted_filter}
For any transitive relation $R \subseteq T \times T$, predicate $p \subseteq T$, and list $s$ of
type $\mathrm{list} \, T$, $\texttt{filter}_p \, s$ is sorted \wrt $R$ whenever $s$ is sorted \wrt $R$.
\begin{coqcode}
Lemma sorted_filter T (r : rel T) : transitive r ->
  forall (p : pred T) (s : list T), sorted r s -> sorted r (filter p s).
\end{coqcode}
\end{lemma}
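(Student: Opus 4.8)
The plan is to reduce the claim to the corresponding statement about \emph{pairwise} sortedness (\cref{definition:coq:pairwise}), which behaves far better under subsequence operations such as \coq{filter}. Since $R$ is transitive, \cref{lemma:coq:sorted_pairwise} lets me replace both occurrences of sortedness by pairwise sortedness, so it suffices to show that $\texttt{filter}_p \, s$ is pairwise sorted \wrt $R$ whenever $s$ is. The advantage of this reformulation is that pairwise sortedness only asserts $x_i \mathrel{R} x_j$ for index pairs $i < j$, while \coq{filter} merely deletes some elements and preserves the relative order of the rest; hence every pair of surviving elements already satisfies $R$, and no use of transitivity is needed in this core step.

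I would prove $\texttt{pairwise}_R \, (\texttt{filter}_p \, s)$ from $\texttt{pairwise}_R \, s$ by structural induction on $s$, following the recursive definition of \coq{pairwise}. The empty case is immediate. For $s = x :: s'$, the hypothesis unfolds to $\texttt{all}_{(R \, x)} \, s' \land \texttt{pairwise}_R \, s'$, where I write $\texttt{all}_{(R\,x)}$ for ``$R\,x$ holds of every element''. If $p \, x$ holds, then $\texttt{filter}_p \, (x :: s') = x :: \texttt{filter}_p \, s'$, so I must establish $\texttt{all}_{(R\,x)} \, (\texttt{filter}_p \, s')$ together with $\texttt{pairwise}_R \, (\texttt{filter}_p \, s')$; the latter is exactly the induction hypothesis, and the former follows from $\texttt{all}_{(R\,x)} \, s'$ because every element of $\texttt{filter}_p \, s'$ is an element of $s'$. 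If $p \, x$ does not hold, then $\texttt{filter}_p \, (x :: s') = \texttt{filter}_p \, s'$ and the goal is discharged by the induction hypothesis alone.

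The only genuinely separate ingredient is the auxiliary fact that $\texttt{all}_q \, s$ implies $\texttt{all}_q \, (\texttt{filter}_p \, s)$ for arbitrary predicates $q$ and $p$; this is a one-line induction on $s$ (keep or drop the head) and is the \coq{all}-analogue of the monotonicity of pairwise sortedness under deletion. I do not expect a serious obstacle, since the substance of the argument lies entirely in the choice of intermediate representation. Attacking the statement directly by induction in terms of the adjacency-based \coq{sorted}/\coq{path} definitions (\cref{definition:coq:sorted,definition:coq:path}) is harder, because deleting an element breaks adjacency and one must invoke transitivity to bridge the gap between $x$ and the first surviving element of the tail; routing through pairwise sortedness confines all reasoning about transitivity to the already-proved \cref{lemma:coq:sorted_pairwise} and keeps the core argument purely combinatorial.
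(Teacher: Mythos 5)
Your proposal is correct. Note, however, that the paper does not actually prove this statement: it appears in \cref{appx:basic-definitions-and-facts} as one of the \MC library facts that the development merely imports, so there is no in-paper proof to compare against. The library's own route is different from yours: it derives the result from the more general fact that any \emph{subsequence} of a list sorted \wrt a transitive relation is itself sorted, combined with the observation that $\texttt{filter}_p \, s$ is a subsequence of $s$; transitivity is consumed inside the subsequence lemma, which bridges the adjacency gaps left by deleted elements. Your route instead pushes all use of transitivity into the equivalence of sortedness and pairwise sortedness (\cref{lemma:coq:sorted_pairwise}) and then proves the purely combinatorial fact that \texttt{filter} preserves pairwise sortedness; the induction you sketch is sound, and the auxiliary step that $\texttt{all}_q \, s$ implies $\texttt{all}_q \, (\texttt{filter}_p \, s)$ is indeed a one-line induction (it is essentially an instance of the monotonicity expressed by \cref{lemma:coq:sub_all} together with \cref{lemma:coq:mem_filter}). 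The two approaches buy slightly different things: the subsequence-based proof yields the stronger reusable lemma about arbitrary subsequences for free, whereas yours is more self-contained, isolates transitivity in a single already-proved equivalence, and generalizes immediately to any order-preserving deletion operation, not just \texttt{filter}. Your closing remark is also accurate: a direct induction on the \texttt{path}-based definition does require invoking transitivity at each deleted element to reconnect the head with the first surviving successor, which is exactly the bookkeeping your reformulation avoids.
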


\begin{lemma}
\label{lemma:coq:sorted_eq}
For any transitive and antisymmetric relation ${\leq} \subseteq T \times T$ and lists $s_1$ and
$s_2$ of type $\mathrm{list} \, T$, $s_1 = s_2$ holds whenever $s_1 \permeq s_2$ and both $s_1$ and
$s_2$ are sorted \wrt $\leq$.
\begin{coqcode}
Lemma sorted_eq (T : eqType) (leT : rel T) :
  transitive leT -> antisymmetric leT ->
  forall s1 s2 : list T,
    sorted leT s1 -> sorted leT s2 -> perm_eq s1 s2 -> s1 = s2.
\end{coqcode}
\end{lemma}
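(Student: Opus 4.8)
The plan is to prove the statement by structural induction on $s_1$, generalizing over $s_2$ so that the induction hypothesis applies to an arbitrary second list. The key idea is that the head of a list sorted \wrt a transitive relation is $\leq$-below every other element of the list; comparing the heads of $s_1$ and $s_2$ and invoking antisymmetry will force them to coincide, after which the tails are again sorted and permutation-equivalent and the induction hypothesis finishes the argument.

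For the base case $s_1 = []$, the hypothesis $s_1 \permeq s_2$ forces $s_2 = []$, since a permutation of the empty list is empty (by \cref{lemma:coq:permP} every count of $s_2$ vanishes, so $s_2$ has no elements). For the inductive step, write $s_1 = x :: s_1'$. Since $x \in s_1$, \cref{lemma:coq:perm_mem} gives $x \in s_2$, so $s_2$ is nonempty, say $s_2 = y :: s_2'$.

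The crux is to show $x = y$. Because $\leq$ is transitive, \cref{lemma:coq:sorted_pairwise} lets us replace sortedness by pairwise sortedness, whose definition (\cref{definition:coq:pairwise}) yields $x \leq z$ for every $z \in s_1'$ and, symmetrically, $y \leq z$ for every $z \in s_2'$. Now $y \in s_2$, hence $y \in s_1$ by \cref{lemma:coq:perm_mem}; so either $y = x$ or $y \in s_1'$, the latter giving $x \leq y$. Symmetrically, either $x = y$ or $x \in s_2'$, the latter giving $y \leq x$. Thus, assuming $x \neq y$ produces both $x \leq y$ and $y \leq x$, contradicting antisymmetry (\cref{def:coq:antisymmetric}); therefore $x = y$. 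Note that reflexivity of $\leq$ is never required.

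Having established $x = y$, it remains to show $s_1' = s_2'$. From $x :: s_1' \permeq x :: s_2'$ and \cref{lemma:coq:permP}, for every predicate $p$ we have $p\,x + \lvert s_1' \rvert_p = p\,x + \lvert s_2' \rvert_p$, hence $\lvert s_1' \rvert_p = \lvert s_2' \rvert_p$, \ie $s_1' \permeq s_2'$. Both $s_1'$ and $s_2'$ are sorted \wrt $\leq$, being tails of sorted lists (immediate from \cref{definition:coq:sorted}), so the induction hypothesis gives $s_1' = s_2'$ and therefore $s_1 = x :: s_1' = x :: s_2' = s_2$. I expect the main obstacle to be precisely the head-equality step: it requires converting the adjacency-based notion of sortedness into the ``head dominates the tail'' property through pairwise sortedness and transitivity, and then combining the two membership facts with antisymmetry while handling the $x = y$ case so that reflexivity is never invoked.
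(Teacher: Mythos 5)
Your proof is correct. Note, however, that the paper does not prove \cref{lemma:coq:sorted_eq} at all: it appears in \cref{appx:basic-definitions-and-facts} as one of the background facts imported from the \MC library, so there is no in-paper argument to compare yours against. As a self-contained proof your route is sound: the induction on $s_1$ generalizing $s_2$, the reduction of sortedness to pairwise sortedness via transitivity (\cref{lemma:coq:sorted_pairwise}) to get that each head dominates its tail, the two membership facts from \cref{lemma:coq:perm_mem} combined with antisymmetry to force the heads to agree without ever invoking reflexivity, and the cancellation of the common head under $\permeq$ via \cref{lemma:coq:permP} are all valid steps, and the tails of sorted lists are indeed sorted. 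For what it is worth, the actual \MC proof of \texttt{sorted\_eq} takes a slightly different tack in the inductive step --- it splits $s_2$ at the first occurrence of the head of $s_1$ and argues that the prefix before it must be empty --- but your head-comparison argument is an equally legitimate and arguably more symmetric way to reach the same conclusion.
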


\begin{lemma}
\label{lemma:coq:irr_sorted_eq}
For any strict preorder ${\leq} \subseteq T \times T$ and lists $s_1$ and $s_2$ of type
$\mathrm{list} \, T$, $s_1 = s_2$ holds whenever $s_1$ and $s_2$ contain the same set of elements and
both of them are sorted \wrt $\leq$.
\begin{coqcode}
Lemma irr_sorted_eq (T : eqType) (leT : rel T) :
  transitive leT -> irreflexive leT ->
  forall s1 s2 : list T, sorted leT s1 -> sorted leT s2 -> s1 =i s2 -> s1 = s2.
\end{coqcode}
\end{lemma}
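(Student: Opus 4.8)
The plan is to reduce this lemma to \cref{lemma:coq:sorted_eq}, which draws the same conclusion $s_1 = s_2$ from the stronger hypotheses that $\leq$ is transitive and \emph{antisymmetric} and that $s_1$ is a \emph{permutation} of $s_2$ (rather than merely sharing its set of elements). Two gaps must therefore be bridged: upgrading irreflexivity to antisymmetry, and upgrading ``same set of elements'' to $s_1 \permeq s_2$.

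First, I would observe that every transitive and irreflexive relation is vacuously antisymmetric: if $x \leq y$ and $y \leq x$ both held, transitivity would give $x \leq x$, contradicting irreflexivity (\cref{def:coq:irreflexive}); hence the premise of antisymmetry (\cref{def:coq:antisymmetric}) is never satisfiable and the implication holds. This discharges the antisymmetry requirement of \cref{lemma:coq:sorted_eq} directly.

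Second, and this is the crux, I would show that a list sorted \wrt a strict preorder is duplication-free. Since $\leq$ is transitive, sortedness coincides with pairwise sortedness by \cref{lemma:coq:sorted_pairwise}, so every pair $x_i, x_j$ with $i < j$ satisfies $x_i \leq x_j$; by irreflexivity such a pair cannot be equal, so no element repeats and the list is \coq{uniq} (\cref{def:coq:uniq}). Applying this to both $s_1$ and $s_2$, I obtain two duplication-free lists with the same set of elements, and two such lists are permutations of each other. (This last fact is the standard \MC lemma \coq{uniq_perm}; alternatively it follows from \cref{lemma:coq:permP}, since for a \coq{uniq} list the count $\lvert s \rvert_p$ records exactly the distinct elements satisfying $p$, which depend only on the common element-set, so the counts of $s_1$ and $s_2$ agree for every $p$.) With antisymmetry and $s_1 \permeq s_2$ in hand, \cref{lemma:coq:sorted_eq} yields $s_1 = s_2$.

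The main obstacle I expect is the second step: sharing a set of elements is strictly weaker than being a permutation in general (\eg, $[x,x,y]$ and $[x,y,y]$ agree as sets but are not permutations), so the argument genuinely relies on first extracting \coq{uniq}-ness from the interaction between sortedness and irreflexivity. Everything else is bookkeeping that composes the cited lemmas.
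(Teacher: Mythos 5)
Your proof is correct. The paper does not actually prove this lemma---it is cited in \cref{appx:basic-definitions-and-facts} as a background fact imported from the \MC library---but your argument is precisely the standard library proof: derive antisymmetry vacuously from transitivity plus irreflexivity, show each sorted list is duplication-free via \cref{lemma:coq:sorted_pairwise} and irreflexivity, upgrade the shared element set to $s_1 \permeq s_2$ (the \MC lemma \coq{uniq_perm}), and conclude with \cref{lemma:coq:sorted_eq}. You correctly identify the one genuinely non-trivial point, namely that ``same set of elements'' alone is weaker than permutation and that uniqueness is what bridges the gap.
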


\subsection{Indexing}

\begin{definition}
\label{def:coq:iota}
Given two natural numbers $m$ and $n$, $\texttt{iota} \, m \, n$ is the list of natural numbers
$[m, m + 1, \dots, m + n - 1]$.
\begin{coqcode}
Fixpoint iota (m n : nat) {struct n} : list nat :=
  if n is S n' then m :: iota (S m) n' else [::].
\end{coqcode}
\end{definition}

\begin{definition}
\label{def:coq:nth}
Given $x_0$ of type $T$, a list $s$ of type $\mathrm{list} \, T$, a natural number $n$,
$\texttt{nth} \, x_0 \, s \, n$ is the $i^\mathrm{th}$ element of $s$, except that it is $x_0$ when
$n \leq \lvert s \rvert$.
\begin{coqcode}
Definition nth T (x0 : T) : list T -> nat -> T :=
  fix nth (s : list T) (n : nat) {struct n} : T :=
    match s, n with
    | [::], _ => x0
    | x :: _, 0 => x
    | _ :: s', S n' => nth s' n'
    end.
\end{coqcode}
\end{definition}

\begin{lemma}
\label{lemma:coq:mkseq_nth}
For any $x_0$ of type $T$ and $s$ of type $\mathrm{list} \, T$, $s$ is equal to
$[\texttt{nth} \, x_0 \, s \, i \mid i \leftarrow \texttt{iota} \, 0 \, \lvert s \rvert]$.
\begin{coqcode}
Lemma mkseq_nth T (x0 : T) (s : list T) : map (nth x0 s) (iota 0 (size s)) = s.
\end{coqcode}
\end{lemma}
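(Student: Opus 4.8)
The plan is to prove $\texttt{map} \, (\texttt{nth} \, x_0 \, s) \, (\texttt{iota} \, 0 \, \lvert s \rvert) = s$ by structural induction on the list $s$, unfolding the recursive definitions of $\texttt{iota}$, $\texttt{nth}$, $\texttt{map}$, and $\lvert \cdot \rvert$ (\cref{def:coq:iota,def:coq:nth,def:coq:map,def:coq:size}) as needed. The base case $s = []$ is immediate: $\lvert [] \rvert = 0$, so $\texttt{iota} \, 0 \, 0 = []$ and $\texttt{map} \, f \, [] = []$ for any $f$, and both sides are $[]$.

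For the inductive step I would take $s = x :: s'$ with induction hypothesis $\texttt{map} \, (\texttt{nth} \, x_0 \, s') \, (\texttt{iota} \, 0 \, \lvert s' \rvert) = s'$. Since $\lvert x :: s' \rvert$ is the successor of $\lvert s' \rvert$, unfolding one step of $\texttt{iota}$ gives $\texttt{iota} \, 0 \, \lvert x :: s' \rvert = 0 :: \texttt{iota} \, 1 \, \lvert s' \rvert$; unfolding $\texttt{map}$ and using $\texttt{nth} \, x_0 \, (x :: s') \, 0 = x$ then rewrites the left-hand side to $x :: \texttt{map} \, (\texttt{nth} \, x_0 \, (x :: s')) \, (\texttt{iota} \, 1 \, \lvert s' \rvert)$. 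It therefore remains to show $\texttt{map} \, (\texttt{nth} \, x_0 \, (x :: s')) \, (\texttt{iota} \, 1 \, \lvert s' \rvert) = s'$.

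The crux is to realign the offset $1$ in $\texttt{iota} \, 1 \, \lvert s' \rvert$ with the tail $s'$. I would first establish the auxiliary shift identity $\texttt{iota} \, (m + 1) \, n = \texttt{map} \, \mathrm{S} \, (\texttt{iota} \, m \, n)$, where $\mathrm{S}$ is the successor function, by a separate short induction on $n$ with $m$ universally generalized. Instantiating it at $m = 0$ and using functoriality of $\texttt{map}$ (i.e. $\texttt{map} \, f \, (\texttt{map} \, g \, \ell) = \texttt{map} \, (f \circ g) \, \ell$, itself a one-line induction on $\ell$), the goal becomes $\texttt{map} \, (\lambda i.\; \texttt{nth} \, x_0 \, (x :: s') \, (i + 1)) \, (\texttt{iota} \, 0 \, \lvert s' \rvert) = s'$. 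Since $\texttt{nth} \, x_0 \, (x :: s') \, (i + 1) = \texttt{nth} \, x_0 \, s' \, i$ holds definitionally, the left-hand side is exactly $\texttt{map} \, (\texttt{nth} \, x_0 \, s') \, (\texttt{iota} \, 0 \, \lvert s' \rvert)$, which equals $s'$ by the induction hypothesis.

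I expect the only nonroutine point to be this offset bookkeeping: the recursion on $s$ peels off the head while the index list always starts at $0$, so one must shift $\texttt{iota}$ by one and absorb that shift into $\texttt{nth}$ on the tail. Everything else is definitional unfolding. (In \MC this lemma also follows instantly from the index-characterization of list equality together with $\texttt{nth}$-of-$\texttt{map}$ and $\texttt{nth}$-of-$\texttt{iota}$, but those lemmas ultimately rely on the same shift reasoning, so the self-contained induction above is the honest core of the argument.)
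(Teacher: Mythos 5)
Your proof is correct. The paper does not prove this lemma at all---it is imported as a basic fact from the \MC library (\cref{appx:basic-definitions-and-facts} only lists such facts)---so there is no in-paper argument to compare against, but your structural induction on $s$, with the auxiliary shift identity $\texttt{iota}\,(m+1)\,n = \texttt{map}\,\mathrm{S}\,(\texttt{iota}\,m\,n)$ and map functoriality to absorb the offset into $\texttt{nth}$ on the tail, is exactly the standard self-contained proof, and you correctly identify the offset bookkeeping as the only non-definitional step. The sole nitpick is cosmetic: after introducing the successor $\mathrm{S}$ you switch to writing $i+1$, and $\texttt{nth}\,x_0\,(x::s')\,(i+1) = \texttt{nth}\,x_0\,s'\,i$ is definitional only when $i+1$ is read as $\mathrm{S}\,i$ (i.e.\ recursion on the left summand), which is what your shift lemma actually produces.
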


\begin{lemma}
\label{lemma:coq:iota_uniq}
Any \texttt{iota} sequence is duplication free.
\begin{coqcode}
Lemma iota_uniq (m n : nat) : uniq (iota m n).
\end{coqcode}
\end{lemma}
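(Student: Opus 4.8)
The plan is to proceed by structural induction on $n$, keeping the starting point $m$ \emph{universally quantified} so that the induction hypothesis applies at any starting value. Generalizing $m$ is essential here, because the recursive call in \cref{def:coq:iota} shifts the starting point from $m$ to $S \, m$, so the inductive hypothesis must be available at a different base than the one in the conclusion.

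For the base case $n = 0$, we have $\texttt{iota} \, m \, 0 = []$ by \cref{def:coq:iota}, and $\texttt{uniq} \, [] = \text{true}$ holds by definition (\cref{def:coq:uniq}). For the inductive step $n = S \, n'$, unfolding both definitions rewrites $\texttt{iota} \, m \, (S \, n')$ as $m :: \texttt{iota} \, (S \, m) \, n'$ and reduces the goal to the conjunction
\[
 (m \notin \texttt{iota} \, (S \, m) \, n') \land \texttt{uniq} \, (\texttt{iota} \, (S \, m) \, n').
\]
The right conjunct is exactly the induction hypothesis instantiated at starting point $S \, m$. For the left conjunct, I would observe that every element of $\texttt{iota} \, (S \, m) \, n'$ is at least $S \, m$, whereas $m < S \, m$, so $m$ cannot occur in the tail.

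The main obstacle is therefore establishing this lower-bound observation, since the excerpt does not provide a membership characterization of $\texttt{iota}$ (such as $x \in \texttt{iota} \, m \, n \Leftrightarrow m \leq x < m + n$). I would prove the single fact actually needed, namely $x \in \texttt{iota} \, m' \, k \Rightarrow m' \leq x$, by a short auxiliary induction on $k$ with $m'$ generalized: the case $k = 0$ is vacuous since the list is empty, and in the case $k = S \, k'$ the element $x$ is either the head $m'$, giving $m' \leq x$ by reflexivity, or lies in $\texttt{iota} \, (S \, m') \, k'$, whence $S \, m' \leq x$ by the inner induction hypothesis and thus $m' \leq x$ by transitivity. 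Instantiating this fact with $m' = S \, m$ yields $m < S \, m \leq x$ for every $x$ in the tail, which contradicts $x = m$ and hence discharges the non-membership goal, completing the induction.
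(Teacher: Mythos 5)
Your proof is correct. Note that the paper does not actually prove this lemma: it appears in \cref{appx:basic-definitions-and-facts}, which merely lists facts imported from the \MC library, so there is no in-paper argument to compare against. Your induction on $n$ with $m$ generalized, together with the auxiliary lower bound $x \in \texttt{iota} \, m' \, k \Rightarrow m' \leq x$, is exactly the standard proof (\MC itself discharges the head-not-in-tail obligation via the full membership characterization $x \in \texttt{iota} \, m \, n \Leftrightarrow m \leq x < m + n$ and irreflexivity of $<$, of which your auxiliary fact is the needed half).
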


\begin{lemma}
\label{lemma:coq:iota_ltn_sorted}
Any \texttt{iota} sequence is sorted \wrt the strict less than relation of natural numbers.
\begin{coqcode}
Lemma iota_ltn_sorted (i n : nat) : sorted ltn (iota i n).
\end{coqcode}
\end{lemma}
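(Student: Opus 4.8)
The plan is to prove the claim by induction, and the only subtlety is choosing the right statement to induct on so that the induction hypothesis is available at the correct starting index. Unfolding the definition of \texttt{sorted} (\cref{definition:coq:sorted}) on a nonempty \texttt{iota} sequence and using $\texttt{iota}\, i\, (S\, n) = i :: \texttt{iota}\, (S\, i)\, n$ (\cref{def:coq:iota}), the goal $\texttt{sorted}\, {<}\, (\texttt{iota}\, i\, (S\, n))$ reduces to $\texttt{path}\, {<}\, i\, (\texttt{iota}\, (S\, i)\, n)$, while the empty case $n = 0$ is immediate since $\texttt{sorted}\, {<}\, [] = \texttt{true}$. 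So it suffices to establish an auxiliary claim phrased in terms of \texttt{path}.

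First I would prove, by induction on $n$ \emph{with $i$ generalized}, the statement
\[
 \forall n\, i,\quad \texttt{path}\, {<}\, i\, (\texttt{iota}\, (S\, i)\, n).
\]
Generalizing $i$ before inducting on the length $n$ is the essential move: the recursive call of \texttt{iota} increments the starting point, so the induction hypothesis must hold at every index, not merely at the originally fixed $i$. In the base case $n = 0$ we have $\texttt{iota}\, (S\, i)\, 0 = []$ and $\texttt{path}\, {<}\, i\, [] = \texttt{true}$ by \cref{definition:coq:path}. In the inductive step $n = S\, n'$, unfolding one layer of \texttt{iota} and \texttt{path} gives
\[
 \texttt{path}\, {<}\, i\, (\texttt{iota}\, (S\, i)\, (S\, n'))
 = (i < S\, i) \mathrel{\&\&} \texttt{path}\, {<}\, (S\, i)\, (\texttt{iota}\, (S\, (S\, i))\, n'),
\]
where the left conjunct holds because $i < S\, i$, and the right conjunct is exactly the induction hypothesis instantiated at the starting index $S\, i$. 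Combining this auxiliary claim with the reduction above yields $\texttt{sorted}\, {<}\, (\texttt{iota}\, i\, n)$ for all $i$ and $n$.

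There is essentially no hard part in this argument; everything after the setup is unfolding of definitions together with the elementary fact $i < S\, i$. The one place to be careful is precisely the generalization of the starting index $i$ prior to the induction on $n$: a direct induction left with $i$ fixed would produce an induction hypothesis at the wrong index and be unable to discharge the inductive step, since the recursive \texttt{iota} call always advances the starting point.
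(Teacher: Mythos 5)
Your proof is correct: the induction on $n$ with the starting index $i$ generalized is exactly the right move, and every step (the unfolding of \texttt{iota} and \texttt{path}, the conjunct $i < S\,i$, and the instantiation of the induction hypothesis at $S\,i$) goes through. Note, however, that the paper offers no proof of this lemma to compare against — it is listed in the appendix as a background fact imported from the \MC library — so your argument stands as a correct, self-contained elementary proof of a statement the paper simply cites.
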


\subsection{Merge}

\begin{definition}[Merge]
\label{def:coq:merge}
Given a relation ${\leq} \subseteq T \times T$ and two lists $s_1$ and $s_2$ of type
$\mathrm{list} \, T$, their merge $s_1 \merge_\leq s_2$ is a list of type $\mathrm{list} \, T$
defined as follows:
\begin{align*}
 [] \merge_\leq s_2 &\coloneq s_2 \\
 s_1 \merge_\leq [] &\coloneq s_1 \\
 (x :: s_1) \merge_\leq (y :: s_2) &\coloneq
 \begin{cases}
  x :: (s_1 \merge_\leq (y :: s_2)) & (x \leq y) \\
  y :: ((x :: s_1) \merge_\leq s_2). & (\text{otherwise})
 \end{cases}
\end{align*}
\begin{coqcode}
Definition merge T (leT : rel T) :=
  fix merge (s1 : list T) {struct s1} : list T -> list T :=
    match s1 with
    | [::] => id
    | x1 :: s1' =>
      fix merge_s1 (s2 : list T) {struct s2} : list T :=
        match s2 with
        | [::] => s1
        | x2 :: s2' =>
          if leT x1 x2 then x1 :: merge s1' s2 else x2 :: merge_s1 s2'
        end
    end.
\end{coqcode}
\end{definition}

\begin{lemma}
\label{lemma:coq:count_merge}
For any ${\leq} \subseteq T \times T$, $p \subseteq T$, and $s_1$ and $s_2$ of type
$\mathrm{list} \, T$, $\lvert s_1 \merge_\leq s_2 \rvert_p$ is equal to
$\lvert s_1 \concat s_2 \rvert_p$.
\begin{coqcode}
Lemma count_merge T (leT : rel T) (p : pred T) (s1 s2 : list T) :
  count p (merge leT s1 s2) = count p (s1 ++ s2).
\end{coqcode}
\end{lemma}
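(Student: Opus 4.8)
The plan is to prove the equation by the nested structural induction that mirrors the definition of $\merge_\leq$ (\cref{def:coq:merge}): the outer fixpoint recurses on $s_1$, and when $s_1 = x_1 :: s_1'$ the inner fixpoint recurses on $s_2$. I would therefore do an outer induction on $s_1$ and, in its cons case, an inner induction on $s_2$ (equivalently, functional induction on $\merge_\leq$). The only algebraic facts needed are the defining equation $\lvert x :: s \rvert_p = p\,x + \lvert s \rvert_p$ of $\texttt{count}$ (\cref{def:coq:count}), the additivity $\lvert s_1 \concat s_2 \rvert_p = \lvert s_1 \rvert_p + \lvert s_2 \rvert_p$ (\cref{lemma:coq:count_cat}), and commutativity and associativity of addition on $\mathbb{N}$.

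First, if $s_1 = []$ then $\merge_\leq$ reduces to $s_2$ and $[] \concat s_2 = s_2$ holds by the definition of $\concat$ (\cref{def:coq:cat}), so both sides equal $\lvert s_2 \rvert_p$. For the cons case $s_1 = x_1 :: s_1'$, I keep the outer induction hypothesis that the claim holds for $s_1'$ and every second argument, then induct on $s_2$. If $s_2 = []$ then $\merge_\leq$ gives $s_1$, and by \cref{lemma:coq:count_cat} together with $\lvert [] \rvert_p = 0$ the right-hand side $\lvert s_1 \concat [] \rvert_p$ collapses to $\lvert s_1 \rvert_p$.

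The interesting case is $s_2 = x_2 :: s_2'$, where I case-split on $x_1 \leq x_2$. When $x_1 \leq x_2$, merge emits $x_1$ and recurses as $\merge_\leq s_1'\,s_2$, so applying the outer induction hypothesis and the two count equations reduces both sides to $p\,x_1 + \lvert s_1' \concat s_2 \rvert_p$. When $x_1 \not\leq x_2$, merge instead emits $x_2$ and recurses through the inner fixpoint as $\merge_\leq s_1\,s_2'$; here I use the inner induction hypothesis $\lvert \merge_\leq s_1\,s_2' \rvert_p = \lvert s_1 \concat s_2' \rvert_p$. The left-hand side then becomes $p\,x_2 + \lvert s_1 \concat s_2' \rvert_p$, whereas expanding the right-hand side with \cref{lemma:coq:count_cat} gives $\lvert s_1 \rvert_p + \bigl(p\,x_2 + \lvert s_2' \rvert_p\bigr)$; these agree after commuting and reassociating the sum.

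The main obstacle is bookkeeping rather than mathematical depth: because $\merge_\leq$ is a nested fixpoint, the two recursive branches appeal to two different induction hypotheses (the outer one with $s_1'$, the inner one with $s_2'$), so the induction must be arranged to keep both available simultaneously. The only genuinely non-definitional step is the $x_1 \not\leq x_2$ branch, where merge outputs an element of the second list ahead of all of $s_1$ while concatenation keeps $s_1$ first; reconciling the two orderings is exactly what the commutativity of natural-number addition accomplishes.
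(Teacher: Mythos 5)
Your proof is correct and complete: the nested induction (outer on $s_1$, with the hypothesis generalized over the second argument; inner on $s_2$) exactly mirrors the nested-fixpoint definition of $\merge_\leq$, and the only non-definitional step, commuting $p\,x_2$ past $\lvert s_1 \rvert_p$ in the $x_1 \not\leq x_2$ branch, is handled correctly. The paper itself states this lemma without proof, importing it as a \MC library fact in the appendix, and your argument is precisely the standard proof that library uses.
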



\begin{lemma}
\label{lemma:coq:perm_merge}
For any ${\leq} \subseteq T \times T$, and $s_1$ and $s_2$ of type $\mathrm{list} \, T$,
$s_1 \merge_\leq s_2$ is a permutation of $s_1 \concat s_2$.
\begin{coqcode}
Lemma perm_merge (T : eqType) (r : rel T) (s1 s2 : list T) :
  perm_eql (merge r s1 s2) (s1 ++ s2).
\end{coqcode}
\end{lemma}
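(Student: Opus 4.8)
The plan is to reduce the claim to the count-based characterization of permutation and then invoke the already-established fact that merging preserves counts. Since the goal is stated in the \coq{perm_eql} form rather than the bare \permeq{} form, I would first peel off that outer layer: by \cref{lemma:coq:permPl}, \coq{perm_eql} $(s_1 \merge_\leq s_2)\,(s_1 \concat s_2)$ is equivalent to the plain permutation statement $s_1 \merge_\leq s_2 \permeq s_1 \concat s_2$, so it suffices to establish the latter.

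Next I would unfold the permutation relation via its defining characterization. By \cref{lemma:coq:permP}, two lists are permutations of one another exactly when they agree on the count of every predicate; that is, $s_1 \merge_\leq s_2 \permeq s_1 \concat s_2$ holds if and only if $\lvert s_1 \merge_\leq s_2\rvert_p = \lvert s_1 \concat s_2\rvert_p$ for every predicate $p \subseteq T$. This turns the combinatorial permutation goal into a purely numerical one, parameterized uniformly over an arbitrary predicate.

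At this point the goal is discharged immediately by \cref{lemma:coq:count_merge}, which asserts precisely that $\lvert s_1 \merge_\leq s_2\rvert_p = \lvert s_1 \concat s_2\rvert_p$ for any relation $\leq$, predicate $p$, and lists $s_1, s_2$. Fixing an arbitrary $p$ and rewriting with this equation closes the proof. There is no genuine obstacle here: all the real content lives in \cref{lemma:coq:count_merge} (which in turn would be proved by the natural nested induction on $s_1$ and then $s_2$, mirroring the two recursive structures of \cref{def:coq:merge}), and the present lemma is just the repackaging of that counting fact through the \cref{lemma:coq:permP}/\cref{lemma:coq:permPl} interface. The only point requiring a little care is matching the \coq{perm_eql} formulation in the statement, which is why the \cref{lemma:coq:permPl} step comes first; everything after it is a routine two-line chain.
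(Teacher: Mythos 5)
Your proof is correct: the paper states this lemma without proof, treating it as a \MC library fact, and your reduction via \cref{lemma:coq:permPl} and \cref{lemma:coq:permP} to the counting identity \cref{lemma:coq:count_merge} is exactly the standard argument used there. No gap.
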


\begin{lemma}
\label{lemma:coq:allrel_merge}
For any ${\leq} \subseteq T \times T$, and $s_1$ and $s_2$ of type $\mathrm{list} \, T$,
$s_1 \merge_\leq s_2$ is equal to $s_1 \concat s_2$ whenever $x \leq y$ holds for any $x \in s_1$
and $y \in s_2$.
\begin{coqcode}
Lemma allrel_merge T (leT : rel T) (s1 s2 : list T) :
  allrel leT s1 s2 -> merge leT s1 s2 = s1 ++ s2.
\end{coqcode}
\end{lemma}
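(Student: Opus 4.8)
The plan is to proceed by structural induction on $s_1$, with $s_2$ generalized so that the induction hypothesis quantifies over all second arguments. Recall that $\texttt{allrel}_\leq \, s_1 \, s_2$ (\cref{def:coq:allrel}) unfolds to the statement that $x \leq y$ holds for every $x \in s_1$ and $y \in s_2$, and that $\merge_\leq$ (\cref{def:coq:merge}) recurses first on its left argument and, for a nonempty left argument, on its right argument via a nested fixpoint.

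For the base case $s_1 = []$, the definition of $\merge_\leq$ gives $[] \merge_\leq s_2 = s_2$, which coincides with $[] \concat s_2$ (\cref{def:coq:cat}); note that the hypothesis is not even needed here.

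For the inductive step $s_1 = x :: s_1'$, I would case-split on $s_2$. If $s_2 = []$, then $\merge_\leq$ returns $s_1$ by its second defining clause, and this equals $s_1 \concat []$ since appending the empty list is the identity (a trivial induction on $s_1$). If $s_2 = y :: s_2'$, then from $\texttt{allrel}_\leq \, (x :: s_1') \, (y :: s_2')$ I extract $x \leq y$ (because $x \in s_1$ and $y \in s_2$), so $\merge_\leq$ reduces through its first branch to $x :: (s_1' \merge_\leq (y :: s_2'))$. Since $\texttt{allrel}_\leq \, (x :: s_1') \, s_2$ entails $\texttt{allrel}_\leq \, s_1' \, s_2$ (by dropping the conjunct coming from the head $x$), the induction hypothesis applies to rewrite the tail as $s_1' \concat (y :: s_2')$; then $x :: (s_1' \concat (y :: s_2'))$ is exactly $(x :: s_1') \concat (y :: s_2')$ by the definition of $\concat$.

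I expect no genuine obstacle here: this is a routine list induction. The only point requiring care is the nested-fixpoint shape of $\merge_\leq$, which forces the case analysis on $s_2$ in the inductive step and obliges us to check that the comparison in the recursive clause takes the $x \leq y$ branch---precisely what the $\texttt{allrel}$ hypothesis supplies---while also propagating the weakened hypothesis $\texttt{allrel}_\leq \, s_1' \, s_2$ into the induction hypothesis.
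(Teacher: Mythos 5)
Your proof is correct. Note that the paper itself gives no proof of this lemma: it is listed in \cref{appx:basic-definitions-and-facts} as a background fact imported from the \MC library, so there is nothing to diverge from. Your argument---structural induction on $s_1$ with $s_2$ generalized, a case split on $s_2$ in the step, using the $\texttt{allrel}$ hypothesis to force the $x \leq y$ branch of $\merge_\leq$ and weakening it to $\texttt{allrel}_\leq \, s_1' \, s_2$ for the induction hypothesis---is exactly the routine induction one would expect, and every step checks out against \cref{def:coq:merge,def:coq:cat,def:coq:allrel}.
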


\begin{lemma}
\label{lemma:coq:merge_stable_sorted}
For any ${\leq_1}, {\leq_2} \subseteq T \times T$, and $s_1$ and $s_2$ of type $\mathrm{list} \, T$,
$s_1 \merge_{\leq_1} s_2$ is sorted \wrt the lexicographic order
${\leq_\mathrm{lex}} \coloneq {\leq_{(1, 2)}}$ whenever
$\leq_1$ is total,
$x \leq_2 y$ holds for any $x \in s_1$ and $y \in s_2$,
and both $s_1$ and $s_2$ are sorted \wrt $\leq_\mathrm{lex}$.
\begin{coqcode}
Lemma merge_stable_sorted T (r r' : rel T) :
  total r -> forall s1 s2 : list T,
  allrel r' s1 s2 -> sorted (lexord r r') s1 -> sorted (lexord r r') s2 ->
  sorted (lexord r r') (merge r s1 s2).
\end{coqcode}
\end{lemma}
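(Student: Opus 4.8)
The plan is to prove the statement by functional induction on the \texttt{merge} function (\cref{def:coq:merge}), i.e.\ by the nested recursion on $s_1$ and $s_2$ that defines it. The two base cases are immediate: if $s_1 = []$ then $s_1 \merge_{\leq_1} s_2 = s_2$, and if $s_2 = []$ then $s_1 \merge_{\leq_1} s_2 = s_1$, both of which are sorted \wrt $\leq_{\mathrm{lex}}$ by assumption. The real work is in the case $s_1 = x :: s_1'$ and $s_2 = y :: s_2'$, and the key to keeping the induction clean is to strengthen the statement so that it talks about \emph{paths} (\cref{definition:coq:path}) with an arbitrary pivot: I would first establish, for every $e : T$,
\[
 \texttt{allrel}_{\leq_2} \, s_1 \, s_2 \Rightarrow
 \texttt{path} \, \leq_{\mathrm{lex}} \, e \, s_1 \Rightarrow
 \texttt{path} \, \leq_{\mathrm{lex}} \, e \, s_2 \Rightarrow
 \texttt{path} \, \leq_{\mathrm{lex}} \, e \, (s_1 \merge_{\leq_1} s_2),
\]
again by functional induction on $\merge_{\leq_1}$. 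Quantifying $e$ universally is what allows the induction hypothesis to be reused with a \emph{new} pivot at each recursive call, matching the shape of \texttt{merge}'s recursion.

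For the inductive step of this strengthened lemma, I would split on $x \leq_1 y$. When $x \leq_1 y$, we have $s_1 \merge_{\leq_1} s_2 = x :: (s_1' \merge_{\leq_1} s_2)$, so it suffices to show $e \leq_{\mathrm{lex}} x$ (available from the path hypothesis on $s_1$) together with $\texttt{path} \, \leq_{\mathrm{lex}} \, x \, (s_1' \merge_{\leq_1} s_2)$. The latter follows from the induction hypothesis with pivot $x$, for which I must check $\texttt{path} \, \leq_{\mathrm{lex}} \, x \, s_1'$ (the tail of the path on $s_1$), $\texttt{path} \, \leq_{\mathrm{lex}} \, x \, s_2$, and $\texttt{allrel}_{\leq_2} \, s_1' \, s_2$ (immediate from $\texttt{allrel}_{\leq_2} \, s_1 \, s_2$ by dropping the head, cf.\ \cref{def:coq:allrel}). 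The only nontrivial obligation is $x \leq_{\mathrm{lex}} y$, the new head of $\texttt{path} \, \leq_{\mathrm{lex}} \, x \, s_2$: here I use $x \leq_1 y$ from the case split and $x \leq_2 y$ from the \texttt{allrel} hypothesis (as $x \in s_1$ and $y \in s_2$) to satisfy the second disjunct of $\leq_{\mathrm{lex}}$ (\cref{def:coq:lexord}). When $x \not\leq_1 y$, totality of $\leq_1$ (\cref{def:coq:total}) gives $y \leq_1 x$, and $s_1 \merge_{\leq_1} s_2 = y :: (s_1 \merge_{\leq_1} s_2')$; the argument is symmetric, applying the induction hypothesis with pivot $y$, except that now $y \leq_{\mathrm{lex}} x$ is witnessed by the \emph{first} disjunct $x \not\leq_1 y$ rather than by $\leq_2$.

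The original statement then follows as a short corollary: casing on whether $s_1$ or $s_2$ is empty disposes of the base cases, and in the case $s_1 = x :: s_1'$, $s_2 = y :: s_2'$ one splits on $x \leq_1 y$ exactly as above and applies the strengthened path lemma to the recursively merged tail, taking the surviving head ($x$ or $y$) as the pivot; the \texttt{sorted} hypotheses on $s_1$ and $s_2$ unfold to precisely the \texttt{path} hypotheses needed. I expect the main obstacle to be the bookkeeping around the lexicographic order: one must track, in each branch of the totality split, exactly which disjunct of $\leq_{\mathrm{lex}}$ is discharged, and in particular recognize that the \texttt{allrel} premise is needed \emph{only} in the stable direction---where $x \leq_1 y$ while possibly $y \leq_1 x$, so that $x \leq_2 y$ is the sole way to certify $x \leq_{\mathrm{lex}} y$. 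Everything else (propagating sortedness and \texttt{allrel} to sublists, and the emptiness cases) is routine.
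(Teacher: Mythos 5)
Your proof is correct. Note that the paper itself does not prove this statement: \cref{lemma:coq:merge_stable_sorted} is listed in \cref{appx:basic-definitions-and-facts} as a fact imported from the \MC library, so there is no in-paper proof to compare against. Your argument --- strengthening the statement to one about $\texttt{path} \, \leq_{\mathrm{lex}} \, e \, (s_1 \merge_{\leq_1} s_2)$ with a universally quantified pivot $e$, doing functional induction on $\merge$, and discharging $x \leq_{\mathrm{lex}} y$ via the $\leq_2$ disjunct (using \texttt{allrel}) in the $x \leq_1 y$ branch versus the $x \not\leq_1 y$ disjunct (using totality) in the other branch --- is exactly the standard proof used in the \MC library, and all the details you check (propagating the path and \texttt{allrel} hypotheses to tails, and the emptiness cases) are the right ones. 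The one point worth being explicit about when formalizing is that the final step from the path lemma to the \texttt{sorted} statement requires unfolding one step of \texttt{merge} by hand in the cons--cons case, as you indicate; alternatively one can observe that $\texttt{sorted} \, R \, s$ is the \texttt{path} statement with the head of $s$ as pivot and reflexivity of $\leq_{\mathrm{lex}}$ is not needed, so your corollary derivation is sound as written.
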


\begin{lemma}
\label{lemma:coq:mergeA}
For any total preorder ${\leq} \subseteq T \times T$, $\merge_\leq$ is associative.
\begin{coqcode}
Lemma mergeA T (leT : rel T) :
  total leT -> transitive leT -> associative (merge leT).
\end{coqcode}
\end{lemma}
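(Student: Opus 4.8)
The plan is to prove the equation
\[
 (s_1 \merge_\leq s_2) \merge_\leq s_3 = s_1 \merge_\leq (s_2 \merge_\leq s_3)
\]
for all lists $s_1, s_2, s_3$ of type $\mathrm{list} \, T$, by strong induction on the total length $\lvert s_1 \rvert + \lvert s_2 \rvert + \lvert s_3 \rvert$. Since $\merge_\leq$ (\cref{def:coq:merge}) is defined by nested fixpoints with no single structurally decreasing argument, ordinary structural induction in one variable does not apply; inducting on the total size lets me invoke the induction hypothesis on any triple whose combined length is strictly smaller, which is exactly what each one-step unfolding of merge produces. The base cases, where at least one of the three lists is empty, follow immediately from the defining equations of $\merge_\leq$, since merging with $[]$ is the identity on either side, so both sides of the associativity equation reduce to the same expression.

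For the main case, write $s_1 = x :: s_1'$, $s_2 = y :: s_2'$, and $s_3 = z :: s_3'$, and case split on the two Boolean comparisons $x \leq y$ and $y \leq z$. In each leaf case I would unfold one step of the merges on both sides, verify that both sides emit the \emph{same} head element, and strip it off, leaving a goal that is an instance of the induction hypothesis on a triple with exactly one head removed (hence strictly smaller). When $x \leq y$ and $y \leq z$, transitivity yields $x \leq z$ and both sides emit $x$, reducing to the hypothesis with $s_1'$ in place of $s_1$; when $\neg(x \leq y)$ and $y \leq z$, both sides emit $y$, reducing $s_2$; and the configuration $\neg(x \leq y)$, $\neg(y \leq z)$ is handled by emitting $z$ (see below).

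The delicate point, and the step I expect to be the main obstacle, is the cross-comparison of the two outer heads $x$ and $z$, which is where \emph{both} hypotheses on $\leq$ are genuinely used. When $x \leq y$ but $\neg(y \leq z)$, the relation between $x$ and $z$ is not forced, so I must split once more on $x \leq z$: if it holds both sides emit $x$ (reducing $s_1$), otherwise both emit $z$ (reducing $s_3$). In the two ``extreme'' configurations, by contrast, the comparison $x \leq z$ is \emph{determined}: when $x \leq y$ and $y \leq z$, transitivity forces $x \leq z$; and when $\neg(x \leq y)$ and $\neg(y \leq z)$, totality turns these into $y \leq x$ and $z \leq y$, so assuming $x \leq z$ would give $x \leq y$ by transitivity, a contradiction, whence $\neg(x \leq z)$ and both sides emit $z$. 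This last deduction is precisely where transitivity is indispensable: without it merge is \emph{not} associative, as a three-element cyclic total relation $x \leq y \leq z \leq x$ already breaks the equation on singleton inputs. The whole proof therefore amounts to threading totality and transitivity through these head comparisons so that the two candidate outputs stay in lock-step; once the correct head is identified in each leaf case, the goal closes by a direct appeal to the induction hypothesis.
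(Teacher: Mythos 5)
Your proof is correct. The paper itself gives no proof of \cref{lemma:coq:mergeA} --- it is listed in \cref{appx:basic-definitions-and-facts} as an imported fact from the \MC library --- but your argument (strong induction on the combined length, case analysis on the two head comparisons, with transitivity forcing $x \leq z$ in the $x \leq y \leq z$ case and totality plus transitivity forcing $\neg(x \leq z)$ in the doubly-negative case) is essentially the standard library proof, merely phrased as one measure induction instead of nested structural inductions, and every case you list checks out, including the cyclic counterexample showing transitivity is indispensable.
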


\subsection{Sigma types}

\begin{lemma}
\label{lemma:coq:in2_sig}
\label{lemma:coq:in3_sig}
Suppose $T_1$ and $T_2$ are types, $D_1 \subset T_1$ and $D_2 \subseteq T_2$ are predicates on them,
$P \subseteq T_1 \times T_2$ is a relation, and $\texttt{sig} \, D_1$ and $\texttt{sig} \, D_1$ are
subtypes of $T_1$ and $T_2$ collecting inhabitants satisfying $D_1$ and $D_2$, respectively.
Then, $(\texttt{val} \, x, \texttt{val} \, y) \in P$ holds for any $x \in \texttt{sig} \, D_1$ and
$y \in \texttt{sig} \, D_2$, whenever $(x, y) \in P$ holds for any $x \in T_1$ and $y \in T_2$ such
that $x \in D_1$ and $y \in D_2$.

Its ternary version also holds.
\begin{coqcode}
Lemma in2_sig T1 T2 (D1 : pred T1) (D2 : pred T2) (P2 : T1 -> T2 -> Prop) :
  {in D1 & D2, forall (x : T1) (y : T2), P2 x y} ->
  forall (x : sig D1) (y : sig D2), P2 (val x) (val y).

Lemma in3_sig
    T1 T2 T3 (D1 : pred T1) (D2 : pred T2) (D3 : pred T3)
    (P3 : T1 -> T2 -> T3 -> Prop) :
  {in D1 & D2 & D3, forall (x : T1) (y : T2) (z : T3), P3 x y z} ->
  forall (x : sig D1) (y : sig D2) (z : sig D3), P3 (val x) (val y) (val z).
\end{coqcode}
\end{lemma}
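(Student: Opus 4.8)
The plan is to prove both statements by directly unpacking the inhabitants of the subtypes, with no real computation involved. Recall that an element $x$ of the subtype $\texttt{sig} \, D_1$ is, by definition, a dependent pair consisting of a value $v_x : T_1$ together with a proof $h_x$ that $D_1 \, v_x$ holds, and that $\texttt{val} \, x$ is exactly the first projection $v_x$. So I would first destruct $x$ into such a pair $(v_x, h_x)$ and $y$ into $(v_y, h_y)$; the goal $P_2 \, (\texttt{val} \, x) \, (\texttt{val} \, y)$ then reduces, by the definition of $\texttt{val}$, to $P_2 \, v_x \, v_y$.

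Next, I would unfold the localized hypothesis. As explained for \cref{corollary:filter_sort_in}, it amounts to $\forall v_x : T_1, D_1 \, v_x \Rightarrow \forall v_y : T_2, D_2 \, v_y \Rightarrow P_2 \, v_x \, v_y$. It then remains only to instantiate this with $v_x$ and $v_y$ and to discharge the two side conditions $D_1 \, v_x$ and $D_2 \, v_y$, which are supplied verbatim by the proofs $h_x$ and $h_y$ already carried by the sigma-type elements. This settles the binary case, and the ternary statement $\texttt{in3\_sig}$ follows by the identical argument, destructing three elements and supplying three proofs.

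The main obstacle here is essentially nonexistent: the entire content of the lemma is the observation that an inhabitant of a subtype already packages a witness that its underlying value satisfies the defining predicate, so the proof is pure bookkeeping. The only point that deserves a moment's care is recognizing that $\texttt{val} \, (v_x, h_x)$ reduces definitionally to $v_x$, so that the conclusion of the unfolded hypothesis matches the goal on the nose and no rewriting is needed.
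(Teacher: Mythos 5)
Your proof is correct and is exactly the canonical argument: the paper states this lemma without proof (it is imported from the \MC library), and the intended justification is precisely the bookkeeping you describe — destructing each \texttt{sig} inhabitant into its value and proof components, observing that \texttt{val} is the definitional first projection, and feeding the carried proofs to the localized hypothesis. Nothing is missing.
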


\begin{lemma}
\label{lemma:coq:all_sigP}
For any $p \subseteq T$ and a list $s$ of type $\mathrm{list} \, T$, there exists a list $s'$ of type
$\mathrm{list} \, (\texttt{sig} \, p)$ such that $s$ is equal to $\texttt{map}_\texttt{val} \, s'$,
whenever $\texttt{all}_p \, s$ holds.
\begin{coqcode}
Lemma all_sigP T (p : pred T) (s : list T) :
  all p s -> {s' : list (sig p) | s = map val s'}.
\end{coqcode}
\end{lemma}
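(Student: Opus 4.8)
The plan is to build the witness $s'$ explicitly by structural induction on $s$ (\cref{def:coq:list}). The statement is a constructive existence---a $\Sigma$-type---so I read it as follows: given $s$ together with a proof of $\texttt{all}_p\, s$, I must produce a list $s'$ over the subtype $\texttt{sig}\, p$ \emph{and} a proof that $s = \texttt{map}_{\texttt{val}}\, s'$.

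First I would handle the base case $s = []$ by taking $s' \coloneq []$; then $\texttt{map}_{\texttt{val}}\, [] = []$ holds by \cref{def:coq:map}, and the hypothesis $\texttt{all}_p\, []$ is vacuously true. For the inductive step $s = x :: s_0$, I observe that by the definition of $\texttt{all}$ (\cref{def:coq:all}) the hypothesis $\texttt{all}_p\,(x :: s_0)$ decomposes into the two facts that $p\, x$ holds and that $\texttt{all}_p\, s_0$ holds. The first fact lets me form the dependent pair $x' \in \texttt{sig}\, p$ whose projection $\texttt{val}\, x'$ is definitionally $x$; the second fact feeds the induction hypothesis on $s_0$, yielding a list $s_0'$ over $\texttt{sig}\, p$ with $s_0 = \texttt{map}_{\texttt{val}}\, s_0'$. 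I then set $s' \coloneq x' :: s_0'$ and verify $\texttt{map}_{\texttt{val}}\,(x' :: s_0') = \texttt{val}\, x' :: \texttt{map}_{\texttt{val}}\, s_0' = x :: s_0 = s$, using \cref{def:coq:map} and the two facts just established.

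The step that requires the most care is the dependent bookkeeping in the inductive case: because the goal is a $\Sigma$-type rather than a mere proposition, I must actually package the proof of $p\, x$ into the element $x'$ and then check that projecting it back reproduces $x$. This is unproblematic here, since $p\, x$ is a boolean equality (so its proof term carries no awkward content), but it is the place where a naive ``just do induction'' would stumble if one forgot that the \emph{witness}---not merely its existence---has to be produced. Everything else is a direct unfolding of the definitions of $\texttt{map}$ and $\texttt{all}$.
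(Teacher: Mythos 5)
Your proof is correct: the paper states \texttt{all\_sigP} as a background fact from the \MC library without giving a proof, and your structural induction---splitting $\texttt{all}_p\,(x::s_0)$ into $p\,x$ and $\texttt{all}_p\,s_0$, packaging $p\,x$ into the $\texttt{sig}\,p$ element, and consing onto the witness from the induction hypothesis---is exactly the standard construction used there. Your remark about the $\Sigma$-type forcing an explicit witness, and about $\texttt{val}\,x' = x$ holding definitionally for the dependent pair, is also on point.
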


\section{The theory of stable sort functions}
\label{appx:stablesort-theory}

This appendix provides the list of all lemmas and their proofs about stable sort functions we stated
and proved using the \coq{stableSort} structure (\cref{sec:the-interface}).
Each informal statement is followed by the corresponding formal statements in \Coq, which include
corollaries such as ones delimiting the domain by a predicate, \eg, \cref{corollary:filter_sort_in}.

\begin{lemma}[An induction principle over traces of \sort{}, \cref{lemma:sort_ind'}]
\label{lemma:coq:sort_ind}
Suppose $\leq$ and $\sim$ are binary relations on $T$ and $\mathrm{list} \, T$, respectively, and
$\mathit{xs}$ is a list of type $\mathrm{list} \, T$.
Then, $\mathit{xs} \sim \sort{}_\leq \, \mathit{xs}$ holds whenever the following four
induction cases hold:
\begin{itemize}
 \item for any lists $\mathit{xs}$, $\mathit{xs'}$, $\mathit{ys}$, and $\mathit{ys'}$ of type
       $\mathrm{list} \, T$,
       $(\mathit{xs} \concat \mathit{ys}) \sim (\mathit{xs'} \merge_\leq \mathit{ys'})$ holds
       whenever $\mathit{xs} \sim \mathit{xs'}$ and $\mathit{ys} \sim \mathit{ys'}$ hold,
 \item for any lists $\mathit{xs}$, $\mathit{xs'}$, $\mathit{ys}$, and $\mathit{ys'}$ of type
       $\mathrm{list} \, T$,
       $(\mathit{xs} \concat \mathit{ys}) \sim
	\texttt{rev} \, (\texttt{rev} \, \mathit{ys'} \merge_\geq \texttt{rev} \, \mathit{xs'})$
       holds whenever $\mathit{xs} \sim \mathit{xs'}$ and $\mathit{ys} \sim \mathit{ys'}$ hold,
 \item for any $x$ of type $T$, $[x] \sim [x]$ holds, and
 \item $[] \sim []$ holds.
\end{itemize}
\begin{coqcode}
Lemma sort_ind sort T (leT : rel T) (R : list T -> list T -> Prop) :
  (forall xs xs' : list T, R xs xs' -> forall ys ys' : list T, R ys ys' ->
    R (xs ++ ys) (merge leT xs' ys')) ->
  (forall xs xs' : list T, R xs xs' -> forall ys ys' : list T, R ys ys' ->
    R (xs ++ ys) (rev (merge (fun x y => leT y x) (rev ys') (rev xs')))) ->
  (forall x : T, R [:: x] [:: x]) ->
  R nil nil ->
  forall s : list T, R s (sort T leT s).
\end{coqcode}
\end{lemma}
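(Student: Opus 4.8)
The plan is to follow the proof of \cref{lemma:sort_ind} almost verbatim, the only novelty being the extra merge operator in $\TasortExt$ and a corresponding extra induction case. First I would use the two characteristic equations to rephrase the goal. Writing $\geq$ for $\lambda x\,y.\,\leq y\,x$ and $\mergerev_\leq$ for $\lambda\,\mathit{xs}\,\mathit{ys}.\,\texttt{rev}\,(\texttt{rev}\,\mathit{ys} \merge_\geq \texttt{rev}\,\mathit{xs})$, \cref{eq:asort_catE'} rewrites the $\mathit{xs}$ on the left of $\sim$ as $\asort (\leq)\,(\concat)\,(\concat)\,[\cdot]\,[]\,\mathit{xs}$, and \cref{eq:asort_mergeE'} rewrites $\sort_\leq \, \mathit{xs}$ on the right as $\asort (\leq)\,(\merge_\leq)\,(\mergerev_\leq)\,[\cdot]\,[]\,\mathit{xs}$. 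Hence it suffices to prove
\[
 \asort (\leq)\,(\concat)\,(\concat)\,[\cdot]\,[]\,\mathit{xs} \sim
 \asort (\leq)\,(\merge_\leq)\,(\mergerev_\leq)\,[\cdot]\,[]\,\mathit{xs}.
\]

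Next I would apply the parametricity $(\asort,\asort)\in\paramt{\TasortExt}$ with the instantiation $T_1 \coloneq T_2 \coloneq T$ and $\sim_T$ the equality on $T$, $R_1 \coloneq R_2 \coloneq \mathrm{list}\,T$ and $\sim_R \coloneq \sim$, both relation arguments set to $\leq$, the first pair of merge arguments set to $\concat$ and $\merge_\leq$, the second pair set to $\concat$ and $\mergerev_\leq$, both singleton arguments to $[\cdot]$, both empty arguments to $[]$, and both inputs to $\mathit{xs}$. The conclusion of the interpretation is exactly the displayed relation, so the work is to discharge the premises. The premise for the relation argument requires $(x_1 \leq y_1) = (x_2 \leq y_2)$ whenever $x_1 \sim_T x_2$ and $y_1 \sim_T y_2$; since $\sim_T$ is equality and both copies of the relation are the same $\leq$, this holds reflexively. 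The premise $(\mathit{xs}_1,\mathit{xs}_2)\in\paramt{\mathrm{list}}_{\sim_T}$ reduces to equality of lists and is discharged by reflexivity because both inputs are $\mathit{xs}$.

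The remaining four premises match the four induction cases one for one. The first-merge premise, $\mathit{xs}_1 \sim \mathit{xs}_2 \Rightarrow \mathit{ys}_1 \sim \mathit{ys}_2 \Rightarrow (\mathit{xs}_1 \concat \mathit{ys}_1) \sim (\mathit{xs}_2 \merge_\leq \mathit{ys}_2)$, is the first induction case. The second-merge premise instantiates the two abstract merges with $\concat$ and $\mergerev_\leq$, yielding $(\mathit{xs}_1 \concat \mathit{ys}_1) \sim \texttt{rev}\,(\texttt{rev}\,\mathit{ys}_2 \merge_\geq \texttt{rev}\,\mathit{xs}_2)$ after unfolding $\mergerev_\leq$, which is precisely the new second induction case. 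The singleton premise is $[x]\sim[x]$ (third case), and the empty premise is $[]\sim[]$ (fourth case).

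The main thing to get right is the orientation of the second case: one has to check that unfolding $\mergerev_\leq$ on the arguments supplied by parametricity produces exactly $\texttt{rev}\,(\texttt{rev}\,\mathit{ys'} \merge_\geq \texttt{rev}\,\mathit{xs'})$ and not a swapped variant, since this is what fixes the argument order in the hypothesis of the lemma. Beyond this bookkeeping no genuine difficulty arises: as in \cref{lemma:sort_ind}, all the real content is carried by the parametricity of \asort and by the two characteristic equations, and the extension consists solely in accounting for the additional merge-by-$\geq$ operator.
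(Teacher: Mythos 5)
Your proposal is correct and matches the paper's own argument: the paper proves \cref{lemma:coq:sort_ind} by the same reduction via \cref{eq:asort_mergeE',eq:asort_catE'} followed by an application of the parametricity of \asort{} at $\TasortExt$ with $\sim_T$ taken to be equality and $\sim_R \coloneq {\sim}$, the four remaining premises being exactly the four induction cases (see the proofs of \cref{lemma:sort_ind,lemma:sort_ind'}). Your attention to the argument order when unfolding $\mergerev_\leq$ in the second case is exactly the right bookkeeping point.
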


\begin{proof}
 See \cref{lemma:sort_ind,lemma:sort_ind'}.
\end{proof}

\begin{lemma}[The naturality of \sort{}, \cref{lemma:sort_map}]
\label{lemma:coq:map_sort}
\label{lemma:coq:sort_map}
Suppose $\leq_T$ is a relation on type $T$, $f$ is a function from $T'$ to $T$, and $s$ is a list of
type $\mathrm{list} \, T'$. Then, the following equation holds:
\[
 \sort{}_{\leq_T} \, [f \, x \mid x \leftarrow s] =
 [f \, x \mid x \leftarrow \sort{}_{\leq_{T'}} \, s]
\]
where $x \leq_{T'} y \coloneq f \, x \leq_T f \, y$.
\begin{coqcode}
Lemma map_sort sort T T' (f : T' -> T) (leT' : rel T') (leT : rel T) :
  (forall x y : T', leT (f x) (f y) = leT' x y) ->
  forall s : list T', map f (sort T' leT' s) = sort T leT (map f s).

Lemma sort_map sort T T' (f : T' -> T) (leT : rel T) (s : list T') :
  sort T leT (map f s) = map f (sort T' (relpre f leT) s).
\end{coqcode}
\end{lemma}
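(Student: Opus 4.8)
The plan is to obtain this naturality property as a direct consequence of the parametricity of \sort{} (\cref{lemma:param_sort}), by taking the element relation $\sim_T$ to be the graph of $f$. All of the real content is already packaged inside \cref{lemma:param_sort}, which itself rests on the parametricity of \asort{}; the only work left here is to specialize that lemma correctly and read off the equation.

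First I would instantiate \cref{lemma:param_sort} with $T_1 \coloneq T$, $T_2 \coloneq T'$, the relation $x \sim_T y \coloneq (x = f \, y)$, and the two comparison functions ${\leq_1} \coloneq {\leq_T}$ and ${\leq_2} \coloneq {\leq_{T'}}$. The first premise of parametricity demands $(x_1 \leq_T y_1) = (x_2 \leq_{T'} y_2)$ whenever $x_1 \sim_T x_2$ and $y_1 \sim_T y_2$; after substituting $x_1 = f \, x_2$ and $y_1 = f \, y_2$, this becomes $(f \, x_2 \leq_T f \, y_2) = (x_2 \leq_{T'} y_2)$, which is precisely the hypothesis relating $\leq_T$ and $\leq_{T'}$ (and holds by definition when $x \leq_{T'} y \coloneq f \, x \leq_T f \, y$).

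Next I would unfold $\paramt{\mathrm{list}}_{\sim_T}$ for this particular $\sim_T$: two lists are related iff they have equal length and are pairwise related by $\sim_T$, i.e.\ $[x_1, \dots, x_n]$ and $[y_1, \dots, y_n]$ are related iff $x_i = f \, y_i$ for all $i$, which is exactly $[x_1, \dots, x_n] = [f \, y \mid y \leftarrow [y_1, \dots, y_n]]$. Thus $\paramt{\mathrm{list}}_{\sim_T}$ is the graph of $\texttt{map}_f$. Instantiating the second premise with $\mathit{xs}_2 \coloneq \mathit{xs}$ and $\mathit{xs}_1 \coloneq [f \, x \mid x \leftarrow \mathit{xs}]$ makes that premise hold reflexively, and reading the conclusion $(\sort{}_{\leq_T} \, \mathit{xs}_1, \sort{}_{\leq_{T'}} \, \mathit{xs}_2) \in \paramt{\mathrm{list}}_{\sim_T}$ through the same identification yields exactly $\sort{}_{\leq_T} \, [f \, x \mid x \leftarrow \mathit{xs}] = [f \, x \mid x \leftarrow \sort{}_{\leq_{T'}} \, \mathit{xs}]$, which is the statement of \coq{map_sort}.

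Finally, the second \Coq statement \coq{sort_map} is the special case where $\leq_{T'}$ is taken to be $\coq{relpre} \, f \, (\leq_T)$, i.e.\ $x \leq_{T'} y \coloneq f \, x \leq_T f \, y$; there the first premise holds by definitional equality (\coq{reflexivity}), so it follows immediately from \coq{map_sort}. The main obstacle, such as it is, is purely bookkeeping: keeping straight the direction of the graph relation so that $T_1$ is the target type $T$ and $T_2$ the source type $T'$, and verifying that the pointwise lifting $\paramt{\mathrm{list}}$ of the graph of $f$ is the graph of $\texttt{map}_f$. There is no inductive or computational difficulty beyond what \cref{lemma:param_sort} has already absorbed.
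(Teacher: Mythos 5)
Your proposal is correct and matches the paper's own proof: both instantiate the parametricity of \sort{} (\cref{lemma:param_sort}) with $T_1 \coloneq T$, $T_2 \coloneq T'$, the graph relation $x \sim_T y \coloneq (x = f\,y)$, and ${\leq_1} \coloneq {\leq_T}$, ${\leq_2} \coloneq {\leq_{T'}}$, then identify $\paramt{\mathrm{list}}_{\sim_T}$ with the graph of $\texttt{map}_f$ and read off the equation. Your extra remarks on the direction of the graph relation and on \coq{sort_map} being the \coq{relpre} specialization are accurate bookkeeping that the paper leaves implicit.
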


\begin{proof}
 See \cref{lemma:sort_map}.
\end{proof}

\begin{lemma}[\cref{lemma:pairwise_sort}]
\label{lemma:coq:pairwise_sort}
For any ${\leq} \subseteq T \times T$ and $s$ of type $\mathrm{list} \, T$ pairwise sorted \wrt $\leq$,
$\sort{}_\leq \, s$ is equal to $s$.
\begin{coqcode}
Lemma pairwise_sort sort T (leT : rel T) (s : list T) :
  pairwise leT s -> sort T leT s = s.
\end{coqcode}
\end{lemma}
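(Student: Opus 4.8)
The plan is to prove \cref{lemma:coq:pairwise_sort} by the induction principle over traces (\cref{lemma:sort_ind'}, equivalently \cref{lemma:coq:sort_ind}), instantiating the relation $\sim$ so that the statement $s \sim \sort{}_\leq\,s$ unfolds to exactly the desired equation $\sort{}_\leq\,s = s$, but with a pairwise-sortedness side condition threaded through. Concretely, I would set
\[
 \mathit{xs} \sim \mathit{ys} \coloneq
 (\text{$\mathit{xs}$ is pairwise sorted \wrt $\leq$}) \Rightarrow \mathit{ys} = \mathit{xs}.
\]
Then the conclusion $\mathit{xs} \sim \sort{}_\leq\,\mathit{xs}$ says precisely that if $\mathit{xs}$ is pairwise sorted then $\sort{}_\leq\,\mathit{xs} = \mathit{xs}$, which is the lemma. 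It remains to discharge the four induction cases.

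The two leaf cases are immediate: for the singleton case $[x] \sim [x]$ and the empty case $[]\sim[]$, the right-hand side already equals the left-hand side, so the implication holds trivially (irrespective of the pairwise-sortedness hypothesis). The real content is in the two merge cases. For the first (merge by $\leq$), I must show $(\mathit{xs}\concat\mathit{ys}) \sim (\mathit{xs'}\merge_\leq\mathit{ys'})$ given the induction hypotheses $\mathit{xs}\sim\mathit{xs'}$ and $\mathit{ys}\sim\mathit{ys'}$. So assume $\mathit{xs}\concat\mathit{ys}$ is pairwise sorted \wrt $\leq$. By \cref{lemma:coq:pairwise_cat} this splits into: both $\mathit{xs}$ and $\mathit{ys}$ are pairwise sorted, and $x\leq y$ for all $x\in\mathit{xs}$, $y\in\mathit{ys}$ (\ie $\texttt{allrel}_\leq\,\mathit{xs}\,\mathit{ys}$). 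The first two let me invoke the induction hypotheses to get $\mathit{xs'}=\mathit{xs}$ and $\mathit{ys'}=\mathit{ys}$. Substituting, the goal becomes $\mathit{xs}\merge_\leq\mathit{ys} = \mathit{xs}\concat\mathit{ys}$, which is exactly \cref{lemma:coq:allrel_merge} applied to the $\texttt{allrel}$ fact.

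For the second merge case (merge by $\geq$, with reversals), I need $(\mathit{xs}\concat\mathit{ys}) \sim \texttt{rev}(\texttt{rev}\,\mathit{ys'} \merge_\geq \texttt{rev}\,\mathit{xs'})$. Again assume $\mathit{xs}\concat\mathit{ys}$ pairwise sorted; as before \cref{lemma:coq:pairwise_cat} and the induction hypotheses give $\mathit{xs'}=\mathit{xs}$, $\mathit{ys'}=\mathit{ys}$, and $\texttt{allrel}_\leq\,\mathit{xs}\,\mathit{ys}$. The goal reduces to showing $\texttt{rev}(\texttt{rev}\,\mathit{ys} \merge_\geq \texttt{rev}\,\mathit{xs}) = \mathit{xs}\concat\mathit{ys}$. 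I would first transport the $\texttt{allrel}$ fact to the reversed lists: by \cref{lemma:coq:allrel_rev2} and \cref{lemma:coq:allrelC}, $\texttt{allrel}_\leq\,\mathit{xs}\,\mathit{ys}$ yields $\texttt{allrel}_\geq\,(\texttt{rev}\,\mathit{ys})\,(\texttt{rev}\,\mathit{xs})$. Then \cref{lemma:coq:allrel_merge} collapses the inner merge to concatenation, giving $\texttt{rev}\,\mathit{ys}\merge_\geq\texttt{rev}\,\mathit{xs} = \texttt{rev}\,\mathit{ys}\concat\texttt{rev}\,\mathit{xs}$. Finally \cref{lemma:coq:rev_cat} rewrites $\texttt{rev}\,\mathit{ys}\concat\texttt{rev}\,\mathit{xs} = \texttt{rev}(\mathit{xs}\concat\mathit{ys})$, and \cref{lemma:coq:revK} (involutivity of $\texttt{rev}$) finishes the outer reversal, yielding $\mathit{xs}\concat\mathit{ys}$ as required.

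I expect the \emph{main obstacle} to be purely bookkeeping rather than conceptual: getting the converse-relation juggling in the second merge case exactly right, since $\merge_\geq$ operates with the flipped relation and the $\texttt{allrel}$ hypothesis must be reversed in both its argument lists and its relation to match \cref{lemma:coq:allrel_merge}. The choice of $\sim$ as an implication (carrying pairwise-sortedness as an antecedent) is what makes the induction go through, because it lets each induction hypothesis be \emph{applied} only once the relevant sublist is known to be pairwise sorted, which is precisely what \cref{lemma:coq:pairwise_cat} supplies from the pairwise-sortedness of the concatenation. Everything else is a routine chain of the list identities already catalogued in the appendix.
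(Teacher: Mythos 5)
Your proposal is correct and matches the paper's own proof essentially step for step: the paper also proceeds by induction over traces (\cref{lemma:coq:sort_ind}) with the implicit relation ``if the left list is pairwise sorted then the right list equals it,'' discharging the two merge cases via \cref{lemma:coq:pairwise_cat}, the induction hypotheses, and \cref{lemma:coq:allrel_merge}, and handling the reversed merge case with \cref{lemma:coq:allrelC,lemma:coq:allrel_rev2,lemma:coq:rev_cat,lemma:coq:revK}. The only difference is that you spell out the choice of $\sim$ explicitly, which the paper leaves implicit.
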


\begin{proof}
 We prove it by induction on $\sort{}_\leq \, s$ (\cref{lemma:coq:sort_ind}).
 For the first case, we show that $s'_1 \merge_\leq s'_2 = s_1 \concat s_2$ holds whenever
 $s_1 \concat s_2$ is pairwise sorted \wrt $\leq$, which is equivalent to the following conjunction
 (\cref{lemma:coq:pairwise_cat}):
 \begin{enumerate}[label=(\roman*)]
  \item \label{item:s1_s2_pairwise} both $s_1$ and $s_2$ are pairwise sorted \wrt $\leq$, and
  \item \label{item:s1_le_s2} $x \leq y$ holds for any $x \in s_1$ and $y \in s_2$.
 \end{enumerate}
 Then,
 \begin{align*}
  s'_1 \merge_\leq s'_2
  &= s_1 \merge_\leq s_2 & (\text{\ref{item:s1_s2_pairwise} and I.H.}) \\
  &= s_1 \concat s_2.    & (\text{\ref{item:s1_le_s2} and \cref{lemma:coq:allrel_merge}})
 \end{align*}

 For the second case, we show that
 $\texttt{rev} \, (\texttt{rev} \, s'_2 \merge_\geq \texttt{rev} \, s'_1) = s_1 \concat s_2$ holds
 whenever $s_1 \concat s_2$ is pairwise sorted \wrt $\leq$, which is equivalent to the following
 conjunction (\cref{lemma:coq:pairwise_cat}):
 \begin{enumerate}[label=(\roman*), resume]
  \item \label{item:s1_s2_pairwise'} both $s_1$ and $s_2$ are pairwise sorted \wrt $\leq$, and
  \item \label{item:s1_le_s2'} $x \leq y$ holds for any $x \in s_1$ and $y \in s_2$,
        or equivalently	(\cref{lemma:coq:allrelC,lemma:coq:allrel_rev2}),
        $y \geq x$ holds for any $y \in \texttt{rev} \, s_2$ and $x \in \texttt{rev} \, s_1$.
 \end{enumerate}
 Then,
 \begin{align*}
  \texttt{rev} \, (\texttt{rev} \, s'_2 \merge_\geq \texttt{rev} \, s'_1)
  &= \texttt{rev} \, (\texttt{rev} \, s_2 \merge_\geq \texttt{rev} \, s_1)
  & (\text{\ref{item:s1_s2_pairwise'} and I.H.}) \\
  &= \texttt{rev} \, (\texttt{rev} \, s_2 \concat \texttt{rev} \, s_1)
  & (\text{\ref{item:s1_le_s2'} and \cref{lemma:coq:allrel_merge}}) \\
  &= s_1 \concat s_2.
  & (\text{\cref{lemma:coq:rev_cat,lemma:coq:revK}})
 \end{align*}

 The last two cases are trivial.
\end{proof}

\begin{lemma}
\label{lemma:coq:sorted_sort}
\label{lemma:coq:sorted_sort_in}
For any transitive relation ${\leq} \subseteq T \times T$ and $s$ of type $\mathrm{list} \, T$
sorted \wrt $\leq$, $\sort{}_\leq \, s$ is equal to $s$.
\begin{coqcode}
Lemma sorted_sort sort T (leT : rel T) :
  transitive leT -> forall s : list T, sorted leT s -> sort T leT s = s.

Lemma sorted_sort_in sort T (P : pred T) (leT : rel T) :
  {in P & &, transitive leT} ->
  forall s : list T, all P s -> sorted leT s -> sort T leT s = s.
\end{coqcode}
\end{lemma}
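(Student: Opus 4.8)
The plan is to reduce this statement directly to \cref{lemma:coq:pairwise_sort} using the equivalence of sortedness and pairwise sortedness for transitive relations. First I would observe that for the plain (non-localized) version \coq{sorted_sort}, since $\leq$ is transitive, \cref{lemma:coq:sorted_pairwise} tells us that $s$ being sorted \wrt $\leq$ is equivalent to $s$ being pairwise sorted \wrt $\leq$. Thus the hypothesis that $s$ is sorted yields that $s$ is pairwise sorted, and \cref{lemma:coq:pairwise_sort} (equivalently \cref{lemma:pairwise_sort}) immediately gives $\sort{}_\leq \, s = s$. No fresh induction over traces is needed here, because that work is already encapsulated in \cref{lemma:coq:pairwise_sort}.

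For the localized version \coq{sorted_sort_in}, where $\leq$ is only assumed transitive on the domain delimited by a predicate $P$ and all elements of $s$ satisfy $P$, I would follow the sigma-type technique used in the proof of \cref{corollary:filter_sort_in}. Using \cref{lemma:coq:all_sigP}, I would replace $s$ everywhere with $\texttt{map}_{\texttt{val}} \, s'$ for some $s'$ of type $\mathrm{list} \, (\texttt{sig} \, P)$, and lift $\leq$ to the relation $\texttt{leP} \, x \, y \coloneq \texttt{val} \, x \leq \texttt{val} \, y$ on $\texttt{sig} \, P$. The in-domain transitivity of $\leq$ transfers to genuine transitivity of $\texttt{leP}$ via the ternary sigma lemma (\cref{lemma:coq:in3_sig}). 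By the naturality of \sort{} (\cref{lemma:sort_map}), we have $\sort{}_\leq \, (\texttt{map}_{\texttt{val}} \, s') = \texttt{map}_{\texttt{val}} \, (\sort{}_{\texttt{leP}} \, s')$, so it suffices to show $\sort{}_{\texttt{leP}} \, s' = s'$, which follows from the plain version applied to the now-transitive relation $\texttt{leP}$, once we know that $s'$ is sorted \wrt $\texttt{leP}$.

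The only point requiring care is the transfer of sortedness across the embedding $\texttt{val}$: I would need that $s = \texttt{map}_{\texttt{val}} \, s'$ being sorted \wrt $\leq$ implies $s'$ is sorted \wrt $\texttt{leP}$. Since $\texttt{leP}$ is by definition the pullback of $\leq$ along $\texttt{val}$, sortedness commutes with mapping by $\texttt{val}$---each adjacent comparison $\texttt{val} \, x_i \leq \texttt{val} \, x_{i + 1}$ is literally $x_i \mathrel{\texttt{leP}} x_{i + 1}$---so this is a routine congruence and poses no real difficulty. Overall, the statement is essentially a corollary of \cref{lemma:coq:sorted_pairwise,lemma:coq:pairwise_sort}, with the localized version adding only the standard subtype bookkeeping, so I expect no substantive obstacle.
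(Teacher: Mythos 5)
Your proposal is correct and matches the paper's own proof, which derives the result directly from \cref{lemma:coq:sorted_pairwise} and \cref{lemma:coq:pairwise_sort}. Your additional treatment of the localized \coq{sorted_sort_in} version via the sigma-type technique is consistent with how the paper handles such variants elsewhere (\cf \cref{corollary:filter_sort_in}), though the paper does not spell this out for this particular lemma.
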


\begin{proof}
 This follows from \cref{lemma:coq:sorted_pairwise,lemma:coq:pairwise_sort}.
\end{proof}

\begin{theorem}[\Cref{lemma:sort_pairwise_stable}]
\label{lemma:coq:sort_pairwise_stable}
\label{lemma:coq:sort_pairwise_stable_in}
For any ${\leq_1}, {\leq_2} \subseteq T \times T$ and $\mathit{xs}$ of type $\mathrm{list} \, T$,
$\sort{}_{\leq_1} \, \mathit{xs}$ is sorted \wrt the lexicographic order
${\leq_\mathrm{lex}} \coloneq {\leq_{(1, 2)}}$ whenever $\leq_1$ is total and $\mathit{xs}$ is
pairwise sorted \wrt $\leq_2$.
\begin{coqcode}
Lemma sort_pairwise_stable sort T (leT leT' : rel T) : total leT ->
  forall s : list T, pairwise leT' s -> sorted (lexord leT leT') (sort T leT s).

Lemma sort_pairwise_stable_in sort T (P : pred T) (leT leT' : rel T) :
  {in P &, total leT} -> forall s : list T, all P s -> pairwise leT' s ->
  sorted (lexord leT leT') (sort T leT s).
\end{coqcode}
\end{theorem}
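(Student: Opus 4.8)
This statement is the formal \Coq counterpart of \cref{lemma:sort_pairwise_stable}, so the plan is to replay the trace induction we carried out there, now phrased against \cref{lemma:coq:sort_ind} (i.e.\ \cref{lemma:sort_ind'}). The crucial observation is that sortedness alone is not preserved by this induction: in the merge and reverse-merge cases we must know that the sublists being merged draw their elements from input slices that are still pairwise sorted \wrt $\leq_2$. I would therefore instantiate the relation $\sim$ of \cref{lemma:coq:sort_ind} with the strengthened conjunction
\[
 \mathit{xs} \sim \mathit{xs'} \coloneq
 (\mathit{xs'} \subseteq \mathit{xs}) \land
 \bigl(\text{$\mathit{xs}$ pairwise sorted \wrt $\leq_2$}
   \Rightarrow \text{$\mathit{xs'}$ sorted \wrt $\leq_{\mathrm{lex}}$}\bigr),
\]
so that the subset component feeds the cross-slice information into each recursive call; the desired conclusion is then the second conjunct of $\mathit{xs} \sim \sort{}_{\leq_1}\,\mathit{xs}$.

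The singleton and empty cases are immediate, since $\subseteq$ is reflexive and a list shorter than two elements is trivially sorted. The merge case is exactly the argument given below \cref{lemma:sort_pairwise_stable}: the subset component holds because $\mathit{xs'} \merge_{\leq_1} \mathit{ys'}$ is a permutation of $\mathit{xs'} \concat \mathit{ys'}$ (\cref{lemma:coq:perm_merge}); for sortedness I decompose the pairwise-sortedness of $\mathit{xs} \concat \mathit{ys}$ via \cref{lemma:coq:pairwise_cat} into pairwise-sortedness of each half together with an all-pairs $\leq_2$ relation across the halves, transport that cross relation to $\mathit{xs'},\mathit{ys'}$ using the two subset hypotheses (\cref{lemma:coq:sub_all}), and conclude with \cref{lemma:coq:merge_stable_sorted} (whose totality side condition is supplied by the hypothesis on $\leq_1$).

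The reverse-merge case is the main obstacle and the only genuinely new ingredient relative to the simplified proof. Its conclusion concerns $\texttt{rev}\,(\texttt{rev}\,\mathit{ys'} \merge_{\geq_1} \texttt{rev}\,\mathit{xs'})$, so I first pass to the reversal with \cref{lemma:coq:rev_sorted}, reducing the goal to showing that $\texttt{rev}\,\mathit{ys'} \merge_{\geq_1} \texttt{rev}\,\mathit{xs'}$ is sorted \wrt the converse lexicographic order $\geq_{\mathrm{lex}}$ (which is precisely the converse of $\leq_{\mathrm{lex}}$). I then reuse \cref{lemma:coq:merge_stable_sorted} for the relation $\geq_1$ (total because $\leq_1$ is), obtaining its side conditions from \cref{lemma:coq:rev_sorted} again (each reversed half is $\geq_{\mathrm{lex}}$-sorted once the corresponding unreversed half is $\leq_{\mathrm{lex}}$-sorted via the IH) and from \cref{lemma:coq:allrel_rev2,lemma:coq:allrelC} (the cross $\leq_2$ relation survives reversal with its arguments swapped, yielding the required $\geq_2$ relation). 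The bookkeeping of orientations---$\leq_1$ versus $\geq_1$, $\mathit{xs'}$ versus $\mathit{ys'}$, and the two separate uses of \cref{lemma:coq:rev_sorted}---is exactly where care is needed, and it is the computation already spelled out in the extension to the proof of \cref{lemma:sort_pairwise_stable}.

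Finally, for the localized variant \texttt{sort\_pairwise\_stable\_in}, where $\leq_1$ is total only on a predicate $P$ that every element of $\mathit{xs}$ satisfies, I would follow the sigma-type reduction used for \cref{corollary:filter_sort_in}. Using \cref{lemma:coq:all_sigP} I replace $\mathit{xs}$ by $\texttt{map}_{\texttt{val}}\,\mathit{xs'}$ for some $\mathit{xs'} : \mathrm{list}\,(\texttt{sig}\,P)$, and I pull $\leq_1,\leq_2$ back to relations on $\texttt{sig}\,P$ along \texttt{val}, which are now \emph{globally} total and usable by \cref{lemma:coq:in2_sig}. Since the lexicographic construction commutes with precomposition by \texttt{val} and sortedness is compatible with \texttt{map} (the same congruence-under-\texttt{map} step used for \cref{corollary:filter_sort_in}, together with the naturality of \sort{} from \cref{lemma:sort_map}), the goal reduces to the plain version just proved on $\texttt{sig}\,P$, which closes the argument.
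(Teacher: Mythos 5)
Your proposal is correct and follows essentially the same route as the paper: the same trace induction via \cref{lemma:coq:sort_ind} on the strengthened invariant (subset of the input slice conjoined with the pairwise-sorted-implies-sorted implication), with the merge case closed by \cref{lemma:coq:perm_merge,lemma:coq:pairwise_cat,lemma:coq:sub_all,lemma:coq:merge_stable_sorted} and the reverse-merge case by the additional \cref{lemma:coq:rev_sorted,lemma:coq:allrel_rev2} bookkeeping. Your sigma-type reduction for the localized \texttt{\_in} variant likewise matches the paper's standard treatment of such corollaries.
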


\begin{proof}
 We prove it by induction on $\sort{}_\leq \, \mathit{xs}$ (\cref{lemma:coq:sort_ind}).
 See \cref{sec:induction-over-traces,sec:new-characterization-to-correctness} for details.
\end{proof}

\begin{lemma}[\Cref{corollary:sort_stable}]
\label{lemma:coq:sort_stable}
\label{lemma:coq:sort_stable_in}
For any ${\leq_1}, {\leq_2} \subseteq T \times T$ and $\mathit{xs}$ of type $\mathrm{list} \, T$,
$\sort{}_{\leq_1} \, \mathit{xs}$ is sorted \wrt the lexicographic order
${\leq_\mathrm{lex}} \coloneq {\leq_{(1, 2)}}$ whenever $\leq_1$ is total, $\leq_2$ is transitive,
and $\mathit{xs}$ is sorted \wrt $\leq_2$.
\begin{coqcode}
Lemma sort_stable sort T (leT leT' : rel T) : total leT -> transitive leT' ->
  forall s : list T, sorted leT' s -> sorted (lexord leT leT') (sort T leT s).

Lemma sort_stable_in sort T (P : pred T) (leT leT' : rel T) :
  {in P &, total leT} -> {in P & &, transitive leT'} ->
  forall s : list T, all P s -> sorted leT' s ->
  sorted (lexord leT leT') (sort T leT s).
\end{coqcode}
\end{lemma}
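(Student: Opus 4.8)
The plan is to obtain this statement as an immediate corollary of the sortedness-and-stability theorem \cref{lemma:coq:sort_pairwise_stable} together with the equivalence of sortedness and pairwise sortedness for transitive relations (\cref{lemma:coq:sorted_pairwise}). The hypotheses of the two statements differ only in that \cref{lemma:coq:sort_pairwise_stable} assumes $\mathit{xs}$ to be \emph{pairwise} sorted \wrt $\leq_2$, whereas here we are given merely that $\mathit{xs}$ is sorted \wrt $\leq_2$, but with the additional assumption that $\leq_2$ is transitive. That extra transitivity assumption is exactly what is needed to bridge the gap, so no fresh induction over traces is required at this level—the whole trace-based argument is already packaged inside \cref{lemma:coq:sort_pairwise_stable}.

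First I would use the transitivity of $\leq_2$ to invoke \cref{lemma:coq:sorted_pairwise}, turning the hypothesis that $\mathit{xs}$ is sorted \wrt $\leq_2$ into the stronger statement that $\mathit{xs}$ is pairwise sorted \wrt $\leq_2$. Then I would apply \cref{lemma:coq:sort_pairwise_stable}, feeding it the totality of $\leq_1$ and the pairwise sortedness just derived, to conclude that $\sort{}_{\leq_1} \, \mathit{xs}$ is sorted \wrt $\leq_{\mathrm{lex}}$. These two steps close the goal.

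For the localized variant \coq{sort_stable_in}, I would follow the same two-step recipe but with the predicate-restricted lemmas: a localized form of \cref{lemma:coq:sorted_pairwise} (transitivity of $\leq_2$ on $P$, together with $\texttt{all}_P \, \mathit{xs}$) to pass from sortedness to pairwise sortedness, and then \cref{lemma:coq:sort_pairwise_stable_in} (totality of $\leq_1$ on $P$, together with $\texttt{all}_P \, \mathit{xs}$). The only mild subtlety—and the nearest thing to an obstacle—is making sure the localized equivalence of sortedness and pairwise sortedness is available in precisely the form needed: that it holds under transitivity restricted to $P$ provided every element of $\mathit{xs}$ lies in $P$. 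This is routine, but it is the one place where the domain restriction must be threaded carefully through both steps so that the hypotheses of \cref{lemma:coq:sort_pairwise_stable_in} are met.
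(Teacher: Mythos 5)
Your proposal is correct and matches the paper's own proof exactly: the paper derives this lemma from \cref{lemma:coq:sorted_pairwise} (equivalence of sortedness and pairwise sortedness under transitivity) followed by \cref{lemma:coq:sort_pairwise_stable}, which is precisely your two-step argument. The handling of the localized variant via the predicate-restricted versions is likewise the intended route.
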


\begin{proof}
 This follows from \cref{lemma:coq:sorted_pairwise,lemma:coq:sort_pairwise_stable}.
\end{proof}

\begin{lemma}
\label{lemma:coq:count_sort}
For any ${\leq} \subseteq T \times T$, $p \subseteq T$, and $s$ of type $\mathrm{list} \, T$,
the numbers of the elements of $\sort{}_\leq \, s$ and $s$ satisfying $p$ are the same, \ie,
$\lvert \sort{}_\leq \, s \rvert_p = \lvert s \rvert_p$.
\begin{coqcode}
Lemma count_sort sort T (leT : rel T) (p : pred T) (s : list T) :
  count p (sort T leT s) = count p s.
\end{coqcode}
\end{lemma}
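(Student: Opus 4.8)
The plan is to apply the induction principle over traces (\cref{lemma:coq:sort_ind}) with the relation $\sim$ on lists defined by $\mathit{xs} \sim \mathit{ys} \coloneq \lvert \mathit{xs} \rvert_p = \lvert \mathit{ys} \rvert_p$. With this choice the conclusion $\mathit{s} \sim \sort{}_\leq \, \mathit{s}$ supplied by the induction principle is exactly the desired equation $\lvert \sort{}_\leq \, \mathit{s} \rvert_p = \lvert \mathit{s} \rvert_p$, so the whole task reduces to discharging the four induction cases. The two leaf cases $[x] \sim [x]$ and $[] \sim []$ are immediate by reflexivity of equality.

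For the first (merge) case, assuming $\lvert \mathit{xs} \rvert_p = \lvert \mathit{xs'} \rvert_p$ and $\lvert \mathit{ys} \rvert_p = \lvert \mathit{ys'} \rvert_p$, I would compute
\[
 \lvert \mathit{xs'} \merge_\leq \mathit{ys'} \rvert_p
 = \lvert \mathit{xs'} \concat \mathit{ys'} \rvert_p
 = \lvert \mathit{xs'} \rvert_p + \lvert \mathit{ys'} \rvert_p
 = \lvert \mathit{xs} \rvert_p + \lvert \mathit{ys} \rvert_p
 = \lvert \mathit{xs} \concat \mathit{ys} \rvert_p,
\]
using \cref{lemma:coq:count_merge} for the first step and \cref{lemma:coq:count_cat} for the second and last. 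For the second (reverse-merge) case, the only additional ingredient is that counting is invariant under reversal (\cref{lemma:coq:count_rev}). Under the same hypotheses,
\[
 \lvert \texttt{rev}\,(\texttt{rev}\,\mathit{ys'} \merge_\geq \texttt{rev}\,\mathit{xs'}) \rvert_p
 = \lvert \texttt{rev}\,\mathit{ys'} \concat \texttt{rev}\,\mathit{xs'} \rvert_p
 = \lvert \mathit{ys'} \rvert_p + \lvert \mathit{xs'} \rvert_p
 = \lvert \mathit{xs} \rvert_p + \lvert \mathit{ys} \rvert_p
 = \lvert \mathit{xs} \concat \mathit{ys} \rvert_p,
\]
where the first equality chains \cref{lemma:coq:count_rev,lemma:coq:count_merge}, the second uses \cref{lemma:coq:count_cat,lemma:coq:count_rev}, and the penultimate step combines the hypotheses with commutativity of addition to reorder the summands.

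I expect no real obstacle here: the argument is a routine instantiation of \cref{lemma:coq:sort_ind} with the counting relation, and every case reduces to the fact that $\merge$, $\concat$, and $\texttt{rev}$ each preserve the number of elements satisfying $p$. The only point requiring the slightest care is the swap of the two summands in the reverse-merge case, which forces a use of commutativity of addition. I would note that this lemma could alternatively be obtained as a corollary of the permutation property (\cref{lemma:perm_sort}) via \cref{lemma:coq:permP}; however, that route imposes an \texttt{eqType} assumption on $T$, whereas the direct trace induction sketched above establishes the statement for an arbitrary type $T$, matching the generality of the formal statement.
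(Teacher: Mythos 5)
Your proof is correct and follows essentially the same route as the paper: an instantiation of the trace induction principle with the relation ``equal $p$-counts,'' discharging the merge case via \cref{lemma:coq:count_merge,lemma:coq:count_cat} and the reverse-merge case additionally via \cref{lemma:coq:count_rev} and commutativity of addition. The only (harmless) difference is cosmetic: the induction yields the equation with the two sides swapped relative to the statement, and your closing remark about avoiding the \texttt{eqType} detour through \cref{lemma:perm_sort} correctly explains why the paper proves this lemma directly rather than as a corollary of the permutation property.
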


\begin{proof}
 We prove it by induction on $\sort{}_\leq \, s$ (\cref{lemma:coq:sort_ind}).
 For the first case,
 \begin{align*}
  \lvert s'_1 \merge_\leq s'_2 \rvert_p
  &= \lvert s'_1 \rvert_p + \lvert s'_2 \rvert_p
  & (\text{\cref{lemma:coq:count_merge,lemma:coq:count_cat}}) \\
  &= \lvert s_1 \rvert_p + \lvert s_2 \rvert_p   & (\text{I.H.}) \\
  &= \lvert s_1 \concat s_2 \rvert_p.            & (\text{\cref{lemma:coq:count_cat}})
  \intertext{For the second case,}
  \lvert \texttt{rev} \, (\texttt{rev} \, s'_2 \merge_\geq \texttt{rev} \, s'_1) \rvert_p
  &= \lvert s'_2 \rvert_p + \lvert s'_1 \rvert_p
  & (\text{\cref{lemma:coq:count_merge,lemma:coq:count_cat,lemma:coq:count_rev}}) \\
  &= \lvert s'_1 \rvert_p + \lvert s'_2 \rvert_p \\
  &= \lvert s_1 \rvert_p + \lvert s_2 \rvert_p   & (\text{I.H.}) \\
  &= \lvert s_1 \concat s_2 \rvert_p.            & (\text{\cref{lemma:coq:count_cat}})
 \end{align*}
 The last two cases are trivial.
\end{proof}

\begin{lemma}
\label{lemma:coq:size_sort}
For any ${\leq} \subseteq T \times T$ and $s$ of type $\mathrm{list} \, T$,
the lengths of $\sort{}_\leq \, s$ and $s$ are equal, \ie,
$\lvert \sort{}_\leq \, s \rvert = \lvert s \rvert$.
\begin{coqcode}
Lemma size_sort sort T (leT : rel T) (s : seq T) : size (sort T leT s) = size s.
\end{coqcode}
\end{lemma}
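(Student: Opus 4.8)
The plan is to obtain this as an immediate corollary of \cref{lemma:coq:count_sort}, avoiding any fresh induction over traces. The key observation is that the length $\lvert s \rvert$ coincides with the number of elements satisfying the constant-true predicate, \ie, $\lvert s \rvert_T = \lvert s \rvert$ (\cref{lemma:coq:count_predT}), so the present statement is really \cref{lemma:coq:count_sort} specialized to that predicate.

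Concretely, first I would rewrite the left-hand side $\lvert \sort{}_\leq \, s \rvert$ as $\lvert \sort{}_\leq \, s \rvert_T$ using \cref{lemma:coq:count_predT} read from right to left. Next, I would apply \cref{lemma:coq:count_sort} instantiated with $p \coloneq (\lambda x, \mathrm{true})$, which yields $\lvert \sort{}_\leq \, s \rvert_T = \lvert s \rvert_T$. Finally, rewriting $\lvert s \rvert_T$ back to $\lvert s \rvert$ by \cref{lemma:coq:count_predT} (now left to right) closes the goal.

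The substantive content---that sorting preserves the multiset of elements, and in particular their number---is already packaged in \cref{lemma:coq:count_sort}, itself established by the trace induction of \cref{lemma:coq:sort_ind}. Consequently there is no genuine obstacle here: the proof reduces to two applications of \cref{lemma:coq:count_predT} sandwiching a single application of \cref{lemma:coq:count_sort}. I would deliberately \emph{not} reprove the claim by a direct induction on $\sort{}_\leq \, s$, since that would merely duplicate the reasoning already carried out for \cref{lemma:coq:count_sort}. (An alternative route via the permutation property \cref{lemma:perm_sort} also works, but it would impose an \coq{eqType} assumption on $T$, whereas the route through \cref{lemma:coq:count_sort} keeps $T$ an arbitrary type, matching the stated generality.)
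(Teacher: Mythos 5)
Your proposal is correct and matches the paper's own proof, which derives \cref{lemma:coq:size_sort} directly from \cref{lemma:coq:count_predT} and \cref{lemma:coq:count_sort}. Your remark about avoiding the \coq{eqType} assumption that the permutation route would impose is also consistent with the paper's choice of dependencies.
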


\begin{proof}
 This follows from \cref{lemma:coq:count_predT,lemma:coq:count_sort}.
\end{proof}

\begin{lemma}
\label{lemma:coq:sort_nil}
Sorting the empty list gives the empty list.
\begin{coqcode}
Lemma sort_nil sort T (leT : rel T) : sort T leT [::] = [::].
\end{coqcode}
\end{lemma}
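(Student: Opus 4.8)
The plan is to derive this statement as an immediate corollary of the length-preservation property \cref{lemma:coq:size_sort}, rather than attempting to unfold \sort{}: here \sort{} ranges over an abstract \coq{stableSort} instance and is only accessible through the derived generic facts, so no reduction is possible. First I would instantiate \cref{lemma:coq:size_sort} at $s \coloneq []$, obtaining $\lvert \sort{}_\leq \, [] \rvert = \lvert [] \rvert = 0$. It then remains to invoke the elementary fact that $[]$ is the unique list of length $0$ (a nonempty list $x :: s'$ has length $\lvert s' \rvert + 1 > 0$), applied to $\sort{}_\leq \, []$, which yields $\sort{}_\leq \, [] = []$.

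Two alternative routes are equally short. One can instead invoke the permutation property \cref{lemma:perm_sort} at the empty list, giving $\sort{}_\leq \, [] \permeq []$; since any list that is a permutation of $[]$ contains no elements and hence is itself $[]$, the conclusion follows. A third, more heavyweight option is to apply the trace induction principle \cref{lemma:sort_ind'} with the relation $s \sim t \coloneq (s = [] \Rightarrow t = [])$. The two merge cases reduce, under the hypothesis $\mathit{xs} \concat \mathit{ys} = []$ (which forces $\mathit{xs} = \mathit{ys} = []$ and, via the premises $\mathit{xs} \sim \mathit{xs'}$ and $\mathit{ys} \sim \mathit{ys'}$, also $\mathit{xs'} = \mathit{ys'} = []$), to the computations $[] \merge_\leq [] = []$ and $\texttt{rev}(\texttt{rev} \, [] \merge_\geq \texttt{rev} \, []) = []$; the singleton case $[x] \sim [x]$ holds vacuously, and the empty case $[] \sim []$ is trivial. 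Instantiating the resulting $\mathit{xs} \sim \sort{}_\leq \, \mathit{xs}$ at $\mathit{xs} = []$ gives exactly $\sort{}_\leq \, [] = []$.

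There is no genuine obstacle: the only point requiring attention is that \sort{} cannot be evaluated by computation because it is abstract, so the result must be obtained from one of the already-established generic lemmas rather than by reduction. I would favor the size-preservation route as the most economical, since \cref{lemma:coq:size_sort} appears immediately above this lemma and reduces the proof to a one-line application of $\lvert s \rvert = 0 \Rightarrow s = []$.
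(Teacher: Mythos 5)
Your preferred route is exactly the paper's proof: instantiate \cref{lemma:coq:size_sort} at the empty list to get $\lvert \sort{}_\leq \, [] \rvert = 0$ and conclude that a list of length $0$ must be $[]$. The alternative routes you sketch are also sound, but they are not needed.
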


\begin{proof}
 The length of $\sort{}_\leq \, []$ is $0$ (\cref{lemma:coq:size_sort}), and thus, it must be
 the empty list.
\end{proof}

\begin{lemma}
\label{lemma:coq:all_sort}
For any $p \subseteq T$, ${\leq} \subseteq T \times T$, and $s$ of type $\mathrm{list} \, T$,
all elements of $\sort{}_\leq \, s$ satisfy $p$ iff all elements of $s$ satisfy $p$.
\begin{coqcode}
Lemma all_sort sort T (p : pred T) (leT : rel T) (s : list T) :
  all p (sort T leT s) = all p s.
\end{coqcode}
\end{lemma}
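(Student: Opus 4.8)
The plan is to avoid induction entirely and instead reduce the boolean equation about $\texttt{all}_p$ to the two quantities already shown to be invariant under sorting, namely the count of elements satisfying a predicate and the length.

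First I would rewrite $\texttt{all}_p$ on both sides using \cref{lemma:coq:all_count}, which expresses $\texttt{all}_p \, s$ as the boolean equality $\lvert s \rvert_p = \lvert s \rvert$. After this rewriting the goal becomes
\[
 (\lvert \sort{}_\leq \, s \rvert_p = \lvert \sort{}_\leq \, s \rvert) =
 (\lvert s \rvert_p = \lvert s \rvert).
\]
Next I would apply \cref{lemma:coq:count_sort}, which gives $\lvert \sort{}_\leq \, s \rvert_p = \lvert s \rvert_p$, together with \cref{lemma:coq:size_sort}, which gives $\lvert \sort{}_\leq \, s \rvert = \lvert s \rvert$. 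Rewriting with these two equalities turns the left-hand side into $(\lvert s \rvert_p = \lvert s \rvert)$, so both sides are syntactically identical and the goal closes by reflexivity.

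There is essentially no obstacle here: all the real work is packaged in \cref{lemma:coq:count_sort,lemma:coq:size_sort}, which are themselves consequences of the trace induction principle (\cref{lemma:coq:sort_ind}) and \cref{lemma:coq:count_predT}. Given those, this lemma is immediate. The only point requiring mild care is to apply \cref{lemma:coq:all_count} in the correct direction on both occurrences of $\texttt{all}_p$, so that the count and size rewrites line up on the two sides of the equation.
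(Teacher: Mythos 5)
Your proposal is exactly the paper's proof: it derives the statement from \cref{lemma:coq:all_count,lemma:coq:count_sort,lemma:coq:size_sort} in the same way, rewriting $\texttt{all}_p$ to a count-versus-size equality on both sides and then using the invariance of count and size under sorting. No differences worth noting.
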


\begin{proof}
 This follows from \cref{lemma:coq:all_count,lemma:coq:count_sort,lemma:coq:size_sort}.
\end{proof}

\begin{lemma}[\Cref{lemma:perm_sort}]
\label{lemma:coq:perm_sort}
For any ${\leq} \subseteq T \times T$ and $s$ of type $\mathrm{list} \, T$,
$\sort{}_\leq \, s$ is a permutation of $s$.
\begin{coqcode}
Lemma perm_sort sort (T : eqType) (leT : rel T) (s : list T) :
  perm_eql (sort T leT s) s.
\end{coqcode}
\end{lemma}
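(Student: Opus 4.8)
The plan is to reduce the statement about $\permeq$-extensional equality to a single $\permeq$ fact, and then discharge that fact from the already-established count lemma. Unfolding the notation, the goal $\texttt{perm\_eql} \, (\sort{}_\leq \, s) \, s$ means $\texttt{perm\_eq} \, (\sort{}_\leq \, s)$ and $\texttt{perm\_eq} \, s$ are the same predicate. By \cref{lemma:coq:permPl}, this extensional statement is reflected by the plain permutation fact $\sort{}_\leq \, s \permeq s$; so it suffices to prove the latter and then apply \texttt{permPl} to upgrade it.

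For $\sort{}_\leq \, s \permeq s$, the cleanest route given the lemmas at hand is through the counting characterization. By \cref{lemma:coq:permP}, two lists are permutations of each other iff $\lvert \cdot \rvert_p$ agrees on them for every predicate $p$. Applied to $\sort{}_\leq \, s$ and $s$, it remains to show $\lvert \sort{}_\leq \, s \rvert_p = \lvert s \rvert_p$ for all $p$, which is exactly \cref{lemma:coq:count_sort}. (Alternatively, $\sort{}_\leq \, s \permeq s$ is precisely the earlier \cref{lemma:perm_sort}, proved directly by induction over traces using \cref{lemma:coq:perm_merge} and \cref{lemma:coq:perm_cat}; either source discharges the obligation.)

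There is essentially no hard step left at this level: all the real content lives in \cref{lemma:coq:count_sort}, whose own proof is the trace induction (\cref{lemma:coq:sort_ind}) handling the two merge cases via \cref{lemma:coq:count_merge}, \cref{lemma:coq:count_cat}, and---for the reversed second case---\cref{lemma:coq:count_rev}. The only points requiring care here are bookkeeping: invoking \texttt{permP} with the count equality universally quantified over $p$ (rather than for a fixed predicate), and then taking the correct direction of the \texttt{permPl} reflection to pass from $\permeq$ to the extensional $\texttt{perm\_eql}$ conclusion.
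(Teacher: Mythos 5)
Your proposal is correct and matches the paper's own proof of this lemma, which likewise derives \cref{lemma:coq:perm_sort} from \cref{lemma:coq:permP} together with \cref{lemma:coq:count_sort} (with the \texttt{permPl} step implicit in passing to the \texttt{perm\_eql} form). Your parenthetical alternative via the trace induction is how the main-text version (\cref{lemma:perm_sort}) is proved, but your primary count-based route is exactly the appendix argument.
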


\begin{proof}
 This follows from \cref{lemma:coq:permP,lemma:coq:count_sort}.
\end{proof}

\begin{lemma}[\Cref{corollary:mem_sort}]
\label{lemma:coq:mem_sort}
For any ${\leq} \subseteq T \times T$ and $s$ of type $\mathrm{list} \, T$,
$\sort{}_\leq \, s$ has the same set of elements as $s$.
\begin{coqcode}
Lemma mem_sort sort (T : eqType) (leT : rel T) (s : list T) : sort T leT s =i s.
\end{coqcode}
\end{lemma}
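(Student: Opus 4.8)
The plan is to deduce this immediately from the permutation property of \sort{}, since membership in a list is invariant under permutation. First I would apply \cref{lemma:coq:perm_sort}, which states that $\sort{}_\leq \, s$ is a permutation of $s$; in the \Coq formulation this appears as $\texttt{perm\_eql} \, (\sort{}_\leq \, s) \, s$, from which \cref{lemma:coq:permPl} recovers the plain relation $\sort{}_\leq \, s \permeq s$. I would then feed this to \cref{lemma:coq:perm_mem}, which asserts precisely that permutation-equivalent lists have the same set of elements, i.e.\ $s_1 \permeq s_2$ implies $x \in s_1$ iff $x \in s_2$ for every $x$. Composing the two steps yields $x \in \sort{}_\leq \, s$ iff $x \in s$ for every $x \in T$, which is the claim. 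This is exactly the informal \cref{corollary:mem_sort}, now packaged for the \coq{stableSort} interface.

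An alternative, equally short route bypasses the permutation relation and reasons through counting: over an \coq{eqType}, an element $x$ lies in a list $s$ exactly when the number of occurrences $\lvert s \rvert_{\{x\}}$ of $x$ in $s$ is nonzero. Since \cref{lemma:coq:count_sort} gives $\lvert \sort{}_\leq \, s \rvert_p = \lvert s \rvert_p$ for every predicate $p$, instantiating $p$ with the equality test $(\cdot = x)$ shows the occurrence counts of $x$ coincide in $s$ and $\sort{}_\leq \, s$; hence $x$ belongs to one iff it belongs to the other.

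There is no real obstacle here, as the statement is merely a repackaging of results already established earlier in the development. The only mild bookkeeping is unfolding the $\texttt{perm\_eql}$ notation into $\permeq$ via \cref{lemma:coq:permPl} before invoking \cref{lemma:coq:perm_mem}; this is routine, and the entire argument is a one- or two-line composition of existing lemmas.
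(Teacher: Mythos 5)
Your primary argument is exactly the paper's proof: \cref{lemma:coq:mem_sort} is derived there by composing \cref{lemma:coq:perm_sort} with \cref{lemma:coq:perm_mem}, just as you describe. The proposal is correct, and the counting alternative you sketch is also sound but unnecessary.
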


\begin{proof}
 This follows from \cref{lemma:coq:perm_mem,lemma:coq:perm_sort}.
\end{proof}

\begin{lemma}
\label{lemma:coq:sort_uniq}
For any ${\leq} \subseteq T \times T$ and $s$ of type $\mathrm{list} \, T$,
$\sort{}_\leq \, s$ is duplication free iff $s$ is duplication free.
\begin{coqcode}
Lemma sort_uniq sort (T : eqType) (leT : rel T) (s : list T) :
  uniq (sort T leT s) = uniq s.
\end{coqcode}
\end{lemma}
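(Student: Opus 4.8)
The plan is to recognize \cref{lemma:coq:sort_uniq} as an immediate consequence of two facts already in hand: that $\sort{}$ permutes its input, and that being duplication-free is invariant under permutation. First I would invoke \cref{lemma:coq:perm_sort} to obtain $\sort{}_\leq \, s \permeq s$. Then, applying \cref{lemma:coq:perm_uniq} to this permutation gives exactly $\texttt{uniq} \, (\sort{}_\leq \, s) = \texttt{uniq} \, s$, which is the goal.

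There is essentially no obstacle here, and unlike the more substantive results (\eg, \cref{lemma:coq:count_sort}) no induction over traces via \cref{lemma:coq:sort_ind} is required. The only minor bookkeeping is that \cref{lemma:coq:perm_sort} is stated in the \texttt{perm\_eql} form (\cref{def:coq:perm_eql}); one specializes it, or passes through \cref{lemma:coq:permPl}, to recover the plain $\permeq$ hypothesis that \cref{lemma:coq:perm_uniq} expects. In \Coq the whole argument collapses to a one-line proof composing these two lemmas.
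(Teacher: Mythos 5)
Your proposal matches the paper's proof exactly: the paper derives \cref{lemma:coq:sort_uniq} directly from \cref{lemma:coq:perm_sort} and \cref{lemma:coq:perm_uniq}, just as you do. The remark about unfolding the \texttt{perm\_eql} form is correct but routine bookkeeping.
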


\begin{proof}
 This follows from \cref{lemma:coq:perm_uniq,lemma:coq:perm_sort}.
\end{proof}

\begin{theorem}[\Cref{lemma:filter_sort,corollary:filter_sort_in}]
\label{lemma:coq:filter_sort}
\label{lemma:coq:filter_sort_in}
For any total preorder ${\leq} \subseteq T \times T$ and predicate $p \subseteq T$,
$\texttt{filter}_p$ commutes with $\sort{}_\leq$ under function composition; that is, the
following equation holds for any $s$ of type $\mathrm{list} \, T$:
\[
 \texttt{filter}_p \, (\sort{}_\leq \, s) = \sort{}_\leq \, (\texttt{filter}_p \, s).
\]
\begin{coqcode}
Lemma filter_sort sort T (leT : rel T) : total leT -> transitive leT ->
  forall (p : pred T) (s : seq T), filter p (sort T leT s) = sort T leT (filter p s).

Lemma filter_sort_in sort T (P : pred T) (leT : rel T) :
  {in P &, total leT} -> {in P & &, transitive leT} ->
  forall (p : pred T) (s : seq T),
  all P s -> filter p (sort T leT s) = sort T leT (filter p s).
\end{coqcode}
\end{theorem}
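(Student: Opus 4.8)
The non-localized claim \coq{filter_sort} is literally the statement of \cref{lemma:filter_sort}, now phrased for an arbitrary \coq{stableSort} instance, so the plan is to reuse that proof verbatim. First I would recall its strategy: since a total preorder $\leq$ need not be antisymmetric, I cannot directly invoke a uniqueness-of-sorted-lists lemma, so I pass from elements to their positions. Fixing a default $x_0$, I define the index preorder $i \leq_I j := \texttt{nth}\,x_0\,s\,i \leq \texttt{nth}\,x_0\,s\,j$ and sharpen it to the order $i <_I j := i \leq_I j \land (j \not\leq_I i \lor i <_{\mathbb{N}} j)$, which is transitive (\cref{lemma:coq:lexord_trans}) and irreflexive, the latter because the lexicographic second component is the strict order $<_{\mathbb{N}}$.

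Next I would rewrite $s$ everywhere as $\texttt{map}_{\texttt{nth}\,x_0\,s}\,\mathit{is}$ with $\mathit{is} := \texttt{iota}\,0\,\lvert s\rvert$ (\cref{lemma:coq:mkseq_nth}). Pushing the map through both \sort{} and \texttt{filter} by the naturality of \sort{} (\cref{lemma:sort_map}) and the naturality of \texttt{filter} (\cref{lemma:coq:filter_map}) reduces the goal, after a map-congruence step, to the index equation $\texttt{filter}_{p_I}(\sort{}_{\leq_I}\,\mathit{is}) = \sort{}_{\leq_I}(\texttt{filter}_{p_I}\,\mathit{is})$, where $p_I := p \circ \texttt{nth}\,x_0\,s$. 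I would then discharge this with \cref{lemma:coq:irr_sorted_eq}: the two sides have the same element set by \cref{corollary:mem_sort} and \cref{lemma:coq:mem_filter}, and both are sorted \wrt $<_I$ because \sort{} is stable (\cref{corollary:sort_stable}), $\mathit{is}$ is sorted \wrt $<_{\mathbb{N}}$ (\cref{lemma:coq:iota_ltn_sorted}), and \texttt{filter} preserves sortedness for transitive relations (\cref{lemma:coq:sorted_filter}).

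For the localized version \coq{filter_sort_in}, I would follow the pattern of \cref{corollary:filter_sort_in}: work in the subtype $\texttt{sig}\,P$. Since all elements of $s$ satisfy $P$, \cref{lemma:coq:all_sigP} lets me write $s = \texttt{map}_\texttt{val}\,s'$ for some $s' : \mathrm{list}\,(\texttt{sig}\,P)$, and \cref{lemma:coq:in2_sig} turns the in-$P$ totality and transitivity hypotheses on \coq{leT} into a genuine total preorder \coq{leP} on $\texttt{sig}\,P$. Using the naturality of \sort{} and \texttt{filter} again together with map-congruence, the goal reduces to $\texttt{filter}_{p'}(\sort{}_{\texttt{leP}}\,s') = \sort{}_{\texttt{leP}}(\texttt{filter}_{p'}\,s')$ with $p' := p \circ \texttt{val}$, which is exactly the already-proved non-localized \coq{filter_sort}.

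I expect the index-reduction in the first part to be the main obstacle: its whole point is to manufacture an irreflexive order out of a merely-total preorder, and the delicate step is checking that both $\texttt{filter}_{p_I}(\sort{}_{\leq_I}\,\mathit{is})$ and $\sort{}_{\leq_I}(\texttt{filter}_{p_I}\,\mathit{is})$ are sorted \wrt the strict order $<_I$ while still sharing the same set of indices, so that \cref{lemma:coq:irr_sorted_eq} applies. Everything else—the two naturality rewrites and the passage to the subtype—is routine equational bookkeeping.
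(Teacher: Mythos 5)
Your proposal matches the paper's proof essentially step for step: the non-localized case is discharged exactly as in \cref{lemma:filter_sort} (pass to indices via \cref{lemma:coq:mkseq_nth}, use the naturality of \sort{} and \texttt{filter}, then apply \cref{lemma:coq:irr_sorted_eq} to the strict lexicographic index order using \cref{corollary:mem_sort}, \cref{lemma:coq:mem_filter}, \cref{corollary:sort_stable}, \cref{lemma:coq:iota_ltn_sorted}, and \cref{lemma:coq:sorted_filter}), and the localized case follows the paper's \cref{corollary:filter_sort_in} argument via $\texttt{sig}\,P$, \cref{lemma:coq:all_sigP}, and \cref{lemma:coq:in2_sig}. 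No gaps; this is the intended proof.
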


\begin{proof}
 This follows mainly from
 \cref{lemma:coq:mkseq_nth,lemma:coq:sort_map,lemma:coq:filter_map,lemma:coq:irr_sorted_eq,lemma:coq:sort_stable}.
 See the proof of \cref{lemma:filter_sort} for details.
\end{proof}

\begin{lemma}
\label{lemma:coq:sorted_filter_sort}
\label{lemma:coq:sorted_filter_sort_in}
For any total preorder ${\leq} \subseteq T \times T$, predicate $p \subseteq T$, and $s$ of type
$\mathrm{list} \, T$,
\[
 \texttt{filter}_p \, (\sort{}_\leq \, s) = \texttt{filter}_p \, s
\]
holds whenever $\texttt{filter}_p \, s$ sorted \wrt $\leq$.
\begin{coqcode}
Lemma sorted_filter_sort sort T (leT : rel T) :
  total leT -> transitive leT ->
  forall (p : pred T) (s : list T),
  sorted leT (filter p s) -> filter p (sort _ leT s) = filter p s.

Lemma sorted_filter_sort_in sort T (P : {pred T}) (leT : rel T) :
  {in P &, total leT} -> {in P & &, transitive leT} ->
  forall (p : pred T) (s : list T),
  all P s -> sorted leT (filter p s) -> filter p (sort _ leT s) = filter p s.
\end{coqcode}
\end{lemma}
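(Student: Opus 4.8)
The plan is to obtain this lemma as an immediate corollary of two results already in hand: the commutation of \texttt{filter} with \sort{} on a total preorder (\cref{lemma:coq:filter_sort}) and the fact that \sort{} is the identity on inputs already sorted \wrt a transitive relation (\cref{lemma:coq:sorted_sort}). The additional hypothesis here—that $\texttt{filter}_p \, s$ is sorted \wrt $\leq$—is exactly what is needed to fire the second result after the first has moved the \texttt{filter} inside.

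Concretely, first I would apply \cref{lemma:coq:filter_sort}, using that $\leq$ is a total preorder (total and transitive), to rewrite the left-hand side. This yields
\[
 \texttt{filter}_p \, (\sort{}_\leq \, s)
 = \sort{}_\leq \, (\texttt{filter}_p \, s)
 = \texttt{filter}_p \, s,
\]
where the first equality is \cref{lemma:coq:filter_sort} and the second is \cref{lemma:coq:sorted_sort} applied to the list $\texttt{filter}_p \, s$, which is sorted \wrt the transitive relation $\leq$ precisely by assumption. This closes the proof of the unlocalized statement \coq{sorted_filter_sort}.

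For the localized version \coq{sorted_filter_sort_in}, the same two-step chain works with the localized lemmas \cref{lemma:coq:filter_sort} (its \coq{_in} variant) and \cref{lemma:coq:sorted_sort} (its \coq{_in} variant), each of which carries the extra side condition \coq{all P}. The only point requiring a word is that the second step needs $\texttt{all}_P \, (\texttt{filter}_p \, s)$, which follows from $\texttt{all}_P \, s$ since filtering only removes elements (it returns a subsequence), so every surviving element still satisfies $P$. I do not expect any genuine obstacle: the statement is a short composition, and the main subtlety is merely bookkeeping the \coq{all P} hypothesis through the filter in the localized case.
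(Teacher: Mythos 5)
Your proof is correct and follows exactly the paper's own route: the paper's proof is literally ``This follows from \cref{lemma:coq:sorted_sort,lemma:coq:filter_sort},'' i.e.\ commute the filter inside via \cref{lemma:coq:filter_sort} and then collapse $\sort{}_\leq$ on the already-sorted filtered list via \cref{lemma:coq:sorted_sort}. Your extra remark about propagating the \coq{all P} hypothesis through \texttt{filter} in the localized variant is the right bookkeeping and raises no issue.
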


\begin{proof}
 This follows from \cref{lemma:coq:sorted_sort,lemma:coq:filter_sort}.
\end{proof}

\begin{lemma}
\label{lemma:coq:sort_sort}
\label{lemma:coq:sort_sort_in}
For any total preorders ${\leq_1}, {\leq_2} \subseteq T \times T$,
sorting a list with $\leq_2$ and then $\leq_1$ gives the same result as sorting the same list with
the lexicographic order ${\leq_\mathrm{lex}} \coloneq {\leq_{(1, 2)}}$;
that is, the following equation holds for any $s$ of type $\mathrm{list} \, T$:
\[
 \sort{}_{\leq_1} \, (\sort{}_{\leq_2} \, s) = \sort{}_{\leq_\mathrm{lex}} \, s.
\]
\begin{coqcode}
Lemma sort_sort sort T (leT leT' : rel T) :
  total leT -> transitive leT -> total leT' -> transitive leT' ->
  forall s : list T, sort T leT (sort T leT' s) = sort T (lexord leT leT') s.

Lemma sort_sort_in sort T (P : pred T) (leT leT' : rel T) :
  {in P &, total leT} -> {in P & &, transitive leT} ->
  {in P &, total leT'} -> {in P & &, transitive leT'} ->
  forall s : list T,
  all P s -> sort T leT (sort T leT' s) = sort T (lexord leT leT') s.
\end{coqcode}
\end{lemma}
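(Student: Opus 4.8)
The two sides of the equation are permutations of each other (\cref{lemma:coq:perm_sort}) but may contain duplicated elements, so I cannot apply the uniqueness principle \cref{lemma:coq:irr_sorted_eq} directly, since it forces strictly increasing (hence duplication-free) lists. The plan is therefore to follow the index-substitution strategy already used in the proof of \cref{lemma:filter_sort}: reduce the statement to one about a duplication-free list of indices, exhibit a single \emph{strict} total order for which both sides are sorted, and then conclude by \cref{lemma:coq:irr_sorted_eq}. First I would dispatch the case $s = []$ directly, where both sides are $[]$ by \cref{lemma:coq:sort_nil}; otherwise I fix a default element $x_0 \in s$.

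Next I would replace $s$ everywhere by $\texttt{map}_{\texttt{nth} \, x_0 \, s} \, \mathit{is}$, where $\mathit{is} \coloneq \texttt{iota} \, 0 \, \lvert s \rvert = [0, \dots, \lvert s \rvert - 1]$ (\cref{lemma:coq:mkseq_nth}), and write $f \coloneq \texttt{nth} \, x_0 \, s$. Applying the naturality of \sort{} (\cref{lemma:sort_map}) twice on the left-hand side and once on the right-hand side turns the goal into
\[
 \texttt{map}_f \, (\sort{}_{\leq_{1,I}} \, (\sort{}_{\leq_{2,I}} \, \mathit{is}))
 = \texttt{map}_f \, (\sort{}_{\leq_{(1,2),I}} \, \mathit{is}),
\]
where $i \leq_{k,I} j \coloneq f \, i \leq_k f \, j$ is the pullback of $\leq_k$ along $f$. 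Observing that the pullback of a lexicographic order is the lexicographic order of the pullbacks, the relation on the right is exactly $\leq_{(1,2),I}$ built from $\leq_{1,I}$ and $\leq_{2,I}$. It thus suffices to prove the two inner index lists equal; mapping $f$ over both then closes the goal.

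To compare these index lists I would introduce the strict relation $<_{\mathrm{final}}$ obtained by composing $\leq_{(1,2),I}$ lexicographically with the strict order $<_{\mathbb{N}}$ on natural numbers. This relation is transitive (\cref{lemma:coq:lexord_trans}, as $\leq_{1,I}$, $\leq_{2,I}$, and $<_{\mathbb{N}}$ are all transitive) and irreflexive (because $\leq_{(1,2),I}$ is reflexive while $<_{\mathbb{N}}$ is irreflexive), hence a strict preorder. The right-hand list $\sort{}_{\leq_{(1,2),I}} \, \mathit{is}$ is sorted \wrt $<_{\mathrm{final}}$ directly by the stability of \sort{} (\cref{corollary:sort_stable}): $\leq_{(1,2),I}$ is total (\cref{lemma:coq:lexord_total}) and $\mathit{is}$ is sorted \wrt $<_{\mathbb{N}}$ (\cref{lemma:coq:iota_ltn_sorted}). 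For the left-hand list I would apply stability twice: the inner sort gives that $\sort{}_{\leq_{2,I}} \, \mathit{is}$ is sorted \wrt the lexicographic composite of $\leq_{2,I}$ and $<_{\mathbb{N}}$, and the outer sort then gives that $\sort{}_{\leq_{1,I}} \, (\sort{}_{\leq_{2,I}} \, \mathit{is})$ is sorted \wrt the composite of $\leq_{1,I}$ with that relation; by associativity of lexicographic composition (\cref{lemma:coq:lexordA}) this is again $<_{\mathrm{final}}$. Since both lists are permutations of $\mathit{is}$ (\cref{lemma:coq:perm_sort}) they share the same (duplication-free) set of elements, so \cref{lemma:coq:irr_sorted_eq} forces them to be equal.

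I expect the main obstacle to be recognizing that the raw statement cannot be discharged by \cref{lemma:coq:irr_sorted_eq} because of possible duplicates, and then carefully threading the lexicographic refinements after the index reduction: the crux is the associativity step (\cref{lemma:coq:lexordA}), which identifies the relation produced on the left by refining $\leq_{1,I}$ with the composite of $\leq_{2,I}$ and $<_{\mathbb{N}}$ with the relation governing the right-hand side, namely the composite of $\leq_{(1,2),I}$ and $<_{\mathbb{N}}$. Finally, the localized variant \coq{sort_sort_in} would follow from the global one by the subtype technique already used for \cref{corollary:filter_sort_in}: restrict to \coq{sig P}, transport the relations and the list along \coq{val} (\cref{lemma:coq:in2_sig,lemma:coq:all_sigP}), apply the global lemma to the total preorders obtained on the subtype, and map \coq{val} back.
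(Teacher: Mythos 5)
Your proposal is correct and follows essentially the same route as the paper's proof: index substitution via \cref{lemma:coq:mkseq_nth} and the naturality of \sort{}, stability (\cref{lemma:coq:sort_stable}) applied to both sides, associativity of the lexicographic composition (\cref{lemma:coq:lexordA}) to identify the two resulting strict orders, and \cref{lemma:coq:irr_sorted_eq} to conclude, with the localized variant obtained by the \coq{sig P} transport. The only additions are minor explicit details (the empty-list case, the irreflexivity justification) that the paper leaves implicit.
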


\begin{proof}
 As in the proof of \cref{lemma:coq:filter_sort}, we replace $s$ everywhere with
 $\texttt{map}_{\texttt{nth} \, x_0 \, s} \, [0, \dots, \lvert s \rvert - 1]$
 (\cref{lemma:coq:mkseq_nth}), use the naturality of \sort{} (\cref{lemma:coq:sort_map}), and
 apply the congruence rule with respect to \texttt{map}.
 It remains to prove
 \[
  \sort{}_{\leq'_1} \, (\sort{}_{\leq'_2} \, [0, \dots, \lvert s \rvert - 1])
  = \sort{}_{\leq'_\mathrm{lex}} \, [0, \dots, \lvert s \rvert - 1]
 \]
 where $\leq'_1$, $\leq'_2$, and $\leq'_\mathrm{lex}$ are the preimages of
 $\leq_1$, $\leq_2$, and $\leq_\mathrm{lex}$ under $\texttt{nth} \, x_0 \, s$, respectively.
 Thanks to \cref{lemma:coq:sort_stable}, each side of the above equation is respectively sorted \wrt
 \begin{itemize}
  \item the (right-associative) lexicographic composition of $\leq'_1$, $\leq'_2$, and
	$<_\mathbb{N}$, and
  \item the lexicographic composition of $\leq'_\mathrm{lex}$ and $<_\mathbb{N}$, which is by
	definition equal to the (left-associative) lexicographic composition of $\leq'_1$,
	$\leq'_2$, and $<_\mathbb{N}$
 \end{itemize}
 since
 \begin{itemize}
  \item $\leq'_1$, $\leq'_2$, and their lexicographic composition $\leq'_\mathrm{lex}$ are total
	(\cref{lemma:coq:lexord_total}),
  \item $\leq'_2$, $<_\mathbb{N}$, and their lexicographic composition are transitive
	(\cref{lemma:coq:lexord_trans}), and
  \item $[0, \dots, \lvert s \rvert - 1]$ is sorted \wrt $<_\mathbb{N}$
	(\cref{lemma:coq:iota_ltn_sorted}).
 \end{itemize}
 Two lexicographic compositions of $\leq'_1$, $\leq'_2$, and $<_\mathbb{N}$ are equivalent
 regardless of the associativity (\cref{lemma:coq:lexordA}), and transitive
 (\cref{lemma:coq:lexord_trans}) and irreflexive.
 Both sides of the equation have the same set of elements (\cref{lemma:coq:mem_sort}).
 Therefore, these two lists are equal (\cref{lemma:coq:irr_sorted_eq}).
\end{proof}

\begin{lemma}
\label{lemma:coq:sort_sorted}
\label{lemma:coq:sort_sorted_in}
For any ${\leq} \subseteq T \times T$ and $s$ of type $\mathrm{list} \, T$,
$\sort{}_\leq \, s$ is sorted \wrt $\leq$ whenever $\leq$ is total.
\begin{coqcode}
Lemma sort_sorted sort T (leT : rel T) :
  total leT -> forall s : list T, sorted leT (sort T leT s).

Lemma sort_sorted_in sort T (P : pred T) (leT : rel T) :
  {in P &, total leT} ->
  forall s : list T, all P s -> sorted leT (sort T leT s).
\end{coqcode}
\end{lemma}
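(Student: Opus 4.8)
The plan is to obtain sortedness as a special case of the stability result \cref{corollary:sort_stable} (formally \cref{lemma:coq:sort_stable}), so that no fresh induction over traces is required. First I would instantiate the auxiliary relation $\leq_2$ of that lemma with the trivial relation $\top \coloneq (\lambda\, x\, y.\, \mathrm{true})$. This $\top$ is total and transitive by computation, and every list---in particular $\mathit{s}$---is sorted \wrt $\top$, since each adjacency condition is discharged vacuously. Hence the hypotheses of \cref{corollary:sort_stable} are met with $\leq_1 \coloneq {\leq}$ and $\leq_2 \coloneq \top$, using only the assumed totality of $\leq$.

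The key observation is then that the lexicographic order $\leq_{(1,2)}$ collapses to $\leq$ itself. Unfolding \cref{def:coq:lexord}, we have $x \leq_{(1,2)} y = x \leq y \land (y \leq x \Rightarrow \top\, x\, y)$, and since the consequent of the implication is identically $\mathrm{true}$, this is extensionally equal to $\leq$. Substituting this pointwise equality of relations into the conclusion of \cref{corollary:sort_stable} yields exactly that $\sort{}_\leq\, \mathit{s}$ is sorted \wrt $\leq$. For the localized variant \coq{sort_sorted_in}, I would run the same argument starting from \coq{sort_stable_in}, whose totality and transitivity premises on $\top$ restricted to the predicate $P$ are again immediate by computation.

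I expect no genuine obstacle here: the entire content is the extensional identification of $\leq_{(1,2)}$ with $\leq$ together with the trivial checks that $\top$ is a total preorder under which any list is sorted. The only point to handle with care is that \cref{corollary:sort_stable} states sortedness \wrt $\leq_{(1,2)}$ rather than $\leq$, so the proof must rewrite along the equality of these two relations---or appeal to a congruence of \coq{sorted} in its relation argument---before concluding.
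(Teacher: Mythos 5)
Your proposal is correct and matches the paper's own proof: both instantiate \cref{corollary:sort_stable} with the trivial always-true relation as $\leq_2$, note that any list is sorted \wrt it, and observe that the resulting lexicographic order collapses to $\leq$. No gaps.
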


\begin{proof}
 Since $s$ is sorted \wrt the trivial relation $\mathrel{R}$ such that $x \mathrel{R} y$ holds for any
 $x, y \in T$, \cref{corollary:sort_stable} implies that $\sort{}_\leq \, s$ is sorted \wrt the
 lexicographic order of $\leq$ and $\mathrel{R}$, which is equivalent to $\leq$.
\end{proof}

\begin{lemma}
\label{lemma:coq:perm_sortP}
\label{lemma:coq:perm_sort_inP}
For any total order ${\leq} \subseteq T \times T$, and $s_1$ and $s_2$ of type $\mathrel{list} \, T$,
$\sort{}_\leq \, s_1 = \sort{}_\leq \, s_2$ holds iff $s_1$ is a permutation of $s_2$.
\begin{coqcode}
Lemma perm_sortP sort (T : eqType) (leT : rel T) :
  total leT -> transitive leT -> antisymmetric leT ->
  forall s1 s2 : list T,
  reflect (sort T leT s1 = sort T leT s2) (perm_eq s1 s2).

Lemma perm_sort_inP sort (T : eqType) (leT : rel T) (s1 s2 : list T) :
  {in s1 &, total leT} -> {in s1 & &, transitive leT} ->
  {in s1 &, antisymmetric leT} ->
  reflect (sort T leT s1 = sort T leT s2) (perm_eq s1 s2).
\end{coqcode}
\end{lemma}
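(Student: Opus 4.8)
The plan is to prove the two directions of the biconditional separately, in both cases pivoting on the permutation property of \sort{} (\cref{lemma:coq:perm_sort}), namely $\sort{}_\leq \, s \permeq s$ for every $s$, together with the symmetry and transitivity of $\permeq$. Concretely, the \coq{reflect} statement unfolds to the equivalence $\sort{}_\leq \, s_1 = \sort{}_\leq \, s_2 \Leftrightarrow s_1 \permeq s_2$, and I would discharge it by establishing each implication in turn.

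For the ``only if'' direction, assuming $\sort{}_\leq \, s_1 = \sort{}_\leq \, s_2$, I would simply chain
\[
 s_1 \permeq \sort{}_\leq \, s_1 = \sort{}_\leq \, s_2 \permeq s_2
\]
and conclude $s_1 \permeq s_2$ by transitivity of $\permeq$. Note that this direction uses no hypothesis on $\leq$ whatsoever.

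For the ``if'' direction, assuming $s_1 \permeq s_2$, the permutation property gives $\sort{}_\leq \, s_1 \permeq s_1 \permeq s_2 \permeq \sort{}_\leq \, s_2$, so the two outputs are permutations of each other. Totality of $\leq$ makes both outputs sorted \wrt $\leq$ (\cref{lemma:coq:sort_sorted}), and then the fact that two permutation-equivalent lists that are sorted \wrt a transitive and antisymmetric relation must coincide (\cref{lemma:coq:sorted_eq}) yields $\sort{}_\leq \, s_1 = \sort{}_\leq \, s_2$; here transitivity and antisymmetry are exactly the two remaining conditions of a total order. I expect this ``if'' direction to be the only real (if mild) obstacle, and the place where the full strength of ``total order'' is genuinely needed: totality is used to obtain sortedness of the outputs, while antisymmetry is indispensable for invoking \cref{lemma:coq:sorted_eq}. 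Without antisymmetry the two sorted permutations could differ in the arrangement of $\leq$-equivalent elements, and stability alone would not rescue the argument, since $s_1$ and $s_2$ may themselves order equivalent elements differently.

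For the localized variant \coq{perm_sort_inP}, where totality, transitivity, and antisymmetry are only assumed on the elements of $s_1$, the ``only if'' direction is unchanged (it needs no assumptions), and the ``if'' direction goes through after transporting to the subtype $\texttt{sig} \, P$ with $P \coloneq (\cdot \in s_1)$, exactly as in the proof of \cref{corollary:filter_sort_in}. Since $s_1 \permeq s_2$, both lists have all their elements in $P$ (\cref{lemma:coq:perm_mem}), so each $s_i$ can be rewritten as $\texttt{map}_\texttt{val}$ of a list over $\texttt{sig} \, P$ (\cref{lemma:coq:all_sigP}); pushing \sort{} under the map via its naturality (\cref{lemma:coq:sort_map}) and discharging the order hypotheses on the subtype using \cref{lemma:coq:in2_sig,lemma:coq:in3_sig} reduces the localized statement to the global one proved above.
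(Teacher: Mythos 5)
Your proof is correct and follows essentially the same route as the paper: the forward direction chains $s_1 \permeq \sort{}_\leq s_1 = \sort{}_\leq s_2 \permeq s_2$ via \cref{lemma:coq:perm_sort}, and the backward direction combines \cref{lemma:coq:sort_sorted} with \cref{lemma:coq:sorted_eq} applied to two permutation-equivalent sorted lists. Your additional treatment of the localized variant via the $\texttt{sig}\,P$ transport is not spelled out in the paper's proof but matches the pattern it uses elsewhere (\cref{corollary:filter_sort_in}) and is sound.
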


\begin{proof}
 Since $\sort{}_\leq \, s_1$ and $\sort{}_\leq \, s_2$ are respectively permutations of
 $s_1$ and $s_2$, $\sort{}_\leq \, s_1 \permeq \sort{}_\leq \, s_2$ holds iff
 $s_1 \permeq s_2$ holds. Therefore, ($\Rightarrow$) is trivial.

 ($\Leftarrow$) $\sort{}_\leq \, s_1$ and $\sort{}_\leq \, s_2$ are sorted \wrt $\leq$
 (\cref{lemma:coq:sort_sorted}) and permutation of each other.
 Thanks to \cref{lemma:coq:sorted_eq}, these two sorted lists are equal.
\end{proof}

\begin{lemma}[{\citet[Theorem 2.9 (Uniqueness of sorting)]{functional_algorithms_verified}}]
\label{lemma:coq:eq_sort}
\label{lemma:coq:eq_in_sort}
For any two stable sort functions $\sort{}$ and $\sort{}'$ and total preorder
${\leq} \subseteq T \times T$, $\sort{}_\leq$ and $\sort{}'_\leq$ are extensionally
equal; that is, $\sort{}_\leq \, s = \sort{}'_\leq \, s$ holds for any $s$ of type
$\mathrm{list} \, T$.
\begin{coqcode}
Lemma eq_sort sort1 sort2 T (leT : rel T) :
  total leT -> transitive leT -> sort1 T leT =1 sort2 T leT.

Lemma eq_in_sort sort1 sort2 T (P : pred T) (leT : rel T) :
  {in P &, total leT} -> {in P & &, transitive leT} ->
  forall s : list T, all P s -> sort1 T leT s = sort2 T leT s.
\end{coqcode}
\end{lemma}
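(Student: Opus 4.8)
The plan is to reduce the problem to sorting a duplication-free list of indices, on which stability pins down a unique output, and then to invoke the fact that two lists sorted \wrt a strict preorder with the same set of elements are equal (\cref{lemma:coq:irr_sorted_eq}). This mirrors the proofs of \cref{lemma:filter_sort,lemma:coq:sort_sort}, and relies only on the correctness results that both $\sort$ and $\sort'$ enjoy by virtue of being \coq{stableSort} instances: naturality (\cref{lemma:sort_map}), stability (\cref{corollary:sort_stable}), and the membership property (\cref{corollary:mem_sort}).

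First I would dispatch the empty-list case (where both functions return $[]$), pick an arbitrary default element $x_0$ from $s$, and rewrite $s$ everywhere as $\texttt{map}_{\texttt{nth} \, x_0 \, s} \, \mathit{is}$ where $\mathit{is} \coloneq [0, \dots, \lvert s \rvert - 1]$ (\cref{lemma:coq:mkseq_nth}). Applying the naturality of both sort functions (\cref{lemma:sort_map}) transports the goal $\sort{}_\leq \, s = \sort{}'_\leq \, s$ into $\texttt{map}_{\texttt{nth} \, x_0 \, s} \, (\sort{}_{\leq_I} \, \mathit{is}) = \texttt{map}_{\texttt{nth} \, x_0 \, s} \, (\sort{}'_{\leq_I} \, \mathit{is})$, where $i \leq_I j \coloneq \texttt{nth} \, x_0 \, s \, i \leq \texttt{nth} \, x_0 \, s \, j$ is the preimage relation on indices. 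By the congruence of $\texttt{map}$, it then suffices to prove $\sort{}_{\leq_I} \, \mathit{is} = \sort{}'_{\leq_I} \, \mathit{is}$.

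Next I would refine $\leq_I$ into a strict order by composing it lexicographically with the strict order $<_{\mathbb{N}}$ on naturals, $i <_I j \coloneq i \leq_I j \land (j \not\leq_I i \lor i <_{\mathbb{N}} j)$. Since $\leq$ is a total preorder, its preimage $\leq_I$ is total and transitive; hence $<_I$ is transitive (\cref{lemma:coq:lexord_trans}) and irreflexive (because $\leq_I$ is reflexive and $<_{\mathbb{N}}$ is irreflexive), \ie a strict preorder. Because $\mathit{is}$ is sorted \wrt $<_{\mathbb{N}}$ (\cref{lemma:coq:iota_ltn_sorted}), the stability of each sort function (\cref{corollary:sort_stable}, instantiated with $\leq_1 \coloneq {\leq_I}$ and $\leq_2 \coloneq {<_{\mathbb{N}}}$) shows that both $\sort{}_{\leq_I} \, \mathit{is}$ and $\sort{}'_{\leq_I} \, \mathit{is}$ are sorted \wrt the lexicographic order of $\leq_I$ and $<_{\mathbb{N}}$, which is exactly $<_I$.

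Finally, both outputs have the same set of elements as $\mathit{is}$ (\cref{corollary:mem_sort}), hence the same set of elements as each other; two lists sorted \wrt the strict preorder $<_I$ with the same set of elements are equal (\cref{lemma:coq:irr_sorted_eq}), closing the goal. I expect the main obstacle to be conceptual rather than computational: sortedness \wrt $\leq$ alone does \emph{not} determine the output, since it leaves the relative order of $\leq$-equivalent elements free, so no uniqueness lemma applies directly. The crux is the observation that stability forces both functions to break ties in the original (index) order, so that both outputs are in fact sorted \wrt the \emph{strict} refinement $<_I$, at which point uniqueness of sorted lists applies. Minor care is also needed to confirm that the preimage of a total preorder remains total and transitive, so that \cref{lemma:coq:lexord_trans,corollary:sort_stable} are applicable.
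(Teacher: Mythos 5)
Your proof is correct and follows essentially the same route as the paper's: reduce to sorting the index list via \cref{lemma:coq:mkseq_nth} and the naturality of both sort functions, show that both outputs are sorted \wrt the lexicographic refinement $<_I$ of $\leq_I$ by $<_{\mathbb{N}}$ using stability, and conclude with \cref{lemma:coq:mem_sort} and \cref{lemma:coq:irr_sorted_eq}. The only additions are the explicit empty-list case and the verification that $<_I$ is a strict preorder, both of which the paper leaves implicit.
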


\begin{proof}
 We replace $s$ everywhere with
 $\texttt{map}_{\texttt{nth} \, x_0 \, s} \, [0, \dots, \lvert s \rvert - 1]$
 (\cref{lemma:coq:mkseq_nth}), use the naturality of $\sort{}$ and $\sort{}'$
 (\cref{lemma:coq:sort_map}), and apply the congruence rule with respect to \texttt{map}.
 It remains to prove
 \[
  \sort{}_{\leq_I} \, [0, \dots, \lvert s \rvert - 1]
  = \sort{}'_{\leq_I} \, [0, \dots, \lvert s \rvert - 1]
 \]
 where $x \leq_I y \coloneq \texttt{nth} \, x_0 \, s \, x \leq \texttt{nth} \, x_0 \, s \, y$.

 Since $[0, \dots, \lvert s \rvert - 1]$ is sorted \wrt $<_\mathbb{N}$
 (\cref{lemma:coq:iota_ltn_sorted}), both sides of the above equation are sorted \wrt $<_I$, the
 lexicographic composition of $\leq_I$ and $<_\mathbb{N}$ (\cref{lemma:coq:sort_stable}).
 These two lists have the same set of elements (\cref{lemma:coq:mem_sort}).
 Therefore, they are equal (\cref{lemma:coq:irr_sorted_eq}).
\end{proof}

\begin{lemma}
\label{lemma:coq:eq_sort_insert}
\label{lemma:coq:eq_in_sort_insert}
For any total preorder ${\leq} \subseteq T \times T$ and $s$ of type $\mathrm{list} \, T$,
$\sort{}_\leq \, s = \texttt{isort}_\leq \, s$ holds for the insertion sort \texttt{isort}
defined as follows.
\begin{align*}
 \texttt{isort}_\leq \, [] &\coloneq [] \\
 \texttt{isort}_\leq \, (x :: s) &\coloneq [x] \merge_\leq \texttt{isort}_\leq \, s.
\end{align*}
\begin{coqcode}
Lemma eq_sort_insert sort T (leT : rel T) : total leT -> transitive leT ->
  sort T leT =1 foldr (fun x : T => merge leT [:: x]) [::].

Lemma eq_in_sort_insert sort T (P : pred T) (leT : rel T) :
  {in P &, total leT} -> {in P & &, transitive leT} ->
  forall s : list T, all P s ->
  sort T leT s = foldr (fun x : T => merge leT [:: x]) [::] s.
\end{coqcode}
\end{lemma}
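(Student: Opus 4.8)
The plan is to observe that the insertion sort $\texttt{isort}_\leq$ is itself one of our stable sort functions, and then to invoke the uniqueness of sorting (\cref{lemma:coq:eq_sort}), which states that any two \coq{stableSort} functions agree on a total preorder. So the statement will come out as an immediate corollary once the insertion sort is packaged as a \coq{stableSort} instance.

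First I would exhibit such an instance whose underlying sort function is $\texttt{isort}_\leq$. Its abstract version, now formulated in the extended signature $\TasortExt$, simply ignores both the relation argument and the second (reverse) merge operator:
\[
 \asort{} \, ({\leq}) \, m \, m' \, \sigma \, e \, [] \coloneq e, \qquad
 \asort{} \, ({\leq}) \, m \, m' \, \sigma \, e \, (x :: s) \coloneq
   m \, (\sigma \, x) \, (\asort{} \, ({\leq}) \, m \, m' \, \sigma \, e \, s).
\]
With this definition, \cref{eq:asort_mergeE'} holds by reflexivity: instantiating $m$ with $\merge_\leq$, $\sigma$ with $[\cdot]$, and $e$ with $[]$ unfolds $\asort{}$ to precisely the defining recursion of $\texttt{isort}_\leq$, and the reverse merge $m'$ never intervenes. \cref{eq:asort_catE'} is the identity property already proved for this abstract insertion sort in \cref{lemma:insertion-sort-stable} by a one-line structural induction (the extra ignored arguments do not affect it), and parametricity is supplied by the abstraction theorem (\cref{assumption:abstraction}), realized by \Paramcoq. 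Bundling these fields gives a record whose \coq{apply} projection is definitionally $\texttt{isort}_\leq$, \ie \coq{foldr (fun x => merge leT [:: x]) [::]}.

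Then I would apply \cref{lemma:coq:eq_sort} to the arbitrary \coq{sort} and this insertion-sort instance: both are \coq{stableSort} functions and $\leq$ is a total preorder, so they are extensionally equal, and unfolding the \coq{apply} field of the instance yields $\sort{}_\leq \, s = \texttt{isort}_\leq \, s$. The localized variant follows verbatim from \cref{lemma:coq:eq_in_sort}. I do not expect a genuine obstacle, since the result is essentially a restatement of the uniqueness of sorting; the only point needing care is the bookkeeping of re-expressing the trivial abstract insertion sort in the full $\TasortExt$ signature so that its instance inhabits the same \coq{stableSort} type as the given \coq{sort}.
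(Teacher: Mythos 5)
Your proposal is correct and follows essentially the same route as the paper: the paper's proof likewise packages the insertion sort as a \coq{stableSort} instance via its characteristic property (\cref{lemma:insertion-sort-stable}) and then concludes by the uniqueness of sorting (\cref{lemma:coq:eq_sort}). The extra detail you give about lifting the abstract insertion sort to the extended signature $\TasortExt$ is exactly the bookkeeping the paper leaves implicit.
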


\begin{proof}
 This follows from \cref{lemma:coq:eq_sort} and the fact that \texttt{isort} satisfies the
 characteristic property (\cref{lemma:insertion-sort-stable}).
\end{proof}

\begin{lemma}
\label{lemma:coq:sort_cat}
\label{lemma:coq:sort_cat_in}
For any total preorder ${\leq} \subseteq T \times T$, and $s_1$ and $s_2$ of type
$\mathrm{list} \, T$,
$\sort{}_\leq \, (s_1 \concat s_2) =
 \sort{}_\leq \, s_1 \merge_\leq \sort{}_\leq \, s_2$
holds.
\begin{coqcode}
Lemma sort_cat sort T (leT : rel T) : total leT -> transitive leT ->
  forall s1 s2 : seq T,
  sort T leT (s1 ++ s2) = merge leT (sort T leT s1) (sort T leT s2).

Lemma sort_cat_in sort T (P : pred T) (leT : rel T) :
  {in P &, total leT} -> {in P & &, transitive leT} ->
  forall s1 s2 : seq T, all P s1 -> all P s2 ->
  sort T leT (s1 ++ s2) = merge leT (sort T leT s1) (sort T leT s2).
\end{coqcode}
\end{lemma}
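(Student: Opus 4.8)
The plan is to reduce both sides to a single insertion sort and then prove a purely algebraic identity about right folds and \merge{}, which sidesteps any explicit reasoning about permutations, sortedness, or stability. By \cref{lemma:coq:eq_sort_insert}, which applies precisely because $\leq$ is a total preorder, $\sort{}_\leq$ is extensionally equal to the insertion sort $\texttt{isort}_\leq \coloneq \texttt{foldr}\,f\,[]$ with $f \coloneq \lambda x,\ \merge_\leq\,[x]$. Rewriting all three occurrences of $\sort{}_\leq$ with this equality, it suffices to prove $\texttt{isort}_\leq\,(s_1 \concat s_2) = \texttt{isort}_\leq\,s_1 \merge_\leq \texttt{isort}_\leq\,s_2$.

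First I would split the fold over the concatenation using the standard fold--append identity $\texttt{foldr}\,f\,z\,(s_1 \concat s_2) = \texttt{foldr}\,f\,(\texttt{foldr}\,f\,z\,s_2)\,s_1$, an immediate induction on $s_1$ from \cref{def:coq:foldr}, which yields $\texttt{isort}_\leq\,(s_1 \concat s_2) = \texttt{foldr}\,f\,(\texttt{isort}_\leq\,s_2)\,s_1$. Next I would establish the key auxiliary identity, valid for \emph{any} list $z$,
\[
 \texttt{foldr}\,f\,z\,s = \texttt{isort}_\leq\,s \merge_\leq z,
\]
by induction on $s$. The base case holds since $\merge_\leq\,[]\,z = z$ by \cref{def:coq:merge}. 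In the step case, $\texttt{foldr}\,f\,z\,(x :: s') = [x] \merge_\leq (\texttt{isort}_\leq\,s' \merge_\leq z)$ by the induction hypothesis, and reassociating with \cref{lemma:coq:mergeA} turns this into $([x] \merge_\leq \texttt{isort}_\leq\,s') \merge_\leq z = \texttt{isort}_\leq\,(x :: s') \merge_\leq z$. Instantiating $z \coloneq \texttt{isort}_\leq\,s_2$ in this identity, and folding the definition of $\texttt{isort}_\leq$ back to $\sort{}_\leq$ via \cref{lemma:coq:eq_sort_insert}, closes the goal.

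The main obstacle, such as it is, is the associativity of \merge{}: the whole argument hinges on \cref{lemma:coq:mergeA}, whose hypotheses are exactly totality and transitivity of $\leq$, i.e. that $\leq$ is a total preorder, matching the hypothesis of the lemma being proved. Once associativity is in hand, stability is handled \emph{for free}, because collapsing both sides to the single function $\texttt{isort}_\leq$ forces them to use the identical left-biased merge, so the relative order of equivalent elements automatically agrees. The only two facts I use that are not listed in the appendix, namely the fold--append identity and $\merge_\leq\,[]\,z = z$, are immediate from \cref{def:coq:foldr,def:coq:merge}. The localized variant \texttt{sort\_cat\_in} would follow by the subtype trick (\cref{lemma:coq:in2_sig,lemma:coq:all_sigP}) exactly as in \cref{corollary:filter_sort_in}.

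An alternative in the house style of the surrounding proofs would instead replace $s_1 \concat s_2$ by $\texttt{map}_{\texttt{nth}\,x_0\,(s_1 \concat s_2)}$ of an index list (\cref{lemma:coq:mkseq_nth}), push the problem to two disjoint index ranges, and conclude by \cref{lemma:coq:irr_sorted_eq} after checking that both sides are sorted \wrt the strict lexicographic order built from the preimage relation and $<_{\mathbb{N}}$ (using \cref{lemma:coq:sort_stable,lemma:coq:merge_stable_sorted}) and share the same elements (using \cref{lemma:coq:mem_sort}). That route additionally requires the naturality of \merge{} under \texttt{map}, which is not among the listed lemmas, so I would prefer the shorter insertion-sort argument above.
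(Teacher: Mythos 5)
Your proposal is correct and takes essentially the same route as the paper: both reduce all occurrences of $\sort{}_\leq$ to the insertion sort via \cref{lemma:coq:eq_sort_insert} and then establish $\texttt{isort}_\leq\,(s_1 \concat s_2) = \texttt{isort}_\leq\,s_1 \merge_\leq \texttt{isort}_\leq\,s_2$ by induction on $s_1$ with \cref{lemma:coq:mergeA} as the key step. Your detour through the fold--append identity and the generalized accumulator $z$ is just a mild repackaging of the paper's direct induction, not a different argument.
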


\begin{proof}
 We replace all the occurrences of \sort{} with the insertion sort \texttt{isort}
 (\cref{lemma:coq:eq_sort_insert}), and prove the following equation by induction on $s_1$:
 \[
 \texttt{isort}_\leq \, (s_1 \concat s_2) =
 \texttt{isort}_\leq \, s_1 \merge_\leq \texttt{isort}_\leq \, s_2.
 \]
 If $s_1 = []$, the equation holds by definition.
 Otherwise, $s_1 = x :: s'_1$ and
 \begin{align*}
  \texttt{isort}_\leq \, (s_1 \concat s_2)
  &= [x] \merge_\leq \texttt{isort}_\leq \, (s'_1 \concat s_2)
  & (\text{Definition}) \\
  &= [x] \merge_\leq (\texttt{isort}_\leq \, s'_1 \merge_\leq \texttt{isort}_\leq \, s_2)
  & (\text{I.H.}) \\
  &= ([x] \merge_\leq \texttt{isort}_\leq \, s'_1) \merge_\leq \texttt{isort}_\leq \, s_2
  & (\text{\cref{lemma:coq:mergeA}}) \\
  &= \texttt{isort}_\leq \, s_1 \merge_\leq \texttt{isort}_\leq \, s_2.
  & (\text{Definition}) & \qedhere
 \end{align*}
\end{proof}

\begin{lemma}
\label{lemma:coq:mask_sort}
\label{lemma:coq:mask_sort_in}
For any total preorder ${\leq} \subseteq T \times T$, $s$ of type $\mathrm{list} \, T$, and $m$ of
type $\mathrm{list} \, \mathrm{bool}$, there exists $m'$ of type $\mathrm{list} \, \mathrm{bool}$
that satisfies
$\texttt{mask}_{m'} \, (\sort{}_\leq \, s) = \sort{}_\leq \, (\texttt{mask}_m \, s)$.
\begin{coqcode}
Lemma mask_sort sort T (leT : rel T) : total leT -> transitive leT ->
  forall (s : list T) (m : list bool),
  {m_s : list bool | mask m_s (sort T leT s) = sort T leT (mask m s)}.

Lemma mask_sort_in sort T (P : pred T) (leT : rel T) :
  {in P &, total leT} -> {in P & &, transitive leT} ->
  forall (s : list T) (m : list bool), all P s ->
  {m_s : list bool | mask m_s (sort T leT s) = sort T leT (mask m s)}.
\end{coqcode}
\end{lemma}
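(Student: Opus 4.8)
The plan is to reduce the statement to the special case where the list being sorted is duplication-free—namely the list of indices $\mathit{is} \coloneq \texttt{iota} \, 0 \, \lvert s \rvert = [0, \dots, \lvert s \rvert - 1]$—exactly as in the proofs of \cref{lemma:filter_sort,lemma:coq:sort_sort}. On a duplication-free list, masks and filters coincide (\cref{lemma:coq:mask_filter}), so I can turn $\texttt{mask}_m$ into a filter, push the filter through \sort{} using \cref{lemma:coq:filter_sort}, and turn the result back into a mask using \cref{lemma:coq:filter_mask}; this last step manufactures the required witness $m'$.

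Concretely, first I would dispose of the empty case by taking $m' \coloneq []$. Otherwise, picking any $x_0 \in s$, I replace $s$ everywhere with $\texttt{map}_{\texttt{nth} \, x_0 \, s} \, \mathit{is}$ (\cref{lemma:coq:mkseq_nth}) and set $f \coloneq \texttt{nth} \, x_0 \, s$ and $i \leq_I j \coloneq f \, i \leq f \, j$. Since $\leq$ is a total preorder, its preimage $\leq_I$ is again total and transitive. Using the naturality of \texttt{mask} (\cref{lemma:coq:map_mask}) and of \sort{} (\cref{lemma:coq:sort_map}), both $\texttt{mask}_m \, (\texttt{map}_f \, \mathit{is})$ and $\sort{}_\leq \, (\texttt{map}_f \, \mathit{is})$ become $\texttt{map}_f$ applied to the corresponding operation on $\mathit{is}$. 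Hence it suffices to exhibit an $m'$ with
\[
 \texttt{mask}_{m'} \, (\sort{}_{\leq_I} \, \mathit{is}) = \sort{}_{\leq_I} \, (\texttt{mask}_m \, \mathit{is}),
\]
and then apply $\texttt{map}_f$ to both sides, again via \cref{lemma:coq:map_mask}; the same $m'$ works because mask commutes with map independently of the underlying list.

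On $\mathit{is}$, which is duplication-free (\cref{lemma:coq:iota_uniq}), \cref{lemma:coq:mask_filter} gives $\texttt{mask}_m \, \mathit{is} = \texttt{filter}_p \, \mathit{is}$ with $p \, x \coloneq x \in \texttt{mask}_m \, \mathit{is}$. Then \cref{lemma:coq:filter_sort} (applicable since $\leq_I$ is total and transitive) yields $\sort{}_{\leq_I} \, (\texttt{filter}_p \, \mathit{is}) = \texttt{filter}_p \, (\sort{}_{\leq_I} \, \mathit{is})$, and finally \cref{lemma:coq:filter_mask} rewrites this filter as $\texttt{mask}_{m'} \, (\sort{}_{\leq_I} \, \mathit{is})$ for $m' \coloneq \texttt{map}_p \, (\sort{}_{\leq_I} \, \mathit{is})$. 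This $m'$ is the sought witness.

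I expect no deep obstacle: every step is an application of an already-established naturality or mask/filter conversion lemma, and the whole argument is the now-familiar ``lift to indices, exploit duplication-freeness'' idiom. The only points requiring care are (i) transporting the witness $m'$ from the index level back to $s$, which is immediate because \texttt{mask} is natural in the list, so the identical boolean list $m'$ serves on both sides, and (ii) checking that the preimage relation $\leq_I$ inherits totality and transitivity from the total preorder $\leq$, which is routine. The localized variant \texttt{mask\_sort\_in} then follows by the same subtype reduction used in \cref{corollary:filter_sort_in}.
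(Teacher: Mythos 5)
Your proposal is correct and follows essentially the same route as the paper's proof: reduce to the duplication-free index list via \cref{lemma:coq:mkseq_nth} and the naturality of \sort{} and \texttt{mask}, convert the mask to a filter with \cref{lemma:coq:mask_filter,lemma:coq:iota_uniq}, commute it with \sort{} via \cref{lemma:coq:filter_sort}, and recover the witness $m' \coloneq \texttt{map}_p \, (\sort{}_{\leq_I} \, \mathit{is})$ via \cref{lemma:coq:filter_mask}. The only (harmless) addition is your explicit treatment of the empty list before picking $x_0$.
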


\begin{proof}
 We replace $s$ everywhere with
 $\texttt{map}_{\texttt{nth} \, x_0 \, s} \, [0, \dots, \lvert s \rvert - 1]$
 (\cref{lemma:coq:mkseq_nth}), use the naturality of \sort{} (\cref{lemma:coq:sort_map}) and
 \texttt{mask} (\cref{lemma:coq:map_mask}), and apply the congruence rule with respect to
 \texttt{map}.
 It suffices to find $m'$ such that
 \[
 \texttt{mask}_{m'} \, (\sort{}_{\leq_I} \, [0, \dots, \lvert s \rvert - 1])
 = \sort{}_{\leq_I} \, (\texttt{mask}_m \, [0, \dots, \lvert s \rvert - 1])
 \]
 where $i \leq_I j \coloneq \texttt{nth} \, x_0 \, s \, i \leq \texttt{nth} \, x_0 \, s \, j$.
 We show that
 $m' \coloneq \texttt{map}_p \, (\sort{}_{\leq_I} \, [0, \dots, \lvert s \rvert - 1])$
 where $p \, i \coloneq i \in \texttt{mask}_m \, [0, \dots, \lvert s \rvert - 1]$ satisfy the above
 equation:
 \begin{align*}
      & \texttt{mask}_{m'} \, (\sort{}_{\leq_I} \, [0, \dots, \lvert s \rvert - 1]) \\
  ={} & \texttt{filter}_p \, (\sort{}_{\leq_I} \, [0, \dots, \lvert s \rvert - 1])
      & (\text{\cref{lemma:coq:filter_mask}}) \\
  ={} & \sort{}_{\leq_I} \, (\texttt{filter}_p \, [0, \dots, \lvert s \rvert - 1])
      & (\text{\cref{lemma:coq:filter_sort}}) \\
  ={} & \sort{}_{\leq_I} \, (\texttt{mask}_m \, [0, \dots, \lvert s \rvert - 1]).
      & (\text{\cref{lemma:coq:mask_filter,lemma:coq:iota_uniq}}) & \qedhere
 \end{align*}
\end{proof}

\begin{lemma}
\label{lemma:coq:sorted_mask_sort}
\label{lemma:coq:sorted_mask_sort_in}
For any total preorder ${\leq} \subseteq T \times T$, $s$ of type $\mathrm{list} \, T$, and $m$ of
type $\mathrm{list} \, \mathrm{bool}$, there exists $m'$ of type $\mathrm{list} \, \mathrm{bool}$
that satisfies $\texttt{mask}_{m'} \, (\sort{}_\leq \, s) = \texttt{mask}_m \, s$, whenever
$\texttt{mask}_m \, s$ is sorted \wrt $\leq$.
\begin{coqcode}
Lemma sorted_mask_sort sort T (leT : rel T) : total leT -> transitive leT ->
  forall (s : list T) (m : list bool), sorted leT (mask m s) ->
  {m_s : list bool | mask m_s (sort T leT s) = mask m s}.

Lemma sorted_mask_sort_in sort T (P : pred T) (leT : rel T) :
  {in P &, total leT} -> {in P & &, transitive leT} ->
  forall (s : list T) (m : list bool), all P s -> sorted leT (mask m s) ->
  {m_s : list bool | mask m_s (sort T leT s) = mask m s}.
\end{coqcode}
\end{lemma}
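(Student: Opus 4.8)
The plan is to obtain this lemma almost immediately by composing the unconstrained masking lemma \cref{lemma:coq:mask_sort} with the idempotence of sorting on already-sorted inputs \cref{lemma:coq:sorted_sort}, exactly mirroring how \cref{lemma:coq:sorted_filter_sort} is derived from \cref{lemma:coq:filter_sort,lemma:coq:sorted_sort}. The extra hypothesis here---that $\texttt{mask}_m \, s$ is already sorted \wrt $\leq$---is precisely what lets us replace the sorted mask on the right-hand side of \cref{lemma:coq:mask_sort} by the mask itself.

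First I would apply \cref{lemma:coq:mask_sort}, which is available since $\leq$ is a total preorder, to produce a mask $m'$ of type $\mathrm{list} \, \mathrm{bool}$ together with the equation
\[
 \texttt{mask}_{m'} \, (\sort{}_\leq \, s) = \sort{}_\leq \, (\texttt{mask}_m \, s).
\]
This $m'$ is exactly the witness demanded by the sigma type in the statement, so the only remaining task is to rewrite the right-hand side. Second, since a total preorder is in particular transitive and $\texttt{mask}_m \, s$ is sorted \wrt $\leq$ by hypothesis, \cref{lemma:coq:sorted_sort} gives $\sort{}_\leq \, (\texttt{mask}_m \, s) = \texttt{mask}_m \, s$. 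Chaining the two equations yields $\texttt{mask}_{m'} \, (\sort{}_\leq \, s) = \texttt{mask}_m \, s$, as desired. Because \cref{lemma:coq:mask_sort} already delivers its $m'$ constructively and \cref{lemma:coq:sorted_sort} is merely an equation used to rewrite, the construction of the witness goes through unchanged in the constructive (sigma-type) setting.

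There is essentially no hard step: the entire content was front-loaded into \cref{lemma:coq:mask_sort}. The only points demanding attention are (i) checking that the total-preorder hypothesis of the present lemma supplies both the total-preorder premise of \cref{lemma:coq:mask_sort} and the transitivity premise of \cref{lemma:coq:sorted_sort}, and (ii) for the localized variant \texttt{sorted\_mask\_sort\_in}, replacing the two appeals by their in-a-predicate counterparts \texttt{mask\_sort\_in} and \texttt{sorted\_sort\_in} (carrying the hypothesis $\texttt{all} \, P \, s$ through unchanged), which is routine.
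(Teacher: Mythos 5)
Your proof is correct and matches the paper's, which likewise derives the result by combining \cref{lemma:coq:mask_sort} with \cref{lemma:coq:sorted_sort} (the latter applied to the sorted list $\texttt{mask}_m \, s$, using the transitivity part of the total-preorder hypothesis). The paper states this in one line; your write-up simply spells out the same chaining of equations.
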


\begin{proof}
 This follows from \cref{lemma:coq:sorted_sort,lemma:coq:mask_sort}.
\end{proof}

\begin{lemma}
\label{lemma:coq:subseq_sort}
\label{lemma:coq:subseq_sort_in}
For any total preorder ${\leq} \subseteq T \times T$, $\sort{}_\leq$ preserves the subsequence
relation; that is, for any $t$ and $s$ of type $\mathrm{list} \, T$,
$\sort{}_\leq \, t$ is a subsequence of $\sort{}_\leq \, s$ whenever $t$ is a subsequence
of $s$.
\begin{coqcode}
Lemma subseq_sort sort (T : eqType) (leT : rel T) :
  total leT -> transitive leT ->
  forall t s : list T, subseq t s -> subseq (sort T leT t) (sort T leT s).

Lemma subseq_sort_in sort (T : eqType) (leT : rel T) (t s : list T) :
  {in s &, total leT} -> {in s & &, transitive leT} ->
  subseq t s -> subseq (sort T leT t) (sort T leT s).
\end{coqcode}
\end{lemma}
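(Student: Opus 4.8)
The plan is to reduce the subsequence relation to masking and then to transport the mask across $\sort{}$ using the commutation of $\sort{}$ with masking that has already been established. Concretely, I would first invoke the mask characterization of subsequences (\cref{lemma:coq:subseqP}): since $t$ is a subsequence of $s$, there is a Boolean mask $m$ of type $\mathrm{list} \, \mathrm{bool}$ with $\lvert m \rvert = \lvert s \rvert$ and $t = \texttt{mask}_m \, s$. Substituting this equality, the goal becomes to show that $\sort{}_\leq \, (\texttt{mask}_m \, s)$ is a subsequence of $\sort{}_\leq \, s$.

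Next I would apply the key lemma \cref{lemma:coq:mask_sort}: because $\leq$ is total and transitive, it yields a mask $m'$ of type $\mathrm{list} \, \mathrm{bool}$ such that $\texttt{mask}_{m'} \, (\sort{}_\leq \, s) = \sort{}_\leq \, (\texttt{mask}_m \, s)$. Rewriting the goal with this equation from right to left, it remains to prove that $\texttt{mask}_{m'} \, (\sort{}_\leq \, s)$ is a subsequence of $\sort{}_\leq \, s$, which is exactly \cref{lemma:coq:mask_subseq}. This closes the proof, so the argument is essentially a two-step corollary once the right auxiliary lemmas are identified.

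The main work is therefore not in this statement itself but hides inside \cref{lemma:coq:mask_sort}, whose proof reindexes the list by positions, rewrites masking as filtering (via \cref{lemma:coq:filter_mask,lemma:coq:mask_filter}), and ultimately appeals to the central stability result \cref{lemma:coq:filter_sort}. The only subtlety to keep in mind at the level of the present lemma is that both hypotheses---totality and transitivity of $\leq$---are genuinely needed precisely because \cref{lemma:coq:mask_sort} requires them; they cannot be dropped. For the localized variant \texttt{subseq\_sort\_in}, where totality and transitivity are assumed only on the elements of $s$ delimited by a predicate $P$, the same three-step argument goes through using the in-predicate version \texttt{mask\_sort\_in} in place of \cref{lemma:coq:mask_sort}, together with the observation that every element of any mask of $s$ is an element of $s$ and hence satisfies $P$; this discharges the domain side conditions without any additional reasoning.
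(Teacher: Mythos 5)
Your proposal is correct and follows exactly the paper's own proof: apply the mask characterization of subsequences (\cref{lemma:coq:subseqP}), transport the mask across $\sort{}$ via \cref{lemma:coq:mask_sort}, and conclude with \cref{lemma:coq:mask_subseq}. The remarks on the localized variant and on the necessity of the hypotheses are accurate but add nothing beyond the paper's argument.
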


\begin{proof}
 Using \cref{lemma:coq:subseqP} that $t$ is a subsequence of $s$ iff there exists $m$ of the same
 size as~$s$ such that $t = \texttt{mask}_m~s$, we are left to show that
 $\sort{}_{\leq} \, (\texttt{mask}_m~s)$ is a subsequence of $\sort{}_{\leq}~s$.
 Now, by \cref{lemma:coq:mask_sort}, there is a mask $m'$ such that
 $\sort{}_{\leq} \, (\texttt{mask}_m~s) = \texttt{mask}_{m'} \, (\sort{}_{\leq}~s)$
 which is indeed a subsequence of $\sort{}_{\leq}~s$ (\cref{lemma:coq:mask_subseq}).
\end{proof}

\begin{lemma}
\label{lemma:coq:sorted_subseq_sort}
\label{lemma:coq:sorted_subseq_sort_in}
For any total preorder ${\leq} \subseteq T \times T$, and $t$ and $s$ of type $\mathrm{list} \, T$,
$t$ is a subsequence of $\sort{}_\leq \, s$ whenever $t$ is a subsequence of $s$ and sorted \wrt
$\leq$.
\begin{coqcode}
Lemma sorted_subseq_sort sort (T : eqType) (leT : rel T) :
  total leT -> transitive leT ->
  forall t s : list T, subseq t s -> sorted leT t -> subseq t (sort T leT s).

Lemma sorted_subseq_sort_in sort (T : eqType) (leT : rel T) (t s : list T) :
  {in s &, total leT} -> {in s & &, transitive leT} ->
  subseq t s -> sorted leT t -> subseq t (sort T leT s).
\end{coqcode}
\end{lemma}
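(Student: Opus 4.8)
The plan is to derive this directly by composing two lemmas already established for stable sort functions, namely \cref{lemma:coq:subseq_sort} (that $\sort{}_\leq$ preserves the subsequence relation) and \cref{lemma:coq:sorted_sort} (that $\sort{}_\leq$ fixes lists already sorted \wrt a transitive relation). The key observation is that since $t$ is sorted \wrt $\leq$, sorting it has no effect, so $t$ itself is the image of a subsequence under $\sort{}_\leq$.

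Concretely, first I would note that the total preorder $\leq$ is in particular transitive, so \cref{lemma:coq:sorted_sort} applies to the sorted list $t$, giving
\[
 \sort{}_\leq \, t = t.
\]
Next, since $t$ is a subsequence of $s$ and $\leq$ is both total and transitive, \cref{lemma:coq:subseq_sort} yields that $\sort{}_\leq \, t$ is a subsequence of $\sort{}_\leq \, s$. Rewriting $\sort{}_\leq \, t$ back to $t$ via the first equation, we conclude that $t$ is a subsequence of $\sort{}_\leq \, s$, as required.

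For the localized version \coq{sorted_subseq_sort_in}, the same two-step composition works, but using the in-predicate variants of the two ingredient lemmas: \cref{lemma:coq:sorted_sort_in} to get $\sort{}_\leq \, t = t$ (here one must observe that $t$, being a subsequence of $s$, has all its elements in the domain delimited by the predicate, so the localized transitivity hypothesis suffices) and \cref{lemma:coq:subseq_sort_in} for subsequence preservation. Since there is essentially no genuine obstacle here beyond bookkeeping, the only point requiring mild care is tracking the domain condition in the localized statement: one checks that \coq{all P s} together with \coq{subseq t s} propagates the predicate to the elements of $t$, so that both localized hypotheses can be discharged on the relevant lists.
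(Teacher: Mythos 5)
Your proof is correct and matches the paper's own argument exactly: the paper derives this lemma by combining \cref{lemma:coq:sorted_sort} (so that $\sort{}_\leq \, t = t$) with \cref{lemma:coq:subseq_sort}, just as you do. Your additional remark about propagating the domain predicate from $s$ to $t$ in the localized version is the right bookkeeping step.
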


\begin{proof}
This follows from \cref{lemma:coq:sorted_sort,lemma:coq:subseq_sort}.
\end{proof}

\begin{lemma}
\label{lemma:coq:mem2_sort}
\label{lemma:coq:mem2_sort_in}
For any total preorder ${\leq} \subseteq T \times T$, $s$ of type $\mathrm{list} \, T$, and $x$ and
$y$ of type $T$, $x$ and $y$ occur in order in $\sort{}_\leq \, s$ whenever $x \leq y$ holds
and $x$ and $y$ occur in order in $s$.
\begin{coqcode}
Lemma mem2_sort sort (T : eqType) (leT : rel T) :
  total leT -> transitive leT ->
  forall (s : list T) (x y : T),
  leT x y -> mem2 s x y -> mem2 (sort T leT s) x y.

Lemma mem2_sort_in sort (T : eqType) (leT : rel T) (s : list T) :
  {in s &, total leT} -> {in s & &, transitive leT} ->
  forall x y : T, leT x y -> mem2 s x y -> mem2 (sort T leT s) x y.
\end{coqcode}
\end{lemma}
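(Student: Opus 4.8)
The plan is to transport the problem from the $\texttt{mem2}$ predicate to the subsequence relation, where sorting is already known to behave well. The bridge is \cref{lemma:coq:mem2E}, which states that $\texttt{mem2} \, s \, x \, y$ is equivalent to $\texttt{subseq} \, \mathit{xy} \, s$, where $\mathit{xy} \coloneq [x]$ if $x = y$ and $\mathit{xy} \coloneq [x, y]$ otherwise.

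First I would apply \cref{lemma:coq:mem2E} to both the hypothesis $\texttt{mem2} \, s \, x \, y$ and the goal $\texttt{mem2} \, (\sort{}_\leq \, s) \, x \, y$. Since the auxiliary list $\mathit{xy}$ depends only on $x$ and $y$ and not on the list being tested, this reduces the statement to a single implication: if $\mathit{xy}$ is a subsequence of $s$, then $\mathit{xy}$ is a subsequence of $\sort{}_\leq \, s$.

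Next I would observe that $\mathit{xy}$ is sorted \wrt $\leq$. In the case $x = y$ it is the singleton $[x]$, which is trivially sorted; in the case $x \neq y$ it is $[x, y]$, which is sorted exactly because the hypothesis $x \leq y$ holds. With this in hand, \cref{lemma:coq:sorted_subseq_sort}---which, using the totality and transitivity of $\leq$, guarantees that any subsequence of $s$ that is sorted \wrt $\leq$ is also a subsequence of $\sort{}_\leq \, s$---applies directly to $\mathit{xy}$ and closes the goal.

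I do not expect a genuine obstacle here: once the $\texttt{mem2}$/subsequence correspondence of \cref{lemma:coq:mem2E} is in place, the result is essentially a one-line consequence of \cref{lemma:coq:sorted_subseq_sort}. The only points requiring minor care are the case split on whether $x = y$ when establishing sortedness of $\mathit{xy}$, and, for the localized version (\cref{lemma:coq:mem2_sort_in}), verifying that $x$ and $y$ are elements of $s$ so that the in-domain totality and transitivity hypotheses apply; this holds because $\texttt{mem2} \, s \, x \, y$ already forces both $x$ and $y$ to occur in $s$.
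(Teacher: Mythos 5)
Your proof is correct and follows essentially the same route as the paper: both reduce $\texttt{mem2}$ to the subsequence relation via \cref{lemma:coq:mem2E} and then use the fact that sorting preserves sorted subsequences. The only cosmetic difference is that you invoke \cref{lemma:coq:sorted_subseq_sort} to treat the $x = y$ and $x \neq y$ cases uniformly, whereas the paper handles $x = y$ separately via \cref{lemma:coq:mem_sort} and, for $x \neq y$, applies \cref{lemma:coq:subseq_sort} followed by the computation $\sort{}_\leq\,[x,y] = [x,y]$ --- these amount to the same argument since \cref{lemma:coq:sorted_subseq_sort} is itself derived from those ingredients.
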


\begin{proof}
If $x = y$, the statement means that all the elements of $s$ are in
$\sort{}_{\leq}~s$, which is a consequence of \cref{lemma:coq:mem_sort}.
Otherwise we remark that $x$ and $y$ occur in order in $s$ iff the sequence
$[x, y]$ is a subsequence of $s$ (\cref{lemma:coq:mem2E}).
Using \cref{lemma:coq:subseq_sort}, we thus have
that $\sort{}_{\leq} \, [x, y]$ is a subsequence of $\sort{}_{\leq}~s$.
Since $x \leq y$, we have that $\sort{}_{\leq}~[x, y] = [x, y]$ (\eg, by
computation through \cref{lemma:coq:eq_sort_insert}), hence $[x, y]$ a
subsequence of $\sort{}_{\leq}~s$.
\end{proof}

\section{Comparison of formulations of the stability}
\label{appx:stability-statements}

In this appendix, we compare our stability results
(\cref{lemma:coq:sort_pairwise_stable,lemma:coq:sort_stable,lemma:coq:filter_sort,lemma:coq:sorted_filter_sort})
with the standard formulation (\cref{corollary:sort_standard_stable}) and its versions formally
proved in related work~\cite{Leroy:mergesort, DBLP:journals/jar/Sternagel13,
  DBLP:journals/tocl/LeinoL15}.

\Citet{Leroy:mergesort} proved the stability of a mergesort function in the following form.

\begin{lemma}[\citet{Leroy:mergesort}, \cref{corollary:sort_standard_stable}]
\label{lemma:coq:sort_stable_leroy}
For any total preorder $\leq$ on $T$, the equivalent elements always appear in the same order in the
input and output of sorting; that is, the following equation holds for any $x$ of type $T$ and $s$
of type $\mathrm{list} \, T$:
\[
 [y \leftarrow \sort{}_\leq \, s \mid x \equiv y] = [y \leftarrow s \mid x \equiv y].
\]
\begin{coqcode}
Lemma sort_stable_leroy sort T (leT : rel T) :
  total leT -> transitive leT ->
  forall (x : T) (s : list T),
    [seq y <- sort T leT s | leT x y && leT y x] =
      [seq y <- s | leT x y && leT y x].
\end{coqcode}
\end{lemma}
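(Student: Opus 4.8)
The plan is to treat this as the \Coq{} rendering of the standard stability formulation already stated as \cref{corollary:sort_standard_stable}, and to reprove it from two generic consequences of the characteristic property: the commutation of \texttt{filter} with \sort{} (\cref{lemma:filter_sort}) and the fact that \sort{} fixes pairwise-sorted lists (\cref{lemma:pairwise_sort}). Fixing $x : T$, I would work with the boolean predicate $p \coloneq \lambda y, x \leq y \land y \leq x$, \ie, $p \, y$ is $x \equiv y$, so that the two sides of the goal are $\texttt{filter}_p \, (\sort{}_\leq \, s)$ and $\texttt{filter}_p \, s$ (in \Coq, $p$ is written \coq{fun y => leT x y && leT y x}).

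First I would apply \cref{lemma:filter_sort}, whose hypotheses are exactly the assumed totality and transitivity of $\leq$, to rewrite the left-hand side into $\sort{}_\leq \, (\texttt{filter}_p \, s)$. This reduces the goal to proving that \sort{} acts as the identity on $\texttt{filter}_p \, s$.

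The key remaining step is to observe that $\texttt{filter}_p \, s$ is pairwise sorted \wrt $\leq$ (\cref{def:sortedness}). Every surviving element $y$ satisfies $p \, y$, hence $x \leq y$ and $y \leq x$; so for any two surviving elements $y_1$ and $y_2$ we have $y_1 \leq x$ and $x \leq y_2$, and transitivity of $\leq$ gives $y_1 \leq y_2$. Thus $\leq$ holds between every ordered pair, which is precisely pairwise sortedness. Feeding this into \cref{lemma:pairwise_sort} yields $\sort{}_\leq \, (\texttt{filter}_p \, s) = \texttt{filter}_p \, s$, and chaining the two equalities closes the goal.

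Given the available infrastructure, the argument is essentially immediate; the only part needing genuine thought is the pairwise-sortedness observation, which is where transitivity is actually used (totality enters only through \cref{lemma:filter_sort}). The main bookkeeping obstacle in the formalization is purely definitional: confirming that \MC's \coq{filter} applied to \coq{fun y => leT x y && leT y x} matches the $\equiv$-filter of the abstract statement, which is a matter of unfolding rather than a real difficulty.
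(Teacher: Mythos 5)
Your proof is correct and is essentially identical to the paper's own argument: the paper proves \cref{corollary:sort_standard_stable} (of which \cref{lemma:coq:sort_stable_leroy} is the \Coq{} rendering) by first commuting $\texttt{filter}$ with \sort{} via \cref{lemma:filter_sort} and then applying \cref{lemma:pairwise_sort} to the filtered list, whose elements are all equivalent and hence pairwise sorted. Your explicit transitivity argument ($y_1 \leq x$ and $x \leq y_2$ give $y_1 \leq y_2$) just spells out the step the paper leaves implicit.
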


\begin{proof}
 This follows from \cref{lemma:coq:filter_sort} or \cref{lemma:coq:sorted_filter_sort}.
 See \cref{corollary:sort_standard_stable}.
\end{proof}

\Citet{DBLP:journals/jar/Sternagel13, DBLP:journals/tocl/LeinoL15} proved the stability of
\GHC's mergesort in the following form, where the total preorder on $T$ is decomposed into a totally
ordered type $T'$ and a keying function $k$ of type $T \to T'$.

\begin{lemma}[\citet{DBLP:journals/jar/Sternagel13, DBLP:journals/tocl/LeinoL15}]
\label{lemma:coq:sort_stable_sternagel}
Let $k : T \to T'$ be a keying function whose codomain $T'$ is totally ordered by $\leq_{T'}$,
and ${\leq_T} \subseteq T \times T$ be the total preorder induced by $k$, \ie,
$x \leq_T y \coloneq k \, x \leq_{T'} k \, y$.
For any $s$ of type $\mathrm{list} \, T$, the relative order of the elements of $s$ having the same
key is preserved by $\sort{}_{\leq_T} \, s$; that is, the following equation holds for any
$x$ of type $T$:
\[
 [y \leftarrow \sort{}_{\leq_T} \, s \mid k \, x = k \, y] = [y \leftarrow s \mid k \, x = k \, y].
\]
\begin{coqcode}
Lemma sort_stable_sternagel
    sort T (d : unit) (T' : orderType d) (key : T -> T') (x : T) (s : list T) :
  [seq y <- sort T (relpre key <=
    [seq y <- s | key x == key y].
\end{coqcode}
where \coq{orderType} is the interface of totally ordered types and \coq{<=
to a totally ordered type.
\end{lemma}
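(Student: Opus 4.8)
The plan is to treat this statement as the key-based reformulation of the standard stability result (\cref{corollary:sort_standard_stable}) and to reuse the same two-step strategy. Write $\leq_T$ for the induced preorder $x \leq_T y \coloneq k\,x \leq_{T'} k\,y$ and $p_x$ for the boolean predicate $\lambda y.\,(k\,x = k\,y)$ on which we filter. First I would record that $\leq_T$ is a total preorder: since $T'$ is an \coq{orderType}, the relation $\leq_{T'}$ is a total order (total, transitive, and antisymmetric), and both totality and transitivity transfer to the preimage $\leq_T$ by unfolding the definition. These are exactly the hypotheses demanded by the generic correctness lemmas used below.

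The two key steps are then as follows. \emph{Step 1:} apply \cref{lemma:filter_sort} (formally \cref{lemma:coq:filter_sort}) with the predicate $p_x$ to commute filtering past sorting, giving $\texttt{filter}_{p_x}\,(\sort{}_{\leq_T}\, s) = \sort{}_{\leq_T}\,(\texttt{filter}_{p_x}\, s)$; this uses totality and transitivity of $\leq_T$. \emph{Step 2:} observe that every element $y$ retained by $p_x$ satisfies $k\,y = k\,x$, so any two retained elements $y_1, y_2$ have $k\,y_1 = k\,y_2$ and hence $y_1 \leq_T y_2$ by reflexivity of $\leq_{T'}$; thus $\texttt{filter}_{p_x}\, s$ is pairwise sorted \wrt $\leq_T$ (\cref{definition:coq:pairwise}). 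Applying \cref{lemma:pairwise_sort} (formally \cref{lemma:coq:pairwise_sort}) then yields $\sort{}_{\leq_T}\,(\texttt{filter}_{p_x}\, s) = \texttt{filter}_{p_x}\, s$, and chaining the two equations closes the goal.

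Alternatively, one could reduce the statement almost literally to \cref{corollary:sort_standard_stable}: because $\leq_{T'}$ is antisymmetric, the $\leq_T$-equivalence $x \equiv y \coloneq x \leq_T y \land y \leq_T x$ coincides with key-equality $k\,x = k\,y$, so the filter predicate used here is precisely the $\equiv$-predicate of the corollary. I expect the only friction to be bookkeeping rather than mathematics: aligning the decidable boolean equality on $T'$ with the order-theoretic facts (reflexivity and antisymmetry of $\leq_{T'}$) and with the $\equiv$-predicate, so that the two filters are recognized as the same and the pairwise-sortedness argument of Step 2 goes through. With those identifications in place, every substantive step is a direct appeal to an already-proven lemma, so I anticipate no genuine obstacle.
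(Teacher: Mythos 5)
Your proposal is correct and matches the paper's proof: the paper reduces this lemma to \cref{lemma:coq:sort_stable_leroy} (the standard stability statement, \cref{corollary:sort_standard_stable}) by using antisymmetry of $\leq_{T'}$ to identify key-equality with $\leq_T$-equivalence, which is exactly your alternative route, while your primary route merely inlines the proof of that corollary (\cref{lemma:filter_sort} followed by \cref{lemma:pairwise_sort}). The only minor difference worth noting is that your direct route needs only reflexivity of $\leq_{T'}$ rather than antisymmetry, but both arguments go through without obstacle.
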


\begin{proof}
 Since $\leq_{T'}$ is total order hence antisymmetric,
 $x$ and $y$ of type $T$ have the same key ($k \, x = k \, y$) iff they have the same order
 ($x \leq_T y \land y \leq_T x$).
 Since $\leq_T$ is total preorder, this follows from \cref{lemma:coq:sort_stable_leroy}.
\end{proof}

As we argued in \cref{sec:sort_standard_stable}, both
\cref{lemma:coq:sort_stable_sternagel,lemma:coq:sort_stable_leroy} are easy consequences of
\cref{lemma:coq:filter_sort} or \cref{lemma:coq:sorted_filter_sort}, which follow from
\cref{lemma:coq:sort_pairwise_stable} or \cref{lemma:coq:sort_stable}.
We formally proved its converse: \cref{lemma:coq:sort_stable_leroy} implies
\cref{lemma:coq:sort_pairwise_stable,lemma:coq:sort_stable,lemma:coq:filter_sort}, assuming some
extra conditions (mostly on $\sort{}$) detailed below.
However, their proofs are not as straightforward as the other direction, and
\cref{lemma:coq:sort_pairwise_stable,lemma:coq:sort_stable} proved in this way
(\cref{lemma:usual_stable_sort_pairwise_stable,lemma:usual_stable_sort_stable} below) require the
extra condition that $\leq_1$ is transitive.

\begin{lemma}[\cref{lemma:coq:sort_stable_leroy} implies \cref{lemma:coq:sort_pairwise_stable}]
 \label{lemma:usual_stable_sort_pairwise_stable}
 Let \sort{} be a function of type
 $\forall T, (T \to T \to \mathrm{bool}) \to \mathrm{list} \, T \to \mathrm{list} \, T$ such that
 \begin{itemize}
  \item $\sort{}_\leq \, [] = []$ for any $\leq$,
  \item $\sort{}_\leq \, s$ is sorted \wrt $\leq$ whenever $\leq$ is total, and
  \item $[y \leftarrow \sort{}_\leq \, s \mid x \equiv y] = [y \leftarrow s \mid x \equiv y]$
	for any type $T$, total preorder $\leq$ on $T$, $s$ of type $\mathrm{list} \, T$, and
	$x \in T$.
 \end{itemize}
 Then, for any type $T$, relations ${\leq_1}, {\leq_2} \subseteq T \times T$, and $s$ of type
 $\mathrm{list} \, T$, $\sort{}_{\leq_1} \, s$ is sorted \wrt the lexicographic order
 $\leq_{(1, 2)}$ whenever $\leq_1$ is total preorder and $s$ is pairwise sorted \wrt $\leq_2$.
\end{lemma}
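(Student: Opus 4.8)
The plan is to reduce the lexicographic sortedness of the output to a statement about adjacent pairs and to discharge the only non-trivial case by appealing to the standard stability hypothesis. Write $\equiv_1$ for the equivalence induced by $\leq_1$, that is, $x \equiv_1 y \coloneq x \leq_1 y \land y \leq_1 x$. First I would use the sortedness hypothesis: since $\leq_1$ is total, $\sort_{\leq_1}\, s$ is sorted \wrt $\leq_1$, so any two adjacent elements $a, b$ of $\sort_{\leq_1}\, s$ satisfy $a \leq_1 b$. It therefore suffices to show, for each such adjacent pair, that $a \leq_{(1,2)} b$; unfolding \cref{def:coq:lexord} and using $a \leq_1 b$, this amounts to proving $a \leq_2 b$ under the additional assumption $b \leq_1 a$ (the remaining disjunct $b \not\leq_1 a$ being otherwise immediate). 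Thus I am left with the case $a \equiv_1 b$.

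In that case I would instantiate the third (stability) hypothesis with the total preorder $\leq_1$ and with $x \coloneq a$, obtaining $[y \leftarrow \sort_{\leq_1}\, s \mid a \equiv_1 y] = [y \leftarrow s \mid a \equiv_1 y]$. Both $a$ and $b$ satisfy the predicate $(a \equiv_1 \cdot)$ --- for $a$ by reflexivity of $\leq_1$ (which follows from totality) and for $b$ because $a \equiv_1 b$ --- so, since $a$ and $b$ are adjacent in $\sort_{\leq_1}\, s$, they remain adjacent after filtering: splitting $\sort_{\leq_1}\, s$ as $u_1 \concat (a :: b :: u_2)$ and pushing $\texttt{filter}_{(a \equiv_1 \cdot)}$ through the concatenation leaves $a$ immediately followed by $b$. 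By the stability equation the same adjacency holds in $[y \leftarrow s \mid a \equiv_1 y]$. Finally, $[y \leftarrow s \mid a \equiv_1 y]$ is a subsequence of $s$, and pairwise sortedness (\cref{definition:coq:pairwise}) is inherited by subsequences, so this filtered list is pairwise sorted \wrt $\leq_2$; an adjacent pair of a pairwise $\leq_2$-sorted list satisfies $\leq_2$, which yields $a \leq_2 b$ and completes the case.

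The chief obstacle, and the reason $\leq_1$ must be a genuine total preorder rather than merely total, is that the stability hypothesis is only available for total preorders: applying it to $\leq_1$ is precisely what consumes the transitivity of $\leq_1$. Conceptually, transitivity also guarantees that $\equiv_1$-equivalent elements form a single contiguous run in the $\leq_1$-sorted output, so that the class selected by $\texttt{filter}_{(a \equiv_1 \cdot)}$ is exactly one such run; this is what makes the transport of adjacency through the filter legitimate. The remaining bookkeeping --- that filtering commutes with the chosen decomposition of the output, and that an adjacent pair of a pairwise-sorted list is related --- is routine and relies only on elementary list lemmas such as \cref{lemma:coq:pairwise_cat}.
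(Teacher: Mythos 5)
Your proof is correct. The paper itself states \cref{lemma:usual_stable_sort_pairwise_stable} without giving a proof (it only remarks that the converse implications were proved formally and are ``not as straightforward as the other direction''), so there is no textual proof to compare against; judged on its own, your argument is sound. The reduction to adjacent pairs, the case split on $b \leq_1 a$ via the definition of $\leq_{(1,2)}$, the transport of adjacency through $\texttt{filter}_{(a \equiv_1 \cdot)}$ using the decomposition $u_1 \concat (a :: b :: u_2)$, and the final appeal to the fact that $\texttt{filter}_p\, s$ is a subsequence of $s$ and hence inherits pairwise sortedness are all valid steps; the subsequence-preservation fact is not among the lemmas listed in \cref{appx:basic-definitions-and-facts}, but it is routine and true. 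You also correctly identify where transitivity of $\leq_1$ is consumed (as the precondition of the stability hypothesis), matching the paper's remark that this converse direction needs $\leq_1$ transitive; the unused hypothesis $\sort{}_\leq\,[] = []$ is harmless, as sortedness of short lists is vacuous in your adjacent-pair formulation.
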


\begin{lemma}[\cref{lemma:coq:sort_stable_leroy} implies \cref{lemma:coq:sort_stable}]
 \label{lemma:usual_stable_sort_stable}
 Let \sort{} be a function of type
 $\forall T, (T \to T \to \mathrm{bool}) \to \mathrm{list} \, T \to \mathrm{list} \, T$
 satisfying the same conditions as \cref{lemma:usual_stable_sort_pairwise_stable}.
 Then, for any type $T$, relations ${\leq_1}, {\leq_2} \subseteq T \times T$, and $s$ of type
 $\mathrm{list} \, T$, $\sort{}_{\leq_1} \, s$ is sorted \wrt the lexicographic order
 $\leq_{(1, 2)}$ whenever $\leq_1$ is total preorder, $\leq_2$ is transitive, and $s$ is sorted \wrt
 $\leq_2$.
\end{lemma}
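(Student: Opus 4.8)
The plan is to reduce this statement directly to the preceding \cref{lemma:usual_stable_sort_pairwise_stable}, exploiting that the only gap between the two is the hypothesis on $s$: the earlier lemma assumes $s$ is \emph{pairwise sorted} \wrt $\leq_2$, whereas here we assume $s$ is merely \emph{sorted} \wrt $\leq_2$ together with the transitivity of $\leq_2$. The hypotheses on $\leq_1$ (that it be a total preorder) and the standing conditions on \sort{} are shared verbatim between the two lemmas, so nothing about them needs to be re-established.

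First I would invoke \cref{lemma:coq:sorted_pairwise}, which states that sortedness and pairwise sortedness coincide for transitive relations. Since $\leq_2$ is assumed transitive and $s$ is sorted \wrt $\leq_2$, this immediately upgrades the hypothesis to: $s$ is pairwise sorted \wrt $\leq_2$.

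Having established the pairwise sortedness of $s$, I would then apply \cref{lemma:usual_stable_sort_pairwise_stable}, whose remaining assumptions ($\leq_1$ a total preorder, together with the three conditions on \sort{}) hold by hypothesis. This yields that $\sort{}_{\leq_1} \, s$ is sorted \wrt the lexicographic order $\leq_{(1,2)}$, which is exactly the desired conclusion.

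This derivation mirrors the way \cref{corollary:sort_stable} is obtained from \cref{lemma:sort_pairwise_stable} in the main text. I expect essentially no obstacle: the entire content is the single observation that transitivity promotes sortedness to pairwise sortedness, after which the harder stability lemma applies unchanged. The only point worth double-checking is that the list of conditions on \sort{} quoted ``the same conditions as \cref{lemma:usual_stable_sort_pairwise_stable}'' is indeed inherited identically, so that no additional verification of the sorting-function hypotheses is required before the appeal to \cref{lemma:usual_stable_sort_pairwise_stable}.
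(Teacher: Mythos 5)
Your proposal is correct and matches the paper's intended argument: the paper gives no explicit proof for this lemma in the text, but the analogous derivation of \cref{lemma:coq:sort_stable} from \cref{lemma:coq:sort_pairwise_stable} is exactly this one-step reduction via \cref{lemma:coq:sorted_pairwise}, and the hypotheses on $\leq_1$ and on \sort{} are indeed inherited verbatim, so nothing further needs checking.
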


\begin{lemma}[\cref{lemma:coq:sort_stable_leroy} implies \cref{lemma:coq:filter_sort}]
 \label{lemma:usual_stable_filter_sort}
 Let \sort{} be a function of type
 $\forall T, (T \to T \to \mathrm{bool}) \to \mathrm{list} \, T \to \mathrm{list} \, T$
 that satisfies the following conditions in addition to ones in
 \cref{lemma:usual_stable_sort_pairwise_stable}:
 \begin{itemize}
  \item $\sort{}_\leq \, s$ has the same set of elements as $s$ for any
	${\leq} \subseteq T \times T$ and $s$ of type $\mathrm{list} \, T$, and
  \item $\sort{}$ is natural, that is,
	$\sort{}_{\leq_T} \, [f \, x \mid x \leftarrow s] =
	 [f \, x \mid x \leftarrow \sort{}_{\leq_{T'}} \, s]$
	for any types $T$ and $T'$, function $f$ from $T'$ to $T$, relation
	${\leq_T} \subseteq T \times T$, list $s$ of type $\mathrm{list} \, T'$,
	where $x \leq_{T'} y \coloneq f \, x \leq_T f \, y$,
 \end{itemize}
 Then, for any type $T$, total preorder ${\leq} \subseteq T \times T$, and predicate
 $p \subseteq T$, $\texttt{filter}_p$ commutes with $\sort{}_\leq$ under function composition;
 that is, the following equation holds for any $s$ of type $\mathrm{list} \, T$:
 \[
  \texttt{filter}_p \, (\sort{}_\leq \, s) =
  \sort{}_\leq \, (\texttt{filter}_p \, s).
 \]
\end{lemma}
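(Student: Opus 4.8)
The plan is to reduce \cref{lemma:usual_stable_filter_sort} to the stability result \cref{lemma:usual_stable_sort_stable}, which precedes it and which I may therefore assume, and then to replay the direct proof of \cref{lemma:filter_sort} almost verbatim. The key observation is that the hypotheses imposed on \sort{} here---emptiness preservation, sortedness of the output for total relations, the Leroy-style stability, membership preservation, and naturality---contain exactly the hypotheses of \cref{lemma:usual_stable_sort_stable} plus the two extra assumptions (membership preservation and naturality). Since \cref{lemma:usual_stable_sort_stable} reconstructs, from the Leroy stability, precisely the stability property (\cref{corollary:sort_stable}) on which the original argument for \cref{lemma:filter_sort} rests, the remaining work is to feed it into that argument.

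Concretely, following the proof of \cref{lemma:filter_sort}, I would fix $s$ and an arbitrary default element $x_0$, and rewrite $s$ as $\texttt{map}_{\texttt{nth} \, x_0 \, s} \, \mathit{is}$ with $\mathit{is} \coloneq [0, \dots, \lvert s \rvert - 1]$ (\cref{lemma:coq:mkseq_nth}). Using the \emph{assumed} naturality of \sort{} and the naturality of \texttt{filter} (\cref{lemma:coq:filter_map}), I would transport both \texttt{filter} and \sort{} across the outer map, so that by congruence of \texttt{map} it suffices to prove
\[
 \texttt{filter}_{p_I} \, (\sort{}_{\leq_I} \, \mathit{is})
 = \sort{}_{\leq_I} \, (\texttt{filter}_{p_I} \, \mathit{is}),
\]
where $i \leq_I j \coloneq \texttt{nth} \, x_0 \, s \, i \leq \texttt{nth} \, x_0 \, s \, j$ and $p_I \coloneq p \circ \texttt{nth} \, x_0 \, s$. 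I would establish this via \cref{lemma:coq:irr_sorted_eq}: both sides have the same set of elements (from the assumed membership preservation together with \cref{lemma:coq:mem_filter}), and both are sorted \wrt the strict order $i <_I j \coloneq i \leq_I j \land (j \not\leq_I i \lor i <_{\mathbb{N}} j)$, which is transitive (\cref{lemma:coq:lexord_trans}) and irreflexive.

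The sortedness is where \cref{lemma:usual_stable_sort_stable} enters. Because $\leq$ is a total preorder, its preimage $\leq_I$ is again a total preorder, so I may instantiate \cref{lemma:usual_stable_sort_stable} with $\leq_1 \coloneq {\leq_I}$ and $\leq_2 \coloneq {<_{\mathbb{N}}}$: since $\mathit{is}$ and $\texttt{filter}_{p_I} \, \mathit{is}$ are both sorted \wrt $<_{\mathbb{N}}$ (\cref{lemma:coq:iota_ltn_sorted,lemma:coq:sorted_filter}), their images under $\sort{}_{\leq_I}$ are sorted \wrt $<_I$, and a final application of \cref{lemma:coq:sorted_filter} handles the outer \texttt{filter} on the left-hand side. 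I expect the only genuine subtlety to be bookkeeping: verifying that the total-preorder hypothesis of \cref{lemma:usual_stable_filter_sort} indeed supplies the transitivity that \cref{lemma:usual_stable_sort_stable} demands of $\leq_1$---the very condition the paper flags as an artefact of deriving stability from the Leroy formulation---so that no strengthening of the hypotheses is required. Granting that, the argument is a routine transcription of the proof of \cref{lemma:filter_sort}.
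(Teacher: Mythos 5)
Your proof is correct and follows essentially the route the paper intends for \cref{lemma:usual_stable_filter_sort}: the paper prints no proof for this lemma, but the evident (and formalized) argument is exactly yours---reduce to \cref{lemma:usual_stable_sort_stable}, whose hypotheses are contained in those assumed here, and then replay the proof of \cref{lemma:filter_sort} with the assumed naturality and membership preservation standing in for \cref{lemma:sort_map} and \cref{corollary:mem_sort}. Your closing check is also the right one: the total-preorder hypothesis on $\leq$ makes its preimage $\leq_I$ transitive, which is precisely the extra transitivity that \cref{lemma:usual_stable_sort_stable} demands of $\leq_1$, so no strengthening of the hypotheses is needed.
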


\section{Definitions of mergesorts in \Coq}
\label{appx:mergesort-in-coq}

In \cref{sec:termination}, we presented structurally-recursive non-tail-recursive and tail-recursive
mergesorts in \OCaml.
In this appendix, we present their \Coq reimplementations, including some optimized ones such as
smooth variants (\cref{sec:smooth-mergesort}).
The formal correctness proofs of these implementations are omitted in this appendix and available
only in the supplementary material.

\subsection{Non-tail-recursive mergesorts in \Coq}
\label{appx:nontailrec-mergesort-in-coq}

\begin{figure}[t]
\begin{coqcode}
Module Type MergeSig.
Parameter merge : forall (T : Type) (leT : rel T), list T -> list T -> list T.
Parameter mergeE : forall (T : Type) (leT : rel T), merge leT =2 path.merge leT.
End MergeSig.

Module Merge <: MergeSig.

Fixpoint merge (T : Type) (leT : rel T) (xs ys : list T) : list T :=
  if xs is x :: xs' then
    (fix merge' (ys : list T) : list T :=
       if ys is y :: ys' then
         if leT x y then x :: merge leT xs' ys else y :: merge' ys'
       else xs) ys
  else ys.

Lemma mergeE (T : Type) (leT : rel T) : merge leT =2 path.merge leT.

End Merge.
\end{coqcode}
\caption{Structurally-recursive non-tail-recursive merge in \Coq, \ie, the \Coq counterpart of
\caml{merge} in \cref{fig:struct-nontailrec-mergesort}.}
\label{fig:struct-nontailrec-merge-coq}
\end{figure}
\begin{figure}[t]
\begin{coqcode}
Module CBN_ (M : MergeSig).
Section CBN.
Context (T : Type) (leT : rel T).

Fixpoint push (xs : list T) (stack : list (list T)) {struct stack} :
    list (list T) :=
  match stack with
  | [::] :: stack | [::] as stack => xs :: stack
  | ys :: stack => [::] :: push (M.merge leT ys xs) stack
  end.

Fixpoint pop (xs : list T) (stack : list (list T)) {struct stack} : list T :=
  if stack is ys :: stack then pop (M.merge leT ys xs) stack else xs.

Fixpoint sort1rec (stack : list (list T)) (xs : list T) {struct xs} : list T :=
  if xs is x :: xs then sort1rec (push [:: x] stack) xs else pop [::] stack.

Definition sort1 : list T -> list T := sort1rec [::].

Fixpoint sort2rec (stack : list (list T)) (xs : list T) {struct xs} : list T :=
  if xs is x1 :: x2 :: xs then
    let t := if leT x1 x2 then [:: x1; x2] else [:: x2; x1] in
    sort2rec (push t stack) xs
  else pop xs stack.

Definition sort2 : list T -> list T := sort2rec [::].

[..]

End CBN.
End CBN_.
\end{coqcode}
\caption{Structurally-recursive non-tail-recursive mergesorts in \Coq.
The \coq{push}, \coq{pop}, \coq{sort1rec}, and \coq{sort1} functions in this figure are the \Coq
counterparts of \caml{push}, \caml{pop}, \caml{sort_rec}, and \caml{sort} in
\cref{fig:struct-nontailrec-mergesort}.
Some optimized implementations are omitted here as marked by \coq{[..]} and shown in
\cref{fig:struct-nontailrec-mergesort3-coq,fig:struct-nontailrec-mergesortN-coq}.}
\label{fig:struct-nontailrec-mergesort-coq}
\end{figure}

In \cref{fig:struct-nontailrec-merge-coq}, we define a module type~\cite{rocqrefman:modules}
\coq{MergeSig} abstracting out a non-tail-recursive merge function, and then, define one of its
instances \coq{Merge} using the nested fixpoint technique (\cref{sec:guard-condition}).
The purpose of this module type is to allow replacement of the implementation of merge function.
Therefore, the non-tail-recursive mergesort functions in \cref{fig:struct-nontailrec-mergesort-coq}
are provided as a functor module \coq{CBN_} that takes an instance \coq{M} of \coq{MergeSig} as a
parameter.
In order to prove the mergesort functions in the \coq{CBN_} module correct, \coq{MergeSig} and
\coq{Merge} are equipped with the proof \coq{mergeE} that the merge function implemented in the
module is extensionally equal to the one in the \MC library (\coq{path.merge} in
\cref{fig:struct-nontailrec-merge-coq}, or \cref{def:coq:merge}).

Inside the \coq{CBN_} module, we introduce a type \coq{T} and a relation \coq{leT} on \coq{T} as
section variables~\cite{rocqrefman:section}.
After closing the \coq{CBN} section, these variables become inaccessible and declarations in the
section will be parameterized by them.
The functions \coq{push}, \coq{pop}, \coq{sort1rec}, and \coq{sort1} inside this module correspond
to \caml{push}, \caml{pop}, \caml{sort_rec}, and \caml{sort} in
\cref{fig:struct-nontailrec-mergesort}, respectively.

\begin{figure}[p]
\begin{coqcode}
Fixpoint sort3rec (stack : list (list T)) (xs : list T) {struct xs} : list T :=
  match xs with
  | x1 :: x2 :: x3 :: xs =>
    let t :=
        if leT x1 x2 then
          if leT x2 x3 then [:: x1; x2; x3]
          else if leT x1 x3 then [:: x1; x3; x2] else [:: x3; x1; x2]
        else
          if leT x1 x3 then [:: x2; x1; x3]
          else if leT x2 x3 then [:: x2; x3; x1] else [:: x3; x2; x1]
    in
    sort3rec (push t stack) xs
  | [:: x1; x2] => pop (if leT x1 x2 then xs else [:: x2; x1]) stack
  | _ => pop xs stack
  end.

Definition sort3 : list T -> list T := sort3rec [::].
\end{coqcode}
\caption{A structurally-recursive non-tail-recursive mergesorts in \Coq, that takes three elements
from the input at a time. This implementation is omitted in the place marked by \coq{[..]} in
\cref{fig:struct-nontailrec-mergesort-coq}.}
\label{fig:struct-nontailrec-mergesort3-coq}
\begin{coqcode}
Fixpoint sortNrec (stack : list (list T)) (x : T) (xs : list T) {struct xs} :
    list T :=
  if xs is y :: xs then
    if leT x y then incr stack y xs [:: x] else decr stack y xs [:: x]
  else
    pop [:: x] stack
with incr (stack : list (list T)) (x : T) (xs accu : list T) {struct xs} :
    list T :=
  if xs is y :: xs then
    if leT x y then
      incr stack y xs (x :: accu)
    else
      sortNrec (push (catrev accu [:: x]) stack) y xs
  else
    pop (catrev accu [:: x]) stack
with decr (stack : list (list T)) (x : T) (xs accu : list T) {struct xs} :
    list T :=
  if xs is y :: xs then
    if leT x y then
      sortNrec (push (x :: accu) stack) y xs
    else
      decr stack y xs (x :: accu)
  else
    pop (x :: accu) stack.

Definition sortN (xs : list T) : list T :=
  if xs is x :: xs then sortNrec [::] x xs else [::].
\end{coqcode}
\caption{A smooth structurally-recursive non-tail-recursive mergesort in \Coq. This implementation
is omitted in the place marked by \coq{[..]} in \cref{fig:struct-nontailrec-mergesort-coq}.}
\label{fig:struct-nontailrec-mergesortN-coq}
\end{figure}

The \coq{sort2} function is an optimized variants of \coq{sort1} that perform sorting by
repetitively taking taking two elements from the input, locally sorting them, and pushing them to
the stack.
Similarly, the \coq{sort3} function in \cref{fig:struct-nontailrec-mergesort3-coq} takes three
elements from the input at a time which is analogous to the optimization technique employed by
\caml{List.stable_sort} of \OCaml (see the end of \cref{sec:tailrec-mergesort}).

The \coq{sortN} function in \cref{fig:struct-nontailrec-mergesortN-coq} is a smooth
structurally-recursive non-tail-recursive mergesort that takes the longest sorted prefix from the
input instead of a fixed number of elements.
The \coq{sortNrec} function determines whether the prefix is increasing or decreasing one, and calls
either \coq{incr} or \coq{decr} depending on that.
The \coq{incr} and \coq{decr} respectively takes the longest weakly increasing and strictly
decreasing slice from the input, push it to the stack, and call \coq{sortNrec} to process the next
sorted slice.
These three functions are made mutually recursive to pass the termination checker of \Coq.

\subsection{Tail-recursive mergesorts in \Coq}
\label{appx:tailrec-mergesort-in-coq}

\begin{figure}[t]
\begin{coqcode}
Module Type RevmergeSig.
Parameter revmerge :
  forall (T : Type) (leT : rel T), list T -> list T -> list T.
Parameter revmergeE : forall (T : Type) (leT : rel T) (xs ys : list T),
    revmerge leT xs ys = rev (path.merge leT xs ys).
End RevmergeSig.

Module Revmerge <: RevmergeSig.

Fixpoint merge_rec (T : Type) (leT : rel T) (xs ys accu : list T) {struct xs} :
    list T :=
  if xs is x :: xs' then
    (fix merge_rec' (ys accu : list T) {struct ys} :=
       if ys is y :: ys' then
         if leT x y then
           merge_rec leT xs' ys (x :: accu) else merge_rec' ys' (y :: accu)
       else
         catrev xs accu) ys accu
  else catrev ys accu.

Definition revmerge (T : Type) (leT : rel T) (xs ys : list T) : list T :=
  merge_rec leT xs ys [::].

Lemma revmergeE (T : Type) (leT : rel T) (xs ys : list T) :
  revmerge leT xs ys = rev (path.merge leT xs ys).

End Revmerge.
\end{coqcode}
\caption{Structurally-recursive tail-recursive merge in \Coq. The \coq{merge_rec} function in this
figure is the \Coq counterpart of \caml{revmerge} in \cref{fig:struct-tailrec-mergesort}, and
\coq{revmerge} in this figure instantiate it with \coq{[]} as the accumulator.}
\label{fig:struct-tailrec-merge-coq}
\end{figure}

\begin{figure}[p]
\begin{coqcode}
Module CBV_ (M : RevmergeSig).
Section CBV.
Context (T : Type) (leT : rel T).
Let geT x y := leT y x.

Fixpoint push (xs : list T) (stack : list (list T)) {struct stack} :
    list (list T) :=
  match stack with
  | [::] :: stack | [::] as stack => xs :: stack
  | ys :: [::] :: stack | ys :: ([::] as stack) =>
    [::] :: M.revmerge leT ys xs :: stack
  | ys :: zs :: stack =>
    [::] :: [::] :: push (M.revmerge geT (M.revmerge leT ys xs) zs) stack
  end.

Fixpoint pop (mode : bool) (xs : list T) (stack : list (list T)) {struct stack}
    : list T :=
  match stack, mode with
  | [::], true => rev xs
  | [::], false => xs
  | [::] :: [::] :: stack, _ => pop mode xs stack
  | [::] :: stack, _ => pop (~~ mode) (rev xs) stack
  | ys :: stack, true => pop false (M.revmerge geT xs ys) stack
  | ys :: stack, false => pop true (M.revmerge leT ys xs) stack
  end.

Fixpoint sort1rec (stack : list (list T)) (xs : list T) {struct xs} : list T :=
  if xs is x :: xs then
    sort1rec (push [:: x] stack) xs
  else
    pop false [::] stack.

Definition sort1 : list T -> list T := sort1rec [::].

Fixpoint sort2rec (stack : list (list T)) (xs : list T) {struct xs} : list T :=
  if xs is x1 :: x2 :: xs then
    let t := if leT x1 x2 then [:: x1; x2] else [:: x2; x1] in
    sort2rec (push t stack) xs
  else pop false xs stack.

Definition sort2 : list T -> list T := sort2rec [::].

[..]

End CBV.
End CBV_.
\end{coqcode}
\caption{Structurally-recursive tail-recursive mergesorts in \Coq.
The \coq{push}, \coq{pop}, \coq{sort1rec}, and \coq{sort1} functions in this figure are the \Coq
counterparts of \caml{push}, \caml{pop}, \caml{sort_rec}, and \caml{sort} in
\cref{fig:struct-tailrec-mergesort}.
Some optimized implementations are omitted here as marked by \coq{[..]} and shown in
\cref{fig:struct-tailrec-mergesort3-coq,fig:struct-tailrec-mergesortN-coq}.}
\label{fig:struct-tailrec-mergesort-coq}
\end{figure}

Similarly to the non-tail-recursive counterpart
(\cref{fig:struct-nontailrec-merge-coq,fig:struct-nontailrec-mergesort-coq} in
\cref{appx:nontailrec-mergesort-in-coq}), we define a module type \coq{RevmergeSig} and its instance
\coq{Revmerge} implementing a tail-recursive merge function (\cref{fig:struct-tailrec-merge-coq}),
and the tail-recursive mergesort functions are provided as a functor module \coq{CBV_} that takes an
instance \coq{M} of \coq{RevmergeSig} as a parameter (\cref{fig:struct-tailrec-mergesort-coq}).

\begin{figure}[p]
\begin{coqcode}
Fixpoint sort3rec (stack : list (list T)) (xs : list T) {struct xs} : list T :=
  match xs with
  | x1 :: x2 :: x3 :: xs =>
    let t :=
        if leT x1 x2 then
          if leT x2 x3 then [:: x1; x2; x3]
          else if leT x1 x3 then [:: x1; x3; x2] else [:: x3; x1; x2]
        else
          if leT x1 x3 then [:: x2; x1; x3]
          else if leT x2 x3 then [:: x2; x3; x1] else [:: x3; x2; x1]
    in
    sort3rec (push t stack) xs
  | [:: x1; x2] => pop false (if leT x1 x2 then xs else [:: x2; x1]) stack
  | _ => pop false xs stack
  end.

Definition sort3 : list T -> list T := sort3rec [::].
\end{coqcode}
\caption{A structurally-recursive tail-recursive mergesorts in \Coq, that takes three elements from
the input at a time. This implementation is omitted in the place marked by \coq{[..]} in
\cref{fig:struct-tailrec-mergesort-coq}.}
\label{fig:struct-tailrec-mergesort3-coq}
\begin{coqcode}
Fixpoint sortNrec (stack : list (list T)) (x : T) (xs : list T) {struct xs} :
    list T :=
  if xs is y :: xs then
    if leT x y then incr stack y xs [:: x] else decr stack y xs [:: x]
  else
    pop false [:: x] stack
with incr (stack : list (list T)) (x : T) (xs accu : list T) {struct xs} :
    list T :=
  if xs is y :: xs then
    if leT x y then
      incr stack y xs (x :: accu)
    else
      sortNrec (push (catrev accu [:: x]) stack) y xs
  else
    pop false (catrev accu [:: x]) stack
with decr (stack : list (list T)) (x : T) (xs accu : list T) {struct xs} :
    list T :=
  if xs is y :: xs then
    if leT x y then
      sortNrec (push (x :: accu) stack) y xs
    else
      decr stack y xs (x :: accu)
  else
    pop false (x :: accu) stack.

Definition sortN (xs : list T) : list T :=
  if xs is x :: xs then sortNrec [::] x xs else [::].
\end{coqcode}
\caption{A smooth structurally-recursive tail-recursive mergesort in \Coq. This implementation is
 omitted in the place marked by \coq{[..]} in \cref{fig:struct-tailrec-mergesort-coq}.}
\label{fig:struct-tailrec-mergesortN-coq}
\end{figure}

The \coq{sort2} and \coq{sort3} functions in
\cref{fig:struct-tailrec-mergesort-coq,fig:struct-tailrec-mergesort3-coq} take two and three
elements from the input at a time, respectively.
Again, the latter is analogous to the optimization technique employed by \caml{List.stable_sort} of
\OCaml (\cref{sec:tailrec-mergesort}).
The \coq{sortN} function in \cref{fig:struct-tailrec-mergesortN-coq} is a smooth
structurally-recursive tail-recursive mergesort, implemented in the same way as the
non-tail-recursive counterpart (\cref{fig:struct-nontailrec-mergesortN-coq}).

We stress that the recursive functions defined and used in
\cref{fig:struct-tailrec-merge-coq,fig:struct-tailrec-mergesort-coq,fig:struct-tailrec-mergesort3-coq,fig:struct-tailrec-mergesortN-coq}
are tail recursive except for \coq{push} in \cref{fig:struct-tailrec-mergesort-coq}, whose depth of
recursive calls is logarithmic in the length of the input.
Therefore, a linear consumption of stack space does not occur in any of the mergesort functions
presented in this section.

\end{document}